\renewcommand\footnotetextcopyrightpermission[1]{}  
\tikzstyle{enode}=[rectangle, draw]
\tikzstyle{anode}=[diamond, shape aspect=2, draw]
\newcommand{\nat}{\ensuremath{\mathbb{N}}}
\renewcommand{\phi}{\varphi}
\renewcommand{\vec}[1]{\ensuremath{\boldsymbol{#1}}}
\newcommand{\subvec}[3]{\ensuremath{\vec{#1}_{{#2},{#3}}}}
\newcommand{\sol}[2][]{\ensuremath{\mathit{sol}_{#1}({#2})}}
\newcommand{\Emoves}[1]{\ensuremath{\mathbf{E}({#1})}}
\newcommand{\Amoves}[1]{\ensuremath{\mathbf{A}({#1})}}
\newcommand{\Amovesred}[1]{\ensuremath{\mathbf{A}_r({#1})}}
\newcommand{\semlog}[1]{\ensuremath{\llbracket{#1}\rrbracket}}
\newcommand{\ord}[1]{\ensuremath{\mathit{ord}({#1})}}
\newcommand{\interval}[1]{\ensuremath{\underline{#1}}}
\newcommand{\down}[1]{\ensuremath{[{#1}]}}
\newcommand{\lift}[2]{\ensuremath{\down{#1}^{#2}_{\err}}}
\newcommand{\err}{\ensuremath{\star}}
\newcommand{\subst}[3]{\ensuremath{{#1}[{#2}:={#3}]}}
\newcommand{\fsubst}[3]{\ensuremath{{#1}[{#2}:={#3}]}}
\newcommand{\propsubst}[3]{\ensuremath{{#1}[\faktor{#3}{#2}]}}
\newcommand{\lub}{\ensuremath{\bigsqcup}}
\newcommand{\glb}{\ensuremath{\bigsqcap}}
\newcommand{\twoheaddownarrow}{\mathrel{\rotatebox[origin=c]{270}{$\twoheadrightarrow$}}}
\newcommand{\cone}[1]{\ensuremath{\mathop{\twoheaddownarrow\!{#1}}}}
\newcommand{\filter}[1]{\ensuremath{\mathop{\uparrow\!{#1}}}}
\newcommand{\Pow}[1]{\ensuremath{\mathbf{2}^{#1}}}
\newcommand{\asc}[1]{\ensuremath{\lambda_{#1}}}
\newcommand{\decr}{\mathit{decrease}}
\newcommand{\atom}[2]{\ensuremath{[{#1},{#2}]}}
\newcommand{\true}{\ensuremath{\mathbf{t}}}
\newcommand{\false}{\ensuremath{\mathbf{f}}}
\newcommand{\PVar}{\ensuremath{\mathit{PVar}}}
\newcommand{\Prop}{\ensuremath{\mathit{Prop}}}
\newcommand{\sem}[1]{\ensuremath{|\!|{#1}|\!|}}
\newenvironment{mytab}{\begin{tabbing} xxx \= xxx \= xxx \= xxx \= xxx \=
    xxx \= xxx \= xxx \= xxx \= xxx \= xxx \= xxx \= xxx \= xxx 
    \kill}{\end{tabbing}}
\newcommand{\prg}[1]{\texttt{#1}}
\newcommand{\blk}[2]{$[$#2$]^{#1}$}
\begin{document}

\title[Fixpoint Games on Continuous Lattices]{Fixpoint Games on Continuous Lattices}

\subtitle{(full version)}

\author{Paolo Baldan}
\orcid{nnnn-nnnn-nnnn-nnnn}             
\affiliation{
  \department{Dipartimento di Matematica ``Tullio Levi-Civita''}       
  \institution{Universit\`a di Padova}      
  \streetaddress{Via Trieste, 63}
  \city{Padova}
  \postcode{I-35121}
  \country{Italy}               
}
\email{baldan@math.unipd.it}    

\author{Barbara K\"onig}
\orcid{nnnn-nnnn-nnnn-nnnn}             
\affiliation{
  \department{Fakult\"at f\"ur Ingenieurwissenschaften, Abteilung Informatik und Angewandte Kognitionswissenschaft}             
  \institution{Universit\"at Duisburg-Essen}            
  \streetaddress{Lotharstra{\ss}e 65}
  \city{Duisburg} 
  \postcode{47048}
  \country{Germany}                    
}
\email{barbara_koenig@uni-due.de}      

\author{Christina Mika-Michalski}
\orcid{nnnn-nnnn-nnnn-nnnn}             
\affiliation{
  \department{Fakult\"at f\"ur Ingenieurwissenschaften, Abteilung Informatik und Angewandte Kognitionswissenschaft}             
  \institution{Universit\"at Duisburg-Essen}    
  \streetaddress{Lotharstra{\ss}e 65}
  \city{Duisburg} 
  \postcode{47048}
  \country{Germany}    
}
\email{	christina.mika-michalski@uni-due.de}

\author{Tommaso Padoan}
\orcid{nnnn-nnnn-nnnn-nnnn}             
\affiliation{
  \department{Dipartimento di Matematica ``Tullio Levi-Civita''}      
  \institution{Universit\`a di Padova}       
  \streetaddress{Via Trieste, 63}
  \city{Padova}
  \postcode{I-35121}
  \country{Italy}
}
\email{padoan@math.unipd.it}    

\begin{abstract}
  Many analysis and verifications tasks, such as static program
  analyses and model-checking for temporal logics, reduce to the
  solution of systems of equations over suitable lattices.
  Inspired by recent work on lattice-theoretic progress measures, we
  develop a game-theoretical approach to the solution of systems of
  monotone equations over lattices, where for each single equation
  either the least or greatest solution is taken. A simple parity
  game, referred to as fixpoint game, is defined that provides a
  correct and complete characterisation of the solution of systems of
  equations over continuous lattices, a quite general class of
  lattices widely used in semantics.
  For powerset lattices the fixpoint game is intimately connected with
  classical parity games for $\mu$-calculus model-checking, whose
  solution can exploit as a key tool Jurdzi\'nski's small progress
  measures. We show how the notion of progress measure can be
  naturally generalised to fixpoint games over
  continuous lattices and we prove the existence of small progress
  measures.
  Our results lead to a constructive formulation of progress
  measures as (least) fixpoints. We refine this characterisation by
  introducing the notion of selection that allows one to constrain the
  plays in the parity game, enabling an effective (and possibly
  efficient) solution of the game, and thus of the associated
    verification problem. We also propose a logic for specifying the
  moves of the existential player that can be used to systematically
  derive simplified equations for efficiently computing
  progress measures.
  We discuss potential applications to the model-checking of latticed
  $\mu$-calculi and to the solution of fixpoint equations
    systems over the reals.
\end{abstract}

\begin{CCSXML}
<ccs2012>
<concept>
<concept_id>10003752.10003790.10011192</concept_id>
<concept_desc>Theory of computation~Verification by model checking</concept_desc>
<concept_significance>500</concept_significance>
</concept>
<concept>
<concept_id>10011007.10010940.10010992.10010998.10003791</concept_id>
<concept_desc>Software and its engineering~Model checking</concept_desc>
<concept_significance>500</concept_significance>
</concept>
</ccs2012>
\end{CCSXML}

\ccsdesc[500]{Theory of computation~Verification by model checking}
\ccsdesc[500]{Software and its engineering~Model checking}

\keywords{fixpoint equation systems, continuous lattices, parity
  games,
  $\mu$-calculus} %

\maketitle

\section{Introduction}

Systems of fixpoint equations are ubiquitous in formal analysis and
verification. For instance, program
analysis~\cite{nnh:program-analysis} uses the flow graph of a
program to generate a set of constraints specifying how the information of interest at the different program points is interrelated.
The set of constraints can be viewed as a system of fixpoint
equations, whose (least or greatest) solution provides a sound
approximation of the properties of the  program. Invariant/safety properties
can be characterised as greatest fixpoints, while
liveness/reachability properties as least fixpoints. Behavioural
equivalences (for instance for process calculi) are typically defined
as the solution of a fixpoint equation. The most famous example is
bisimilarity that can be characterised as the greatest fixpoint of a
suitable operator over the lattice of binary relations on the set of states (see,
e.g.,~~\cite{s:bisimulation-coinduction}).

Almost invariably, in the mentioned settings, the involved functions
are monotone and the domains of interest are complete lattices where
the key result for deriving the existence of (least or greatest)
fixpoints is Knaster-Tarski's fixpoint
theorem~\cite{t:lattice-fixed-point}.

Least and greatest fixpoint can be profitably mixed, in order to
obtain expressive specification logics, among which the
$\mu$-calculus~\cite{k:prop-mu-calculus} is a classical example. The
$\mu$-calculus is very expressive, but the nesting of fixpoints
increases the
complexity of model-checking. Common approaches to the model-checking
problem rely on an encoding in terms of
parity
games (see, e.g.,~\cite{bw:mu-calculus-modcheck,s:local-modcheck-games,ej:tree-automata-mu-determinacy}).
The seminal paper~\cite{j:progress-measures-parity} provides an
algorithm for the solution of parity games which is polynomial in the
number of states and exponential in (half of) the alternation depth,
recently improved to quasi-polynomial in~\cite{CJKLS:DPGQPT}. A
detailed discussion of the complexity of $\mu$-calculus model-checking
can be found in~\cite{bw:mu-calculus-modcheck}.

It has been recently observed in~\cite{hsc:lattice-progress-measures}
that progress measures, a key ingredient in Jurdzis\'nki's algorithm for
solving parity games, are amenable to a generalisation to
systems of fixpoint equations over general lattices.
A constructive characterisation of such progress measures is given in
the case of powerset lattices and used to derive model-checking
procedures for (branching and linear) coalgebraic logic.
For general lattices, however, the notion of progress measure
in~\cite{hsc:lattice-progress-measures} does not exactly correspond
to Jurdzinski's notion. In particular, there is no algorithm for
actually computing such progress measures, they rather play the role
of invariants respectively ranking functions that have somehow to be
provided.
While the possibility of deriving generic algorithms for solving
systems of equations is very appealing, the restriction to powerset
lattices limits the applicability of the technique. Often program
analysis relies on lattices which are not powerset lattices (and
neither distributive, hence they cannot be seen as sublattices of
powerset lattices). Moreover also settings involving fuzziness,
probabilities or in general quantitative information are not captured
by restricting to powerset lattices.

Inspired by the mentioned work, in this paper we devise a
game-theoretical approach to the solution of systems of fixpoint
equations over a vast class of lattices, the so-called continuous
lattices. Originally studied by Scott in connection with the semantics
of the $\lambda$-calculus~\cite{Scott:CL}, they have later been
recognised as a fundamental structure, with a plethora of applications
in the semantics of programming languages and, more generally, in the
theory of
computation~\cite{ghklms:continuous-lattices-domains,AJ:DT}. They
include discrete structures, such as most domains used in program
analysis, and continuous structures, such as the real interval $[0,1]$
or the lattice of open sets of a locally compact Hausdorff space.

The possibility of characterising the least or the greatest fixpoint
of a (single) monotone function over a powerset lattice in terms of a
game between an existential and an universal player is probably
folklore and has been observed in~\cite{v:lectures-mu-calculus} where
the game is referred to as an unfolding game.
As a first result, here we show how the unfolding game can be extended to
work for a \emph{system} of fixpoint equations over lattices,
resulting in a surprisingly simple game that we refer to as a
\emph{fixpoint game}.
Mixing least and greatest fixpoint equations requires a non-trivial
winning condition, which however arises as a natural adaptation to our
setting of the one for parity games.

For the simpler case of powerset lattices the interaction
between the players in the fixpoint game fundamentally relies on the possibility of testing
the presence of elements in the image of a set and on the fact that a subset
is completely determined by the elements that belongs to
it. When moving to a more general class of lattices we need to ensure that this kind of interaction can be suitably mimicked.
We argue in the paper that continuous lattices provide an extremely
natural setting for this extension, providing exactly the necessary machinery
for stating results in a way which is analogous to the powerset case.
In fact, they come equipped with a notion of ``finitary'' approximation based on the
\emph{way-below} relation and
each element arises as the join of the elements%
(possibly restricted to a selected basis) which are way-below
it, in the same way as a subset is the union of its singletons.

The proof that our fixpoint game provides a correct and complete
characterisation of the solution of a system of fixpoint equations
over a continuous lattice relies on $\mu$- and $\nu$-approximants that
provide a clear notion of approximation of the solution. In
particular, $\mu$-approximants turn out to be closely related to the
progress measures of~\cite{hsc:lattice-progress-measures}, a
  connection that we will make precise in
  Appendix~\ref{sec:hasuo}.

We  show how Jurdzi\'nski's approach for solving parity
games~\cite{j:progress-measures-parity} can be generalised to systems
of fixpoint equations over continuous lattices. In particular we
introduce a notion of progress measure for fixpoint games over
continuous lattices. Intuitively, given an element $b$ of the basis
of the lattice and an equation index $i$, the progress measure
provides a vector of ordinals, specifying how many iterations are
needed for each equation to cover $b$ in the $i$-th component of the
solution. Then
we prove the existence of suitably defined small progress
measures. This result enables a constructive characterisation of
progress measures as (least) fixpoints and provides a recipe for
computing the progress measure that can be straightforwardly
implemented, at least for finite lattices.

We refine the fixpoint characterisation of progress
measures by introducing the notion of selection, which basically
constrains the moves of the existential player in the parity game,
still preserving correctness and completeness, thus enabling a
more efficient solution of the game.  We also
define a logic for providing a symbolic representation of the moves
of the existential player that can be directly
translated into a system of fixpoint equations describing the progress
measure.
In
particular, we discuss selections and logical formulae that
are needed to handle $\mu$-calculus model-checking.

As an example of application beyond standard $\mu$-calculus
model-checking we will discuss the case of  latticed $\mu$-calculi,
where the evaluation of a formula for a state gives a lattice element,
generalising the standard truth values $0,1$ (see, e.g.,
~\cite{kl:latticed-simulation,GLLS:DNMC,ekn:bisim-heyting}). This
happens naturally also when $\mu$-calculus formulae are evaluated over
weighted transition systems or over probabilistic automata~\cite{hk:quantitative-analysis-mc}.

The lattice under consideration -- and hence its basis --
might be
infinite and in this case it is not even guaranteed that the fixpoint
iteration terminates in $\omega$ steps. In fact, it is known
that, despite the functions involved in the equations of the system
being continuous, due to alternation of least and greatest fixpoints,
discontinuous functions may arise and we possibly have to refer to
ordinals beyond $\omega$~\cite{ms:lukasiewicz-mu-arxiv,f:continuous-fragment-mu}. In order to
solve these cases, one has to restrict to a finite part of the
lattice, approximate or resort to symbolic representations.
We take some preliminary steps in this direction proposing a technique
to deal with infinite lattices that is always correct, and
complete under certain conditions (for instance, on well-orders
without other restrictions, or on the reals, suitably restricting
the functions). In
particular, inspired by~\cite{MS:MS,ms:lukasiewicz-mu-arxiv}, we
show how the game for solving fixpoint equations over the reals can
be encoded into an SMT formula of fixed size, capturing the winning
condition of the existential
player.

Summing up, our main contributions are the following:
\begin{itemize}
\item We propose a game-theoretical characterisation of the
  solution of systems of fixpoint equations over lattices and we
  identify continuous lattices as a general and appropriate setting
  for such theory.
\item We develop a theory of progress measures \`{a} la Jurdzi\'nski
  in this general framework, with a clear recipe for their
  computation. This can be seen as a generalisation of the MC progress
  measures, proposed in~\cite{hsc:lattice-progress-measures} for
  coalgebraic logics over powerset lattices.
\item We devise strategies for the computation of such progress
  measures based on selections and a logic for the symbolic
  representation of players' moves,
  along with
  a
  complexity analysis.

\item We explicitly discuss two application scenarios, beyond
  standard $\mu$-calculus over powerset lattices: model-checking of
  latticed $\mu$-calculi via progress measures and the solution of
  fixpoint equation systems over the reals via SMT solvers.

\end{itemize}

We believe that due to the generality of our results, there is the
potential for several more interesting applications for different
lattices.

The rest of the paper is structured as follows.
In \S~\ref{sec:preliminaries} we recap the basics of continuous
lattices and introduce some notation that will be used throughout the
paper.
In \S~\ref{sec:system} we introduce the systems of fixpoint
equations over a lattice, we define their solution and devise a
corresponding notion of approximation.
In \S~\ref{sec:fp-equation-games} we present a game-theoretical
approach to the solution of a system of equations over a continuous
lattice, together with several case studies.
In \S~\ref{sec:progress} we introduce the notion of
progress measure for (the game associated with) systems of fixpoint
equations over a continuous lattice.
In \S~\ref{sec:applications-latticed} we discuss the application of
our framework to the model-checking of latticed $\mu$-calculi.
In \S~\ref{sec:solving-fp-equations-smt} we present some results for
the solutions of systems of fixpoint equations over infinite lattices,
with special focus on real intervals.
In \S~\ref{sec:conclusions} we conclude the paper and outline future
research.
All proofs,
  further details on the encoding of $\mu$-calculus formulae into
  fixpoint equation systems (and vice versa) and a detailed comparison
  to~\cite{hsc:lattice-progress-measures} can be found in the
  appendix.

\section{Preliminaries on Ordered Structures}
\label{sec:preliminaries}

In this section we provide the basic order theoretic notions that will
be used throughout the paper. In particular, we define continuous
lattices and we provide some notation about tuples of elements that will be useful for compactly describing the solution of systems of equation.

\subsection{Lattices}

A preordered or partially ordered set $\langle P, \sqsubseteq \rangle$
is often denoted simply as $P$, omitting the (pre)order
relation.
It is \emph{well-ordered} if every non-empty subset
$X \subseteq P$ has a minimum.
The \emph{join} and the \emph{meet} of a
subset $X \subseteq P$ (if they exist) are denoted  $\bigsqcup X$
and $\bigsqcap X$, respectively.
 
\begin{definition}[complete lattice, basis, irreducibles]
  A \emph{complete lattice} is a partially ordered set
  $(L, \sqsubseteq)$ such that each subset $X \subseteq L$ admits a
  join $\lub X$ and a meet $\glb X$. A complete lattice
  $(L, \sqsubseteq)$ always has a least element
  $\bot = \lub \emptyset$ and a greatest element
  $\top = \glb \emptyset$.
  Given an element $l \in L$ we define its upward-closure
  $\filter{l} = \{ l' \mid l' \in L\ \land\ l \sqsubseteq l' \}$.
  A \emph{basis} for a lattice is a subset $B_L \subseteq L$ such that
  for each $l \in L$ it holds that
  $l = \lub \{ b \in B_L \mid b \sqsubseteq l \}$.
  An element $l \in L$ is \emph{completely join-irreducible} if whenever
  $l = \bigsqcup X$ for some $X \subseteq L$ then $l \in X$.
\end{definition}

Since all lattices in this paper will be complete, we will often omit
the qualification ``complete''. Similarly, since we are only interested in completely join-irreducible elements we will often omit
the qualification ``completely''. Note that $\bot$ is never an irreducible since $\bot = \bigsqcup \emptyset$ and $\bot \not\in \emptyset$.

\begin{example}
  Three simple examples of lattices, that we will refer to later,
  are:
  \begin{itemize}    
  \item The powerset of any set $X$, ordered by subset inclusion
    $(\Pow{X}, \sqsubseteq)$.  Join is union, meet is intersection,
    top is $X$ and bottom is $\emptyset$. A basis is the set of
    singletons $B_{\Pow{X}} = \{ \{x\} \mid x \in X\}$. These are
    also the the join-irreducible elements. Any set $Y \subseteq X$
    with $|Y|>1$ is not irreducible, since
    $Y = \bigsqcup_{x \in Y} \{x\}$ but clearly $Y \neq \{x\}$ for
    any $x \in Y$.
    
  \item The real interval $[0,1]$ with the usual order $\leq$. Join
    and meet are the sup and inf over real numbers, $0$ is bottom
    and $1$ is top. The rationals $\mathbb{Q} \cap (0,1]$ are a
    basis. There are no irreducible elements (in fact, for any
    $x \in [0,1]$ we have that $x = \bigsqcup \{ y \mid y < x \}$
    and clearly $x \not\in \{ y \mid y < x \}$).
    
  \item Consider the partial order
    $W = \mathbb{N} \cup \{ \omega, a \}$ depicted in
    Fig.~\ref{fi:non-continuous}. It is easy to see that it is a
    lattice. All elements are irreducible apart from the bottom $0$
    and the top $\omega$. For the latter notice that, e.g.,
    $\omega = \bigsqcup \{1,a\}$.
    
  \end{itemize}
\end{example}

\begin{figure}
  \begin{center}
    \begin{tikzpicture}[node distance=6mm, >=stealth',x=10mm,y=6mm]
      \node at (0,0) (n0)  {$0$};
      \node at (0,1) (n1)  {$1$};
      \node at (0,2) (n2)  {$2$};
      \node at (0,4) (w)  {$\omega$};
      \node at (1,2) (a)  {$a$};
      \draw [-] (n0) -- (n1);
      \draw [-] (n1) -- (n2);
      \draw [dotted] (n2) -- (w);
      \draw [-] (n0) -- (a);
      \draw [-] (a) -- (w);
    \end{tikzpicture}
  \end{center}
  \caption{A complete lattice $W$ which is not continuous.}
  \label{fi:non-continuous}
\end{figure}
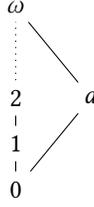

A lattice is \emph{completely distributive} if
\[ \bigsqcup_{k\in K} \bigsqcap_{j\in J_k} l_{k,j} = \bigsqcap
  \{\bigsqcup_{k\in K} r_{k,j^k} \mid j^k\in J_k, k\in K \} \] where
$K, J_k, K$ are index sets and $l_{k,j} \in L$.

A function $f\colon L \to L$ is \emph{monotone} if for all
$l, l' \in L$, if $l \sqsubseteq l'$ then
$f(l) \sqsubseteq f(l')$. By Knaster-Tarski's
theorem~\cite{t:lattice-fixed-point}, any monotone
function on a complete lattice has a least and a greatest fixpoint,
denoted respectively $\mu f$ and $\nu f$, characterised as the meet
of all pre-fixpoints respectively the join of all post-fixpoints:
\begin{center}
  $\mu f = \glb \{ l \mid f(l) \sqsubseteq l \}$ \qquad
  $\nu f = \lub \{ l \mid l \sqsubseteq f(l) \}$
\end{center}
The least and greatest fixpoint can also be obtained by iterating the
function on the bottom and top elements of the lattice. This is often
referred to as Kleene's theorem (at least for continuous functions)
and it is one of the pillars of abstract
interpretation~\cite{CC:CCVTFP}. Given a lattice $L$, define its
\emph{height} $\asc{L}$ as the supremum of the length of any strictly
ascending, possibly transfinite, chain. Then we have the following result.

\begin{theorem}[Kleene's iteration~\cite{CC:CCVTFP}]
  Let $L$ be a lattice and let $f \colon L \to L$ be a monotone
  function.  Consider the (transfinite) ascending chain
  $(f^{\beta}(\bot))_{\beta}$ where $\beta$ ranges over the ordinals,
  defined by $f^0(\bot) = \bot$,
  $f^{\alpha+1}(\bot) = f(f^\alpha(\bot))$ for any ordinal $\alpha$
  and $f^{\alpha}(\bot) = \lub_{\beta < \alpha} f^{\beta}(\bot)$ for
  any limit ordinal $\alpha$. Then $\mu f = f^{\gamma}(\bot)$ for some
  ordinal $\gamma \leq \asc{L}$. The greatest fixpoint $\nu f$ can be
  characterised dually, via the (transfinite) descending chain
  $(f^{\alpha}(\top))_{\alpha}$.
\end{theorem}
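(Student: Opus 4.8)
The plan is to establish both halves of the Kleene iteration theorem by transfinite induction, treating the least-fixpoint case in detail and obtaining the greatest-fixpoint case by order duality. First I would verify that the chain $(f^\beta(\bot))_\beta$ is genuinely ascending: proceed by transfinite induction on $\beta$, showing $f^\alpha(\bot) \sqsubseteq f^\beta(\bot)$ whenever $\alpha \le \beta$. The successor step uses monotonicity of $f$ together with $f^\alpha(\bot) \sqsubseteq f^\beta(\bot) \Rightarrow f(f^\alpha(\bot)) \sqsubseteq f(f^\beta(\bot))$, and the fact that $\bot = f^0(\bot) \sqsubseteq f(f^\alpha(\bot)) = f^{\alpha+1}(\bot)$; the limit step is immediate since $f^\alpha(\bot) = \lub_{\beta < \alpha} f^\beta(\bot)$ is by definition an upper bound of the earlier entries, and monotonicity together with the inductive hypothesis shows it is below any later entry.

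Next I would argue that the chain stabilises at some ordinal $\gamma \le \asc{L}$. Since the chain is ascending and $\asc{L}$ is the supremum of lengths of strictly ascending chains in $L$, the chain cannot be strictly increasing for more than $\asc{L}$ steps; hence there is a least $\gamma \le \asc{L}$ with $f^{\gamma+1}(\bot) = f^\gamma(\bot)$, i.e. $f(f^\gamma(\bot)) = f^\gamma(\bot)$, so $f^\gamma(\bot)$ is a fixpoint of $f$. A small point to handle carefully is the cardinality/ordinal bookkeeping: one wants that if no such $\gamma$ existed then the map $\beta \mapsto f^\beta(\bot)$ would be injective on an initial segment of the ordinals longer than any chain $L$ admits, contradicting the definition of $\asc{L}$.

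It then remains to show $f^\gamma(\bot) = \mu f$. By the Knaster--Tarski characterisation $\mu f = \glb\{ l \mid f(l) \sqsubseteq l\}$, so it suffices to show $f^\gamma(\bot)$ is below every pre-fixpoint: if $f(l) \sqsubseteq l$, then a further transfinite induction on $\beta$ gives $f^\beta(\bot) \sqsubseteq l$ for all $\beta$ (base: $\bot \sqsubseteq l$; successor: $f^\beta(\bot) \sqsubseteq l \Rightarrow f^{\beta+1}(\bot) = f(f^\beta(\bot)) \sqsubseteq f(l) \sqsubseteq l$; limit: $l$ is an upper bound of the $f^\beta(\bot)$ for $\beta < \alpha$, hence above their join). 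In particular $f^\gamma(\bot) \sqsubseteq \mu f$; but $f^\gamma(\bot)$ is itself a fixpoint, hence a pre-fixpoint, so $\mu f \sqsubseteq f^\gamma(\bot)$, giving equality. Finally, the greatest-fixpoint statement follows by applying the same argument in the dual lattice $L^{\mathrm{op}}$, where $\bot$ becomes $\top$, joins become meets, $f$ remains monotone, and $\mu$ in $L^{\mathrm{op}}$ is $\nu$ in $L$; note $\asc{L}$ could in principle differ from the height of $L^{\mathrm{op}}$, so one should phrase the dual bound in terms of the supremum of lengths of strictly \emph{descending} chains, or simply observe that the argument only needs \emph{some} ordinal bound, which exists for cardinality reasons.

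The main obstacle I anticipate is not any single step — each is a routine transfinite induction — but rather the precise ordinal-theoretic justification that the chain must stabilise by stage $\asc{L}$: one must be careful that $\asc{L}$, defined as a supremum of chain lengths, really does bound the closure ordinal, and that the stabilisation ordinal is well defined (least such $\gamma$) rather than merely existent. This is the only place where the definition of \emph{height} is used in an essential, non-cosmetic way.
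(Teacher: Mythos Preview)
The paper does not give a proof of this theorem: it is stated with a citation to~\cite{CC:CCVTFP} and used as a known background result, with only the brief remark following it that each $f^\alpha(\bot)$ is a post-fixpoint (and dually). Your proposal is the standard argument and is correct; in particular your caution about the dual bound is apt, since the paper's $\asc{L}$ is defined via strictly \emph{ascending} chains and the statement for $\nu f$ is only asserted ``dually'' without spelling out the matching bound.
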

Note also that $f^\alpha(\bot)$ is
always a post-fixpoint and $f^\alpha(\top)$ is always a pre-fixpoint.

We will focus on special lattices where elements are generated by
suitably defined approximations.  Given a lattice $L$, a subset
$X \subseteq L$ is \emph{directed} if $X \neq \emptyset$ and every
pair of elements in $X$ has an upper bound in $X$.

\begin{definition}[way-below relation, continuous lattices]
  \label{de:continuous-lattice}
  Let $L$ be a lattice. Given two elements $l, l' \in L$ we say that
  $l$ is \emph{way-below} $l'$, written $l \ll l'$ when for every
  directed set $D \subseteq L$, if $l' \sqsubseteq \lub D$ then there
  exists $d \in D$ such that $l \sqsubseteq d$.  We denote by
  $\cone{l}$ the set of elements way-below $l$, i.e.,
  $\cone{l} = \{ l' \mid l' \in L\ \land\ l' \ll l \}$.

  The lattice $L$ is called \emph{continuous} if $l = \lub \cone{l}$
  for all $l \in L$.
\end{definition}

Intuitively, the way-below relation captures a form of finitary
approximation: if one imagines that $\sqsubseteq$ is an order on the
information content of the elements, then $x \ll y$ means that
whenever $y$ can be ``covered'' by joining (possibly small) pieces of
information, then $x$ is covered by one of those pieces. Then a
lattice is continuous if any element can be built by joining its
approximations.

Concerning the origin of the name ``continuous lattice'', we can
quote~\cite{Scott:CL} that says that ``One of the justifications (by
euphony at least) of the term \emph{continuous lattice} is the fact
that such spaces allow for so many continuous functions.'' For
instance, one indication is the fact that meet and join are both
continuous in such lattices.

It can be shown that if $L$ is a continuous lattice and $B_L$ is a basis, for all $l \in L$, it holds that $l = \lub (\cone{l} \cap B_L)$. 

Various lattices that are commonly used in semantics enjoy a property stronger  than continuity, defined below.

\begin{definition}[compact element, algebraic lattice]
  Let $L$ be a lattice. An element $l \in L$ is called \emph{compact}
  whenever $l \ll l$. The lattice $L$ is \emph{algebraic} if the set of
  compact elements is a basis.
\end{definition}

\begin{example}
  Some examples are as follows:
  \begin{itemize}
  \item All finite lattices are continuous (since every finite
    directed set has a maximum). More generally, all algebraic
    lattices (which include all finite lattices) are continuous. The
    way-below relation is $x \ll y$ if $x$ compact and
    $x \sqsubseteq y$.
    
  \item Given a set $X$, the powerset lattice $\Pow{X}$, ordered by
    inclusion, is an algebraic lattice. The compact elements are the
    finite subsets. In fact, any set $Y$ is the union of its finite
    subsets, i.e.,
    $Y = \bigcup \{ F \mid F \subseteq Y\ \land\ F \text{
      finite}\}$. Since
    $\{ F \mid F \subseteq Y\ \land\ F \text{ finite}\}$ is directed
    set, compactness requires that $Y \subseteq F$ for some finite
    $F \subseteq Y$, hence $Y=F$.
    Therefore $Y \ll Z$ holds when $Y$ is finite and $Y \subseteq Z$.

  \item The interval $[0,1]$ with the usual order $\leq$ is a
    continuous lattice.  For $x, y \in [0,1]$, we have $x \ll y$ when
    $x < y$ or $x=0$.  In fact, each $\emptyset \neq Y\subseteq [0,1]$
    is directed. Imagine that $y\le \bigsqcup Y$ for such a $Y$. Then
    by standard properties of the reals there always exists a
    $y'\in Y$ such that $x\le y'$ if and only if $x<y$ or $x=0$. Note
    that this lattice is not algebraic since the only compact element
    is $0$.
    
  \item The lattice $W$ in Fig.~\ref{fi:non-continuous} is not
    continuous. In fact, $a \not\ll a$ since
    $a \sqsubseteq \bigsqcup \mathbb{N}$ but $a \not\sqsubseteq i$ for
    all $i \in \mathbb{N}$. 
    Therefore $\cone{a} = \{0\}$ and thus $a \neq \bigsqcup \cone{a}$.

  \end{itemize}
\end{example}

\subsection{Tuples and Ordinals}

We will often consider tuples of elements. 
Given a set $A$, an $n$-tuple in $A^n$ will be denoted by a
boldface letter $\vec{a}$. The components of a tuple $\vec{a}$ will be
denoted by using the same name of the tuple, not in boldface style and
with an index, i.e., $\vec{a} = (a_1, \ldots, a_n)$. For an index
$n \in \mathbb{N}$ we use the notation $\interval{n}$ to denote the
integer interval $\{1, \ldots, n\}$. Given $\vec{a} \in A^n$ and
$i, j \in \interval{n}$ we write $\subvec{a}{i}{j}$ for the subtuple
$(a_i, a_{i+1}, \ldots, a_j)$.

\begin{definition}[pointwise order]
  Given a lattice $(L,\sqsubseteq)$ we will denote by
  $(L^n,\sqsubseteq)$ the set of $n$-tuples endowed with the
  \emph{pointwise order} defined, for $\vec{l}, \vec{l'} \in L^n$, by
  $\vec{l} \sqsubseteq \vec{l'}$ if $l_i \sqsubseteq l_i'$ for all
  $i \in \interval{n}$.
\end{definition}

The structure $(L^n, \sqsubseteq)$ is a lattice and it is continuous
if $L$ is continuous, with the way-below relation given by
$\vec{l} \ll \vec{l}'$ iff $l_i \ll l_i'$ for all
$i \in \interval{n}$~\cite[Proposition
I-2.1]{ghklms:continuous-lattices-domains}. More generally, for any
set $X$, the set of functions $L^X = \{ f \mid f\colon X \to L \}$, endowed with pointwise order, is a lattice (continuous when $L$ is).

\begin{definition}[lexicographic order]
  Given a partial order $(P, \sqsubseteq)$ we will denote by
  $(P^n, \preceq)$ the set of $n$-tuples endowed with the
  \emph{lexicographic order} (where the last component is the most
  relevant), i.e., inductively, for $\vec{l}, \vec{l}' \in P^n$, we
  let $\vec{l} \preceq \vec{l}'$ if either $l_n \sqsubset l_n'$ or
  $l_n = l_n'$ and $\subvec{l}{1}{n-1} \preceq \subvec{l'}{1}{n-1}$.
\end{definition}
When $(L, \sqsubseteq)$ is a lattice also $(L^n,\preceq)$
is a lattice. Given a set $X \subseteq L^n$, the meet of $X$ with respect to $\preceq$ can be obtained by taking the meet of
the single components, from the last to the first, i.e., it is defined
inductively as $\bigsqcap X = \vec{l}$ where
$l_i = \bigsqcap \{ l_i' \mid \vec{l}' \in X\ \land\ \subvec{l'}{i+1}{n} =
\subvec{l}{i+1}{n} \}$. The join can be defined analogously.
Similarly, one can show that $\preceq$ is a well-order whenever
$\sqsubseteq$ is.

As in~\cite{j:progress-measures-parity,hsc:lattice-progress-measures}, we will also need to consider tuples with a preorder arising from the
lexicographic order, when some components are considered irrelevant.

\begin{definition}[truncated lexicographic order]
  \label{de:trunc-ord}
  Let $(P,\sqsubseteq)$ be a partial order and let $n \in \nat$. For
  $i \in \interval{n}$ we define a preorder $\preceq_i$ on $P^n$
  by letting, for $\vec{x}, \vec{y} \in P^n$,
  $\vec{x} \preceq_{i} \vec{y}$ if
  $\subvec{x}{i}{n} \preceq \subvec{y}{i}{n}$. We write $=_i$ for the
  equivalence induced by $\preceq_i$ and $\vec{x} \prec_{i} \vec{y}$ for $\vec{x} \preceq_{i} \vec{y}$ and $\vec{x} \not=_{i} \vec{y}$.
  Whenever $\sqsubseteq$ is a well-order, given $X \subseteq P^n$ with
  $X\neq \emptyset$ and $i \in \interval{n}$, we write
  $\min\nolimits_{\preceq_i} X$ for the vector
  $\vec{x} = (\bot, \ldots, \bot, x_i, \ldots, x_n)$ where
  $\subvec{x}{i}{n} = \min_\preceq \{ \subvec{l}{i}{n} \mid \vec{l}
  \in X \}$.
\end{definition}

In words, $\preceq_i$ is the
lexicographic order restricted to the components
$i, i+1, \ldots, n$. 
For instance, if $P = \mathbb{N}$ with the usual order, then $(6,1,4,7) \prec_2 (5,2,4,7)$, while  $(6,1,4,7) =_3 (5,2,4,7)$.

We denote \emph{ordinals} by Greek letters
$\alpha, \beta, \gamma, \ldots$ and their order by $\leq$. The
collection of all ordinals is well-ordered. Given any ordinal
$\alpha$, the collection of ordinals dominated by $\alpha$ is a set
$\down{\alpha} = \{ \lambda \mid \lambda \leq \alpha \}$, which, seen
as an ordered structure, is a lattice. Meet and join of a set $X$ of
ordinals will be denoted by $\inf X$ (which equals $\min X$ if
$X\neq\emptyset$) and $\sup X$.
The lattice $\down{\alpha}$ is completely distributive, which follows
from classical results. In fact, the complete join-irreducibles are
all ordinals which are not limit ordinals. Hence, from~\cite[Theorems
1 and 2]{Ran:CDCL}, since every element is the join of completely
join-irreducible elements, we can conclude that $\down{\alpha}$ is
completely distributive.
A similar argument shows that, for a fixed
  $n \in \nat$ and ordinal $\alpha$, the lattice of $n$-tuples of
  ordinals, referred to as \emph{ordinal vectors}, endowed with the
  lexicographic order $(\down{\alpha}^n, \preceq)$ is completely
  distributive. In fact, the only elements that are not complete
  join-irreducibles are vectors of the kind
  $(0,\ldots,0,\alpha,\beta_i, \ldots, \beta_n)$ where $\alpha$ is a
  limit ordinal and such vectors can be obtained as the join of the
  vectors $(0, \ldots , 0, \beta, \beta_i, \ldots, \beta_n)$, with
  $\beta < \alpha$ and $\beta$ a successor ordinal.

\section{Fixpoint Equations: Solutions and Approximants}
\label{sec:system}

In this section we introduce the systems of fixpoint equations we will work with in the paper. We define the solution of a system and we devise some results concerning its approximations that will play a major role later.

\subsection{Systems of Fixpoint Equations}

We focus on systems of (fixpoint) equations over some lattice, where,
for each equation one can be interested either in the least or in
the greatest solution.

\begin{definition}[system of equations]
  \label{de:system}
  Let $L$ be a lattice. A system of equations $E$ over $L$ is a list of
  equations of the following form
  \begin{eqnarray*}
    x_1 & =_{\eta_1} & f_1(x_1,\ldots,x_m) \\
        & \ldots & \\
    x_m & =_{\eta_m} & f_m(x_1,\ldots,x_m) 
  \end{eqnarray*}
  where $f_i\colon L^m\to L$ are monotone functions and
  $\eta_i\in\{\mu,\nu\}$.
  The system will often be denoted as
  $\vec{x} =_{\vec{\eta}} \vec{f} (\vec{x})$, where $\vec{x}$,
  $\vec{\eta}$ and $\vec{f}$ are the obvious tuples.
  We denote by
  $\emptyset$ the system with no equations.

\end{definition}

Systems of equations of this kind have been considered by
various authors,
e.g.,~\cite{cks:faster-modcheck-mu,s:fast-simple-nested-fixpoints,hsc:lattice-progress-measures}.
In particular,~\cite{hsc:lattice-progress-measures} works on general
lattices.

We next define the pre-solutions of a system as tuples of lattice elements that, replacing the variables, satisfy all the equations of the system. The solution will be a suitably chosen pre-solution, taking into account also the $\eta_i$ annotations that specify for each equation whether the least or greatest solution is required.

\begin{definition}[pre-solution]
  \label{le:pre-solution}
  Let $L$ be a lattice and let $E$ be a system of equations over $L$
  of the kind $\vec{x} =_{\vec{\eta}} \vec{f}(\vec{x})$. A
  \emph{pre-solution} of $E$ is any tuple $\vec{u} \in L^m$ such that
  $\vec{u} = \vec{f}(\vec{u})$.
\end{definition}

Note that $\vec{f}$ can be seen as a function $\vec{f}\colon L^m \to L^m$. In this view, pre-solutions are the fixpoints of $\vec{f}$.
Since all components $f_i$ are monotone, also $\vec{f}$ is monotone over $(L^m, \sqsubseteq)$. Then, it is well-known that the set of fixpoints of $\vec{f}$, i.e., the pre-solutions of the system, are a sublattice.
In order to define the solution of a system  we need some further notation.

\begin{definition}[substitution]
  \label{de:substitution}
  Given a system $E$ of $m$ equations over a lattice $L$ of the kind
  $\vec{x} =_{\vec{\eta}} \vec{f} (\vec{x})$, an index
  $i \in \interval{m}$ and $l \in L$ we write
  $\subst{E}{x_i}{l}$ for the system of $m-1$ equations obtained from
  $E$ by removing the $i$-th equation and replacing $x_i$ by $l$ in the other equations,
  i.e., if  $\vec{x} = \vec{x}' x_i \vec{x}''$,
  $\vec{\eta} = \vec{\eta}' \eta_i \vec{\eta}''$ and
  $\vec{f} = \vec{f}' f_i \vec{f}''$ then $\subst{E}{x_i}{l}$ is
  $\vec{x}' \vec{x}'' =_{\vec{\eta}', \vec{\eta}''} \vec{f}'
  \vec{f}''(\vec{x}', l, \vec{x}'')$.

  Let $\fsubst{f}{x_i}{l}\colon L^{m-1} \to L$ be defined by
  $\fsubst{f}{x_i}{l}(\vec{x}',\vec{x}'') = f(\vec{x}',l,\vec{x}'')$.
  Then, explicitly, the system $\subst{E}{x_i}{l}$ has $m-1$ equations, 
  \begin{center}
    $x_j =_{\eta_j} \fsubst{f_j}{x_i}{l}(\vec{x}',\vec{x}'')$ \qquad
    $j \in \{ 1, \ldots, i-1,i+1, \ldots, n \}$
  \end{center}
\end{definition}

We can now recursively define the solution of a system of
equations. The notion is the same as
in~\cite{hsc:lattice-progress-measures}, although we find it
convenient to adopt a more succinct formulation
(an explicit proof of the equivalence of the two
notions can be found in Appendix~\ref{ssec:comparison-solution-hsc}).

\begin{definition}[solution]
  \label{de:solution}
  Let $L$ be a lattice and let $E$ be a system of $m$ equations on
  $L$ of the kind $\vec{x} =_{\vec{\eta}} \vec{f}(\vec{x})$.
  The \emph{solution} of $E$, denoted $\sol{E} \in L^m$, is defined
  inductively as follows:
  \begin{center}
    $
    \begin{array}{lll}
      \sol{\emptyset} & = & () \\
      \sol{E} & = & (\sol{\subst{E}{x_m}{u_m}},\, u_m) \mbox{ where } u_m
                      = \eta_m (\lambda x.\, f_m(\sol{\subst{E}{x_m}{x}},x))
    \end{array}
    $
  \end{center}
  The $i$-th component of the solution will be denoted $\sol[i]{E}$.
\end{definition}

In words, for solving a system of $m$ equations, the last
variable is considered as a fixed parameter $x$ and the system of
$m-1$ equations that arises from dropping the last equation is
recursively solved. This produces an $(m-1)$-tuple parametric on $x$,
i.e., we get $\subvec{u}{1}{m-1}(x) = \sol{\subst{E}{x_m}{x}}$. Inserting
this parametric solution into the last equation, 
we get an equation in a single variable
\begin{center}
  $x  =_{\eta_m} f_m(\subvec{u}{1}{m-1}(x), x)$
\end{center}
that can be solved by taking for the 
function
$\lambda x.\, f_m(\subvec{u}{1}{m-1}(x), x)$, the least or greatest
fixpoint, depending on whether the last equation is a $\mu$- or
$\nu$-equation.
This provides the $m$-th component of the solution
$u_m = \eta_m (\lambda x.\, f_m(\subvec{u}{1}{m-1}(x), x))$. The
remaining components of the solution are obtained inserting $u_m$ in
the parametric solution $\subvec{u}{1}{m-1}(x)$ previously computed, i.e.,
$\subvec{u}{1}{m-1} = \subvec{u}{1}{m-1}(u_m)$.

The next  lemma will be helpful in several places. In
particular, it shows that the definition above is well-given, since we
are taking (least or greatest) fixpoints of monotone functions.

\begin{lemma}[solution is monotone]
  \label{le:solution-monotone}
  Let $E$ be a system of $m > 0$ equations of the
  kind $\vec{x} =_{\vec{\eta}} \vec{f} (\vec{x})$ over a lattice $L$.  For
  $i \in \interval{m}$ the function $g\colon L \to L^{m-1}$ defined
  by $g(x) = \sol{\subst{E}{x_i}{x}}$ is monotone.
\end{lemma}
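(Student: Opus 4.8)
The plan is to prove the statement by induction on the number $m$ of equations.

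For the base case $m = 1$, the system $\subst{E}{x_1}{x}$ is the empty system for every $x \in L$, so $g(x) = \sol{\emptyset} = ()$ is a constant function into $L^0$, hence trivially monotone. For the inductive step, assume $m \ge 2$ and that the claim holds for all systems with strictly fewer equations. Write $E' = \subst{E}{x_i}{x}$ and let $k$ be the largest index occurring among the variables of $E'$ (so $k = m$ if $i < m$, and $k = m-1$ if $i = m$); thus $x_k$ is the ``last'' variable of $E'$. Unfolding Definition~\ref{de:solution} one step, $g(x) = \sol{E'} = (\sol{\subst{E'}{x_k}{u(x)}},\, u(x))$, where $u(x) = \eta_k(\lambda y.\, f_k^{E'}(\sol{\subst{E'}{x_k}{y}}, y))$ and $f_k^{E'}$ is the function obtained from $f_k$ by replacing $x_i$ with $x$.

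The crucial auxiliary object is the doubly-parametrised solution $G(x,y) := \sol{\subst{\subst{E}{x_i}{x}}{x_k}{y}}$, a system of $m-2$ equations; since substitutions of distinct variables commute, also $G(x,y) = \sol{\subst{\subst{E}{x_k}{y}}{x_i}{x}}$. Applying the induction hypothesis to the $(m-1)$-equation system $\subst{E}{x_k}{y}$ with parameter index $i$ shows that $G(x,y)$ is monotone in $x$ for each fixed $y$; applying it to the $(m-1)$-equation system $\subst{E}{x_i}{x}$ with parameter index $k$ shows that $G(x,y)$ is monotone in $y$ for each fixed $x$. Separate monotonicity in each argument of a function between lattices yields joint monotonicity. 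From here the conclusion follows quickly: the map $h(x,y) := f_k^{E'}(G(x,y), y)$ is a composition of the jointly monotone $G$, the monotone substitution of $x$ into the $i$-th slot, and the monotone $f_k$, hence is monotone in $(x,y)$. Consequently $u(x) = \eta_k(\lambda y.\, h(x,y))$ is monotone in $x$, by the standard fact that the parametrised least (resp.\ greatest) fixpoint of a monotone function is monotone in the parameter, immediate from the Knaster--Tarski characterisations $\mu(\lambda y.\, h(x,y)) = \glb\{ l \mid h(x,l) \sqsubseteq l\}$ and $\nu(\lambda y.\, h(x,y)) = \lub\{ l \mid l \sqsubseteq h(x,l)\}$ recalled in Section~\ref{sec:preliminaries}; en passant, the existence of these fixpoints re-justifies the well-definedness of $\sol{\cdot}$ on systems of size $\le m$. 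Finally $g(x) = (G(x, u(x)), u(x))$ is a pairing of the monotone maps $x \mapsto u(x)$ and $x \mapsto G(x, u(x))$, so $g$ is monotone.

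I expect the only real subtlety to be the formulation of the doubly-parametrised solution $G$ and the bookkeeping around it: checking that substitutions of distinct variables commute, that $k$ is a legitimate index of the relevant subsystems (it is defined differently according to whether $i = m$), and that the pairing in Definition~\ref{de:solution} indeed places $u(x)$ in the component associated with $x_k$. These points are routine but need care; once $G$ is in place, the rest is a short composition of monotone maps together with the parametrised-fixpoint lemma.
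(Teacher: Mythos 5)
Your proof is correct and follows essentially the same route as the paper's: induction on $m$, the doubly-parametrised solution (your $G(x,y)$ is the paper's $h(x,y)$) made monotone in each argument via the induction hypothesis and commutation of substitutions, monotonicity of the parametrised fixpoint, and finally the pairing $g(x) = (G(x,u(x)),u(x))$. The only cosmetic difference is that you unify the cases $i=m$ and $i<m$ through the index $k$, whereas the paper treats them as two (identical in substance) subcases.
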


\begin{proof}
  The proof proceeds by induction on $m$. The base case $m=1$ holds
  trivially since necessarily $i=1$ and for any $x \in L$, the system
  $\subst{E}{x_i}{x}$ is empty, with empty solution.

  Let us assume $m>1$. We distinguish two subcases according to
  whether $i=m$ or $i < m$. If $i=m$ then by definition of solution
  \begin{equation}
    \label{eq:g}
    g(x) = \sol{\subst{E}{x_m}{x}} =
    (\sol{\subst{\subst{E}{x_m}{x}}{x_{m-1}}{u_{m-1}(x)}}, u_{m-1}(x))
  \end{equation}
  where
  $u_{m-1}(x) = \eta_{m-1} (\lambda y.\,
  f_{m-1}(\sol{\subst{\subst{E}{x_m}{x}}{x_{m-1}}{y}},y,x)$.

  Next observe that the function $h\colon L^2 \to L^{m-2}$ defined by
  $h(x,y) = \sol{\subst{\subst{E}{x_m}{x}}{x_{m-1}}{y}}$ is
  monotone. In fact, it is monotone in $y$ by inductive hypothesis,
  and also in $x$, again by inductive hypothesis, since
  $\subst{\subst{E}{x_m}{x}}{x_{m-1}}{y} =
  \subst{\subst{E}{x_{m-1}}{y}}{x_m}{x}$.
  Observe that $u_{m-1}$ can be written as
  \begin{center}
    $u_{m-1}(x) =  \eta_{m-1}(\lambda y.\, f_{m-1}(h(x,y),y,x))$
  \end{center}
  Recalling that also $f_{m-1}$ is monotone, we deduce that $u_{m-1}$ is
  monotone.

  Finally, using the definition of $g$ and $u_{m-1}$, from
  (\ref{eq:g}) we can derive
  \begin{center}
    $g(x) = (h(x, u_{m-1}(x)), u_{m-1}(x))$
  \end{center}
  which allows us to conclude that $g$ is monotone.

  \medskip

  If instead, $i < m$, just note that
  \begin{equation}
    g(x) = \sol{\subst{E}{x_i}{x}} =
    (\sol{\subst{\subst{E}{x_i}{x}}{x_{m}}{u_{m}(x)}}, u_{m}(x))
  \end{equation}
  where
  $u_{m}(x) = \eta_{m} (\lambda y.f_m\,
  (\sol{\subst{\subst{E}{x_i}{x}}{x_{m}}{y}},y,x)$. Then the proof
  proceeds as in the previous case.
\end{proof}

It can be easily proved that the solution of a system is, as
anticipated, a special pre-solution.

\begin{lemma}[solution is pre-solution]
  \label{le:solution-is-pre}
  Let $E$ be a system of $m$ equations over a lattice $L$ of the kind
  $\vec{x} =_{\vec{\eta}} \vec{f} (\vec{x})$ and let $\vec{u}$ be its
  solution. Then $\vec{u}$ is a pre-solution, i.e.,
  $\vec{u} = \vec{f}(\vec{u})$.
\end{lemma}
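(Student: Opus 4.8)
The plan is to prove the statement by induction on the number of equations $m$, following exactly the recursive structure of Definition~\ref{de:solution}. The base case $m=0$ is trivial: the empty tuple $()$ vacuously satisfies the (empty) system of equations. For the inductive step, write $\vec{u} = \sol{E} = (\subvec{u}{1}{m-1}, u_m)$ where, by definition, $u_m = \eta_m(\lambda x.\, f_m(\sol{\subst{E}{x_m}{x}}, x))$ and $\subvec{u}{1}{m-1} = \sol{\subst{E}{x_m}{u_m}}$. I need to check that $u_j = f_j(\vec{u})$ for every $j \in \interval{m}$, splitting into the case $j = m$ and the case $j < m$.

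For $j = m$: since $u_m$ is defined as a fixpoint (least or greatest, depending on $\eta_m$) of the function $\lambda x.\, f_m(\sol{\subst{E}{x_m}{x}}, x)$ — which is monotone by Lemma~\ref{le:solution-monotone} together with monotonicity of $f_m$, so the fixpoint exists by Knaster--Tarski — we have $u_m = f_m(\sol{\subst{E}{x_m}{u_m}}, u_m) = f_m(\subvec{u}{1}{m-1}, u_m) = f_m(\vec{u})$, as desired. For $j < m$: apply the induction hypothesis to the smaller system $\subst{E}{x_m}{u_m}$, which has $m-1$ equations and whose solution is $\subvec{u}{1}{m-1}$ by construction. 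The induction hypothesis tells us that $\subvec{u}{1}{m-1}$ is a pre-solution of $\subst{E}{x_m}{u_m}$, i.e.\ $u_j = \fsubst{f_j}{x_m}{u_m}(\subvec{u}{1}{m-1})$ for each $j < m$. By the definition of the substituted function, $\fsubst{f_j}{x_m}{u_m}(\subvec{u}{1}{m-1}) = f_j(\subvec{u}{1}{m-1}, u_m) = f_j(\vec{u})$. Combining the two cases gives $\vec{u} = \vec{f}(\vec{u})$.

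The only real subtlety — and the step I'd single out as the one requiring the most care — is the bookkeeping that identifies the first $m-1$ components of $\sol{E}$ with the solution of the $(m-1)$-equation system $\subst{E}{x_m}{u_m}$, and correctly matches the functions $f_j$ of the original system with the functions $\fsubst{f_j}{x_m}{u_m}$ of the reduced system. This is purely a matter of unfolding Definitions~\ref{de:substitution} and~\ref{de:solution} carefully: $\subst{E}{x_m}{u_m}$ is obtained by deleting the last equation and substituting $u_m$ for $x_m$ everywhere, so a pre-solution of it is precisely a tuple whose $j$-th component equals $f_j$ evaluated with $u_m$ plugged into the last slot. Once this identification is made explicit, the argument is a routine application of the induction hypothesis. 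I would also remark, as the lemma statement and the surrounding text invite, that this simultaneously confirms that Definition~\ref{de:solution} is well-posed, since at each stage we take a genuine fixpoint of a monotone endofunction on a complete lattice.
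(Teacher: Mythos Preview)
Your proof is correct and follows essentially the same approach as the paper: induction on $m$, with the base case $m=0$ vacuous, and the inductive step handled by using the fixpoint property of $u_m$ for the last component and the inductive hypothesis applied to $\subst{E}{x_m}{u_m}$ for the first $m-1$ components. The only minor difference is that you explicitly invoke Lemma~\ref{le:solution-monotone} to justify well-definedness of the fixpoint, which the paper mentions just before the lemma rather than inside the proof.
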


\begin{proof}
  The proof proceeds by induction on $m$. The base case $m=0$
  trivially holds. For any $m > 0$, let $\vec{u} = \vec{u'} u_m$,
  $\vec{f} = \vec{f}' f_m$ and $\vec{x} = \vec{x}'x_m$. Since
  $\vec{u'} = \sol{\subst{E}{x_m}{u_m}}$, by inductive hypothesis, we
  have that
  \begin{equation}
    \label{eq:sol1}
    \vec{u'} = \subst{\vec{f'}}{x_m}{u_m}(\vec{u}') = \vec{f'}(\vec{u}).
  \end{equation}
  Moreover, again by definition of solution, we have that
  $u_m = \eta_m (\lambda x.\, f_m(\sol{\subst{E}{x_m}{x}}, x))$. Hence
  $u_m = f_m(\sol{\subst{E}{x_m}{u_m}}, u_m))$. Recalling that
  $\sol{\subst{E}{x_m}{u_m}} = \vec{u'}$ we deduce
  $u_m = f_m(\vec{u'}, u_m) = f_m(\vec{u})$, that together with
  (\ref{eq:sol1}) gives $\vec{u} = \vec{f}(\vec{u})$ as desired.
\end{proof}

\subsection{A Prototypical Example: the $\mu$-Calculus}
\label{ssec:mu-calc}

As a prototypical example, we discuss how $\mu$-calculus formulae can
be equivalently seen as systems of fixpoint equations.
We focus on a standard $\mu$-calculus syntax. For fixed disjoint sets
$\PVar$ of propositional variables, ranged over by $x, y, z, \ldots$
and $\Prop$ of propositional symbols, ranged over by $p,q,r, \ldots$,
formulae are defined by
\begin{center}
  $\varphi \ ::=\ \true\ \mid\ \false\ \mid\ p \mid\ x\ \mid\ \varphi
  \land \varphi\ \mid\ \varphi \lor \varphi\ \mid\ \Box \varphi\ \mid\
  \Diamond \varphi\ \mid\ \eta x.\, \varphi$
\end{center}
where $p \in \Prop$, $x \in \PVar$ and $\eta \in \{ \mu, \nu
\}$. Formulae of the kind $\eta x.\, \phi$ are called fixpoint
formulae.

The semantics of a formula is given with respect to an unlabelled
transitions system (or Kripke structure) $(\mathbb{S}, \to)$ where
$\mathbb{S}$ is the set of states and
$\to\ \subseteq \mathbb{S} \times \mathbb{S}$ is the transition
relation. Given a formula $\varphi$ and an environment
$\rho\colon \Prop \cup \PVar \to \Pow{\mathbb{S}}$ mapping each
proposition or propositional variable to the set of states where it
holds, we denote by $\sem{\varphi}_\rho$ the semantics of $\varphi$
defined as usual (see, e.g.,~\cite{bw:mu-calculus-modcheck}).

First note that any $\mu$-calculus formula can be expressed in
equational form, by inserting an equation for each propositional
variable (see also~\cite{cks:faster-modcheck-mu,s:fast-simple-nested-fixpoints}). The
reverse translation is also possible, hence these specification
languages are equally expressive. Here, we will only depict the
relation via an example, the formal treatment is given in
Appendix~\ref{sec:fp-systems-mu}.
\begin{figure}[h]
  \begin{subfigure}[b]{.3\textwidth}
    \centering
    $
    \begin{array}{lcl}
      x_1 & =_{\mu} & p \lor \Diamond x_1\\
      x_2 & =_{\nu} & x_1 \land \Box x_2
    \end{array}
    $
    \caption{}
    \label{fi:running-eqf}
  \end{subfigure}      
  \begin{subfigure}[b]{.3\textwidth}
    \centering      
    \begin{tikzpicture}[->, >=stealth, nodes={draw, circle, minimum size=1.8em}, node distance=0.5cm]
      \node (a) {$a$};
      \node (b) [label=above:$p$,right=of a] {$b$};
      \path
      (a) edge (b)
      (a) edge [loop left] ()
      (b) edge [loop right] ();
    \end{tikzpicture}
    \caption{}
    \label{fi:running-ts}
  \end{subfigure}
  \begin{subfigure}[b]{.3\textwidth}
    \centering            
    $
    \begin{array}{lcl}
      x_1 & =_{\mu} & \{ b\} \cup \Diamond x_1\\
      x_2 & =_{\nu} & x_1 \cap \Box x_2
    \end{array}
    $
    \caption{}
    \label{fi:running-eq}
  \end{subfigure}      
  \caption{}
\end{figure}

\begin{example}
  \label{ex:running}
  Let $\phi = \nu x_2.((\mu x_1.(p \lor\Diamond x_1))\land\Box x_2)$
  be a formula requiring that from all reachable states there exists a
  path that eventually reaches a state where $p$ holds. The equational
  form is quite straightforward and is reported in
  Fig.~\ref{fi:running-eqf}.
  Consider a transition system $(\mathbb{S}, \to)$ where
  $\mathbb{S} = \{a,b\}$ and $\to$ is as depicted in Fig.~\ref{fi:running-ts},
  with $p$ that holds only on state $b$.
  The resulting system of equations on the lattice $\Pow{\mathbb{S}}$
  is given in Fig.~\ref{fi:running-eq}, where
  $\Diamond,\Box\colon \Pow{\mathbb{S}}\to \Pow{\mathbb{S}}$ are
  defined as
  $\Diamond(S) = \{s\in \mathbb{S}\mid \exists s'\in \mathbb{S}.(s\to
  s' \land s'\in S)\}$,
  $\Box(S) = \{s\in \mathbb{S}\mid \forall s'\in \mathbb{S}.(s\to
  s'\Rightarrow s'\in S)\}$ for $S\subseteq \mathbb{S}$.

  The solution is $x_1 = x_2 = \mathbb{S}$.  In particular,
  $x_2 = \mathbb{S}$ corresponds to the fact that the formula
  $\varphi$ holds in every state.
\end{example}

\begin{example}
  Consider the formula $\phi' = \nu x_2.(\Box x_2 \land \mu x_1.((p
  \land \Diamond x_2) \lor \Diamond x_1))$ requiring that from all reachable
  states there is a path along which $p$ holds infinitely often.
  The equational form of $\phi'$ is:
  \begin{center}
    $
    \begin{array}{lcl}
      x_1 & =_{\mu} & (p \land \Diamond x_2) \lor \Diamond x_1\\
      x_2 & =_{\nu} & \Box x_2 \land x_1
    \end{array}
    $
  \end{center}
  On the same transition system of the previous example
  (Fig.~\ref{fi:running-ts}), the solution of the corresponding
  system is $x_1 = x_2 = \mathbb{S}$.  Notice that this time the
  order of the equations is relevant, while in the previous example it
  was not. Indeed, if we swap the two equations in the system, the
  solution becomes $x_1 = x_2 = \emptyset$.  In general, the
  order of the equations is important whenever there is alternation of
  fixpoints (mutual dependencies between least and greatest fixpoint
  equations).  
\end{example}

\subsection{Data-Flow Analysis}
\label{ss:data-flow}

In order to give further intuition, we revisit another area
where fixpoints play a major role, namely data-flow analysis of
programs. One can easily state
a program analysis question
in this setting as a system of fixpoint equations, based on the flow
graph of the program under consideration.

We focus on the well-known \emph{constant propagation
analysis} (see, e.g.,~\cite{nnh:program-analysis}).
Its aim
is to show that the value
of a variable is always constant at a certain program point, allowing
us to optimise the program by replacing the variable by the constant.
Consider for instance the while program in
Fig.~\ref{fig:constant-propagation}, where variables contain integer
values and blocks are numbered in order to easily reference them. The
condition for the while loop (block~$3$) is irrelevant and is hence
replaced by $\prg{*}$. Note that variable $\prg{x}$ always has
value~$7$ in block~$4$ and hence the assignment in this block
could be replaced by $\prg{y:=7+y}$.

\begin{figure}[h]
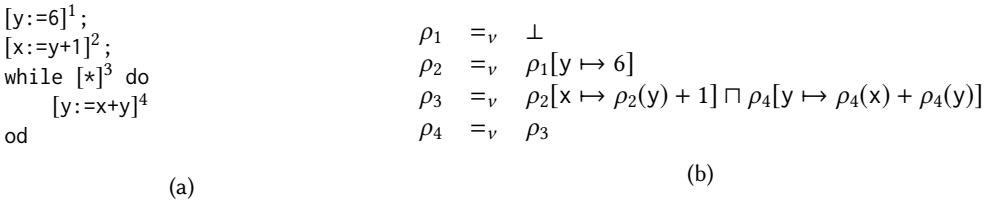

  \begin{subfigure}[b]{.34\textwidth}
    {\tt\small
      \begin{mytab}
        \blk{1}{y:=6}; \\
        \blk{2}{x:=y+1}; \\
        while \blk{3}{*} do \\
        \> \blk{4}{y:=x+y} \\
        od
      \end{mytab}}
    \caption{}
  \label{fig:constant-propagation}
\end{subfigure}      
\begin{subfigure}[b]{.64\textwidth}
  \centering
  $
  \begin{array}{lcl}
    \rho_1 & =_\nu & \bot \\
    \rho_2 & =_\nu & \rho_1[\prg{y}\mapsto 6] \\
    \rho_3 & =_\nu & \rho_2[\prg{x}\mapsto \rho_2(\prg{y})+1] \sqcap
                     \rho_4[\prg{y}\mapsto \rho_4(\prg{x})+\rho_4(\prg{y})] \\
    \rho_4 & =_\nu & \rho_3
  \end{array}
  $ 
  \caption{}
  \label{fig:constant-analysis}
\end{subfigure}      
\caption{(a) A simple while program and (b) the equation system for
  the corresponding constant propagation analysis.}
\end{figure}

Following~\cite{nnh:program-analysis} we analyse such programs by
setting up an instance of a monotone framework. In particular we will
use the following lattice to record the results of the analysis:
\[ L = (\mathbb{Z}\cup \{\bot\})^\mathit{Var}\cup \{\top\} \] where
$\mathit{Var}$ is the set of variables.
That is, a lattice element is
either $\top$ or a function
$\rho\colon \mathit{Var}\to \mathbb{Z}\cup \{\bot\}$ that assigns a
variable $\prg{x}$ to a value in $\mathbb{Z}$ (if $\prg{x}$ is known
to have constant value $\rho(\prg{x})$ at this program point) or to
$\bot$ (to indicate that $\prg{x}$ is possibly non-constant). As
usual, we are allowed to over-approximate and $\bot$ might be
assigned although the value of the variable is actually constant.

The lattice order is defined as follows: two assignments
$\rho_1,\rho_2\colon \mathit{Var}\to \mathbb{Z}\cup \{\bot\}$ are
ordered, i.e.\ $\rho_1\sqsubseteq \rho_2$, if for each
$\prg{x}\in \mathit{Var}$ either $\rho_1(\prg{x}) = \rho_2(\prg{x})$
or $\rho_1(\prg{x}) = \bot$. That is, we consider a flat order where
$\bot$ is smaller than the integers and the integers themselves are
incomparable, and extend it pointwise to functions. Clearly,  $\top$ is
the largest lattice element and we use some overloading
and denote by $\bot$ the function that maps every variable to
$\bot$. Note that this order deviates from the usual convention in
program analysis which states that smaller values should be more
precise than larger values. We do this since our game characterises
whether a lattice element is \emph{below} the solution. Since we want
to check that the solution is more precise than a given threshold, we
have to reverse the order.

Let us write $\rho' = \rho[\prg{x}\mapsto z]$ for $z\in\mathbb{Z}$ to
denote function update, that is $\rho'(\prg{x}) = z$ and
$\rho'(\prg{y}) = \rho(\prg{y})$ for $\prg{y}\neq\prg{x}$. When
$\rho=\top$ we define $\rho[\prg{x}\mapsto z] = \top$. 

Observe that $L$ is algebraic (and hence continuous).  The compact
elements are $\top$ and those functions which have finite support,
i.e., functions of the kind
$\bot[\prg{x1} \mapsto z_1, \ldots, \prg{xn} \mapsto z_n]$ where
only finitely many variables are not mapped to $\top$. In particular
we can use as a basis the functions $\bot [\prg{x} \mapsto z]$ for some $\prg{x} \in \mathit{Var}$ and $z \in \mathbb{Z}$.
Note also that $L$ is not distributive. For instance if $\rho_i = \bot[\prg{x} \mapsto i]$ for $i \in \{1,2,3\}$ then $(\rho_1 \sqcap \rho_2) \sqcup \rho_3 = \bot \sqcup \rho_3 = \rho_3$ while $(\rho_1 \sqcup \rho_3) \sqcap (\rho_2 \sqcup \rho_3)  = \top \sqcap \top = \top$.

From the program in Fig.~\ref{fig:constant-propagation} we can
easily derive the system of fixpoint equations in Fig.~\ref{fig:constant-analysis}, where we use $\rho_i$
to record the lattice value for the entry of block~$i$. 

At the beginning, no variable is constant. Then the equation system
mimics the control flow of the program. In block~$3$ we have to take
the meet to obtain an analysis result that is less precise than the
results coming from block~$2$ respectively block~$4$. Furthermore,
since precision increases with the order, we are interested in the
greatest solution, which means that we have only $\nu$-equations.

The expected solution is $\rho_1 = \bot$,
$\rho_2 = \bot[\prg{y}\mapsto 6]$,
$\rho_3 = \rho_4 = \bot[\prg{x}\mapsto 7]$ witnessing that at block 2 variable \prg{y} has constant value $6$ and at blocks 3 and 4 variable \prg{x} has constant value $7$.

\subsection{Approximating the Solution}

The game theoretical characterisation of the solution of a system of
fixpoint equations discussed later will rely on a notion of
approximation of the solution that is reminiscent of the lattice
progress measure in~\cite{hsc:lattice-progress-measures}.

\begin{definition}[approximants]
  \label{de:approximants}
  Let $E$ be a system of $m$ equations over the lattice $L$ of the
  kind $\vec{x} =_{\vec{\eta}} \vec{f}(\vec{x})$. Given any tuple
  $\vec{l} \in L^m$, let $f_{i,{\vec{l}}}\colon L \to L$ be the function
  defined by
  \begin{center}
    $f_{i,{\vec{l}}}(x) = f_i(\sol{\subst{\subst{E}{\subvec{x}{i+1}{m}}{\subvec{l}{i+1}{m}}}{x_i}{x}}, x, \subvec{l}{i+1}{m})$.
  \end{center}
  We say that a tuple  $\vec{l} \in L^m$ is a $\mu$-\emph{approximant}
  when for all $i \in \interval{m}$,
  if $\eta_i = \nu$ then $l_i = \nu (f_{i,\vec{l}})$, else
  if $\eta_i = \mu$ then
    $l_i = f_{i,{\vec{l}}}^\alpha(\bot)$ for some ordinal $\alpha$.
  Dually,  $\vec{l} \in L^m$ is a $\nu$-\emph{approximant} when for all $i \in \interval{m}$,
  if $\eta_i = \nu$ then $l_i = f_{i,{\vec{l}}}^\alpha(\top)$  for some ordinal $\alpha$,
  else if $\eta_i = \mu$ then
    $l_i =
    \mu (f_{i,\vec{l}})$.

  Whenever $\vec{l}$ is a $\mu$-approximant we write $\ord{\vec{l}}$
  to denote the least $m$-tuple of ordinals $\vec{\alpha}$ such that
  for any $i \in \interval{m}$, if $\eta_i = \mu$ then
  $l_{i} = f_{i,{\vec{l}}}^{\alpha_i}(\bot)$ else, if $\eta_i = \nu$,
  $l_{i} = f_{i,{\vec{l}}}^{\alpha_i}(\top) = \nu (f_{i,{\vec{l}}})$.
\end{definition}

Observe that, spelling out the definition of the solution of a system
of equations, it can be easily seen that
$\sol[i]{\subst{E}{\subvec{x}{i+1}{m}}{\subvec{l}{i+1}{m}}} = \eta_i
(f_{i,{\vec{l}}})$.
Then a $\mu$-approximant is obtained by taking
under-approximations for the least fixpoints and the exact value for
greatest fixpoints.
In fact, in the case of $\mu$-approximants, for each
$i \in \interval{m}$, if $\eta_i=\nu$, the $i$-th component is set to
$\nu (f_{i,{\vec{l}}})$ which is $i$-th component
$\sol[i]{\subst{E}{\subvec{x}{i+1}{m}}{\subvec{l}{i+1}{m}}}$ of the
solution. Instead, if $\eta_i=\mu$ the component $l_i$ is set to
$f_{i,{\vec{l}}}^\alpha(\bot)$ for some ordinal $\alpha$, which is an
underapproximation of
$\mu(f_{i,{\vec{l}}}) =
\sol[i]{\subst{E}{\subvec{x}{i+1}{m}}{\subvec{l}{i+1}{m}}}$, obtained
by iterating $f_{i,{\vec{l}}}$ over $\bot$ up to
ordinal $\alpha$. For $\nu$-approximants the situation is dual.

We remark that the function $f_{i,\vec{l}}$ depends only on the
subvector $\subvec{l}{i+1}{m}$. In particular
$f_{m,\vec{l}}$ does not depend on $\vec{l}$. In fact, $f_{m,\vec{l}}  = \lambda x.\, f_m(\sol{\subst{E}{x_m}{x}},x)$. Using
$\vec{l}$ as subscript instead of the subvector is a slight abuse of
notation that makes the notation lighter.

Approximants can be given an inductive characterisation. Besides
shedding some light on the notion of approximant, the following easy
result will be useful at a technical level.

\begin{lemma}[inductive characterisation of approximants]
  \label{le:inductive-approximant}
  Let $E$ be a system of $m>0$ equations over the lattice $L$, of the
  kind $\vec{x} =_{\vec{\eta}} \vec{f}(\vec{x})$ and let $g_m\colon L \to L$ be the function $g_m(x) = f_m(\sol{\subst{E}{x_m}{x}},x)$.  A tuple $\vec{l} \in L^m$ is a $\mu$-approximant iff the following conditions hold
  \begin{enumerate}
  \item either $\eta_m=\mu$ and $l_m = g_m^\alpha(\bot)$ for some ordinal $\alpha$, or $\eta_m=\nu$ and $l_m = \nu g_m$
    
  \item 
    $\subvec{l}{1}{m-1}$ is a $\mu$-approximant of $\subst{E}{x_m}{l_m}$.
  \end{enumerate}
\end{lemma}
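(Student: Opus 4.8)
The plan is to prove the two directions of the biconditional separately, using the recursive definition of the solution (Definition~\ref{de:solution}) and of approximants (Definition~\ref{de:approximants}), and essentially unfolding notation. The key observation to record first is how the auxiliary functions $f_{i,\vec{l}}$ behave under the substitution that removes the last equation. Concretely, writing $E' = \subst{E}{x_m}{l_m}$, I claim that for every $i \in \interval{m-1}$ one has $f_{i,\vec{l}}^{E} = f_{i,\subvec{l}{1}{m-1}}^{E'}$, where the superscript indicates the system with respect to which the auxiliary function is formed. This follows because $\subst{\subst{E}{\subvec{x}{i+1}{m}}{\subvec{l}{i+1}{m}}}{x_i}{x}$ is literally the same system as $\subst{\subst{E'}{\subvec{x}{i+1}{m-1}}{\subvec{l}{i+1}{m-1}}}{x_i}{x}$ (substituting $l_m$ for $x_m$ first and then the other components, in either order, yields the same object), and the relevant component of $\vec{f}$ is unchanged by the substitution of $x_m$. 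Similarly, $g_m$ as defined in the statement is exactly $f_{m,\vec{l}}$, which by the remark after Definition~\ref{de:approximants} does not depend on $\vec{l}$.

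With that bookkeeping in place, the forward direction is immediate: if $\vec{l}$ is a $\mu$-approximant of $E$, then applying the definition at index $i=m$ gives precisely condition~(1) (since $g_m = f_{m,\vec{l}}$), and applying it at indices $i \in \interval{m-1}$ together with the identity $f_{i,\vec{l}}^{E} = f_{i,\subvec{l}{1}{m-1}}^{E'}$ shows that $\subvec{l}{1}{m-1}$ satisfies the defining clauses of a $\mu$-approximant for $E' = \subst{E}{x_m}{l_m}$, which is condition~(2). The backward direction is the same argument read in reverse: condition~(1) supplies the $i=m$ clause of the definition for $\vec{l}$, and condition~(2), unfolded via the same identity, supplies the clauses for $i \in \interval{m-1}$; together these say exactly that $\vec{l}$ is a $\mu$-approximant of $E$.

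I do not expect a serious obstacle here — the lemma is essentially a change-of-bookkeeping statement. The one point that needs care, and which I would state as an explicit sublemma or inline verification, is the commutation of substitutions $\subst{\subst{E}{x_m}{l_m}}{x_i}{x} = \subst{\subst{E}{x_i}{x}}{x_m}{l_m}$ for $i < m$, together with the fact that removing and instantiating the $m$-th equation does not alter the functions $f_i$ for $i<m$ except by fixing their $x_m$-argument to $l_m$; this is the same kind of manipulation already used in the proof of Lemma~\ref{le:solution-monotone}. Once that is granted, matching up the $\ord{\cdot}$ ordinals (the $\alpha$ witnessing $l_i = f_{i,\vec{l}}^{\alpha}(\bot)$ is the same $\alpha$ on both sides) is automatic, so the inductive characterisation of $\ord{\vec{l}}$ also transfers along the same correspondence, should that be needed later. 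The dual statement for $\nu$-approximants holds by the evident symmetry and need not be proved separately.
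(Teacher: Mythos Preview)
Your proposal is correct and is exactly the argument the paper has in mind: the paper's proof is literally the single word ``Immediate,'' and what you have written is the unfolding of that word---namely the verification that $g_m = f_{m,\vec{l}}$ and that $f_{i,\vec{l}}^{E} = f_{i,\subvec{l}{1}{m-1}}^{E'}$ for $i<m$, after which both directions are definitional.
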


\begin{proof}
  Immediate.
\end{proof}

As mentioned above, $\mu$-approximants are closely related to lattice
progress measures in the sense
of~\cite{hsc:lattice-progress-measures}.
In fact, from Lemma~\ref{le:inductive-approximant} we can infer that, given a vector
$\vec{\alpha}$ of ordinals,  the $\mu$- or $\nu$-approximant
$\vec{l}\in L^m$ with $\ord{\vec{l}} = \vec{\alpha}$ is uniquely
determined.
More precisely, a $\mu$-approximant $\vec{l}$ is determined by the
subvector of $\ord{\vec{l}}$ consisting only of the $m$-tuple of
components of $\ord{\vec{l}}$ corresponding to $\mu$-indices. Hence we
can define a function that maps each such $m$-tuple of ordinals to the
corresponding $\mu$-approximant and this turns out to be a lattice
progress measures in the sense
of~\cite{hsc:lattice-progress-measures}.  Actually, as proved in
Appendix~\ref{ssec:hsc-measures}, it is the
greatest one.  It can be seen to coincide with the measure used
in~\cite[Theorem 2.13]{hsc:lattice-progress-measures} (completeness
part).

We next observe that the name approximant is appropriate, i.e.,
$\mu$-approximants provide an approximation of the solution from
below, while $\nu$-approximants from above.
The solution is thus the only pre-solution which is both a $\mu$- and a $\nu$-approximant.

\begin{lemma}[solution and approximants]
  \label{le:solution-is-approximant}
  Let $E$ be a system of $m$ equations over the lattice $L$, of the
  kind $\vec{x} =_{\vec{\eta}} \vec{f}(\vec{x})$.  The solution of 
  $E$ is the greatest $\mu$-approximant and the least
  $\nu$-approximant.
\end{lemma}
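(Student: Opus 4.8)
The plan is to prove both halves of the statement simultaneously by induction on the number $m$ of equations, using the inductive characterisation of approximants (Lemma~\ref{le:inductive-approximant}), Kleene iteration, and the monotonicity of the solution with respect to a substituted parameter (Lemma~\ref{le:solution-monotone}). I will only spell out the argument for $\mu$-approximants; the $\nu$-approximant case is entirely dual, obtained by swapping $\bot$ with $\top$, least with greatest fixpoints, and $\sqsubseteq$ with $\sqsupseteq$.

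First I would show that the solution $\vec{u} = \sol{E}$ is itself a $\mu$-approximant. The base case $m=0$ is trivial. For $m>0$, put $g_m(x) = f_m(\sol{\subst{E}{x_m}{x}},x)$, so that $u_m = \eta_m(g_m)$ by definition of the solution. If $\eta_m = \nu$ then $u_m = \nu g_m$, and if $\eta_m = \mu$ then $u_m = \mu g_m = g_m^\gamma(\bot)$ for some ordinal $\gamma$ by Kleene iteration; in either case condition~(1) of Lemma~\ref{le:inductive-approximant} holds. Moreover $\subvec{u}{1}{m-1} = \sol{\subst{E}{x_m}{u_m}}$, which by the induction hypothesis is a $\mu$-approximant of the $(m-1)$-equation system $\subst{E}{x_m}{u_m}$, giving condition~(2). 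Hence $\vec{u}$ is a $\mu$-approximant, and dually a $\nu$-approximant.

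Next I would show that every $\mu$-approximant $\vec{l}$ of $E$ satisfies $\vec{l} \sqsubseteq \vec{u}$, again by induction on $m$, with $m=0$ trivial. By condition~(1) of Lemma~\ref{le:inductive-approximant}, either $\eta_m = \nu$ and $l_m = \nu g_m = u_m$, or $\eta_m = \mu$ and $l_m = g_m^\alpha(\bot)$ for some ordinal $\alpha$; since the Kleene chain $(g_m^\beta(\bot))_\beta$ is ascending and eventually stationary at $\mu g_m$, in the latter case $l_m \sqsubseteq \mu g_m = u_m$. Thus $l_m \sqsubseteq u_m$ in all cases. By condition~(2), $\subvec{l}{1}{m-1}$ is a $\mu$-approximant of $\subst{E}{x_m}{l_m}$, so the induction hypothesis yields $\subvec{l}{1}{m-1} \sqsubseteq \sol{\subst{E}{x_m}{l_m}}$. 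Since $l_m \sqsubseteq u_m$, monotonicity of the solution (Lemma~\ref{le:solution-monotone}) gives $\sol{\subst{E}{x_m}{l_m}} \sqsubseteq \sol{\subst{E}{x_m}{u_m}} = \subvec{u}{1}{m-1}$, hence $\subvec{l}{1}{m-1} \sqsubseteq \subvec{u}{1}{m-1}$; combined with $l_m \sqsubseteq u_m$ this gives $\vec{l} \sqsubseteq \vec{u}$. Together, the two parts say that $\vec{u}$ lies in the set of $\mu$-approximants and dominates every element of it, i.e.\ it is the greatest $\mu$-approximant, and dually the least $\nu$-approximant.

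The only step that is not routine is the last one: the first $m-1$ components of a $\mu$-approximant $\vec{l}$ of $E$ form a $\mu$-approximant of $\subst{E}{x_m}{l_m}$, a system where the value $l_m$ is substituted rather than $u_m$, so the induction hypothesis does not directly compare $\subvec{l}{1}{m-1}$ with $\subvec{u}{1}{m-1} = \sol{\subst{E}{x_m}{u_m}}$. Closing this gap is exactly what the monotonicity of $x \mapsto \sol{\subst{E}{x_m}{x}}$ from Lemma~\ref{le:solution-monotone} is needed for, and it is the point on which I would take most care in writing out the full proof.
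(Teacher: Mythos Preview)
Your proposal is correct and follows essentially the same approach as the paper's proof: induction on $m$, using the inductive characterisation of approximants (Lemma~\ref{le:inductive-approximant}) to reduce to the $(m-1)$-equation system, and then invoking monotonicity of the solution (Lemma~\ref{le:solution-monotone}) to bridge $\sol{\subst{E}{x_m}{l_m}}$ and $\sol{\subst{E}{x_m}{u_m}}$. You are slightly more explicit than the paper in spelling out why the solution is itself a $\mu$-approximant and in treating both the $\eta_m=\mu$ and $\eta_m=\nu$ cases, but the argument is the same.
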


\begin{proof}
  The solution $\vec{u}$ is clearly a $\mu$-approximant. We prove that it is the greatest
  $\mu$-approximant by induction on $m$.
  If $m=0$ the thesis is vacuously true. If $m>0$, consider another
  $\mu$-approximant $\vec{l}$. We distinguish two subcases according
  to whether $\eta_m=\mu$ or $\eta_m=\nu$. If $\eta_m = \mu$, we know
  that $l_m = f_{m,\vec{l}}^\alpha(\bot)$ for some ordinal $\alpha$.
  Observe that
  $f_{m,\vec{l}} = \lambda x.\, f_m(\sol{\subst{E}{x_m}{x}}, x))$ is
  the function for which $u_m$ is the least fixpoint, hence
  \begin{equation}
    \label{eq:smallest1}
    l_m \sqsubseteq u_m.
  \end{equation}
  Moreover, by Lemma~\ref{le:inductive-approximant}, $\subvec{l}{1}{m-1}$ is a
  $\mu$-approximant for the system $\subst{E}{x_m}{l_m}$.
  Hence, by inductive hypothesis
  \begin{equation}
    \label{eq:smallest2}
    \subvec{l}{1}{m-1} \sqsubseteq
    \sol{\subst{E}{x_m}{l_m}}
  \end{equation}
  Moreover, by monotonicity of the solution
  (Lemma~\ref{le:solution-monotone}), since $l_m \sqsubseteq u_m$, we
  get
  $\sol{\subst{E}{x_m}{l_m}} \sqsubseteq \sol{\subst{E}{x_m}{u_m}} =
  \subvec{u}{1}{m-1}$. Therefore, combined with (\ref{eq:smallest1})
  and (\ref{eq:smallest2}), we conclude $\vec{l} \sqsubseteq \vec{u}$.
  
  \medskip
  
  The proof for $\nu$-approximants is dual.  
\end{proof}

We conclude with a technical lemma that will be used to locally modify approximations in the game.

\begin{lemma}[updating approximants]
  \label{le:approx-update}
  Let $E$ be a system of $m$ equations over the lattice $L$, of the
  kind $\vec{x} =_{\vec{\eta}} \vec{f}(\vec{x})$ and let $\vec{l}$ be
  a $\mu$-approximant with $\ord{\vec{l}} = \vec{\alpha}$. For any
  $i \in \interval{m}$ and ordinal $\alpha \leq \alpha_i$
  \begin{enumerate}
  \item if $\eta_i = \mu$, then
    $\vec{l}'
    =(\sol{\subst{\subst{E}{\subvec{x}{i+1}{m}}{\subvec{l}{i+1}{m}}}{x_i}{l_i'}},
    l_i', \subvec{l}{i+1}{m})$, with
    $l_i' = f_{i,\vec{l}}^\alpha(\bot)$ for some ordinal $\alpha$, is
    a $\mu$-approximant
    
  \item if $\eta_i = \nu$, then
    $\vec{l}' = (\sol{\subst{E}{\subvec{x}{i+1}{m}}{\subvec{l}{i+1}{m}}},
    \subvec{l}{i+1}{m})$ is
    a $\mu$-approximant
  \end{enumerate}
  and in both cases $\ord{\vec{l}'} \preceq_i \ord{\vec{l}}$.
  A dual result holds for $\nu$-approximants.
\end{lemma}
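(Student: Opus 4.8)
The plan is to reduce the two cases to a single normal form, verify that this form is a $\mu$-approximant by peeling off its last $m-i+1$ components through Lemma~\ref{le:inductive-approximant}, and then read off the ordinal inequality from Definition~\ref{de:trunc-ord}. The statement for $\nu$-approximants is the exact order-theoretic dual (swap $\mu\leftrightarrow\nu$, $\bot\leftrightarrow\top$ and reverse $\sqsubseteq$), so I would only treat the $\mu$-case.

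\emph{Normal form.} Set $l_i' := f_{i,\vec{l}}^{\alpha}(\bot)$ if $\eta_i=\mu$ and $l_i' := l_i$ if $\eta_i=\nu$; in the latter case $l_i' = \nu(f_{i,\vec{l}})$ because $\vec{l}$ is a $\mu$-approximant. Let $\vec{c} := (l_i',\subvec{l}{i+1}{m})$ and $E'' := \subst{E}{\subvec{x}{i}{m}}{\vec{c}}$, a system of $i-1$ equations. I would first check that in both cases $\vec{l}' = (\sol{E''},\, \vec{c})$. For $\eta_i=\mu$ this is just the identity $\subst{\subst{E}{\subvec{x}{i+1}{m}}{\subvec{l}{i+1}{m}}}{x_i}{l_i'} = E''$. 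For $\eta_i=\nu$, unfolding Definition~\ref{de:solution} once on $\subst{E}{\subvec{x}{i+1}{m}}{\subvec{l}{i+1}{m}}$ shows that its last solution component equals $\eta_i(f_{i,\vec{l}}) = \nu(f_{i,\vec{l}}) = l_i = l_i'$ (using the observation following Definition~\ref{de:approximants}), and removing that component leaves precisely $\sol{E''}$.

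\emph{$\vec{l}'$ is a $\mu$-approximant.} I would apply Lemma~\ref{le:inductive-approximant} repeatedly, peeling off the components of $\vec{l}'$ of indices $m, m-1, \ldots, i$. For each index $j>i$, the equation system reached at that stage is $\subst{E}{\subvec{x}{j+1}{m}}{\subvec{l}{j+1}{m}}$ (the components peeled so far have index $>i$, hence agree with those of $\vec{l}$), the one-variable function Lemma~\ref{le:inductive-approximant} asks about for this reduced system is exactly $f_{j,\vec{l}}$ by Definition~\ref{de:approximants}, and $l_j'=l_j$, so the required clause holds because $\vec{l}$ is a $\mu$-approximant. At index $i$ the relevant function is $f_{i,\vec{l}}$ as well, and $l_i'$ was chosen to be $f_{i,\vec{l}}^{\alpha}(\bot)$ (if $\eta_i=\mu$) or $\nu(f_{i,\vec{l}})$ (if $\eta_i=\nu$), which is exactly the clause Lemma~\ref{le:inductive-approximant} demands. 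After these $m-i+1$ steps, what remains is to show that $\subvec{l'}{1}{i-1}$ is a $\mu$-approximant of $\subst{E}{x_i}{l_i'} = E''$; but $\subvec{l'}{1}{i-1} = \sol{E''}$ by the normal form, and $\sol{E''}$ is a $\mu$-approximant of $E''$ by Lemma~\ref{le:solution-is-approximant}.

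\emph{Ordinals and the main obstacle.} For $j>i$ the components of $\vec{l}'$ of index $\geq j$ coincide with those of $\vec{l}$, so $f_{j,\vec{l}'}=f_{j,\vec{l}}$ and $l_j'=l_j$, whence the $j$-th entries of $\ord{\vec{l}'}$ and $\ord{\vec{l}}$ are equal. For $j=i$: if $\eta_i=\nu$ then $l_i'=l_i$ and $f_{i,\vec{l}'}=f_{i,\vec{l}}$ give equal $i$-th entries, while if $\eta_i=\mu$ then $l_i'=f_{i,\vec{l}}^{\alpha}(\bot)$ with $\alpha\leq\alpha_i$, so by minimality the $i$-th entry of $\ord{\vec{l}'}$ is $\leq\alpha\leq\alpha_i$. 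Since $\ord{\vec{l}'}$ and $\ord{\vec{l}}$ agree on all entries above $i$ and the $i$-th entry does not increase, Definition~\ref{de:trunc-ord} yields $\ord{\vec{l}'}\preceq_i\ord{\vec{l}}$. I expect the only genuinely fiddly parts to be the normal-form step — checking the $\eta_i=\nu$ case by unwinding the recursive definition of the solution — and keeping track, along the peeling, of which reduced system and which one-variable function one is currently looking at; both are routine consequences of Definitions~\ref{de:solution} and~\ref{de:approximants}.
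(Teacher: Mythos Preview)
Your proposal is correct and follows essentially the same approach as the paper: verify the $\mu$-approximant conditions component by component (indices $>i$ are unchanged, index $i$ is of the required shape by construction, indices $<i$ come from a solution and hence are exact fixpoints), then conclude $\ord{\vec{l}'}\preceq_i\ord{\vec{l}}$ from the fact that the components of index $>i$ agree and the $i$-th ordinal entry does not increase. Your normal-form step unifying cases (1) and (2), and your explicit appeal to Lemmas~\ref{le:inductive-approximant} and~\ref{le:solution-is-approximant}, are a slightly more structured packaging of the same argument; the paper instead verifies Definition~\ref{de:approximants} directly and just remarks that case~(2) is analogous.
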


\begin{proof}
  Let us focus on (1). In order to show that $\vec{l}' = (\sol{\subst{\subst{E}{\subvec{x}{i+1}{m}}{\subvec{l}{i+1}{m}}}{x_i}{l_i'}},
    l_i', \subvec{l}{i+1}{m})$ is a
      $\mu$-approximant, first observe that the components
      $l_{i+1}, \ldots, l_m$ do not change. Component $l_i'$ is of the
      desired shape by definition. Finally, for
      $j<i$ the component $l_j'$ is defined as \linebreak
      $\sol[j]{
        \subst{\subst{E}{\subvec{x}{i+1}{m}}{\subvec{l}{i+1}{m}}}{x_i}{l_i'}}$
      and thus, by definition of solution of a system, if $\eta_j = \nu$ then
      $l_j' = \nu (f_{j,\vec{l}'})$  and if $\eta_j = \mu$ then
      $l_j' = \mu (f_{j,\vec{l}'}) = f_{j,\vec{l}'}^\beta(\bot)$ for
      some ordinal $\beta$, as desired.
      Finally observe that since $\vec{l}$ and $\vec{l}'$
      coincide on components $i+1, \ldots, m$, and
      $l_i = f_{i,\vec{l}}^{\alpha_i}(\bot)$, while
      $l_i' = f_{i,\vec{l}}^\alpha(\bot)$, with
      $\alpha \leq \alpha_i$, clearly
      $\ord{\vec{l}'} \preceq_i \ord{\vec{l}}$.

      \medskip
      
      The proof for (2) is analogous. In fact, also in this case the
      components ${i+1} \ldots, m$ are unchanged and finally, for
      $j\leq i$ the component $l_j'$ is defined as
      $\sol[j]{
        \subst{\subst{E}{\subvec{x}{i+1}{m}}{\subvec{l}{i+1}{m}}}{x_i}{l_i'}}$,
      thus the same reasoning as above applies.
      Both can easily dualised for $\nu$-approximants.
\end{proof}

\section{Fixpoint Games}
\label{sec:fp-equation-games}

In this section we present a game-theoretical approach to the solution
of a system of fixpoint equations over a continuous lattice. More
precisely, given a lattice with a fixed basis, the game allows us to
check whether an element of the basis is smaller (with respect to
$\sqsubseteq$) than the solution of a selected equation. This
  corresponds to solving the associated verification problem. For
  instance, when model-checking the $\mu$-calculus, one is interested
  in establishing whether a system satisfies a formula $\varphi$,
  which amounts to check whether $\{s_0\} \subseteq u_\varphi$ where
  $s_0$ is the initial state and $u_\varphi$ is the solution of the
  system of equations associated with $\varphi$.

\subsection{Definition of the Game}

The fixpoint game that we introduce has been inspired by the unfolding
game described in~\cite{v:lectures-mu-calculus}, that works for a single
fixpoint equation over the powerset lattice. We adopted the name
\textit{fixpoint game}, analogously to~\cite{hkmv:parity-games-automata-game-logic}.

\begin{definition}[fixpoint game]
  \label{def:fp-game}
  Let $L$ be a continuous lattice and let $B_L$ be a basis of $L$ such
  that $\bot \not \in B_L$. Given a system $E$ of $m$ equations over
  $L$ of the kind $\vec{x} =_{\vec{\eta}} \vec{f} (\vec{x})$, the
  corresponding fixpoint game is a parity game, with an existential
  player $\exists$ and a universal player $\forall$, defined as
  follows:
  
  \begin{itemize}
  \item The positions of $\exists$ are pairs $(b, i)$ where $b \in B_L$
    and $i \in \interval{m}$ and those of $\forall$ are
    tuples $\vec{l} \in L^m$.

  \item From $(b, i)$ the possible moves of $\exists$ are 
    $\Emoves{b,i} = \{ \vec{l} \mid \vec{l} \in L^m\ \land\ 
    b \sqsubseteq f_i(\vec{l})\}$.
    
  \item From $\vec{l} \in L^m$ the possible moves of $\forall$ are
    $\Amoves{\vec{l}} = \{ (b, i) \mid i \in \interval{m}\ \land\ b
    \in B_L\ \land\ b \ll l_i \}$.
  \end{itemize}
  The game is schematised in Table~\ref{tab:game}.
  For a finite play, the winner is the player whose opponent is unable
  to move. For an infinite play, let $h$ be the highest index that
  occurs infinitely often in a pair $(b, i)$. If $\eta_h = \nu$ then
  $\exists$ wins, else $\forall$ wins.
\end{definition}

\begin{table}[h]
  \begin{center}
    \begin{tabular}{l|c|l}
      Position & Player & Moves \\ \hline 
      $(b,i)$ & $\exists$ & $(l_1,\dots,l_m)$  such that $b \sqsubseteq f_i(l_1,\dots,l_m)$ \\[1mm]
      $(l_1,\dots,l_m)$ & $\forall$ & $(b',j)$ such that $b' \ll l_j$
    \end{tabular}
  \end{center}
  \caption{The fixpoint game}
  \label{tab:game}
\end{table}

Observe that the fixpoint game is a parity
game~\cite{ej:tree-automata-mu-determinacy,Zie:IGFCG} (on an infinite
graph) and the winning condition is the natural formulation of the
standard winning condition in this setting.

Hereafter, whenever we consider a continuous lattice $L$, we assume
that a basis $B_L$ is fixed such that $\bot \not \in B_L$. Elements of
the basis will be denoted by letters $b$ with super or subscripts.

We will prove correctness and completeness of the game, i.e., we will show
that if $\vec{u}$ is the solution of the system, given a basis element
$b \in B_L$ and $i \in \interval{m}$, if $b \sqsubseteq u_i$ then
starting from $(b,i)$ the existential player has a winning strategy,
otherwise the universal player has a winning strategy.

\begin{example}
  \label{ex:running-a}
  As an example, consider the equation system of
  Example~\ref{ex:running}, as depicted in Fig.~\ref{fi:running-eq},
  corresponding to the $\mu$-calculus formula
  $\phi = \nu x_2.((\mu x_1.(p \lor\Diamond x_1))\land\Box x_2)$.
  Recall that the lattice is $(\Pow{\mathbb{S}}, \subseteq)$ and let us
  fix as a basis the set of singletons
  $B_{\Pow{\mathbb{S}}} = \{ \{a\},\{b\}\}$.

  A portion of the fixpoint game is graphically represented as a
  parity game in Fig.~\ref{fi:game}. Diamond nodes correspond to
  positions of player $\exists$ and the box nodes to positions of
  player $\forall$. Only a subset of the possible positions for
  $\forall$ are represented. The positions which are missing, such as
  $(\{a,b\}, \{a,b\})$, can be shown to be redundant, in a sense
  formalised later (see \S~\ref{ssec:selections}), so that the subgame is
  equivalent to the full game.
  Numbers in the diamond nodes correspond to priorities. Box nodes do
  not have priorities (or we can assume priority~$0$). Since index $1$
  and $2$ corresponds to a $\mu$ and a $\nu$ equation, respectively, in
  this specific case the winning condition for player $\exists$ is
  exactly the same as for parity games: either the play is finite and
  $\exists$ plays last or the play is infinite and the highest priority that
  occurs infinitely often is even (in this case $2$).

  Let $(u_1, u_2)~$ be the solution of the system. We can check if
  $a \in u_2$, i.e., if $a$ satisfies $\varphi$, by playing the game
  from the position $(\{a\},2)$. In fact, $\{a\} \sqsubseteq u_2$ amounts to
  $a \in u_2$.
  Indeed player $\exists$ has a winning strategy that we can represent
  as a function $\varsigma$ from the positions of the game (for any
  play) to the corresponding moves of player $\exists$, i.e.,
  $\varsigma\colon B_{\Pow{\mathbb{S}}} \times \interval{2} \to
  \Pow{\mathbb{S}} \times \Pow{\mathbb{S}}$.  A winning strategy for
  $\exists$ is given by $\varsigma(\{a\},1)= (\{b\},\emptyset)$,
  $\varsigma(\{a\},2) = (\{a\},\{a,b\})$,
  $\varsigma(\{b\},1) = (\emptyset,\emptyset)$ and
  $\varsigma(\{b\},2) = (\{b\},\{b\})$. In Fig.~\ref{fi:game} we
  depict by bold arrows the choices prescribed by the strategy.

  A possible
  play of the game could be the following, where
  $\stackrel{x}{\leadsto}$ denotes a move of $x\in\{\exists,\forall\}$:
  \[ (\{a\},2) \stackrel{\exists}{\leadsto} (\{a\},\{a,b\})
    \stackrel{\forall}{\leadsto} (\{a\},1)
    \stackrel{\exists}{\leadsto} (\{b\},\emptyset)
    \stackrel{\forall}{\leadsto} (\{b\},1)
    \stackrel{\exists}{\leadsto} (\emptyset,\emptyset)
    \stackrel{\forall}{\not\leadsto}, \] hence $\exists$ wins. Another
  (infinite) play is the following. It is also won by $\exists$ since
  the highest index that occurs infinitely often is $2$, which is a $\nu$-index:
  \[ (\{a\},2) \stackrel{\exists}{\leadsto} (\{a\},\{a,b\})
    \stackrel{\forall}{\leadsto} (\{a\},2)
    \stackrel{\exists}{\leadsto} (\{a\},\{a,b\})
    \stackrel{\forall}{\leadsto} \dots \]
  Note that if $\exists$ always plays as specified by $\varsigma$, she
  will always win.
  
  \begin{figure}
  \scalebox{0.8}{
  \begin{tikzpicture}[node distance=9mm,>=stealth',y=8mm, x=8mm]
    \node (a2) at (0,2) [anode]{$(\{a\},2)$};
    \node (a2-a) at (3,2) [enode]{$(\{a\},\{a,b\})$};
    \node (a1) at (3,0) [anode]{$(\{a\},1)$};
    \node (b2) at (7,2) [anode]{$(\{b\},2)$};
    \node (a1-a) at (0,0) [enode]{$(\{a\},\emptyset)$};
    \node (a1-b) at (7,0) [enode]{$(\{b\},\emptyset)$};
    \node (b2-a) at (10,2) [enode]{$(\{b\},\{b\})$};
    \node (b1) at (10,0) [anode]{$(\{b\},1)$};
    \node (b1-a) at (13,0) [enode]{$(\emptyset,\emptyset)$};

    \draw  [->, very thick, bend left] (a2) to node {} (a2-a);
    \draw  [->, bend left] (a2-a) to node {} (a2);
    \draw  [->] (a2-a) to node {} (a1);
    \draw  [->] (a2-a) to node {} (b2);
    \draw  [->, bend left] (a1) to node {} (a1-a);
    \draw  [->,very thick] (a1) to node {} (a1-b);
    \draw  [->, bend left] (a1-a) to node {} (a1);
    \draw  [->] (a1-b) to node {} (b1);
    \draw  [->, very thick, bend left] (b2) to node {} (b2-a);
    \draw  [->, bend left] (b2-a) to node {} (b2);
    \draw  [->, very thick] (b1) to node {} (b1-a);
    \draw  [->] (b2-a) to node {} (b1);
  \end{tikzpicture}
}
\caption{Graphical representation of a fixpoint game}
  \label{fi:game}
\end{figure}
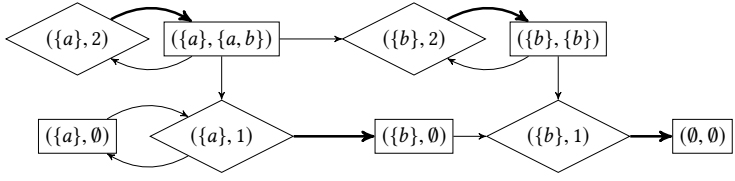
\end{example}

\subsection{Correctness and Completeness}

  Before proving correctness and completeness of the game in the general
  case, as a warm up, we give some intuition and outline the proof for
  the case of a single equation.
  Let $f\colon L\to L$ be a monotone function on a continuous lattice
  $L$ and consider the equation $x =_\eta f(x)$, where
  $\eta\in\{\nu,\mu\}$, with solution $u = \eta f$. In this case the
  positions for $\exists$ are simply basis elements $b\in B_L$ and
  $\exists$ must choose $l\in L$ such that $b\sqsubseteq
  f(l)$. Positions of $\forall$ are lattice elements $l \in L$ and
  moves are elements of the basis $b \in B_L$, with $b \ll l$.
  In the case of $\eta=\mu$, player $\forall$ wins infinite plays and
  in the case of $\eta=\nu$, player $\exists$ wins infinite plays.
    
  When $\eta = \mu$, if $b \sqsubseteq u$, then
  $b \sqsubseteq f^\alpha(\bot)$ for some ordinal $\alpha$. The idea
  is that $\exists$ can win by descending the chain
  $f^\beta(\bot)$. E.g., if $\beta = \gamma+1$ is a successor
  ordinal, then $\exists$ can play $f^{\gamma}(\bot)$.
  If instead, $\eta = \nu$, then the existential player can win just
  by identifying some post-fixpoint $l$ such that $b \sqsubseteq
  l$. In fact, if $l$ is a post-fixpoint, i.e., $l \sqsubseteq f(l)$ we
  know that $l \sqsubseteq u$. Moreover, if $b \sqsubseteq l$ then
  $b \sqsubseteq f(l)$ and thus $\exists$ can cycle on $l$ and win.
  More formally:

  \smallskip
  
  \subsubsection*{(Case $\eta = \mu$)}
  In this case $u = f^\alpha(\bot)$ for some
  ordinal $\alpha$.
  \begin{itemize}
  \item \emph{Completeness:} We show that whenever
    $b\sqsubseteq f^\beta(\bot)$, for some ordinal $\beta$ (i.e.,
    $b$ is below some $\mu$-approximant), then $\exists$ has a
    winning strategy, by transfinite induction on $\beta$.
    First observe that $\beta > 0$. In fact, otherwise
    $b \sqsubseteq f^0(\bot) = \bot$, hence $b = \bot$, while
    $\bot \not\in B_L$ by hypothesis.
    Hence we have two possibilities:
    \begin{itemize}
    \item If $\beta$ is a limit ordinal, player $\exists$ plays
      $l = f^\beta(\bot)$, which is a post-fixpoint and hence
      $b\sqsubseteq f^\beta(\bot)\sqsubseteq f(f^\beta(\bot))$. Then
      $\forall$ chooses
      $b' \ll f^\beta(\bot) = \bigsqcup_{\gamma < \beta}
      f^\gamma(\bot)$. Since this is a directed join, by
        definition of the way-below relation there exists
      $\gamma < \beta$ with $b'\sqsubseteq f^\gamma(\bot)$.
    \item If $\beta = \gamma+1$, $\exists$ plays
      $l = f^\gamma(\bot)$ and $\forall$ chooses
      $b'\ll f^\gamma(\bot)$, hence $b'\sqsubseteq f^\gamma(\bot)$.
    \end{itemize}
    Note that $\exists$ always has a move and the answer of
    $\forall$ is some $b'\sqsubseteq f^\gamma(\bot)$,
    with $\gamma < \beta$, from which there exists a winning
    strategy for $\exists$ by the inductive hypothesis.
    
  \item \emph{Correctness:} We show that whenever
    $b\not\sqsubseteq u$, player $\forall$ has a winning strategy.
    
    Observe that a move of $\exists$ will be some $l$ such that
    $b\sqsubseteq f(l)$. Note that there must be a $b' \ll l$ with
    $b'\not\sqsubseteq u$. In fact, otherwise, if for all $b'\ll l$
    it holds that $b'\sqsubseteq u$, since $L$ is a continuous
      lattice, we would have
    $l = \bigsqcup \{b'\mid b'\ll l\} \sqsubseteq u$ and furthermore
    $b\sqsubseteq f(l) \sqsubseteq f(u) = u$, which is a
    contradiction.
    
    Hence $\forall$ can choose such a $b' \ll l$ with
    $b'\not\sqsubseteq u$ and the game can continue. Then either
    $\exists$ runs out of moves at some point or we end up in an
    infinite play. In both cases $\forall$ wins.
  \end{itemize}
  
  \subsubsection*{(Case $\eta = \nu$)}
  In this case $u = f^\alpha(\top)$ for some
  ordinal $\alpha$.
  \begin{itemize}
  \item \emph{Completeness:} We show that when $b\sqsubseteq u$,
    then $\exists$ has a winning strategy.
    In fact, in this case $\exists$ simply plays $l = u$, which
    satisfies $b \sqsubseteq u = f(u)$ and $\forall$ answers with
    some $b\ll u$, hence $b\sqsubseteq u$. The game can thus
    continue forever, leading to an infinite play which is won by
    $\exists$.
    
  \item \emph{Correctness:} We show that whenever
    $b\not\sqsubseteq f^\beta(\top)$, for some ordinal $\beta$
    (i.e., $b$ is not below some $\nu$-approximant), then $\forall$
    has a winning strategy, by transfinite induction on $\beta$.
    First observe
    that $\beta > 0$. In fact, otherwise
    $b \not\sqsubseteq f^0(\top) = \top$ would be a contradiction.
    Hence we distinguish two cases:
    \begin{itemize}
    \item If $\beta$ is a limit ordinal
      $b\not\sqsubseteq f^\beta(\top) = \bigsqcap_{\gamma<\beta}
      f^\gamma(\top)$, which means that there exists $\gamma<\beta$
      such that $b\not\sqsubseteq f^\gamma(\top)$.
      
      Now any move of $\exists$ is some $l$ with
      $b\sqsubseteq f(l)$. Therefore
      $l\not\sqsubseteq f^\gamma(\top)$, since otherwise
      $b\sqsubseteq f(l)\sqsubseteq f(f^\gamma(\top)) =
      f^{\gamma+1}(\top) \sqsubseteq f^\beta(\top)$ (since
      $\gamma+1<\beta$). Hence there must be $b'\ll l$ with
      $b'\not\sqsubseteq f^\gamma(\top)$. Otherwise, as above, if for all
        $b' \ll l$ we had $b' \sqsubseteq f^\gamma(\top)$, then by
        continuity of the lattice, we would conclude
        $l = \bigsqcup \{b' \mid b' \ll l\} \sqsubseteq
        f^\gamma(\top)$. Such a $b'$ can be chosen by $\forall$,
      and the game continues.
    \item If $\beta = \gamma+1$ we know that
      $b\not\sqsubseteq f^\beta(\top) = f(f^\gamma(\top))$.
      
      Any move of $\exists$ is $l$ with $b\sqsubseteq f(l)$, which as
      above implies that $l\not\sqsubseteq f^\gamma(\top)$ and thus
      the existence of $b'\ll l$ with
      $b'\not\sqsubseteq f^\gamma(\top)$. The basis element $b'$ is
      chosen by $\forall$ and the game continues.
    \end{itemize}
    Hence $\forall$ always has a move, ending up in
    $b'\not\sqsubseteq f^\gamma(\top)$, from which there exists a
    winning strategy for $\forall$ by the induction hypothesis.
  \end{itemize}

  Observe that cases of a $\mu$- and a $\nu$-equation are not
  completely symmetric. In the completeness part, for showing that $l\sqsubseteq \nu f$ we use
  the fact that $\nu f$ is the greatest post-fixpoint. Instead, for
  showing that $l\sqsubseteq \mu f$ we use the fact that
  $l\sqsubseteq f^\alpha(\bot)$ for some $\alpha$ and provide a proof
  that we can descend to $\bot$, similarly to what happens for ranking functions in
  termination analysis. 
  Note that in order to guarantee that we truly descend, also below
  limit ordinals, we require that $\forall$ plays $b$ with $b \ll
  l$. Then we can use the fact that whenever $b$ is way-below a
  directed join, then it is smaller than one of the elements over
  which the join is taken.  We remark that choosing $b$ with
  $b\sqsubseteq l$ instead would not be sufficient (see
  Proposition~\ref{pr:counter-way-below}).
  In the correctness part, despite the asymmetry, both proofs use the
  fact that each element is the join of all elements way-below it, for
  which it is essential to be in a continuous lattice (see
  Proposition~\ref{pr:counter-continuity}). Instead, for completeness,
  the continuity hypothesis does not play a role.

\medskip

For the general case, correctness and completeness of the game are proved by relying on the notions of $\mu$- and $\nu$-approximant.
We prove the two properties separately. Completeness exploits
a result that shows how $\exists$ can play descending along a chain of
$\mu$-approximants and, as in the case of a single equation, it can be proved for general lattices, without assuming the continuity hypothesis.

\begin{lemma}[descending on $\mu$-approximants]
  \label{le:descend-mu}
  Let $E$ be a system of $m$ equations over a
  lattice $L$
  of the kind $\vec{x} =_{\vec{\eta}} \vec{f} (\vec{x})$. For each
  $\mu$-approximant $\vec{l} \in L^m$ and $(b,i)
  \in \Amoves{\vec{l}}$ there exists a
  $\mu$-approximant $\vec{l}' \in \Emoves{b,i}$ such that
  $\ord{\vec{l}} \succeq_i \ord{\vec{l}'}$.
  Moreover, if $\eta_i = \mu$, the $i$-th component strictly decreases
  and thus the inequality is strict.
\end{lemma}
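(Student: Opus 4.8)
The statement asks: given a $\mu$-approximant $\vec{l}$ and a legal move $(b,i)$ of $\forall$ (so $b \ll l_i$), produce a $\mu$-approximant $\vec{l}'$ that is a legal move of $\exists$ from $(b,i)$ (so $b \sqsubseteq f_i(\vec{l}')$), with $\ord{\vec{l}'} \preceq_i \ord{\vec{l}}$, and strict decrease when $\eta_i = \mu$. My plan is to exploit the inductive characterisation of approximants (Lemma~\ref{le:inductive-approximant}) together with the updating lemma (Lemma~\ref{le:approx-update}), which already packages the key closure property: from a $\mu$-approximant one can replace the $i$-th component by an earlier iterate and re-solve the lower components to obtain another $\mu$-approximant with non-increasing truncated order.

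First I would do a case analysis on $\eta_i$. Suppose $\eta_i = \mu$. By Definition~\ref{de:approximants}, $l_i = f_{i,\vec{l}}^{\alpha_i}(\bot)$ where $\alpha_i = \ord{\vec{l}}_i$. Since $b \ll l_i$ and $b \neq \bot$ (as $\bot \notin B_L$), the ordinal $\alpha_i$ must be positive. If $\alpha_i = \gamma + 1$ is a successor, set $l_i' = f_{i,\vec{l}}^{\gamma}(\bot)$; then $f_{i,\vec{l}}(l_i') = l_i$, so certainly $b \sqsubseteq l_i$ gives $b \sqsubseteq f_i$ of the corresponding tuple. If $\alpha_i$ is a limit ordinal, then $l_i = \lub_{\gamma < \alpha_i} f_{i,\vec{l}}^\gamma(\bot)$ is a directed join, and since $b \ll l_i$ there is some $\gamma < \alpha_i$ with $b \sqsubseteq f_{i,\vec{l}}^\gamma(\bot)$; set $l_i' = f_{i,\vec{l}}^\gamma(\bot)$. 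In both subcases $l_i'$ is an earlier iterate with $b \sqsubseteq f_{i,\vec{l}}(l_i')$ and the iterate index is $< \alpha_i$ (strictly, since in the successor case it drops by one and in the limit case $\gamma < \alpha_i$). Now apply Lemma~\ref{le:approx-update}(1) with this $l_i'$: it yields a $\mu$-approximant $\vec{l}'$ whose components $i+1,\dots,m$ agree with $\vec{l}$, whose $i$-th component is $l_i'$, and with $\ord{\vec{l}'} \preceq_i \ord{\vec{l}}$; moreover since the $i$-th iterate index strictly decreased while components $i+1,\dots,m$ are unchanged, the inequality is strict. It remains to verify the $\exists$-move condition $b \sqsubseteq f_i(\vec{l}')$. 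Here I must be careful: $f_{i,\vec{l}}(x) = f_i(\sol{\dots}, x, \subvec{l}{i+1}{m})$ substitutes the parametric solution of the lower block into the first $i-1$ coordinates, and by construction of $\vec{l}'$ in Lemma~\ref{le:approx-update} the first $i-1$ coordinates of $\vec{l}'$ are exactly that solution evaluated at $x = l_i'$. Hence $f_i(\vec{l}') = f_{i,\vec{l}}(l_i') \sqsupseteq b$, as needed.

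For the case $\eta_i = \nu$: by Definition~\ref{de:approximants}, $l_i = \nu(f_{i,\vec{l}}) = f_i(\vec{l})$ (unfolding $l_i$ through the $i$-th coordinate and the lower block). Since $b \ll l_i$ implies $b \sqsubseteq l_i = f_i(\vec{l})$, the tuple $\vec{l}$ itself already satisfies the $\exists$-move condition — but we still want to produce a $\mu$-approximant whose lower block is re-solved consistently; this is exactly Lemma~\ref{le:approx-update}(2), which gives $\vec{l}' = (\sol{\subst{E}{\subvec{x}{i+1}{m}}{\subvec{l}{i+1}{m}}}, \subvec{l}{i+1}{m})$, a $\mu$-approximant with $\ord{\vec{l}'} \preceq_i \ord{\vec{l}}$. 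Its $i$-th component coincides with $l_i$ (since taking the solution just re-derives $\nu(f_{i,\vec{l}}) = l_i$), so $f_i(\vec{l}') = f_i(\vec{l})$ up to the lower-block substitution, and $b \sqsubseteq f_i(\vec{l}')$ follows. No strict decrease is claimed here, consistent with the statement.

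The main obstacle is bookkeeping the substitutions: one must confirm that the tuple produced by Lemma~\ref{le:approx-update} has its lower-than-$i$ coordinates equal precisely to the solution $\sol{\subst{\subst{E}{\subvec{x}{i+1}{m}}{\subvec{l}{i+1}{m}}}{x_i}{l_i'}}$, so that evaluating $f_i$ at $\vec{l}'$ reproduces $f_{i,\vec{l}}(l_i')$ rather than some unrelated value — this is where the definition $f_{i,\vec{l}}(x) = f_i(\sol{\dots}, x, \subvec{l}{i+1}{m})$ and the observation following Definition~\ref{de:approximants} (that $\sol[i]{\subst{E}{\subvec{x}{i+1}{m}}{\subvec{l}{i+1}{m}}} = \eta_i(f_{i,\vec{l}})$) have to be invoked carefully. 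The genuinely order-theoretic content — using directedness of the iterate chain at limit stages, which relies only on monotonicity and not on continuity of $L$ — is localized entirely in the $\eta_i = \mu$, limit-ordinal subcase, and is straightforward once the way-below hypothesis $b \ll l_i$ is in hand.
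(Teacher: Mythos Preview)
Your proposal is correct and follows essentially the same route as the paper: case split on $\eta_i$, with successor/limit subcases when $\eta_i = \mu$, invoking Lemma~\ref{le:approx-update} to build $\vec{l}'$ and verifying $b \sqsubseteq f_i(\vec{l}')$ via $f_i(\vec{l}') = f_{i,\vec{l}}(l_i')$. One small slip: in the $\nu$-case you write $l_i = f_i(\vec{l})$, but for a general $\mu$-approximant the lower components $\subvec{l}{1}{i-1}$ need not equal $\sol{\subst{E}{\subvec{x}{i}{m}}{\subvec{l}{i}{m}}}$ (they may be proper under-approximations at $\mu$-indices), so $f_i(\vec{l})$ and $f_{i,\vec{l}}(l_i)$ can differ; fortunately this aside is not used in your actual argument, which correctly works with $\vec{l}'$ and establishes $b \sqsubseteq l_i = f_{i,\vec{l}}(l_i) = f_i(\vec{l}')$.
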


\begin{proof}
  Let $\vec{l} \in L^m$ be a $\mu$-approximant and let
  $(b, i)$ in $\Amoves{\vec{l}}$, i.e., $b \in B_L$ and
  $i \in \interval{m}$ with $b \ll l_i$. We distinguish various cases:

  \begin{enumerate}
    
  \item ($\eta_i = \mu$) This means that
    $l_i = f_{i,\vec{l}}^\alpha(\bot)$ for some ordinal
    $\alpha$. Since $f_{i,\vec{l}}^0(\bot)= \bot$ and $b \ll \bot$
    would imply $b = \bot$, while $\bot \not\in B_L$, necessarily
    $\alpha \neq 0$. We distinguish two subcases:
    \begin{enumerate}
    \item \emph{$\alpha = \beta+1$ is a successor ordinal}\\
      Let $l_i' = f_{i,\vec{l}}^\beta(\bot)$ and $(l_1', \ldots, l_{i-1}') = \sol{
          \subst{\subst{E}{\subvec{x}{i+1}{m}}{\subvec{l}{i+1}{m}}}{x_i}{l_i'}}$. Then define
      \begin{center}
        $\vec{l'} = (l_1', \ldots, l_{i-1}', l_i',\subvec{l}{i+1}{m}
        )$
      \end{center}

      Observe that $\vec{l}'$ is a $\mu$-approximant by
      Lemma~\ref{le:approx-update}. Moreover
      $\vec{l}' \in \Emoves{b,i}$. In fact
      \begin{align*}
        b & \sqsubseteq l_i =  f_{i,\vec{l}}^{\beta+1}(\bot)\\
          & = f_{i,\vec{l}}(f_{i,\vec{l}}^{\beta}(\bot))\\
          &= f_{i,\vec{l}}(l_i')\\
          & = f_i(\sol{\subst{\subst{E}{\subvec{x}{i+1}{m}}{\subvec{l}{i+1}{m}}}{x_i}{l_i'}}, l_i', \subvec{l}{i+1}{m})\\
          & = f_i(l_1', \ldots, l_{i-1}', l_i', \subvec{l}{i+1}{m})\\
          & = f_i(\vec{l}')
      \end{align*}

      Finally, note that $\ord{\vec{l}'} \prec_i \ord{\vec{l}}$ since
      vectors $\vec{l}$ and $\vec{l}'$ coincide on the components
      $i+1, \ldots, m$, and $l_i = f_{i,\vec{l}}^{\beta+1}(\bot)$
      while $l_i' = f_{i,\vec{l}}^\beta(\bot)$.

      \medskip
     
    \item \emph{$\alpha$ is a limit ordinal}\\
      Since
      $b \ll l_i = f_{h,\vec{l}}^\alpha(\bot)=
      \bigsqcup_{\beta<\alpha} f_{i,\vec{l}}^\beta(\bot)$, which is a
      directed join, by
      definition of the way-below relation, there is $\beta < \alpha$
      such that $b \sqsubseteq f_{i,\vec{l}}^\beta(\bot)$. We set
      $l_i' = f_{i,\vec{l}}^\beta(\bot)$ and
      $(l_1', \ldots, l_{i-1}') = \sol{
        \subst{\subst{E}{\subvec{x}{i+1}{m}}{\subvec{l}{i+1}{m}}}{x_i}{l_i'}}$.
      Then we define     
      \begin{center}
        $\vec{l'} = (l_1', \ldots, l_{i-1}', l_i',\subvec{l}{i+1}{m}
        )$
      \end{center}

      The vector $\vec{l}'$ is a $\mu$-approximant by
      Lemma~\ref{le:approx-update}.
      Moreover
      $\vec{l}' \in \Emoves{b,i}$ since
      \begin{align*}
        b & \sqsubseteq l_i'\\
          & \sqsubseteq f_{i,\vec{l}}(l_i')
          &  \mbox{[since $l_i'=f_{i,\vec{l}}^\beta(\bot)$ is a post-fixpoint]}\\
          & = f_i(\sol{\subst{\subst{E}{\subvec{x}{i+1}{m}}{\subvec{l}{i+1}{m}}}{x_i}{l_i'}}, l_i', \subvec{l}{i+1}{m})\\
          & = f_i(l_1', \ldots, l_{i-1}', l_i', \subvec{l}{i+1}{m})\\
          & = f_i(\vec{l}')
      \end{align*}

      Finally, note that $\ord{\vec{l}'} \prec_i \ord{\vec{l}}$ since
      vectors $\vec{l}$ and $\vec{l}'$ coincide on the components
      $i+1, \ldots, m$, and $l_i = f_{i,\vec{l}}^{\alpha}(\bot)$ while
      $l_i' = f_{i,\vec{l}}^\beta(\bot)$, with $\beta < \alpha$.

    \end{enumerate}

    \medskip
    
  \item ($\eta_i = \nu$)\\
    In this case
    $l_i = \nu (f_{i,\vec{l}})$.
    Let
    $(l_1', \ldots, l_{i-1}') = \sol{
      \subst{E}{\subvec{x}{i}{m}}{\subvec{l}{i}{m}}}$. Then define
      \begin{center}
        $\vec{l'} = (l_1', \ldots, l_{i-1}', \subvec{l}{i}{m})$
      \end{center}
      
      The vector $\vec{l}'$ is a
      $\mu$-approximant by Lemma~\ref{le:approx-update}. Moreover, observe that
      $\vec{l}' \in \Emoves{b,i}$, since      
      \begin{align*}
        b & \sqsubseteq l_i\\
          & = f_{i,\vec{l}}(l_i)
          &  \mbox{[since $l_i$ is a fixpoint]}\\
          & = f_i(\sol{\subst{E}{\subvec{x}{i}{m}}{\subvec{l}{i+1}{m}}}, \subvec{l}{i}{m})\\
          & = f_i(l_1', \ldots, l_{i-1}', \subvec{l}{i}{m})\\
          & = f_i(\vec{l}')
      \end{align*}

      Finally, note that $\ord{\vec{l}'} \preceq_i \ord{\vec{l}}$ since
      vectors $\vec{l}$ and $\vec{l}'$ coincide on the components
      $i, \ldots, m$.
    \end{enumerate}
\end{proof}

The previous result allows us to prove that player $\exists$ can
always win starting from a $\mu$-approximant. Roughly, relying on
Lemma~\ref{le:descend-mu}, we can prove that player $\exists$ can play
on $\mu$-approximants in a way that each time the $i$-th equation is
chosen, the ordinal vector associated to the approximant decreases
with respect to $\preceq_i$, and it strictly decreases when the $i$-th
equation is a $\mu$-equation. This, together with the fact that
the order on ordinals is well-founded, allows one to conclude that
either the play is finite and $\exists$ plays last or the highest
index on which one can cycle is necessarily the index of a
$\nu$-equation. In both cases player $\exists$ wins.

\begin{lemma}[$\exists$ wins on $\mu$-approximants]
  \label{le:completeness}
  Let $E$ be a system of $m$ equations over a
  lattice $L$
  of the kind $\vec{x} =_{\vec{\eta}} \vec{f} (\vec{x})$ and let
  $\vec{l} \in L^m$ be a $\mu$-approximant. Then in a game starting
  from $\vec{l}$ (which is a position of $\forall$) player $\exists$
  has a winning strategy.
\end{lemma}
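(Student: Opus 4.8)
The plan is to let $\exists$ play exclusively on $\mu$-approximants, using Lemma~\ref{le:descend-mu} as the engine that generates her moves, and then to read off the winner of each maximal play from the ordinal vectors attached to the approximants visited.

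First I would fix $\exists$'s strategy by maintaining the following invariant: \emph{every position of $\forall$ reached in the play is a $\mu$-approximant}. This holds at the start since $\vec{l}$ is a $\mu$-approximant by hypothesis. Inductively, suppose the play has reached a $\mu$-approximant $\vec{l}'$ (a position of $\forall$) and $\forall$ moves to some $(b,i) \in \Amoves{\vec{l}'}$. Then Lemma~\ref{le:descend-mu} supplies a $\mu$-approximant $\vec{l}'' \in \Emoves{b,i}$ with $\ord{\vec{l}'} \succeq_i \ord{\vec{l}''}$, the inequality being strict with respect to $\preceq_i$ whenever $\eta_i = \mu$; $\exists$ plays such an $\vec{l}''$. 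In particular this shows $\Emoves{b,i} \neq \emptyset$, so $\exists$ always has a legal response and the invariant is preserved. (A strategy with memory, recording the current approximant, suffices, since we only need the \emph{existence} of a winning strategy.)

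Next I would analyse the maximal plays consistent with this strategy. A finite play cannot end at a position of $\exists$, because $\exists$ always has a move; hence it ends at a $\mu$-approximant $\vec{l}'$ with $\Amoves{\vec{l}'} = \emptyset$, i.e.\ with $\forall$ stuck, and $\exists$ wins. For an infinite play, list the positions of $\forall$ visited as $\mu$-approximants $\vec{l}^{(0)} = \vec{l}, \vec{l}^{(1)}, \vec{l}^{(2)}, \dots$, where $\forall$'s $k$-th move is some $(b_k,i_k) \in \Amoves{\vec{l}^{(k)}}$, so that $\ord{\vec{l}^{(k)}} \succeq_{i_k} \ord{\vec{l}^{(k+1)}}$, strictly if $\eta_{i_k} = \mu$. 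Let $h$ be the highest index occurring infinitely often among the $i_k$ (the index deciding the parity condition) and fix $N$ with $i_k \leq h$ for all $k \geq N$. Since $i_k \le h$, each inequality $\ord{\vec{l}^{(k)}} \succeq_{i_k} \ord{\vec{l}^{(k+1)}}$ entails $\ord{\vec{l}^{(k)}} \succeq_h \ord{\vec{l}^{(k+1)}}$ (the components $h, \dots, m$ being the most significant ones compared in $\preceq_{i_k}$), so for $k \geq N$ the sequence is $\preceq_h$-descending. As the lexicographic order on ordinal vectors is well-founded, it becomes eventually $=_h$-constant: there is $N' \geq N$ with $\ord{\vec{l}^{(k)}} =_h \ord{\vec{l}^{(k+1)}}$ for all $k \geq N'$. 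If $\eta_h = \mu$, pick $k \geq N'$ with $i_k = h$ (possible since $h$ recurs infinitely often); Lemma~\ref{le:descend-mu} then forces $\ord{\vec{l}^{(k)}} \succ_h \ord{\vec{l}^{(k+1)}}$, i.e.\ $\ord{\vec{l}^{(k)}} \not=_h \ord{\vec{l}^{(k+1)}}$, contradicting the previous line. Hence $\eta_h = \nu$ and $\exists$ wins the infinite play. In all cases $\exists$ wins.

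The main obstacle I anticipate is exactly this last paragraph: one must see that the finitely many distinct ``partial'' decreases $\succeq_{i_k}$ for $i_k \le h$ combine into a genuinely $\preceq_h$-descending, hence eventually stationary, sequence of truncated ordinal vectors, and then turn the strict decrease guaranteed at $\mu$-indices into a contradiction. (As with the single-equation case and with Lemma~\ref{le:descend-mu} itself, no continuity assumption on $L$ is needed here; continuity will enter only in the correctness direction.) Everything else is routine bookkeeping on top of Lemma~\ref{le:descend-mu}.
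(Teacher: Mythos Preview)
The proposal is correct and follows essentially the same approach as the paper's proof: maintain the invariant that $\exists$ always plays $\mu$-approximants via Lemma~\ref{le:descend-mu}, conclude that finite plays are won by $\exists$, and for infinite plays use that $\succeq_{i_k}$ implies $\succeq_h$ whenever $i_k \le h$ together with well-foundedness of $\succ_h$ to derive a contradiction from $\eta_h = \mu$. Your presentation is slightly more explicit (eventual $=_h$-stabilisation, the remark on memory for the strategy), but there is no substantive difference.
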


\begin{proof}
  We first describe the strategy for player $\exists$ and then
  prove that it is a winning strategy.

  The key observation is that $\exists$ can always play a
  $\mu$-approximant, where she plays the solution in the first
  step. In fact, let $\vec{l}' \in L^m$ be the current
  $\mu$-approximant. For any possible move
  $(b',i') \in \Amoves{\vec{l}'}$ of $\forall$, by
  Lemma~\ref{le:descend-mu} there always exists a move $\vec{l}'' \in \Emoves{b',i'}$
  of $\exists$ which is a $\mu$-approximant such
  that $\ord{\vec{l'}} \succeq_i \ord{\vec{l}''}$. Additionally, if
  $\eta_i = \mu$ the inequality is strict.
  
  \smallskip

  Since $\exists$ player has always a move,  either the
  play finishes because $\forall$ has no moves, hence $\exists$ wins
  or the play continues forever.

  In this last case, note that, if $h$ is the largest index occurring
  infinitely often, then necessarily $\eta_h = \nu$, hence $\exists$
  wins.
  In fact, assume by contradiction that $\eta_h = \mu$. Consider the
  sequence of turns of the play starting from the point where all
  indexes repeat infinitely often. 

  Let $\vec{l}'$, $(b',j)$, $\vec{l}''$ be consecutive turns.
  By the choice of $h$, necessarily $j \leq h$. Moreover, by
  construction, if
  \begin{center}
    $\ord{\vec{l}'} \succeq_j \ord{\vec{l}''}$
  \end{center}
  Observing that for $j \leq j'$ it holds
  $\vec{\alpha} \succeq_j \vec{\alpha'}$ implies
  $\vec{\alpha} \succeq_{j'} \vec{\alpha'}$, we deduce that
  \begin{center}
    $\ord{\vec{l}'} \succeq_h \ord{\vec{l}''}$ 
  \end{center}
  i.e., the sequence is decreasing. Moreover, since $\eta_h = \mu$,
  whenever $j=h$, $\ord{\vec{l}'} \succ_h \ord{\vec{l}''}$, i.e., the sequence
  strictly decreases. This contradicts the well-foundedness of $\succ_h$.
\end{proof}

Since the solution of a system of equation is a $\mu$-approximant (the greatest one), completeness is an easy corollary of Lemma~\ref{le:completeness}.

\begin{corollary}[completeness]
  \label{co:completeness}
  Let $E$ be a system of $m$ equations over a
  lattice $L$
  of the kind $\vec{x} =_{\vec{\eta}} \vec{f} (\vec{x})$. Given any
  $\mu$-approximant $\vec{l} \in L^m$, $b \in B_L$ and
  $i \in \interval{m}$, if $b \sqsubseteq
  l_i$ then $\exists$ has a
  winning strategy from position $(b,i)$.
\end{corollary}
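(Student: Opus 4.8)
The plan is to get the corollary almost for free from Lemma~\ref{le:completeness}. That lemma already supplies $\exists$ with a winning strategy from any $\forall$-position that is a $\mu$-approximant, whereas here we start from the $\exists$-position $(b,i)$. So the only thing to add is a legal first move: I would exhibit a $\mu$-approximant $\vec{l}' \in \Emoves{b,i}$ for $\exists$ to play, and then let her continue with the strategy of Lemma~\ref{le:completeness} from $\vec{l}'$ (a $\forall$-position). Prepending one move is harmless for the outcome: if $\forall$ is stuck at $\vec{l}'$ then $\exists$ wins there anyway, and for infinite plays the set of indices occurring infinitely often --- on which the parity winning condition depends --- is unchanged.

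The real content is thus the construction of $\vec{l}'$, a mild variant of Lemma~\ref{le:descend-mu}: the difference is that there $\forall$ has just moved, so $b \ll l_i$, whereas here we only have $b \sqsubseteq l_i$. I would case on $\eta_i$. If $\eta_i = \nu$, then $l_i = \nu(f_{i,\vec{l}})$ is a fixpoint; taking $\vec{l}'$ to re-solve the lower equations exactly as in the $\eta_i = \nu$ case of Lemma~\ref{le:descend-mu} gives $b \sqsubseteq l_i = f_{i,\vec{l}}(l_i) = f_i(\vec{l}')$, and $\vec{l}'$ is a $\mu$-approximant by Lemma~\ref{le:approx-update}(2). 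If $\eta_i = \mu$, then $l_i = f_{i,\vec{l}}^{\alpha}(\bot)$ for some $\alpha$; necessarily $\alpha \neq 0$, since $\bot \notin B_L$ rules out $b \sqsubseteq \bot$. For $\alpha = \beta+1$ I set $l_i' = f_{i,\vec{l}}^{\beta}(\bot)$, re-solve the lower equations, and obtain $b \sqsubseteq l_i = f_{i,\vec{l}}(l_i') = f_i(\vec{l}')$; for $\alpha$ a limit I keep $l_i' = l_i$ and use that $f_{i,\vec{l}}^{\alpha}(\bot)$ is a post-fixpoint, so $b \sqsubseteq l_i \sqsubseteq f_{i,\vec{l}}(l_i) = f_i(\vec{l}')$. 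Lemma~\ref{le:approx-update}(1) then certifies that $\vec{l}'$ is again a $\mu$-approximant in both subcases.

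I expect the only (very mild) obstacle to be this mismatch between $\sqsubseteq$ and $\ll$ in the first move: unlike the situation of Lemma~\ref{le:descend-mu}, when $\eta_i = \mu$ and the relevant ordinal is a limit one generally cannot strictly descend the approximant. This is, however, inconsequential here, since $\exists$'s moves only require $b \sqsubseteq f_i(\vec{l}')$, and the post-fixpoint property of $f_{i,\vec{l}}^{\alpha}(\bot)$ supplies exactly this while leaving $l_i$ --- and hence $\ord{\vec{l}}$ --- untouched. Everything else (checking that the constructed $\vec{l}'$ is a $\mu$-approximant, and that the handoff to Lemma~\ref{le:completeness} yields a winning strategy from $(b,i)$) is routine.
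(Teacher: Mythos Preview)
Your approach is correct and follows the paper's outline: exhibit a $\mu$-approximant in $\Emoves{b,i}$ as $\exists$'s opening move, then hand off to Lemma~\ref{le:completeness}. The paper is terser---it simply asserts that the given $\vec{l}$ itself lies in $\Emoves{b,i}$---but strictly speaking this need not hold, since a $\mu$-approximant is not in general a post-fixpoint of $\vec{f}$ (e.g., on the three-element chain with $f_1 \equiv \top$, $f_2(x_1,x_2)=x_1$, both $\mu$, the tuple $(\bot,\top)$ is a $\mu$-approximant yet $f_2(\bot,\top)=\bot \not\sqsupseteq l_2$). Your case analysis modelled on Lemma~\ref{le:descend-mu} correctly patches this. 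That said, there is a shorter route than your case split: by Lemma~\ref{le:solution-is-approximant} every $\mu$-approximant satisfies $\vec{l} \sqsubseteq \vec{u}$, and by Lemma~\ref{le:solution-is-pre} the solution is a pre-solution, so $b \sqsubseteq l_i \sqsubseteq u_i = f_i(\vec{u})$; hence $\exists$ can simply open with $\vec{u}$ (itself a $\mu$-approximant) and continue via Lemma~\ref{le:completeness}, avoiding the case analysis altogether.
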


\begin{proof}
  Just observe that at the first turn $\exists$ can play the
  $\mu$-approximant $\vec{l}$ that is in $\Emoves{b,i}$ by
  hypotheses. Then using Lemma~\ref{le:completeness} we conclude that
  $\exists$ wins.
\end{proof}

For correctness we rely on a result, dual to
Lemma~\ref{le:descend-mu}, that allows to ascend along
$\nu$-approximants. However, in this case, the fact of working in a continuous lattice is crucial (see Proposition~\ref{pr:counter-continuity}).

\begin{lemma}[ascending on $\nu$-approximants]
  \label{le:ascend-nu}
  Let $E$ be a system of $m$ equations over a continuous lattice $L$
  of the kind $\vec{x} =_{\vec{\eta}} \vec{f} (\vec{x})$.
  Given a $\nu$-approximant $\vec{l} \in L^m$, an element $b \in B_L$
  and an index $i \in \interval{m}$ with $b \not\sqsubseteq l_i$, for
  all tuples $\vec{l}' \in \Emoves{b,i}$ there are a $\nu$-approximant
  $\vec{l}''$ and $(b'', j) \in \Amoves{\vec{l}'}$ such that
  (1)~$b'' \not\sqsubseteq l_j''$ and
  (2)~$\ord{\vec{l}} \succeq_i \ord{\vec{l}''}$.
  Moreover, if  $\eta_i = \nu$, the $i$-th component strictly
  decreases and thus the inequality in item 2 above is strict.
\end{lemma}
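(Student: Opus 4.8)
The plan is to prove this as the mirror image of Lemma~\ref{le:descend-mu}, using the dual of Lemma~\ref{le:approx-update} for the approximant bookkeeping and invoking continuity only in the very last step. Fix a $\nu$-approximant $\vec{l}$ with $\ord{\vec{l}} = \vec{\alpha}$, a basis element $b$ and an index $i \in \interval{m}$ with $b \not\sqsubseteq l_i$, and an arbitrary $\vec{l}' \in \Emoves{b,i}$, so that $b \sqsubseteq f_i(\vec{l}')$. In every case I build $\vec{l}''$ by freezing the components $\subvec{l}{i+1}{m}$, replacing $l_i$ by a suitable $l_i''$, and setting the components $1,\dots,i-1$ to $\sol{\subst{\subst{E}{\subvec{x}{i+1}{m}}{\subvec{l}{i+1}{m}}}{x_i}{l_i''}}$. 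By the dual of Lemma~\ref{le:approx-update} such a $\vec{l}''$ is a $\nu$-approximant with $\ord{\vec{l}} \succeq_i \ord{\vec{l}''}$; moreover, since the frozen block already contains the components $i+1,\dots,m$, we have $f_{i,\vec{l}''} = f_{i,\vec{l}}$ and hence $f_i(\vec{l}'') = f_{i,\vec{l}}(l_i'')$.

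The choice of $l_i''$ is where I split on $\eta_i$. If $\eta_i = \nu$, then $l_i = f_{i,\vec{l}}^{\alpha_i}(\top)$ with $\alpha_i > 0$ (otherwise $l_i = \top$ and $b \sqsubseteq l_i$). When $\alpha_i = \beta + 1$ I take $l_i'' = f_{i,\vec{l}}^{\beta}(\top)$; when $\alpha_i$ is a limit I use $l_i = \bigsqcap_{\beta < \alpha_i} f_{i,\vec{l}}^{\beta}(\top)$ and $b \not\sqsubseteq l_i$ to pick some $\beta < \alpha_i$ with $b \not\sqsubseteq f_{i,\vec{l}}^{\beta}(\top)$, again putting $l_i'' = f_{i,\vec{l}}^{\beta}(\top)$. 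In both subcases $\beta < \alpha_i$, so $(\ord{\vec{l}''})_i \leq \beta < \alpha_i$ and, the components $i+1,\dots,m$ of $\ord{\vec{l}''}$ being unchanged, $\ord{\vec{l}} \succ_i \ord{\vec{l}''}$, i.e.\ the inequality is strict. If instead $\eta_i = \mu$, then $l_i = \mu(f_{i,\vec{l}})$ is a fixpoint of $f_{i,\vec{l}}$, so I keep $l_i'' = l_i$ and $\vec{l}''$ is exactly the tuple produced by the dual of Lemma~\ref{le:approx-update}(2); here no strictness is claimed.

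The common core is the claim $\vec{l}' \not\sqsubseteq \vec{l}''$. If it failed, then $b \sqsubseteq f_i(\vec{l}') \sqsubseteq f_i(\vec{l}'') = f_{i,\vec{l}}(l_i'')$. When $\eta_i = \nu$ this is $f_{i,\vec{l}}^{\beta+1}(\top)$, which equals $l_i$ in the successor subcase and, in the limit subcase, is $\sqsubseteq f_{i,\vec{l}}^{\beta}(\top)$ since the chain $(f_{i,\vec{l}}^{\gamma}(\top))_\gamma$ is descending (each iterate is a pre-fixpoint of $f_{i,\vec{l}}$); when $\eta_i = \mu$ it equals $f_{i,\vec{l}}(\mu(f_{i,\vec{l}})) = \mu(f_{i,\vec{l}}) = l_i$. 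In every case this contradicts $b \not\sqsubseteq l_i$ (respectively $b \not\sqsubseteq f_{i,\vec{l}}^{\beta}(\top)$ in the limit subcase). Hence $\vec{l}' \not\sqsubseteq \vec{l}''$, so there is $j \in \interval{m}$ with $l_j' \not\sqsubseteq l_j''$; and this is the point where continuity enters: since $L$ is continuous, $l_j' = \bigsqcup(\cone{l_j'} \cap B_L)$, so not every $b'' \in B_L$ with $b'' \ll l_j'$ can satisfy $b'' \sqsubseteq l_j''$. Picking one that does not yields $(b'', j) \in \Amoves{\vec{l}'}$ with $b'' \not\sqsubseteq l_j''$, which concludes the proof.

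I do not expect any individual step to be genuinely difficult, as each mirrors something already carried out for Lemma~\ref{le:descend-mu} and Lemma~\ref{le:approx-update}; the main obstacle is organisational, namely keeping the nested substituted systems $\subst{\subst{E}{\subvec{x}{i+1}{m}}{\subvec{l}{i+1}{m}}}{x_i}{l_i''}$ and the functions $f_{i,\vec{l}}$ coherent, and in particular being careful that the reduction is applied to $l_i$ with the higher components frozen, so that $f_{i,\vec{l}''} = f_{i,\vec{l}}$ and the contradiction argument remains available.
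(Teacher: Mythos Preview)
Your proposal is correct and follows essentially the same approach as the paper's proof: the same case split on $\eta_i$ (and on successor vs.\ limit ordinal when $\eta_i=\nu$), the same construction of $\vec{l}''$ via the dual of Lemma~\ref{le:approx-update}, and the same use of continuity to extract $(b'',j)$. The only cosmetic differences are that you factor the argument as ``first show $\vec{l}'\not\sqsubseteq\vec{l}''$, then invoke continuity'' whereas the paper runs the contrapositive in one step, and in the limit case your index $\beta$ is shifted by one relative to the paper's (you pick $\beta$ with $b\not\sqsubseteq f_{i,\vec{l}}^{\beta}(\top)$ and use the descending-chain inequality, while the paper picks $\beta$ with $b\not\sqsubseteq f_{i,\vec{l}}^{\beta+1}(\top)$ directly).
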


\begin{proof}
  Let $\vec{l} \in L^m$ be a $\nu$-approximant, let $b \in B_L$ and let
  $i \in \interval{m}$ with $b \not\sqsubseteq l_i$. Take $\vec{l}' \in \Emoves{b,i}$, i.e., such that $b \sqsubseteq f_i(\vec{l}')$.
  We prove that there are a $\nu$-approximant $\vec{l}''$ and $(b'',j)
 \in \Amoves{\vec{l}'}$ satisfying (1) and (2) above, by 
  distinguishing various cases:

  \begin{enumerate}[(i)]
    
  \item ($\eta_i = \mu$)
    Define
    $\vec{l}'' = (\sol{\subst{E}{\subvec{x}{i}{m}}{\subvec{l}{i}{m}}},
    l_i, \subvec{l}{i+1}{m})$, which is a $\nu$-approximant by
    Lemma~\ref{le:approx-update}. Note that, since
    $l_i = \mu(f_{i,\vec{l}})$,
    \begin{center}
      $l_i
      = f_{i,\vec{l}}(l_i)
      =    f_i(\sol{\subst{E}{\subvec{x}{i}{m}}{\subvec{l}{i}{m}}}, l_i, \subvec{l}{i+1}{m})      
      = f_i(\vec{l}'')
      $
    \end{center}
        
    We first prove prove (1), i.e., that there exists $(b'',j) \in \Amoves{\vec{l}'}$, i.e.,
    $j \in \interval{m}$ and $b'' \in B_L$, $b'' \ll l_j'$ with
    $b'' \not\sqsubseteq l_j''$.  In fact, otherwise, if for any $j$ and
    $b'' \ll l_j'$ we had $b'' \sqsubseteq l_j''$, then for any $j$,
    since $B_L$ is a basis and $L$ a continuous lattice:
    \begin{center}
      $l_j' = \bigsqcup \{ b'' \mid b'' \in B_L \land\ b'' \ll l_j' \} \sqsubseteq  l_j''$.
    \end{center}
    However, by monotonicity of $f_i$, this would imply
    $f_i(\vec{l}') \sqsubseteq f_i(\vec{l}'') = l_i$, that together
    with the hypothesis $b \sqsubseteq f_i(\vec{l}')$, would
    contradict $b \not\sqsubseteq l_i$.
    
    \smallskip
    
    For point (2), note that $\ord{\vec{l}'} \preceq_i \ord{\vec{l}}$ since
    vectors $\vec{l}$ and $\vec{l}'$ coincide on all components
    $i, \ldots, m$, and $l_i = f_{i,\vec{l}}^{\alpha}(\bot)$.
    
    \medskip

  \item ($\eta_i = \nu$) This means that
    $l_i = f_{i,\vec{l}}^\alpha(\top)$ for some ordinal $\alpha$,
    necessarily $\alpha \neq 0$ (since otherwise $l_i=\top$ and
    $b \not\sqsubseteq l_i$ could not hold). We distinguish two
    subcases
    
    \begin{enumerate}
    \item \emph{$\alpha = \beta+1$ is a successor ordinal}\\
      Define
      $\vec{l}'' =
      (\sol{\subst{\subst{E}{\subvec{x}{i+1}{m}}{\subvec{l}{i+1}{m}}}{x_i}{f_{i,\vec{l}}^{\beta}(\top)}},
      f_{i,\vec{l}}^{\beta}(\top), \subvec{l}{i+1}{m})$. Then we have
      \begin{align*}
        b & \not\sqsubseteq  l_i\\
          & = f_{i,\vec{l}}^{\beta+1}(\top)\\
          & = f_{i,\vec{l}}(f_{i,\vec{l}}^{\beta}(\top))\\
          & = f_i(\sol{\subst{\subst{E}{\subvec{x}{i+1}{m}}{\subvec{l}{i+1}{m}}}{x_i}{f_{i,\vec{l}}^{\beta}(\top)}}, f_{i,\vec{l}}^{\beta}(\top), \subvec{l}{i+1}{m})\\
          & = f_i(\vec{l}'')
      \end{align*}

      Recalling that $b \sqsubseteq f_i(\vec{l}')$, as in case (i) we
      deduce point (1), i.e., that there exists
      $(b'',j) \in \Amoves{\vec{l}'}$ such that
      $b'' \not\sqsubseteq l_j''$.

      \smallskip

      Concerning point (2), note that
      $\ord{\vec{l}'} \prec_i \ord{\vec{l}}$ since vectors $\vec{l}$
      and $\vec{l}'$ coincide on the components $i+1, \ldots, m$, and
      $l_i = f_{i,\vec{l}}^{\beta+1}(\bot)$ while
      $l_i' = f_{i,\vec{l}}^\beta(\bot)$.

      \medskip

    \item \emph{$\alpha$ is a limit ordinal}\\
      In this case
      \begin{center}
        $b \not\sqsubseteq l_i =
        f_{i,\vec{l}}^\alpha(\top) =
        \bigsqcap_{\beta<\alpha} f_{i,\vec{l}}^\beta(\top) = \bigsqcap_{\beta<\alpha} f_{i,\vec{l}}^{\beta+1}(\top)$
      \end{center}
      Therefore there exists $\beta < \alpha$ such that
      $b \not\sqsubseteq f_{i,\vec{l}}^{\beta+1}(\top)$. Hence, we can
      define $l_i'' = f_{i,\vec{l}}^{\beta}(\top)$ and take the
      $\nu$-approximant
      \begin{center}
        $\vec{l}'' = (\sol{\subst{\subst{E}{\subvec{x}{i+1}{m}}{\subvec{l}{i+1}{m}}}{x_i}{l_i''}},
        l_i'', \subvec{l}{i+1}{m})$
    \end{center}
    
    Then we have
      \begin{align*}
        b & \not\sqsubseteq  f_{i,\vec{l}}^{\beta+1}(\top)\\
          & = f_{i,\vec{l}}(f_{i,\vec{l}}^{\beta}(\top))\\
          & = f_i(\sol{\subst{\subst{E}{\subvec{x}{i+1}{m}}{\subvec{l}{i+1}{m}}}{x_i}{f_{i,\vec{l}}^{\beta}(\top)}}, f_{i,\vec{l}}^{\beta}(\top), \subvec{l}{i+1}{m})\\
          & = f_i(\vec{l}'')
      \end{align*}
      and thus, again, recalling that $b \sqsubseteq f_i(\vec{l}')$, as in 
      case (i) we deduce point (1), i.e., that there exists
      $(b'',j) \in \Amoves{\vec{l}'}$ such that
      $b'' \not\sqsubseteq l_j''$.

    Concerning point (2), note that
    $\ord{\vec{l}''} \prec_i \ord{\vec{l}}$ since vectors $\vec{l}$ and
    $\vec{l}''$ coincide on the components $i+1, \ldots, m$, and
    $l_i = f_{i,\vec{l}}^{\alpha}(\bot)$ while
    $l_i'' = f_{i,\vec{l}}^\beta(\bot)$, with $\beta < \alpha$.
    \end{enumerate}
  \end{enumerate}
\end{proof}

As in the dual case, correctness is an easy corollary of the above
lemma, recalling that the solution is the least
$\nu$-approximant. %

\begin{lemma}[correctness]
  \label{le:correctness}
  Let $E$ be a system of $m$ equations over a continuous lattice $L$
  of the kind $\vec{x} =_{\vec{\eta}} \vec{f} (\vec{x})$. For a
  $\nu$-approximant $\vec{l} \in L^m$, $b \in B_L$ and
  $i \in \interval{m}$, if $b \not\sqsubseteq l_i$ then $\forall$ has a
  winning strategy from position $(b,i)$.
\end{lemma}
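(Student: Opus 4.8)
The plan is to dualize the argument used for the completeness direction, i.e.\ to combine an analogue of Lemma~\ref{le:completeness} with an analogue of Corollary~\ref{co:completeness}, but now ascending along $\nu$-approximants via Lemma~\ref{le:ascend-nu} instead of descending along $\mu$-approximants via Lemma~\ref{le:descend-mu}. Concretely, I would describe a strategy for $\forall$ that maintains the following invariant at every $\exists$-position reached during the play: the current position is some pair $(b',j)$ for which there is a $\nu$-approximant $\vec{l}'$ with $b' \not\sqsubseteq l_j'$, and we keep track of $\ord{\vec{l}'}$ for the termination analysis. This invariant holds at the start, at position $(b,i)$, with witness the given $\nu$-approximant $\vec{l}$, since $b \not\sqsubseteq l_i$ by hypothesis.

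For the inductive step, suppose the play is at an $\exists$-position $(b',j)$ with witnessing $\nu$-approximant $\vec{l}'$, $b' \not\sqsubseteq l_j'$. If $\exists$ cannot move, the play is finite and $\forall$ wins, since the player who cannot move loses. Otherwise $\exists$ moves to some $\vec{l}'' \in \Emoves{b',j}$; by Lemma~\ref{le:ascend-nu} there are a $\nu$-approximant $\vec{k}$ and a move $(b'',j') \in \Amoves{\vec{l}''}$ with $b'' \not\sqsubseteq k_{j'}$ and $\ord{\vec{l}'} \succeq_j \ord{\vec{k}}$, the inequality being strict whenever $\eta_j = \nu$. Player $\forall$ responds with $(b'',j')$, re-establishing the invariant with witness $\vec{k}$. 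In particular $\forall$ always has a legal response, so any \emph{finite} play must terminate with $\exists$ stuck, and then $\forall$ wins.

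It remains to treat \emph{infinite} plays, and here the reasoning is literally dual to the end of the proof of Lemma~\ref{le:completeness}. Let $h$ be the highest index occurring infinitely often in the pairs $(b,i)$ along the play, and restrict attention to a tail from which on every index occurring infinitely often keeps recurring. Along this tail each $\exists$-position $(b',j)$ has $j \le h$, and by the previous paragraph the associated ordinal vectors satisfy $\ord{\vec{l}'} \succeq_j \ord{\vec{k}}$; using that $j \le h$ implies $\vec{\alpha} \succeq_j \vec{\alpha}'$ entails $\vec{\alpha} \succeq_h \vec{\alpha}'$, these vectors form a $\preceq_h$-non-increasing sequence. If we had $\eta_h = \nu$, then at each of the infinitely many steps with $j = h$ the decrease would be strict with respect to $\prec_h$, contradicting well-foundedness of $\prec_h$ (which holds since the lexicographic order on ordinal vectors is a well-order). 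Hence $\eta_h = \mu$, so by the parity winning condition $\forall$ wins. Together with the finite-play case this shows the described strategy is winning for $\forall$ from $(b,i)$.

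I expect no genuine obstacle here: the substantive content — in particular the use of continuity of $L$ — is already packaged in Lemma~\ref{le:ascend-nu} (cf.\ Proposition~\ref{pr:counter-continuity}). The only delicate point is the transfinite-descent bookkeeping: correctly matching the index $j$ of the chosen equation with the components of $\ord{\cdot}$ that actually decrease, and passing from the order $\preceq_j$ to $\preceq_h$; but this is exactly the manipulation already carried out in the proof of Lemma~\ref{le:completeness}.
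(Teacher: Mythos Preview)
Your proposal is correct and follows essentially the same approach as the paper's proof: both use Lemma~\ref{le:ascend-nu} to define $\forall$'s strategy while tracking a $\nu$-approximant witness with its ordinal vector, then argue via well-foundedness of $\prec_h$ that the highest recurring index in an infinite play must be a $\mu$-index. Your presentation as maintaining an explicit invariant is a bit more detailed than the paper's terse version, but the content is identical.
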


\begin{proof}
  We first describe the strategy for the universal player and then
  prove that it is a winning strategy.

  Let $\vec{l} \in L^m$ be a $\nu$-approximant, $b \in L$ and
  $i \in \interval{m}$ such that $b \not\sqsubseteq l_i$.  Starting
  from $(b, i)$, for any possible move $\vec{l}' \in \Emoves{b,i}$ of
  $\exists$.
  Then $\forall$ can play a pair $(b',j) \in \Amoves{\vec{l}'}$, whose
  existence is ensured by Lemma~\ref{le:ascend-nu}, such that there is
  a $\nu$-approximant $\vec{l}''$ satisfying
  $b'' \not\sqsubseteq l_j''$ and
  $\ord{\vec{l}''} \prec_i \ord{\vec{l}}$. Additionally, if
  $\eta_i = \nu$ the inequality is strict.
  
  \medskip
  
  \medskip

  According to the strategy defined above $\forall$ player has always a
  move. Thus either the play finishes because $\exists$ has no moves,
  hence $\forall$ wins or the play continues forever.

  In this last case, with an argument dual with respect to that in Lemma~\ref{le:completeness}, we can show that if $h$ is the largest index occurring
  infinitely often, then necessarily $\eta_h = \mu$, hence $\forall$
  win.
\end{proof}

Combining Corollary~\ref{co:completeness} and
Lemma~\ref{le:correctness} we reach the desired result.

\begin{theorem}[correctness and completeness]
  \label{th:game-corr-comp}
  Given a system of $m$ equations $E$ over a continuous lattice $L$ of the kind
  $\vec{x} =_{\vec{\eta}} \vec{f} (\vec{x})$ with solution $\vec{u}$,
  then for all $b \in B_L$ and $i \in \interval{m}$,
  \begin{center}
    $b \sqsubseteq u_i$ \quad iff \quad $\exists$ has a winning
    strategy from position $(b,i)$.
\end{center}
\end{theorem}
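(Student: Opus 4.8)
The plan is to obtain the theorem as a direct assembly of the two one-directional results already proved, by instantiating them at the solution $\vec{u}$ itself. The only conceptual ingredient is Lemma~\ref{le:solution-is-approximant}, which tells us that $\vec{u}$ is simultaneously the greatest $\mu$-approximant and the least $\nu$-approximant; this double status is precisely what allows $\vec{u}$ to be plugged into both directions.

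First I would handle the ``if'' direction. Assume $b \sqsubseteq u_i$. Since $\vec{u}$ is a $\mu$-approximant, Corollary~\ref{co:completeness} applied with the $\mu$-approximant taken to be $\vec{l} := \vec{u}$ immediately yields a winning strategy for $\exists$ from position $(b,i)$.

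For the ``only if'' direction I would argue contrapositively. Suppose $b \not\sqsubseteq u_i$. Since $\vec{u}$ is a $\nu$-approximant, Lemma~\ref{le:correctness} applied with $\vec{l} := \vec{u}$ gives a winning strategy for $\forall$ from $(b,i)$. Now I would invoke the elementary fact that from a single position of a game at most one of the two players can possess a winning strategy (this needs no determinacy argument: a play consistent with both putative strategies would be won by both). Hence $\exists$ cannot have a winning strategy from $(b,i)$, so a winning strategy for $\exists$ from $(b,i)$ forces $b \sqsubseteq u_i$. Combining the two implications gives the claimed equivalence; as a by-product this shows the fixpoint game is determined on every position $(b,i)$, with $\exists$ winning exactly when $b \sqsubseteq u_i$.

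I do not expect any real obstacle here: all the technical difficulty has already been absorbed into Lemma~\ref{le:descend-mu}, Lemma~\ref{le:ascend-nu} and their corollaries, and into the characterisation of the solution as the extremal approximant of both kinds (Lemma~\ref{le:solution-monotone}, Lemma~\ref{le:solution-is-approximant}). The only points requiring minimal care are that $\vec{u}$ must be recognised as both a $\mu$- and a $\nu$-approximant, and that the two half-results are ``glued'' by the observation that the two players cannot both win from the same position; no continuity or determinacy theorem beyond what is already established is needed.
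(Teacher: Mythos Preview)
Your proposal is correct and is essentially identical to the paper's own proof, which simply says the theorem is an ``immediate corollary of Lemma~\ref{le:solution-is-approximant}, Corollary~\ref{co:completeness} and Lemma~\ref{le:correctness}.'' The only cosmetic issue is that you have swapped the ``if'' and ``only if'' labels (in ``$P$ iff $Q$'' the \emph{if} part is $Q\Rightarrow P$), and you spell out the trivial step that both players cannot win from the same position, which the paper leaves implicit.
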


\begin{proof}
  Immediate corollary of Lemma~\ref{le:solution-is-approximant},
  Corollary~\ref{co:completeness} and Lemma~\ref{le:correctness}.
\end{proof}

Note that even when the fixpoint is reached in more than $\omega$
steps, thanks to the fact that the order on the ordinals is
well-founded and players ``descend'' over the order, ordinals do
not play an explicit role in the game.  In particular plays are not
transfinite and whenever $\forall$ or $\exists$ win due to the fact
that the other player cannot make a move, this happens after a
\emph{finite} number of steps. This can be a bit surprising at first
since the game works for general continuous lattices, including, for
instance, intervals over the reals.

We close this subsection by proving two results that, in a sense, show
that the choice of continuous lattices and the design of the game
based on the way-below relation are ``the right ones''.
We first observe that the restriction to continuous
lattices is not only sufficient but also necessary for the correctness of the game. 

\begin{proposition}[correctness holds exactly in continuous lattices]
  \label{pr:counter-continuity}
  Let $L$ be a lattice and let $B_L$ be a fixed basis with
  $\bot\not\in B_L$. The game is correct for every system of equations
  over $L$ if and only if $L$ is continuous.
\end{proposition}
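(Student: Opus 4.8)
The plan is to prove the two directions separately. The ``if'' direction is already established: it is exactly Lemma~\ref{le:correctness} (together with Lemma~\ref{le:solution-is-approximant}, which tells us that the solution $\vec{u}$ is a $\nu$-approximant, so taking $\vec{l} = \vec{u}$ gives correctness from every position $(b,i)$ with $b \not\sqsubseteq u_i$). So the real content is the ``only if'' direction: if $L$ is \emph{not} continuous, we must exhibit a system of equations over $L$ for which the game fails to be correct, i.e., a system whose solution $\vec{u}$ satisfies $b \not\sqsubseteq u_i$ for some basis element $b$ and index $i$, yet $\exists$ has a winning strategy from $(b,i)$.

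Since $L$ is not continuous, there is some element $a \in L$ with $a \neq \lub \cone{a}$; write $c = \lub \cone{a}$, so $c \sqsubset a$ (we always have $c \sqsubseteq a$, and equality fails). The idea is to build a \emph{single} equation $x =_\mu f(x)$ whose solution is $c$ but for which $\exists$ can nonetheless win from some position $(b,1)$ with $b \sqsubseteq a$ but $b \not\sqsubseteq c$. The natural choice is to take $f$ to be the constant function $f \equiv c$ (which is monotone), so that $\mu f = c = \sol[1]{E}$. Now pick a basis element $b$ with $b \sqsubseteq a$ and $b \not\sqsubseteq c$: such a $b$ must exist, because $a = \lub\{b' \in B_L \mid b' \sqsubseteq a\}$ and if every such $b'$ satisfied $b' \sqsubseteq c$ we would get $a \sqsubseteq c$, a contradiction. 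From position $(b,1)$ the moves of $\exists$ are all $\vec{l} = (l)$ with $b \sqsubseteq f(l) = c$ — but $b \not\sqsubseteq c$, so $\exists$ has \emph{no} move and \emph{loses}. So a constant function does not do the job directly; instead I would take $f(x) = a$ if $x \sqsupseteq$ (something reachable) ... this is delicate, so let me restructure.

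A cleaner construction: take $f$ to be the constant function $f \equiv a$. Then $\mu f = a$, so $\sol[1]{E} = a$, and the game from $(b,1)$ for any $b \sqsubseteq a$ should have $\exists$ winning — indeed it does, correctly. That is not a counterexample either. The point of non-continuity must instead be exploited on the \emph{$\forall$-move side}: the game lets $\forall$ move from $\vec{l}$ to $(b',j)$ only when $b' \ll l_j$, and in a non-continuous lattice the join of the $b'$ way-below $l_j$ can be strictly smaller than $l_j$, so $\forall$ is ``too weak''. Concretely, I would take the equation $x =_\mu f(x)$ where $f$ is chosen so that the solution is the element $c = \lub\cone{a}$ but $\exists$ can cheat by playing the overly-large element $a$: set $f(x) = c$ if $x \sqsubseteq$ some fixed witness, more simply let $f$ be constantly $c$ on a suitable sublattice... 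The workable version, following the pattern of the single-equation correctness proof, is: let $E$ be $x =_\nu f(x)$ where $f(x) = a$ for $x \sqsupseteq a$ and $f(x) = \bot$ otherwise — check monotonicity — then $\nu f = a$; but one can also arrange, via a $\mu$-equation with a carefully truncated $f$, that the true solution is $c$ while $\exists$, by playing $l = a$, reaches only $\forall$-moves $b' \ll a$, i.e.\ $b' \in \cone{a}$, hence $b' \sqsubseteq c$, from which $\exists$ wins by induction — yet $b \not\sqsubseteq c = u_1$. The main obstacle, and where care is genuinely needed, is defining $f$ monotone with $\mu f = c$ such that $a$ is an $\exists$-move from the relevant position: one takes $f(x) = c$ for all $x$ \emph{except} that we need $b \sqsubseteq f(a)$, so $f(a) \sqsupseteq b$; the resolution is to let $f$ send everything below $a$ (other than $a$ itself, where it can jump up) appropriately — essentially $f(x) = \bot$ if $x \not\sqsupseteq a$ and $f(x) = a$ if $x \sqsupseteq a$ would give $\mu f = \bot \neq c$, so one instead uses two equations to force the solution to be $c$ in the first component while the function of the first variable, evaluated at the tuple $\exists$ is allowed to play, still outputs something $\sqsupseteq b$. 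I expect the bookkeeping of getting simultaneously (i) $\sol[i]{E} = c$, (ii) $b \sqsubseteq f_i$ evaluated at a tuple all of whose way-below basis elements lie under $c$, and (iii) monotonicity, to be the crux; once the tuple and $f$ are fixed, the verification that $\exists$ wins is a routine induction mirroring the $\mu$-case of the single-equation argument, and the verification that correctness fails is immediate from $b \not\sqsubseteq c = u_i$.
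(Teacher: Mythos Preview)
Your ``if'' direction is fine: that is precisely Lemma~\ref{le:correctness}.

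For the ``only if'' direction you have the right intuition --- non-continuity makes $\forall$ too weak, because from position $a$ the only moves $b' \ll a$ satisfy $b' \sqsubseteq c := \lub\cone{a} \sqsubset a$ --- but you never actually produce a function $f$ that works, and you say as much (``the bookkeeping \ldots\ to be the crux''). That is the whole content of this direction, so the proposal as it stands is incomplete.

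The missing construction is a two-level step function. With $a$ and $c = \lub\cone{a} \sqsubset a$ as you set up, define a single equation $x =_\mu f(x)$ where
\[
  f(x) = \begin{cases} c & \text{if } x \sqsubseteq c,\\ \top & \text{otherwise.} \end{cases}
\]
This is monotone (it is an upward step), and $\mu f = c$ since $f(c) = c$ while $f(x) = c \neq x$ for $x \sqsubset c$. Pick $b \in B_L$ with $b \sqsubseteq a$ but $b \not\sqsubseteq c$ (exists for the reason you gave). Now from position $b$: since $a \not\sqsubseteq c$ we have $f(a) = \top \sqsupseteq b$, so $\exists$ may play $a$. Any reply $b' \ll a$ by $\forall$ satisfies $b' \sqsubseteq c = f(\bot)$, so $\exists$ plays $\bot$ and $\forall$ is stuck. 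Thus $\exists$ wins from $b$ although $b \not\sqsubseteq \mu f$, and correctness fails.

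Your attempt ``$f(x) = \bot$ if $x \not\sqsupseteq a$, $f(x) = a$ if $x \sqsupseteq a$'' was close in spirit but put the threshold at the wrong place: the step must occur at $c$, not at $a$, and the upper value must dominate $b$ (taking $\top$ is simplest), so that $\exists$ is allowed to play $a$ in the first move.
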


\begin{proof}
  We already know from Lemma~\ref{le:correctness} that when $L$ is
  continuous the game is correct.
  
  Conversely, let $L$ be a non-continuous lattice. This means that there is an
  element $l \in L$ such that $l \neq \lub \cone{l}$. Note that since
  $\cone{l} \subseteq \,\downarrow\! l$, where $\downarrow$ is the
  downward-closure with respect to $\sqsubseteq$, we have
  $\lub \cone{l} \sqsubset \lub \downarrow\! l = l$. We prove that,
  for any basis $B_L$ for $L$ such that $\bot \not\in B_L$, there are a
  monotone function $f \colon L \to L$ and an element $b \in B_L$ such that
  $b \not\sqsubseteq \mu f$, for which there is a winning strategy for
  the existential player for the corresponding fixpoint game starting
  from position $b$, while such a strategy should not exists. The
  function $f$ is defined by:
	\[ f(x) = \left\{
      \begin{array}{ll}
        \lub \cone{l} & \text{ if }x \sqsubseteq\lub \cone{l} \\
        \top & \text{ otherwise.}
      \end{array}
    \right.
  \]
  Notice that necessarily $\lub \cone{l} \neq \top$, since
  $\lub \cone{l} \sqsubset l \sqsubseteq \top$. Then, clearly $f$ is
  monotone and its least fixpoint is $\mu f = \lub \cone{l}$.
  Moreover, since $B_L$ is a basis, we know that
  $\lub (\downarrow\! l \cap B_L) = l$. Then there must be $b \in B_L$
  such that $b \sqsubseteq l$ but $b \not\sqsubseteq \lub \cone{l}$.
  Otherwise, if for all $b \in B_L$ such that $b \sqsubseteq l$,
  $b \sqsubseteq \lub \cone{l}$, then we would have
  $\lub (\downarrow\! l \cap B_L) \sqsubseteq \lub \cone{l}$,
  contradicting the hypothesis
  $\lub \cone{l} \sqsubset l = \lub (\downarrow\! l \cap B_L)$.  Now
  we show that player $\exists$ is able to win any play of the game,
  with a single equation $x =_\mu f(x)$, for checking whether such a
  $b \sqsubseteq \mu f$. The strategy is actually quite simple and can
  be described by just one family of plays. We use the fact that $f(l)
  = \top$.
  \[b \stackrel{\exists}{\leadsto} l
    \stackrel{\forall}{\leadsto} b'
    \stackrel{\exists}{\leadsto} \bot
    \stackrel{\forall}{\not\leadsto}\]
  for any $b' \ll l$, since $b' \sqsubseteq \lub \cone{l} = f(\bot)$.
  Thus player $\exists$ can always win, despite the fact that
  $b \not\sqsubseteq \mu f$.
\end{proof}

As a counterexample, consider the lattice $W$ in
Fig.~\ref{fi:non-continuous}, which is not continuous and let $B_W$ be
any basis such that $0 \not\in B_W$.
First note that necessarily $a \in B_W$, otherwise
$a \neq \bigsqcup \{ x \in B_W \mid x \sqsubseteq a \} = \bigsqcup
\emptyset = 0$. Secondly, $\cone{a} = \{0\}$ since $a \not\ll a$.
Then, consider the equation $x =_\mu f(x)$, where the function
$f \colon W \to W$ is defined by $f(0) = 0$, and $f(x) = \omega$ for
$x \neq 0$. Clearly $f$ is monotone and its least fixpoint is
$\mu f = 0$. However, the player $\exists$ can win any play of the
game from position $a$, despite the fact that
$a \not\sqsubseteq \mu f = 0$. In fact, the first move of $\exists$
can be $a$, since $a \sqsubseteq f(a) = \omega$. But then
player $\forall$ has no moves since
$\cone{a} \cap B_W = \emptyset$. And so player $\exists$ always wins
while she should not.

\smallskip

The second observation is that using the lattice order instead of the
way-below relation may break completeness.
More precisely, consider the natural variant of the game where the
way-below relation is replaced by the lattice order. Let us call it
\emph{weak game}. Since the set of possible moves of player $\forall$
is enlarged, correctness clearly continues to hold. Instead, as we
hinted before, completeness could fail. We show that it is exactly on
algebraic lattices that completeness still holds for the weak game.

\begin{proposition}[way-below is needed in non-algebraic lattices]
  \label{pr:counter-way-below}  
  Let $L$ be a lattice. The  weak game is complete on every system of
  equations over $L$ if and only if $B_L$ consists of compact elements (which
  in turn  means that $L$ is algebraic). 
\end{proposition}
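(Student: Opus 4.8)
\noindent\textit{Proof plan.}\quad The plan is to prove the two implications separately; the direction ``$B_L$ compact $\Rightarrow$ weak game complete'' is easy, the converse is by contraposition and reuses the idea behind Proposition~\ref{pr:counter-continuity}. For the easy direction, recall that a compact $b$ satisfies $b\ll l\iff b\sqsubseteq l$ for every $l$ (the nontrivial inclusion: if $b$ is compact and $b\sqsubseteq l\sqsubseteq\lub D$ with $D$ directed, then $b\sqsubseteq\lub D$ and $b\ll b$ give $b\sqsubseteq d$ for some $d\in D$). Hence, when every $b\in B_L$ is compact, the moves $\Amoves{\vec l}$ of $\forall$ in the weak game coincide with those in the ordinary fixpoint game, while $\Emoves{b,i}$ is unchanged, so the weak game \emph{is} the ordinary game and completeness is exactly Corollary~\ref{co:completeness}, whose proof does not use continuity. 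The parenthetical remark is immediate: any set of compact elements containing a basis is itself a basis, so $B_L$ consisting of compact elements forces the compact elements to be a basis, i.e.\ $L$ is algebraic.

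For the converse, the plan is to assume some $b\in B_L$ is not compact, i.e.\ $b\not\ll b$, fix a directed $D$ with $b\sqsubseteq\lub D$ and $b\not\sqsubseteq d$ for all $d\in D$ (in a continuous lattice one may take $D=\cone b$, so that $\lub D=b$ and every $d\in D$ lies strictly below $b$), and note that, $D$ being directed, no finite subset of $D$ joins above $b$. The goal is then a monotone $f\colon L\to L$ with (a) $b\sqsubseteq\mu f$ and (b) $f^{-1}(\filter b)\subseteq\filter b$, i.e.\ $f(l)\not\sqsupseteq b$ whenever $l\not\sqsupseteq b$. Such an $f$ refutes completeness of the weak game on the one-equation system $x=_\mu f(x)$: from $(b,1)$ every move $l\in\Emoves{b,1}$ has $b\sqsubseteq f(l)$, hence $b\sqsubseteq l$ by (b), so $(b,1)\in\Amoves{l}$ is always a legal reply for $\forall$, who therefore loops forever on the $\mu$-index $1$ and wins; yet by (a) $b\sqsubseteq\mu f=u_1$, so completeness would demand a winning strategy for $\exists$.

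To build $f$, the plan is to enumerate $D=\{d_\xi\mid\xi<\kappa\}$ (with $\kappa\ge\omega$, as $D$ is infinite), form the ascending chain of partial joins $a_\xi=\lub_{\zeta<\xi}d_\zeta$ (which is continuous at limits), and let $\lambda$ be least with $a_\lambda\sqsupseteq b$, so that $a_\xi\not\sqsupseteq b$ for all $\xi<\lambda$, and $\lambda\ge\omega$ by the remark above. Then define
\[ f(l)\;=\;\lub\{\,a_{\xi+1}\mid a_\xi\sqsubseteq l\,\}. \]
This $f$ is monotone; its iterates from $\bot$ climb the chain, so $\mu f=\lub_\xi a_\xi=\lub D\sqsupseteq b$, giving (a); and if $l\not\sqsupseteq b$ then the initial segment $\{\xi\mid a_\xi\sqsubseteq l\}$ cannot comprise all $\xi<\lambda$ (else $l\sqsupseteq\lub_{\xi<\lambda}a_\xi=a_\lambda\sqsupseteq b$), whence $f(l)$ equals some $a_\eta$ with $\eta<\lambda$ and therefore $f(l)\not\sqsupseteq b$, giving (b) --- provided $\lambda$ is a \emph{limit} ordinal, so that $\lub_{\xi<\lambda}a_\xi=a_\lambda$.

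The step I expect to be the main obstacle is precisely this: arranging that $\lambda$ can be taken to be a limit. If $\lambda=\mu+1$ were a successor, $\exists$ could break $\forall$'s looping strategy by playing $l=a_\mu\not\sqsupseteq b$, which is legal since $b\sqsubseteq a_{\mu+1}=f(a_\mu)$. This is automatic when the directed witness is, or contains a cofinal, $\omega$-chain joining above $b$ --- e.g.\ for $L=[0,1]$ with $b$ a positive rational, and more generally for $[0,1]^n$, where every element of $\cone b$ is bounded away from $b$ in some coordinate so that no single step can cross $b$. In general one has to choose the witness $D$ and its enumeration so that $b$ is first covered only at a limit stage, or else show that when the escape move $a_\mu$ is itself a non-trivial directed join, $\forall$ can instead recurse at a suitable non-compact basis element below $a_\mu$. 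This delicacy is the completeness-side counterpart of the role played by the way-below relation in the correctness argument of Proposition~\ref{pr:counter-continuity}.
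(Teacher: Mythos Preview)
Your forward direction is correct and matches the paper: when every $b\in B_L$ is compact, $b\ll l\iff b\sqsubseteq l$, so the weak game coincides with the ordinary fixpoint game and completeness is Corollary~\ref{co:completeness}.

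For the converse, you correctly identify the template --- find a monotone $f$ with $b\sqsubseteq\mu f$ and $\Emoves{b,1}\subseteq\filter{b}$, so that $\forall$ can loop on $b$ forever --- but your specific construction has the gap you yourself diagnose: you need the first stage $\lambda$ at which $a_\lambda\sqsupseteq b$ to be a limit ordinal, and there is no general way to arrange this. The fallback ideas you sketch (cofinal $\omega$-chains, recursing on a non-compact basis element below $a_\mu$) do not yield a complete argument for arbitrary lattices.

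The paper sidesteps this difficulty with a different and simpler construction. First, by a theorem of Markowsky one may assume without loss of generality that the directed witness $D$ is a transfinite chain $(d_\alpha)_\alpha$. Then define
\[
  f(x)=\begin{cases}\top & \text{if }b\sqsubseteq x,\\ d_\alpha & \text{otherwise, where }\alpha=\min\{\beta\mid d_\beta\not\sqsubseteq x\}.\end{cases}
\]
When $b\not\sqsubseteq x$ some $d_\beta\not\sqsubseteq x$ (else $\lub D\sqsubseteq x$, contradicting $b\sqsubseteq\lub D$), so $f$ is well-defined; and then $f(x)=d_\alpha$ is automatically \emph{not} above $b$, since no element of $D$ is. This gives your property~(b) for free, with no limit/successor case analysis. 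Monotonicity is routine once $D$ is a chain. Finally, $\top$ is the \emph{only} fixpoint: if $b\not\sqsubseteq x$ then $f(x)=d_\alpha\not\sqsubseteq x$ by minimality of $\alpha$, and if $b\sqsubseteq x\neq\top$ then $f(x)=\top\neq x$. Hence $\mu f=\top\sqsupseteq b$, giving~(a). The trick you were missing is to let $f$ jump straight to $\top$ once $b$ is covered, rather than trying to climb gradually through the chain toward $b$ itself; this decouples the value of $\mu f$ from the cofinality of the chain.
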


\begin{proof}
  Let $K_L$ be the set of compact elements of $L$. If
  $B_L \subseteq K_L$, then for any $b \in B_L$ and $l \in L$, we have
  that $b \ll l$ if and only if $b \sqsubseteq l$. Therefore the weak
  game coincides with the original one and hence completeness clearly
  holds.

  Conversely, assume that $B_L \not\subseteq K_L$. We show that we
  can identify a system consisting of a single equation $x =_\mu f(x)$
  for which the weak game is not complete. Let
  $b \in B_L \setminus K_L$ be a non-compact element in the
  basis. Therefore $b \not\ll b$ which means that there exists a
  directed set $D$ such that $b \sqsubseteq \bigsqcup D$ and
  $b \not\sqsubseteq d$ for all $d \in D$. Without loss of generality
  we can assume that $D$ is a transfinite chain
  $D = (d_\alpha)_\alpha$ (see, e.g.,~\cite[Theorem 1]{Mar:CCP}).
    
  Consider the function $f \colon L \to L$ defined as
  \begin{quote}
    $
    f(x) =
    \left\{
      \begin{array}{ll}
        \top & \mbox{if $b \sqsubseteq x$}\\
        d_{\alpha} & \mbox{otherwise, where $\alpha = \min\{ \beta \mid d_\beta \not\sqsubseteq x \}$}
      \end{array}
    \right.
    $
  \end{quote}
  Observe that the function $f$ is well-defined. In fact, when
  $b \not\sqsubseteq x$ there exists $\beta$ such that
  $d_\beta \not\sqsubseteq x$ and hence the set
  $\{ \beta \mid d_\beta \not\sqsubseteq x \}$ is not empty. In fact,
  if we had $d_\beta \sqsubseteq x$ for all elements of the chain, we
  would deduce $\bigsqcup D \sqsubseteq x$ and thus, recalling
  $b \sqsubseteq \bigsqcup D$, we would conclude $b \sqsubseteq x$.

  Observe that $f$ is monotone. In fact, let $x, y \in L$ with
  $x \sqsubseteq y$. If $b \sqsubseteq y$ and thus $f(y) = \top$, we
  trivially conclude $f(x) \sqsubseteq \top = f(y)$. Let us then consider
  the case in which $b \not\sqsubseteq y$ and thus $f(y) = d_{\alpha}$
  where $\alpha = \min\{ \beta \mid d_\beta \not\sqsubseteq y
  \}$. Obviously $b \not\sqsubseteq x$ and thus $f(x) = d_{\alpha'}$,
  where $\alpha' = \min\{ \beta \mid d_\beta \not\sqsubseteq x \}$. Since
  $x \sqsubseteq y$, we have
  $\{ \beta \mid d_\beta \not\sqsubseteq x \} \supseteq \{ \beta \mid
  d_\beta \not\sqsubseteq y \}$, hence $\alpha' \leq \alpha$ and thus
  $f(x) = d_{\alpha'} \sqsubseteq d_{\alpha} = f(y)$, as desired.
  
  By construction $\top$ is the only fixpoint of $f$, hence
  $\top = \mu f$.
  Thus $b \sqsubseteq \mu f = \top$. Now, if we play the weak game,
  since $b \not\sqsubseteq d_\alpha$ for all $\alpha$, the possible moves for $\exists$
  are initially only those $x \in L$ such that $b \sqsubseteq x$ and
  thus $b \sqsubseteq f(x) = \top$. However, if $\exists$ play such an
  $x$, in the weak game $\forall$ can answer $b$, getting back to the
  initial situation. Hence $\forall$ wins, providing the desired
  counterexample to completeness.
\end{proof}

Note that when the elements of the basis are compact, the way-below relation with respect to elements of the basis is the lattice order. Hence the  result above essentially states that the weak game is complete exactly when it coincides with the original game, thus further supporting the appropriateness of our formulation of the game.

As a counterexample, consider the continuous lattice $[0,1]$ with the
usual order and basis $B_{[0,1]} = \mathbb{Q} \cap (0,1]$.  Recall
that $[0,1]$ is not algebraic (the only compact element is $0$) and
way-below relation is the strict order $<$.
Let $g \colon [0,1] \to [0,1]$ be the function defined by
$g(x) = \frac{x+1}{2}$. The fixpoint equation $x =_\mu g(x)$ has
solution $\mu g = 1$.
  
In the weak game, from position $l \in [0,1]$, player $\forall$ can
play any $b \leq l$ (instead, of $b < l$).
Then player $\exists$ loses any play starting from position $1$,
despite the fact that $1 \leq \mu g = 1$. In fact, the only possible
move of player $\exists$ is $1$, and $\forall$ can play any
$x \leq 1$. In particular, playing $1$ the game will continue forever
and will thus be won by $\forall$.

Notice that, instead, in the original game, from position $1$, player
$\forall$ has to play an element $1 - \epsilon$ for some
$\epsilon > 0$.  Then, it is easy to see that at each step $i$ player
$\exists$ will be able to play some $z_i \leq 1-2^i \epsilon$. This
means that after finitely many steps $\exists$ will be allowed to play
$0$, thus leaving no possible answer to $\forall$ and winning the
game.

\subsection{Relation to $\mu$-Calculus Model-Checking}

We discuss how our fixpoint game over systems
arising from $\mu$-calculus formulae relates to classical techniques for
model-checking the $\mu$-calculus, which can be presented interchangeably
in terms of parity games, tableaux, and automata (see, e.g.,~\cite{Eme:ATTL}). Specifically, we compare our game with classical tableau
systems for the $\mu$-calculus (e.g., as in~\cite{Cle:TMCPMC,SW:LMCMMC}) where the similarities can be presented more directly.

Recall that a tableau is a (finite) proof tree whose nodes are
labelled by sequents. Usually sequents are of the kind $s\models\phi$,
where $s$ is a state of the model and $\varphi$ formula. The fact that
a state $s$ satisfies a formula $\phi$ amounts to the existence of a
tableau, rooted in $s\models\phi$ and that it is \emph{successful},
according to a suitable definition.

Given a closed $\mu$-calculus formula $\phi$ and a state $s$ in a
transition system $(\mathbb{S},\to)$, let $E$ be the corresponding
system of $m$ equations as in
Definition~\ref{de:eqnsys-for-muformula}.
The model-checking problem using tableaux is solved by searching for a
successful tableau for the sequent $s\models\phi$. Instead, using the
fixpoint game, it is reduced to the existence of a winning strategy
for player $\exists$ starting from position $(\{s\},m)$, where $m$ is
the highest equation index.

We discuss the two approaches using Example~\ref{ex:running}.
Recall that the formula of interest is
$\phi = \nu x_2.((\mu x_1.(p\lor\Diamond x_1))\land\Box x_2)$. Let
$\psi$ denote the subformula $\mu x_1.(p\lor\Diamond x_1)$. Using the
tableau rules in the style of~\cite{Cle:TMCPMC} (omitting assumptions
for the sake of the presentation), we can build a successful tableau
for the sequent $a\models\phi$ as in Fig.~\ref{fi:running-tableau}.
It is not difficult to see that this tableau corresponds to the
winning strategy $\varsigma$ for $\exists$ discussed in
Example~\ref{ex:running-a}. In fact, consider the reduced tree in
Fig.~\ref{fi:running-tableau-red}, which is obtained from the tableau
by keeping only the sequents corresponding to fixpoint formulae (i.e.,
either $\phi$ or $\psi$) and replacing such formulae with the
corresponding variable ($\phi$ with $x_2$ and $\psi$ with $x_1$).

Each sequent $s \models x_i$ can be seen as a position
$(\{s\},i) \in B_{2^{\mathbb{S}}}\times \interval{2}$ of $\exists$ in
the fixpoint game.
The successor sequents correspond to the move prescribed on
$(\{s\},i)$ by the strategy $\varsigma$. More precisely, the move
should be $(y_1, y_2)$ where
$y_j = \{ s' \mid\, s' \models x_j\text{ is a successor of }s \models x_i\}$ for $j \in \interval{2}$.
For instance, the sequent $a\models x_2$ corresponds to the position
$(\{a\},2)$. The three successors $a \models x_1$, $a \models x_2$ and
$b \models x_2$ determine the move prescribed by the strategy
$\varsigma(\{a\},2) = (\{a\},\{a,b\})$. Instead, the sequent
$a\models x_1$ has only one successor $b \models x_1$ and,
correspondingly, we have $\varsigma(\{a\},1) = (\{b\},\emptyset)$,
since there are no successors containing variable $x_2$.
When a sequent appears on a leaf of the reduced tree which
was already a leaf in the original tableau,
by definition of the tableau rules it must have an ancestor labelled
by the same sequent and in this case the strategy is defined by the
ancestor. For instance, in Fig.~\ref{fi:running-tableaux}, the sequent $b\models x_2$ labels the leaf {\raisebox{.5pt}{\textcircled{\raisebox{-.9pt}{1}}} which was already a leaf in the original tableau, marked by {\raisebox{.5pt}{\textcircled{\raisebox{-.9pt}{2}}}}. The strategy is
thus defined by the ancestor {\raisebox{.5pt}{\textcircled{\raisebox{-.9pt}{3}}}, labelled by the same sequent
$b \models x_2$, as $\varsigma(\{b\},2) = (\{b\},\{b\})$.

\begin{figure}
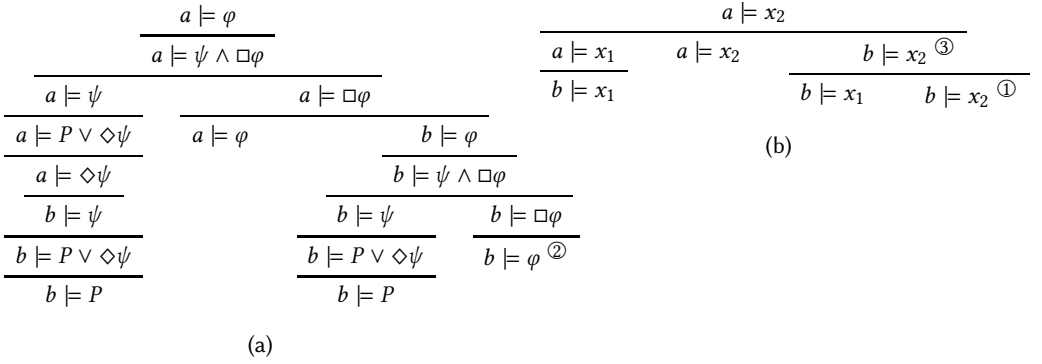

  \small
  \begin{subfigure}[t]{.49\textwidth}
    \centering
    \begin{prooftree}
      \AxiomC{$b \models P$}
      \UnaryInfC{$b \models P\lor\Diamond\psi$}
      \UnaryInfC{$b \models \psi$}
      \UnaryInfC{$a \models \Diamond\psi$}
      \UnaryInfC{$a \models P\lor\Diamond\psi$}
      \UnaryInfC{$a \models \psi$}
        \AxiomC{$a \models \phi$}
      	  \AxiomC{$b \models P$}
          \UnaryInfC{$b \models P\lor\Diamond\psi$}
      	  \UnaryInfC{$b \models \psi$}
      	    \AxiomC{$b \models \phi$ \textsuperscript{\raisebox{.5pt}{\textcircled{\raisebox{-.9pt}{2}}}}}
      	    \UnaryInfC{$b \models \Box\phi$}
          \BinaryInfC{$b \models \psi\land\Box\phi$}
      	  \UnaryInfC{$b \models \phi$}
        \BinaryInfC{$a \models \Box\phi$}
      \BinaryInfC{$a \models \psi\land\Box\phi$}
      \UnaryInfC{$a \models \phi$}
    \end{prooftree}
    \caption{}
    \label{fi:running-tableau}
  \end{subfigure}
  \begin{subfigure}[t]{.49\textwidth}
    \centering
    \begin{prooftree}
    \AxiomC{$b \models x_1$}
    \UnaryInfC{$a \models x_1$}
      \AxiomC{$a \models x_2$}
        \AxiomC{$b \models x_1$}
    	    \AxiomC{$b \models x_2$ \textsuperscript{\raisebox{.5pt}{\textcircled{\raisebox{-.9pt}{1}}}}}
        \BinaryInfC{$b \models x_2$ \textsuperscript{\raisebox{.5pt}{\textcircled{\raisebox{-.9pt}{3}}}}}
      \TrinaryInfC{$a \models x_2$}
    \end{prooftree}
    \caption{}
    \label{fi:running-tableau-red}
  \end{subfigure}
  \caption{$\mu$-calculus tableaux vs strategies in the fixpoint game}
  \label{fi:running-tableaux}
\end{figure}

Additionally, it can be seen that plays of the fixpoint game
correspond to paths in the reduced tree. For example, the first play
discussed in Example~\ref{ex:running-a} corresponds to the leftmost path
in the tree. In fact, while successors of sequents define the strategy
for player $\exists$, the moves of player $\forall$ determine the path
to follow.

For general, possibly non-successful tableaux, if we consider the
reduced tree, then for each subtree the sequents at the leaves can be
read as a set of assumptions that player $\exists$ has taken to show
that the root sequent holds.
Player $\forall$ chooses among such assumptions which one player
$\exists$ should develop next.
If there is no winning strategy for player $\exists$, the winning
strategy for player $\forall$ is such that he always chooses a path in
the tableau that cannot be successfully concluded at a leaf.

\leavevmode
\subsection{Fixpoint Games in Data-Flow Analysis}
\label{ss:data-flow-game}

We get back to constant propagation example in
\S~\ref{ss:data-flow}. Recall that the system of fixpoint equations
expressing the analysis in Fig.~\ref{fig:constant-analysis} had
solution $\rho_1 = \bot$,
$\rho_2 = \bot[\prg{y}\mapsto 6]$,
$\rho_3 = \rho_4 = \bot[\prg{x}\mapsto 7]$.
We  next
describe a game that shows that indeed
$\bot[\prg{x}\mapsto 7] \sqsubseteq \rho_4$ and hence
$\prg{x}$ has constant value~$7$ at block~$4$. The game starts as
follows: 
\[
  (\bot[\prg{x}\mapsto 7],4) \stackrel{\exists}{\leadsto}
  (\bot,\bot,\bot[\prg{x}\mapsto 7],\bot) \stackrel{\forall}{\leadsto}
  (\bot[\prg{x}\mapsto 7],3) \stackrel{\exists}{\leadsto}
  (\bot,\bot[\prg{y}\mapsto 6],\bot,\bot[\prg{x}\mapsto 7])
  \stackrel{\forall}{\leadsto}
\]
Now the universal player has two options: either choose
$(\bot[\prg{x}\mapsto 7],4)$, which brings him back to an earlier game
situation and might potentially lead to an infinite game. Since we are
considering greatest fixpoints, this means that $\exists$ wins. If he
chooses the other option, the game continues as follows, where
eventually $\forall$ has no move left and $\exists$ wins as well:
\[
  (\bot[\prg{y}\mapsto 6],2) \stackrel{\exists}{\leadsto}
  \bot \stackrel{\forall}{\not\leadsto}
\]

\section{Strategies as Progress Measures}
\label{sec:progress}

Along the lines of~\cite{j:progress-measures-parity}, influenced
by~\cite{hsc:lattice-progress-measures}, in this section we introduce
a general notion of progress measure for fixpoint games over
continuous lattices.
We will show how a complete progress measure characterises the winning positions for the two players. The existence of a so-called small progress measure will allow us to express a complete progress measure as a least fixpoint, thus providing a technique for computing the progress measure and solving the corresponding system of equations.

\subsection{General Definition}

Given an ordinal $\alpha$ we denote by
$\lift{\alpha}{m} = \{ \beta \mid \beta \leq \alpha\}^m \cup \{
\err\}$, the set of ordinal vectors with entries smaller or equal
than $\alpha$, with an added bound $\err$.

\begin{definition}[progress measure]
  \label{de:progress-measure}
  Let $L$ be a continuous lattice and let $E$ be a system of $m$
  equations over $L$ of the kind
  $\vec{x} =_{\vec{\eta}} \vec{f}(\vec{x})$. Given an ordinal
  $\lambda$, a \emph{$\lambda$-progress measure} for $E$ is a function
  $R\colon B_L \to \interval{m} \to \lift{\lambda}{m}$ such that for
  all $b \in B_L$, $i \in \interval{m}$, either $R(b)(i) = \err$ or
  there exists $\vec{l} \in \Emoves{b,i}$ such that for all
  $(b',j) \in \Amoves{\vec{l}}$ it holds
  
  \begin{itemize}
  \item if $\eta_i = \mu$ then $R(b)(i) \succ_i R(b')(j)$;
  \item if $\eta_i = \nu$ then 
      $R(b)(i) \succeq_i R(b')(j)$
  \end{itemize}
\end{definition}

A progress measure maps any basis element of the lattice and index
$i \in \interval{m}$ to an $m$-tuple of ordinals, with one component
for each equation. Components relative to $\mu$-equations roughly
measure how many unfolding steps for the equation would be needed to
reach an under-approximation $l_i$ above $b$, and thus, for $\exists$,
to win the game. Components relative to $\nu$-equations, as in the
original work of~\cite{j:progress-measures-parity}, are less relevant,
as we will see.

Intuitively, whenever $R(b)(i) \neq \err$, the progress measure $R$
provides an evidence of the existence of a winning strategy for
$\exists$ in a play starting from $(b,i)$. The tuple $\vec{l}$, whose
existence is required by the definition, is a move of player $\exists$
such that for any possible answer of $\forall$, the progress measure
will not increase with respect to $\preceq_i$, and it will strictly
decrease in the case of $\mu$-equations. Since $\prec_i$ is
well-founded, this ensures that we cannot cycle on a
$\mu$-equation. Also note that whenever the current index is $i$, all
indices lower than $i$ are irrelevant (expressed by the orders
$\succeq_i$ resp.\ $\succ_i$), which is related to the fact that the
highest index which is visited infinitely often is the only relevant
index for determining the winner of the game.
This idea is formalised in the following lemma.

\begin{lemma}[progress measures are strategies]
  \label{le:progress-strategy}
  Let $L$ be a continuous lattice and let $E$ be a system of $m$ equations over $L$
  of the kind $\vec{x} =_{\vec{\eta}} \vec{f}(\vec{x})$ with solution
  $\vec{u}$.
  For any $b \in B_L$ and $i \in \interval{m}$, if there exists some ordinal $\lambda$  and a
  $\lambda$-progress measure $R$ such that
  $R(b)(i) \preceq_i (\lambda, \ldots, \lambda)$, then $b \sqsubseteq u_i$.
\end{lemma}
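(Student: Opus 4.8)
The plan is to show that the progress measure $R$ is essentially a finite description of a winning strategy for $\exists$ from the position $(b,i)$, and then to conclude $b \sqsubseteq u_i$ by the correctness direction of Theorem~\ref{th:game-corr-comp}. First I would record the (trivial) observation that the hypothesis $R(b)(i) \preceq_i (\lambda,\ldots,\lambda)$ just says $R(b)(i) \neq \err$, since $\err$ is the added top of $\lift{\lambda}{m}$ and $(\lambda,\ldots,\lambda)$ is the greatest genuine ordinal vector.

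Next I would define the strategy for $\exists$: at any $\exists$-position $(b',i')$ reached in a play for which $R(b')(i') \neq \err$, let $\exists$ play the tuple $\vec{l} \in \Emoves{b',i'}$ whose existence is guaranteed by Definition~\ref{de:progress-measure}. For every reply $(b'',j) \in \Amoves{\vec{l}}$ of $\forall$, the defining property of $R$ gives $R(b')(i') \succeq_{i'} R(b'')(j)$ (strict if $\eta_{i'} = \mu$); since $\err$ is strictly above all ordinal vectors and $R(b')(i') \neq \err$, this forces $R(b'')(j) \neq \err$. Hence, starting from $(b,i)$ with $R(b)(i) \neq \err$, the strategy is always defined along the play.

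Then I would verify that this strategy is winning. A finite play ends in a $\forall$-position with no move, so $\exists$ wins by definition. For an infinite play, let $h$ be the highest index occurring infinitely often; I claim $\eta_h = \nu$. Assume for contradiction $\eta_h = \mu$ and pass to the suffix of the play after which only indices $\le h$ occur. Writing the successive $\exists$-positions as $(b_k,i_k)$, the strategy guarantees $R(b_k)(i_k) \succeq_{i_k} R(b_{k+1})(i_{k+1})$, with strict inequality whenever $i_k = h$ (because $\eta_h = \mu$). Using the monotonicity fact already exploited in the proof of Lemma~\ref{le:completeness} (for $j \le j'$, $\vec{\alpha} \succeq_j \vec{\alpha}'$ implies $\vec{\alpha} \succeq_{j'} \vec{\alpha}'$), each step gives $R(b_k)(i_k) \succeq_h R(b_{k+1})(i_{k+1})$, and $\succ_h$ when $i_k = h$. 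As $i_k = h$ for infinitely many $k$, this produces an infinite strictly $\prec_h$-descending chain of ordinal vectors, contradicting well-foundedness of $\prec_h$. So $\eta_h = \nu$ and $\exists$ wins the infinite play as well. Having a winning strategy for $\exists$ from $(b,i)$, Theorem~\ref{th:game-corr-comp} yields $b \sqsubseteq u_i$.

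The only delicate step is the infinite-play argument: one must correctly juggle the truncated orders $\preceq_{i_k}$ as the index changes along the play and make sure no $R$-value ever reaches $\err$. This is a routine adaptation of the reasoning in Lemma~\ref{le:completeness}, so I do not anticipate any real difficulty; alternatively one could avoid invoking Theorem~\ref{th:game-corr-comp} and instead mimic the approximant construction of Lemma~\ref{le:completeness} directly, but that is strictly more work.
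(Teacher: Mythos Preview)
Your proposal is correct and follows essentially the same approach as the paper's own proof: define $\exists$'s strategy by the tuple $\vec{l}$ witnessed by the progress-measure condition, observe that $\exists$ can always move (since the $R$-value stays below $\err$), and for infinite plays derive a contradiction from well-foundedness of $\prec_h$ when the top recurring index $h$ is a $\mu$-index. The only cosmetic difference is that you make explicit both the ``$R(b)(i)\neq\err$ is preserved'' step and the final appeal to Theorem~\ref{th:game-corr-comp}, which the paper leaves implicit.
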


\begin{proof}  
  We show that  $\exists$ has a
  winning strategy from $(b,i)$. The strategy consists in choosing a move
  $\vec{l} \in \Emoves{b,i}$ such that  for all
  $(b',j) \in \Amoves{\vec{l}}$, it holds

  \begin{itemize}    
  \item $R(b)(i) \succ_i R(b')(j)$, if $\eta_i = \mu$
  \item $R(b)(i) \succeq_i R(b')(j)$, if $\eta_i = \nu$
  \end{itemize}
  which  exists by definition of progress measure.

  Now, observe that player $\exists$ can always make its
  turn. Therefore either the play stops because $\forall$ runs out of
  moves, hence $\exists$ win. Otherwise, the play is infinite, and, if
  we denote by $h$ the largest index occurring infinitely often, then
  $\eta_h = \nu$, hence $\exists$ wins. In fact, assume by
  contradiction that $\eta_h = \mu$. Consider the sequence of turns of
  the play starting from the point where all indexes repeat infinitely
  often and take the $m$-tuples of ordinals $R(b')(h)$ corresponding
  to the positions $(b', i)$ where $\exists$ plays. For any two
  successive elements, say $(b',i)$ and $(b'',j)$, by construction
  \begin{center}
    $R(b')(i) \succeq_i R(b'')(j)$
  \end{center}
  Observing that for $i \leq j$ it holds
  $\vec{\alpha} \succeq_i \vec{\alpha'}$ implies
  $\vec{\alpha} \succeq_j \vec{\alpha'}$, we deduce that
  \begin{center}
    $R(b')(i) \succeq_h R(b'')(j)$
  \end{center}
  i.e., the sequence is decreasing. Moreover, since $\eta_h = \mu$,
  whenever $i=h$, $R(b')(i) \succ_h R(b'')(j)$, i.e., the sequence
  strictly decreases. This contradicts well-foundedness of $\prec_h$.
\end{proof}

The above lemma, in a sense, says that progress measures provide sound
characterisations of the solution. However, in general, they are not
complete, since whenever $R(b)(i) = \err$ we cannot derive any
information on $(b,i)$, i.e., if $\vec{u}$ is the solution of the
system, we cannot conclude that $b \not\sqsubseteq u_i$. This motivates
the following definition.

\begin{definition}[complete progress measures]
  \label{de:complete-progress-measure}
  Let $L$ be a continuous lattice and let $E$ be a system of equations over $L$
  of the kind $\vec{x} =_{\vec{\eta}} \vec{f}(\vec{x})$ with solution
  $\vec{u}$.
  A $\lambda$-progress measure
  $R\colon B_L \to \interval{m} \to \lift{\lambda}{m}$ is called
  \emph{complete} if for all $b \in B_L$ and $i \in \interval{m}$, if
  $b \sqsubseteq u_i$ then
  $R(b)(i) \preceq_i (\lambda, \ldots, \lambda)$.
\end{definition}

Observe that in search of a complete progress measure, in principle,
we would have to try all ordinals as a bound.
We next show that we can take as bound the height $\asc{L}$ of the
lattice $L$.
This provides a generalisation of the \emph{small progress measure}
in~\cite{j:progress-measures-parity}.

\begin{definition}[small progress measure]
  \label{de:small}
  Let $L$ be a continuous lattice and let $E$ be a system of $m$ equations
  over $L$ of the kind $\vec{x} =_{\vec{\eta}}
  \vec{f}(\vec{x})$. Given an $m$-tuple of ordinals $\vec{\alpha}$,
  let us denote by $z_E(\vec{\alpha})$ the $m$-tuple of ordinals where
  $\nu$-components are set to $0$, i.e.,
  $z_E(\vec{\alpha})= \vec{\beta}$ with $\beta_i = \alpha_i$ if
  $\eta_i=\mu$ and $\beta_i = 0$ otherwise.
  We define the \emph{small progress measure}
  $R_E\colon B_L \to \interval{m} \to \lift{\asc{L}}{m}$ 
  \begin{center}
    $
    R_E(b)(i) = \min\nolimits_{\preceq_i} \{ z_E(\ord{\vec{l}}) \mid \vec{l}\
    \text{ is a }\mu\text{-approximant}\ \land\ \vec{l} \in \Emoves{b,i} \}
    $
  \end{center}
  where $\min\nolimits_{\preceq_i}$ is the minimum on $\preceq_i$ as
  given in Definition~\ref{de:trunc-ord}, with the convention that
  $\min\nolimits_{\preceq_i} \emptyset = \err$.
\end{definition}

Observe that $R_E$ is well-defined, i.e., it actually takes values in
$\lift{\asc{L}}{m}$. In fact, the components of $z_E(\ord{\vec{l}})$
corresponding to $\mu$-indices are ordinals expressing the number of
Kleene iterations needed to reach under-approximations of the least
fixpoint. These are clearly bounded by $\asc{L}$, since for a monotone
function $f\colon L \to L$, the sequence $f^\alpha(\bot)$ is strictly
increasing until it reaches the least fixpoint of $f$. For
$\nu$-indices, $z_E(\ord{\vec{l}})$ is always $0$.

Observe that while formally $R_E(b)(i)$ takes values in
$\lift{\asc{L}}{m}$, whenever $j < i$ or $\eta_j = \nu$, due
to the effect of the $\min\nolimits_{\preceq_i}$ and of the $z_E$ operations,
the only possible value for the $j$-th component is $0$. Despite such
components are then clearly irrelevant, we keep them for notational
convenience.

The fact that $R_E$ is indeed a progress measure follows from
Lemma~\ref{le:descend-mu}. 
Moreover, we can easily show that it is complete.

\begin{lemma}[small progress measure]
  \label{le:small-progress}
  Let $L$ be a continuous lattice and let $E$ be a system of $m$
  equations over $L$ of the kind
  $\vec{x} =_{\vec{\eta}} \vec{f}(\vec{x})$.
  Then $R_E \colon B_L \to \interval{m} \to \lift{\asc{L}}{m}$ is a
  progress measure and it is complete.
\end{lemma}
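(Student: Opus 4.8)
The plan is to verify the two assertions separately: that $R_E$ is a $\lambda$-progress measure for $\lambda = \asc{L}$, and that it is complete. For the first claim, fix $b \in B_L$ and $i \in \interval{m}$ and suppose $R_E(b)(i) \neq \err$. By definition of $R_E$ and the convention that $\min_{\preceq_i} \emptyset = \err$, the set $S_{b,i} = \{ \vec{l} \mid \vec{l}\ \text{a } \mu\text{-approximant},\ \vec{l} \in \Emoves{b,i} \}$ is non-empty, and the minimum defining $R_E(b)(i)$ is attained: there is a $\mu$-approximant $\vec{l} \in \Emoves{b,i}$ with $R_E(b)(i) = z_E(\ord{\vec{l}})$ — here I would invoke that $\preceq$ is a well-order on ordinal vectors (stated after Definition~\ref{de:trunc-ord}), so the minimum over the non-empty set of $(\subvec{\alpha}{i}{m})$-values exists and is realized. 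This $\vec{l}$ will be the witness move required by Definition~\ref{de:progress-measure}. Now take any $(b',j) \in \Amoves{\vec{l}}$; I must bound $R_E(b')(j)$ against $R_E(b)(i)$. By Lemma~\ref{le:descend-mu} applied to the $\mu$-approximant $\vec{l}$ and the $\forall$-move $(b',j)$, there is a $\mu$-approximant $\vec{l}' \in \Emoves{b',j}$ with $\ord{\vec{l}} \succeq_j \ord{\vec{l}'}$, and the inequality is strict when $\eta_j = \mu$. In particular $\vec{l}' \in S_{b',j}$, so by minimality $R_E(b')(j) = \min_{\preceq_j} \{ z_E(\ord{\vec{l}''}) \mid \dots \} \preceq_j z_E(\ord{\vec{l}'})$.

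The remaining step for the progress-measure claim is to push the comparison $\ord{\vec{l}} \succeq_j \ord{\vec{l}'}$ through the operator $z_E$ and the index shift needed by Definition~\ref{de:progress-measure} (which compares at $\preceq_i$, not $\preceq_j$). Since $(b',j) \in \Amoves{\vec{l}}$ forces $b' \ll l_j$, and if it were $j > i$ then... — actually the relevant monotonicity is: $\vec{\alpha} \succeq_j \vec{\alpha}'$ implies $z_E(\vec{\alpha}) \succeq_j z_E(\vec{\alpha}')$, because zeroing the $\nu$-components component-wise respects the lexicographic order on any tail (this is a routine check: the last coordinate where they differ is a $\mu$-coordinate or both become $0$). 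Combining, $R_E(b)(i) = z_E(\ord{\vec{l}}) \succeq_j z_E(\ord{\vec{l}'}) \succeq_j R_E(b')(j)$, strictly when $\eta_j = \mu$. Finally I convert $\succeq_j$ to $\succeq_i$: here one uses that in the definition of $R_E(b)(i)$ the $z_E \circ \min_{\preceq_i}$ combination forces all components with index $< i$ (and all $\nu$-components) of $R_E(b)(i)$ to be $0$, and similarly for $R_E(b')(j)$; together with the fact (used already in Lemma~\ref{le:completeness}) that $\vec{\alpha} \succeq_j \vec{\alpha}'$ implies $\vec{\alpha} \succeq_i \vec{\alpha}'$ when $i \le j$, and that $j \le i$ forces $R_E(b)(i)$ to have zero in positions $\ge$ the relevant spot — I would argue the case $j \le i$ separately, where the comparison reduces to looking only at coordinates that are already $0$ on the $R_E(b)(i)$ side, or more cleanly, observe that by construction $R_E(b)(i) =_i R_E(b)(i)$ trivially and the required inequality at $\preceq_i$ follows because only coordinates $\ge i$ matter and there $z_E(\ord{\vec{l}}) \succeq_i z_E(\ord{\vec{l}'})$ follows from $\succeq_j$ with $j$ possibly less than $i$ by simply ignoring the extra low coordinates. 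This index bookkeeping is the fiddly part; the cleanest route is to mirror exactly the $\preceq_j \Rightarrow \preceq_i$ argument already carried out inside the proof of Lemma~\ref{le:completeness}.

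For completeness, suppose $b \sqsubseteq u_i$ where $\vec{u} = \sol{E}$. By Lemma~\ref{le:solution-is-approximant}, $\vec{u}$ is a $\mu$-approximant, and since $u_i = f_i(\vec{u})$ (Lemma~\ref{le:solution-is-pre}) we have $b \sqsubseteq u_i = f_i(\vec{u})$, i.e.\ $\vec{u} \in \Emoves{b,i}$. Hence $\vec{u} \in S_{b,i}$, so $S_{b,i} \neq \emptyset$ and therefore $R_E(b)(i) = \min_{\preceq_i}\{ z_E(\ord{\vec{l}}) \mid \dots \} \neq \err$. It remains to see $R_E(b)(i) \preceq_i (\lambda,\dots,\lambda)$ with $\lambda = \asc{L}$, which is exactly the well-definedness observation already made in the text: every $\mu$-component of $z_E(\ord{\vec{l}})$ is the number of Kleene iterations of a monotone map on $L$ needed to reach (an under-approximation of) its least fixpoint, hence is $\le \asc{L}$ since the iteration chain is strictly increasing until it stabilizes; the $\nu$-components are $0$. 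So $R_E(b)(i) \in \lift{\asc{L}}{m}$ with all entries $\le \asc{L}$, giving $R_E(b)(i) \preceq_i (\asc{L},\dots,\asc{L})$. I expect the main obstacle to be purely notational: cleanly handling the interaction of $z_E$, $\min_{\preceq_i}$, and the switch between $\preceq_i$ and $\preceq_j$; the mathematical content is entirely supplied by Lemma~\ref{le:descend-mu} and the well-order property of $\preceq$.
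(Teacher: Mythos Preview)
Your plan mirrors the paper's: take a $\mu$-approximant $\vec{l}\in\Emoves{b,i}$ realising the minimum in $R_E(b)(i)$ as the witness required by Definition~\ref{de:progress-measure}, invoke Lemma~\ref{le:descend-mu} for each $(b',j)\in\Amoves{\vec{l}}$, and for completeness use the solution $\vec{u}$ itself (via Lemmas~\ref{le:solution-is-approximant} and~\ref{le:solution-is-pre}). The completeness half is correct and essentially identical to the paper's.

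For the progress-measure half you have in fact put your finger on a genuine difficulty that the paper's proof glosses over. Lemma~\ref{le:descend-mu} applied at $(b',j)\in\Amoves{\vec{l}}$ yields $\ord{\vec{l}}\succeq_j\ord{\vec{l}'}$, strict when $\eta_j=\mu$; but Definition~\ref{de:progress-measure} asks for the comparison at $\preceq_i$ with strictness when $\eta_i=\mu$. The paper simply writes ``$\ord{\vec{l}}\succeq_i\ord{\vec{l}'}$ and, if $\eta_i=\mu$, the inequality is strict'' as though that were the output of Lemma~\ref{le:descend-mu}; it is not. You correctly obtain $\succeq_j$ and then attempt a conversion, but your sketch only treats $j\le i$ and does not secure strictness in the case $\eta_i=\mu$, $j>i$, $\eta_j=\nu$. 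That case cannot be closed with the minimising $\vec{l}$. Concretely: take $L=\{\bot,\top\}$, $m=2$, $\eta_1=\mu$, $\eta_2=\nu$, $f_1(x_1,x_2)=x_2$, $f_2(x_1,x_2)=\top$. The $\mu$-approximant $(\bot,\top)$ lies in $\Emoves{\top,1}$ with $z_E(\ord{(\bot,\top)})=(0,0)$, so $R_E(\top)(1)=(0,0)$; but every $\vec{l}\in\Emoves{\top,1}$ has $l_2=\top$, hence $(\top,2)\in\Amoves{\vec{l}}$, and $R_E(\top)(2)=(0,0)$ as well. No witness can produce the required strict inequality $(0,0)\succ_1(0,0)$. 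So neither your index-bookkeeping sketch nor the paper's short-circuit can be completed here; the obstruction appears to be with the definition of $R_E$ (taking the minimum over $\vec{l}\in\Emoves{b,i}$ rather than, say, over $\mu$-approximants with $b\sqsubseteq l_i$) rather than with the proof strategy.
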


\begin{proof}
  For the first part, let $R_E(b)(i) = \vec{\alpha} \neq \err$. Hence
  $R_E(b)(i) =_i z_E(\ord{\vec{l}})$ for some $\mu$-approximant $\vec{l}$
  such that $\vec{l} \in \Emoves{b,i}$.
  By Lemma~\ref{le:descend-mu}, for all
  $(b',j) \in \Amoves{\vec{l}}$ there exists
  $\vec{l}' \in \Emoves{b',j}$ such that
  $\ord{\vec{l}} \succeq_i \ord{\vec{l}'}$ and, if $\eta_i = \mu$, the
  inequality is strict since the $i$-th component strictly decreases.
  Clearly, this implies
  $z_E(\ord{\vec{l}}) \succeq_i z_E(\ord{\vec{l}'})$. Additionally, if
  $\eta_i = \mu$, the inequality remains strict since the $i$-th
  component is left unchanged by the $z_E$ operation.

  Therefore, by definition of $R_E$, for all
  $(b',j) \in \Amoves{\vec{l}}$ we have
  \begin{equation}
    \label{eq:prog-small}
    R_E(b')(j) \preceq_i z_E(\ord{\vec{l}'}) \preceq_i z_E(\ord{\vec{l}}) =_i
    R_E(b)(i).
  \end{equation}
  where, if $\eta_i = \mu$, the inequality is strict, as desired.

  \medskip

  Let us now show that $R_E$ is complete.
  Let $b \in B_L$ be such that $b \sqsubseteq u_i$. We know that 
  the solution $\vec{u}$ is a $\mu$-approximant. Moreover, since
  $b \sqsubseteq u_i = f_i(\vec{u})$, we have that
  $\vec{u} \in \Emoves{b,i}$. Hence
  $R_E(b)(i) \preceq_i z_E(\ord{\vec{u}})$ and thus
  $R_E(b)(i) \neq \err$.

\end{proof}

\subsection{Progress Measures as Fixpoints}
\label{ssec:progress-fix}
  
Here we show that a complete progress measure can be characterised as the least solution of a system of equations over tuples of ordinals, naturally induced by Definition~\ref{de:progress-measure}.

\begin{definition}[progress measure equations]
  \label{de:progress-fixpoint}
  Let $L$ be a continuous lattice and let $E$ be a system of $m$ equations over $L$
  of the kind $\vec{x} =_{\vec{\eta}} \vec{f}(\vec{x})$.
  Let $\vec{\delta}_i^\eta$, with
  $i \in \interval{m}$, be, for $\eta=\nu$, the null vector and, for
  $\eta = \mu$, the vector such that $\delta_j = 0$ if $j\neq i$ and
  $\delta_i = 1$.
  The \emph{progress measure equations} for $E$ over the lattice
  $\lift{\asc{L}}{m}$, are defined, for $b \in B_L$,
  $i \in \interval{m}$, as:

  \begin{center}
    $
    R(b)(i) = \min\nolimits_{\preceq_i}
    \{
    \sup \{ R(b')(j) + \vec{\delta}^{\eta_i}_i 
            \mid (b',j) \in \Amoves{\vec{l}} \}
            \mid \vec{l} \in \Emoves{b,i}
    \}
    $
  \end{center}

  We will denote by $\Phi_E$ the corresponding endofunction on
  $L \to \interval{m} \to \lift{\asc{L}}{m}$ which is defined, for
  $R\colon B_L \to \interval{m} \to \lift{\asc{L}}{m}$, by
  \begin{center}
    $
    \Phi_E(R)(b)(i)  = \min\nolimits_{\preceq_i}
    \{
    \sup \{ R(b')(j) + \vec{\delta}^{\eta_i}_i
    \mid (b',j) \in \Amoves{\vec{l}} \}
    \mid \vec{l}  \in \Emoves{b,i}
    \}
  $
  \end{center}
\end{definition}

Observe that, since $\lift{\asc{L}}{m}$ is a lattice, also the
corresponding set of progress measures, endowed with pointwise
$\preceq$-order, is a lattice. It is immediate to see that $\Phi_E$ is
monotone with respect to such order, i.e., if $R \preceq R'$
pointwise then $\Phi_E(R) \preceq \Phi_E(R')$ pointwise.
This allows us to obtain a complete progress measure as a (least)
fixpoint of $\Phi_E$.

\begin{lemma}[complete progress measure as a fixpoint]
  \label{le:progress-fixpoint}
  Let $L$ be a continuous lattice and let $E$ be a system of $m$ equations over $L$
  of the kind $\vec{x} =_{\vec{\eta}} \vec{f}(\vec{x})$.  
  Then the least solution $R_M$ of the progress measure equations
  (least fixpoint of $\Phi_E$ with respect to $\preceq$) is the least
  $\asc{L}$-progress measure, hence it is smaller than $R_E$ and it is
  complete.
\end{lemma}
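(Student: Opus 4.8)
The plan is to obtain $R_m$ as the least fixpoint of $\Phi_E$ via Knaster--Tarski, and then to identify it with the least $\asc{L}$-progress measure by proving two inclusions: every fixpoint of $\Phi_E$ is an $\asc{L}$-progress measure, and every $\asc{L}$-progress measure is a pre-fixpoint of $\Phi_E$. For existence I would note that $\lift{\asc{L}}{m}$ is a complete lattice (with $\err$ as top), hence so is the set of candidate progress measures $B_L \to \interval{m} \to \lift{\asc{L}}{m}$ under the pointwise $\preceq$-order; since $\Phi_E$ is monotone (as already observed), Knaster--Tarski yields a least fixpoint $R_m$. Once the two inclusions hold, the least pre-fixpoint of $\Phi_E$ --- namely $R_m$ --- sits below every $\asc{L}$-progress measure, and being itself one it is the least; the remaining claims then follow from Lemma~\ref{le:small-progress}.

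For the first inclusion, I would take a fixpoint $R$, fix $b,i$ with $R(b)(i) \neq \err$ (otherwise nothing is required), and read off from $R(b)(i) = \Phi_E(R)(b)(i) \neq \err$ that $\Emoves{b,i} \neq \emptyset$; since the lexicographic order on ordinal vectors is a well-order, the $\min\nolimits_{\preceq_i}$ in the definition of $\Phi_E$ is attained (up to $=_i$) by some $\vec l \in \Emoves{b,i}$, and for that $\vec l$ and every $(b',j) \in \Amoves{\vec l}$ one gets $R(b')(j) + \vec{\delta}_i^{\eta_i} \preceq_i R(b)(i)$. This yields $R(b)(i) \succeq_i R(b')(j)$ when $\eta_i = \nu$ (as $\vec{\delta}_i^{\nu}$ is null) and $R(b)(i) \succ_i R(b')(j)$ when $\eta_i = \mu$ (since $R(b)(i) \neq \err$, no summand of the relevant $\sup$ is $\err$, so $+\vec{\delta}_i^{\mu}$ genuinely increments the $i$-th component) --- precisely the condition of Definition~\ref{de:progress-measure}, so $R$, and in particular $R_m$, is an $\asc{L}$-progress measure. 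For the second inclusion, given an $\asc{L}$-progress measure $R$, I would show $\Phi_E(R) \preceq R$ pointwise: the case $R(b)(i) = \err$ is trivial, and otherwise the move $\vec l$ witnessing the progress-measure condition makes the inner $\sup$ over $\Amoves{\vec l}$ at most $R(b)(i)$ in $\preceq_i$ (in the $\mu$-case because $R(b')(j) \prec_i R(b)(i)$ forces $R(b')(j)_i + 1 \leq R(b)(i)_i$ at the most significant component where the two differ), hence $\Phi_E(R)(b)(i) \preceq_i R(b)(i)$; since $\min\nolimits_{\preceq_i}$ zeroes all components below $i$, this upgrades to $\Phi_E(R)(b)(i) \preceq R(b)(i)$ in the full order.

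Finally I would derive the stated consequences: $R_E$ of Definition~\ref{de:small} is an $\asc{L}$-progress measure by Lemma~\ref{le:small-progress}, so $R_m \preceq R_E$ pointwise; and $R_E$ is complete by the same lemma, so whenever $b \sqsubseteq u_i$ we have $R_E(b)(i) \preceq_i (\asc{L},\dots,\asc{L})$, hence $R_E(b)(i) \neq \err$, hence $R_m(b)(i) \preceq R_E(b)(i) \neq \err$, and --- using that $\preceq$ restricted to the suffix $i,\dots,n$ coincides with $\preceq_i$ --- $R_m(b)(i) \preceq_i (\asc{L},\dots,\asc{L})$, so $R_m$ is complete. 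I expect the only real friction to be the order-theoretic bookkeeping around the truncated lexicographic preorders $\preceq_i$: that $\min\nolimits_{\preceq_i}$ over a nonempty set is always attained and sets the lower components to $\bot$, that adding $\vec{\delta}_i^{\mu}$ is a strict $\prec_i$-step exactly when the vector is not $\err$, and that $\preceq$ entails $\preceq_i$ on the relevant suffix. None of this is deep, but it is precisely what makes the fixpoint identity for $R_m$ and the pre-fixpoint inequality for a general progress measure go through cleanly.
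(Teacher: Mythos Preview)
Your proof is correct and essentially mirrors the paper's own argument: both proceed by relating progress measures to (pre-)fixpoints of $\Phi_E$, invoke Knaster--Tarski for the least fixpoint, and derive completeness from the comparison with $R_E$ via Lemma~\ref{le:small-progress}. The only minor difference is that the paper establishes the slightly stronger characterisation that $\asc{L}$-progress measures are \emph{exactly} the pre-fixpoints of $\Phi_E$ (rather than your pair of inclusions ``fixpoints $\subseteq$ progress measures'' and ``progress measures $\subseteq$ pre-fixpoints''), but your weaker version already suffices for the stated conclusion.
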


\begin{proof}
  We first observe that $\asc{L}$-progress measures $R$ are all and only
  pre-fixpoints of $\Phi_E$.  This implies that $R_M$, which is the least
  pre-fixpoint, is the least progress measure.

  In fact, if $R$ is a pre-fixpoint, i.e., for all $b \in B_L$,
  $i \in \interval{m}$, $\Phi_E(R)(b)(i) \preceq R(b)(i)$, which
  implies $\Phi_E(R)(b)(i) \preceq_i R(b)(i)$.  Then, for $b \in B_L$
  and $i \in \interval{m}$, if $R(b)(i) \neq \err$, necessarily
  $\Phi_E(R)(b)(i) \neq \err$. Hence we can take
  $\vec{l} \in \Emoves{b,i}$ that realises the minimum in the
  definition of $\Phi_E(R)(b)(i)$, namely such that
  $\Phi_E(R)(b)(i) = \sup \{ R(b')(j) + \vec{\delta}^{\eta_i}_i \mid
  (b',j) \in \Amoves{\vec{l}} \}$ and we have that for all
  $(b',j) \in \Amoves{\vec{l}}$
  \begin{center}
    $R(b)(i) \succeq_i \Phi_E(R)(b)(i) \succeq R(b')(j) +
    \vec{\delta}^{\eta_i}_i$
  \end{center}
  which amounts to the validity of the progress measure property (it
  gives strict inequality for $\eta_i = \mu$ and general inequality
  for $\eta_i = \nu$).
  
  Conversely, let $R$ be a progress measure. We have to show that for all $b \in B_L$, $i \in \interval{m}$
  \begin{equation}
    \label{eq:pr-fix}
    R(b)(i) \succeq_i \min\nolimits_{\preceq_i}
    \{
    \sup \{ R(b')(j) + \vec{\delta}^{\eta_i}_i
    \mid (b',j) \in \Amoves{\vec{l}}\}
    \mid \vec{l} \in \Emoves{b,i} \}
  \end{equation}
  Given $b \in L$, $i \in \interval{m}$, by definition of progress
  measure, there is $\vec{l} \in \Emoves{b,i}$ such that for all
  $(b',j) \in \Amoves{\vec{l}}$, it holds $R(b)(i) \succeq_i R(b')(j)$, with
  strict inequality if $\eta_i = \mu$. This can be equivalently stated
  $R(b)(i) \succeq_i R(b')(j) + \vec{\delta}^{\eta_i}$. Hence
  $R(b)(i) \succeq_i \sup \{ R(b')(j) + \vec{\delta}^{\eta_i}_i \mid (b',j)
  \in \Amoves{\vec{l}} \}$.
  Namely, $R(b)(i)$ is larger than an element of the set of which we
  take the minimum, hence (\ref{eq:pr-fix}) immediately follows. Since
  in the right-hand side all entries with an index below $i$ are $0$,
  we even have $\succeq$ (instead of $\succeq_i$ in (\ref{eq:pr-fix}),
  which implies that $R$ is a pre-fixpoint of $\Phi_E$.

  \medskip

  For completeness, recall that by Lemma~\ref{le:small-progress},
  $R_E$ is a $\asc{L}$-progress measure and it is complete. Therefore
  for all $b \in B_L$ and $i \in \interval{m}$, we have
  $R_M(b)(i) \preceq R_E(b)(i)$, from which completeness of $R_M$
  immediately follows.
\end{proof}

Observe that, since
$R_M \preceq R_E$, in particular, for all $b \in B_L$ and
$i \in \interval{m}$, if $R_M(b)(i) \neq \err$, then all components of
$R_M(b)(i)$ corresponding to $\nu$-indices are $0$.

\begin{example}
  \label{ex:running-measure}
  If we consider the system of equations of Example~\ref{ex:running} we
  obtain as least fixpoint the progress measure $R_M(\{a\})(1) = (1,0)$
  while $R_M(\{a\})(2) = R_M(\{b\})(1) = R_M(\{b\})(2) = (0,0)$. Note that
  $R_M$ never assumes the top value $\err$, consistently with the fact
  that the solution is $(u_1, u_2) = (\mathbb{S},\mathbb{S})$.
  We will discuss how $R_M$ is obtained later when providing a more
  ``efficient'' way for computing it.
\end{example}

We next observe that the operator $\Phi_E$ creates monotone
  functions and, applied to functions that respect joins, it
  produces functions enjoying the same property. We first introduce
  the formal definitions.

\begin{definition}[monotonicity and sup-respecting]
  Let $L$ be a lattice. A function
  $R\colon B_L \to \interval{m} \to \lift{\asc{L}}{m}$ is \emph{monotone}
  if for all $b, b' \in L$, $i \in \interval{m}$, if $b \sqsubseteq b'$
  then $R(b)(i) \preceq R(b')(i)$.
  It is \emph{sup-respecting} if for
  all $b \in B_L$ and $X \subseteq B_L$, if
  $b \sqsubseteq \bigsqcup X$ then
  $R(b)(i) \preceq \sup \{ R(b')(i) \mid b' \in X \}$.
\end{definition}

Note that the notion of monotonicity for $R \colon B_L \to \interval{m} \to \lift{\asc{L}}{m}$ is the standard one,  with respect to the pointwise order on $\interval{m} \to \lift{\asc{L}}{m}$.

Observe that $R$ is defined only on the basis elements, which are
possibly (and typically) not closed under joins. The requirement of
being sup-respecting ensures that $R$ extends to a function on $L$
which preserves joins.
  Also note that a sup-respecting function $R$ is always monotone.

\begin{lemma}[$\Phi_E(R)$ is monotone]
  \label{le:phi_e-r-mon}
  Let $L$ be a lattice and let $E$ be a system of $m$ equations over $L$
  of the kind $\vec{x} =_{\vec{\eta}} \vec{f}(\vec{x})$. For every
  function and $R\colon B_L \to \interval{m} \to \lift{\asc{L}}{m}$, the
  function $\Phi_E(R)$ is monotone.
\end{lemma}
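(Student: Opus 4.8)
The plan is to exploit the fact that the only dependence of $\Phi_E(R)(b)(i)$ on $b$ is through the move set $\Emoves{b,i}$, which is antitone in $b$: enlarging $b$ can only shrink the set of admissible $\exists$-moves. First I would fix $b, b' \in B_L$ with $b \sqsubseteq b'$ and an index $i \in \interval{m}$, and recall that $\Emoves{b,i} = \{\vec{l} \in L^m \mid b \sqsubseteq f_i(\vec{l})\}$. If $\vec{l} \in \Emoves{b',i}$, then $b \sqsubseteq b' \sqsubseteq f_i(\vec{l})$ by transitivity, so $\Emoves{b',i} \subseteq \Emoves{b,i}$.

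Next I would note that the quantity $\psi_i(\vec{l}) := \sup\{ R(b'')(j) + \vec{\delta}^{\eta_i}_i \mid (b'',j) \in \Amoves{\vec{l}}\}$ that is being minimised over $\vec{l}$ does not depend on $b$ at all; hence $\Phi_E(R)(b)(i) = \min\nolimits_{\preceq_i}\{\psi_i(\vec{l}) \mid \vec{l} \in \Emoves{b,i}\}$ is a $\min\nolimits_{\preceq_i}$ taken over a superset of the set used for $b'$. I would then distinguish two cases. If $\Emoves{b',i} = \emptyset$, then $\Phi_E(R)(b')(i) = \err$, the top element of $\lift{\asc{L}}{m}$, so $\Phi_E(R)(b)(i) \preceq \err = \Phi_E(R)(b')(i)$ trivially. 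Otherwise both sets are nonempty, and since $\preceq$ restricted to ordinal vectors is a well-order, the minimum over the larger set $\{\psi_i(\vec{l}) \mid \vec{l} \in \Emoves{b,i}\}$ is $\preceq$-below the minimum over the smaller set on the suffix of components $i, \dots, m$; because $\min\nolimits_{\preceq_i}$ by definition forces the first $i-1$ components of both outputs to $0$, this suffix comparison promotes to a genuine comparison in the full order $\preceq$, giving $\Phi_E(R)(b)(i) \preceq \Phi_E(R)(b')(i)$.

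Since $b$, $b'$ and $i$ were arbitrary, this shows $\Phi_E(R)$ is monotone in the sense of the preceding definition, for \emph{every} $R$ (no monotonicity hypothesis on $R$ is used). I do not expect a genuine obstacle here: the core is the one-line observation that $\Emoves{\cdot,i}$ is antitone, after which everything is bookkeeping. The only points requiring a little care are keeping track of the leading zeros introduced by $\min\nolimits_{\preceq_i}$ (so that the truncated comparison $\preceq_i$ can be upgraded to $\preceq$) and handling the degenerate case in which $\exists$ has no move, where the value is $\err$.
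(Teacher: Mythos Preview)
Your proposal is correct and follows essentially the same approach as the paper: the paper's proof is the single observation that $\Emoves{b',i}\subseteq\Emoves{b,i}$ when $b\sqsubseteq b'$, so the $\min_{\preceq_i}$ is taken over a larger set and hence yields a smaller vector. Your version adds careful bookkeeping the paper leaves implicit (the empty-move case giving $\err$, and the leading zeros produced by $\min_{\preceq_i}$ promoting $\preceq_i$ to $\preceq$), but the core argument is identical.
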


\begin{proof}
  Given $b\sqsubseteq b'$ we have to show that
  $\Phi_E(R)(b)(i) \preceq \Phi_E(R)(b')(i)$. Note that
  $\Emoves{b,i} = \{\vec{l}\mid b\sqsubseteq f_i(\vec{l})\} \supseteq
  \{\vec{l}\mid b'\sqsubseteq f_i(\vec{l})\} = \Emoves{b',i}$, hence
  in order to determine $R(b)(i)$ we take the $\min\nolimits_{\preceq_i}$ over
  a larger set, resulting in a smaller vector of ordinals than for
  $R(b')(i)$.
\end{proof}

The fact that $\Phi_E$ preserves sup-respecting functions is proved
  in Lemma~\ref{le:sup-respecting-preserve} in
  Appendix~\ref{ssec:sup-respecting}.

\subsection{Computing Progress Measures}

\subsubsection{Selections}
\label{ssec:selections}
  
In principle, at least on finite lattices, the previous results allow
one to compute the progress measure and thus to prove properties of the
solutions of the system of equations.
However, the computation can be quite inefficient due to the fact that
the existential player has a (uselessly) large number of possible
moves. In fact, given a system
$\vec{x} =_{\vec{\eta}} \vec{f}(\vec{x})$ on a lattice $L$, from a
position $(b,i)$, given any move $\vec{l} \in \Emoves{b,i}$ for player
$\exists$, i.e., any tuple such that $b \sqsubseteq f_i(\vec{l})$, it
is immediate to see that all $\vec{l}'$ such that
$\vec{l} \sqsubseteq \vec{l}'$ are valid moves for $\exists$, since by
monotonicity of $f_i$ we have
$b \sqsubseteq f_i(\vec{l}) \sqsubseteq f_i(\vec{l}')$. In other
words, $\Emoves{b,i}$ is upward-closed. However, player $\exists$, in
order to win, has to try to descend as much as possible, hence playing
large elements is inconvenient.

We next introduce some machinery that formalises the above intuition
and allows us to make the calculation more efficient. The idea is
discussed for a single function first, and then for a system of
equations. For this we need some additional notation. Given a monotone
function $f \colon  L^m \to L$ and $b \in B_L$, we write
$\Emoves{b,f} = \{ \vec{l} \mid \vec{l} \in L^m\ \land\ b \sqsubseteq
f(\vec{l})\}$.

\begin{definition}[selection]
  \label{de:selection}
  Let $L$ be a lattice. Given a monotone function $f\colon L^m \to L$, a
  \emph{selection for $f$} is a function
  $\sigma\colon B_L \to \Pow{L^m}$ such that for all $b \in B_L$ it
  holds $\Emoves{b,f} = \filter{\sigma(b)}$.
  Given a system $E$ of $m$ equations on $L$ of the kind
  $\vec{x} =_{\vec{\eta}} \vec{f}(\vec{x})$, a \emph{selection for
    $E$} is an $m$-tuple of functions
  $\vec{\sigma}$ such that, for each $i \in \interval{m}$, the function $\sigma_i$ is
  a selection for $f_i$.
\end{definition}

Intuitively, a selection provides for each element of the
basis and function $f_i$, a subset of the moves $\Emoves{b,i}$ that are sufficient to ``cover'' $b$ in all possible ways.
Indeed, we can show that when computing the complete progress measure
$R_M$ according to the equations in Lemma~\ref{le:progress-fixpoint},
we can restrict the moves of the existential player to a selection.
Dually, since the moves of the universal player $\Amoves{\vec{l}}$ are
downward-closed and the progress measures of interest are monotone
(see Lemma~\ref{le:phi_e-r-mon}), we can restrict also such
moves to a subset whose downward-closure is $\Amoves{\vec{l}}$.

\begin{lemma}[progress measure on a selection]
  \label{le:progress-fixpoint-selection}
  Let $L$ be a continuous lattice, let $E$ be a system of equations
  over $L$ of the kind $\vec{x} =_{\vec{\eta}} \vec{f}(\vec{x})$ and
  let $\vec{\sigma}$ be a selection for $E$. Moreover, for all
  $\vec{l} \in L^m$ let
  $\Amovesred{\vec{l}} \subseteq B_L \times \interval{m}$ be such that
  $\Amoves{\vec{l}} = \{ (b', i) \mid (b, i) \in \Amovesred{\vec{l}}\
  \land\ b' \sqsubseteq b\}$.
  The system of equations over
  the lattice $\lift{\asc{L}}{m}$, defined, for $b \in L$, $i \in \interval{m}$,
  as:

  \begin{center}
    $
    R(b)(i) = \min\nolimits_{\preceq_i}
    \{
    \sup \{ R(b')(j) + \vec{\delta}^{\eta_i}_i 
            \mid (b',j) \in \Amovesred{\vec{l}} \}
    \mid \vec{l} \in \sigma_i(b) \}
    $
  \end{center}
  has the same least solution as
  that in Lemma~\ref{le:progress-fixpoint}.
\end{lemma}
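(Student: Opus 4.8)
The plan is to show that the selection-based equations of the statement and the original progress-measure equations of Definition~\ref{de:progress-fixpoint} induce the \emph{same} operator once restricted to the monotone functions $B_L \to \interval{m} \to \lift{\asc{L}}{m}$, so that their Kleene iterations from the bottom element agree stage by stage and therefore have the same least fixpoint. Write $\Phi_E$ for the operator of Definition~\ref{de:progress-fixpoint} and $\Phi_E'$ for the operator $R \mapsto \lambda b.\, \lambda i.\, \min\nolimits_{\preceq_i} \{ \sup \{ R(b')(j) + \vec{\delta}^{\eta_i}_i \mid (b',j) \in \Amovesred{\vec{l}} \} \mid \vec{l} \in \sigma_i(b) \}$ induced by the new equations. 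As for $\Phi_E$ (cf.\ the remark after Definition~\ref{de:progress-fixpoint}), $\Phi_E'$ is monotone for the pointwise $\preceq$-order, so both have least fixpoints obtained by transfinite iteration from the constant function $\bot = \lambda b.\, \lambda i.\, (0, \dots, 0)$.

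First I would record two monotonicity facts. For fixed $b$, $i$, $R$, the map $g(\vec{l}) = \sup \{ R(b')(j) + \vec{\delta}^{\eta_i}_i \mid (b',j) \in \Amoves{\vec{l}} \}$ is $\sqsubseteq$-to-$\preceq$ monotone in $\vec{l}$: if $\vec{l} \sqsubseteq \vec{l}'$ then $\Amoves{\vec{l}} \subseteq \Amoves{\vec{l}'}$, since $b' \ll l_j$ and $l_j \sqsubseteq l_j'$ imply $b' \ll l_j'$, so $g(\vec{l})$ is a supremum over a smaller set and $g(\vec{l}) \preceq g(\vec{l}')$, hence also $g(\vec{l}) \preceq_i g(\vec{l}')$. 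Secondly, adding the fixed ordinal vector $\vec{\delta}^{\eta_i}_i$ and forming suprema both preserve $\preceq$, and $\err$ is absorbing; in particular, when $R$ is monotone, $(b',j) \in \Amoves{\vec{l}}$ and $(b,j) \in \Amovesred{\vec{l}}$ with $b' \sqsubseteq b$ give $R(b')(j) + \vec{\delta}^{\eta_i}_i \preceq R(b)(j) + \vec{\delta}^{\eta_i}_i$.

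Next I would prove $\Phi_E(R) = \Phi_E'(R)$ for every \emph{monotone} $R$, via two independent reductions. \emph{Inner supremum}: for each $\vec{l}$, $\Amovesred{\vec{l}} \subseteq \Amoves{\vec{l}}$ gives $\sup$ over $\Amovesred{\vec{l}}$ $\preceq$ $\sup$ over $\Amoves{\vec{l}}$; conversely, by the defining property of $\Amovesred{\vec{l}}$ every $(b',j) \in \Amoves{\vec{l}}$ has some $(b,j) \in \Amovesred{\vec{l}}$ with $b' \sqsubseteq b$, and by the second monotonicity fact its contribution is dominated by that of $(b,j)$, so the two suprema coincide, i.e.\ $g(\vec{l}) = \sup \{ R(b')(j) + \vec{\delta}^{\eta_i}_i \mid (b',j) \in \Amovesred{\vec{l}} \}$. \emph{Outer minimum}: $\sigma_i(b) \subseteq \Emoves{b,i} = \filter{\sigma_i(b)}$, and $\Emoves{b,i} = \emptyset$ iff $\sigma_i(b) = \emptyset$, so the $\err$-conventions agree; and for every $\vec{l} \in \Emoves{b,i}$ there is $\vec{l}_0 \in \sigma_i(b)$ with $\vec{l}_0 \sqsubseteq \vec{l}$, whence $g(\vec{l}_0) \preceq_i g(\vec{l})$ by the first monotonicity fact, so $\{ g(\vec{l}) \mid \vec{l} \in \Emoves{b,i} \}$ and its subset $\{ g(\vec{l}) \mid \vec{l} \in \sigma_i(b) \}$ have the same $\min\nolimits_{\preceq_i}$. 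Using the inner equality for each $\vec{l} \in \sigma_i(b)$ and then the outer one, we get $\Phi_E'(R)(b)(i) = \min\nolimits_{\preceq_i} \{ g(\vec{l}) \mid \vec{l} \in \sigma_i(b) \} = \min\nolimits_{\preceq_i} \{ g(\vec{l}) \mid \vec{l} \in \Emoves{b,i} \} = \Phi_E(R)(b)(i)$.

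Finally I would conclude by transfinite induction that $\Phi_E^{\alpha}(\bot) = (\Phi_E')^{\alpha}(\bot)$, and that this function is monotone, for every ordinal $\alpha$: the base case is the monotone constant $(0, \dots, 0)$; at a successor $\alpha+1$ the two values agree by the previous paragraph applied to the monotone function $\Phi_E^{\alpha}(\bot)$, and the common value $\Phi_E(\Phi_E^{\alpha}(\bot))$ is monotone by Lemma~\ref{le:phi_e-r-mon}; at a limit ordinal we join two identical chains of monotone functions, which is again monotone. Taking $\alpha$ large enough that both iterations stabilise gives $\lfp \Phi_E = \lfp \Phi_E'$, which is precisely the claim that the two systems have the same least solution. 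I expect the only delicate point to be the bookkeeping around the truncated preorder $\preceq_i$ and the $\err$-convention inside $\min\nolimits_{\preceq_i}$ in the outer reduction — checking that restricting the index set from $\Emoves{b,i}$ to $\sigma_i(b)$ leaves $\min\nolimits_{\preceq_i}$ unchanged, degenerate cases included; once the monotonicity of $g$ is established this is routine, and the rest is elementary manipulation of suprema and ordinal sums.
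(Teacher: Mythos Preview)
Your proposal is correct and follows essentially the same approach as the paper: show that $\Phi_E$ and $\Phi_E'$ coincide on monotone $R$, by separately reducing the inner supremum (via the downward-closure property of $\Amovesred{\vec{l}}$ and monotonicity of $R$) and the outer minimum (via $\filter{\sigma_i(b)} = \Emoves{b,i}$ and monotonicity of $g$). The paper's proof is terser at the final step, simply stating that coincidence on monotone $R$'s gives the same fixpoint; you make this explicit via transfinite Kleene iteration, invoking Lemma~\ref{le:phi_e-r-mon} to maintain monotonicity at successor stages and pointwise joins at limits --- this is a welcome clarification of what the paper leaves implicit.
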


\begin{proof}
  Let $\Phi_E'$ be the operator associated with the equations in the
  statement of the lemma. We prove that $\Phi_E$ and $\Phi_E'$ have the
  same fixpoint by showing that they coincide on monotone $R$'s.

  Let $b \in B_L$ and $i \in \interval{m}$.
  Let us write
  \begin{eqnarray*}
    \vec{\beta}_b = \Phi_E'(R)(b)(i) & = & \min\nolimits_{\preceq_i}\{ \sup \{
    R(b')(j) + \vec{\delta}^{\eta_i}_i \mid (b',j) \in
    \Amovesred{\vec{l}} \} \mid \vec{l} \in \sigma_i(b) \} \\
    \vec{\gamma}_b = \Phi_E(R)(b)(i) & = & \min\nolimits_{\preceq_i} \{ \sup \{
    R(b')(j) + \vec{\delta}^{\eta_i}_i \mid (b',j) \in
    \Amoves{\vec{l}}\} \mid \vec{l} \in \Emoves{b,i} \}
  \end{eqnarray*}
  and we show $\vec{\beta}_b =_i \vec{\gamma}_b$. First observe that
  $\sup \{ R(b')(j) + \vec{\delta}^{\eta_i}_i \mid (b',j) \in
  \Amovesred{\vec{l}} \} = \sup \{ R(b')(j) + \vec{\delta}^{\eta_i}_i
  \mid (b',j) \in \Amoves{\vec{l}}\}$ since $\Amoves{\vec{l}}$ is the
  downward-closure of $\Amovesred{\vec{l}}$ and $R$ is monotone. Then,
  the fact that $\vec{\gamma}_b \preceq_i \vec{\beta}_b$ follows from
  the observation that, by Definition~\ref{de:selection},
  $\sigma_i(b) \subseteq \Emoves{b,i}$, i.e., the first is a minimum
  over a smaller set. The converse inequality follows from the fact
  that, by Definition~\ref{de:selection}, for each $b \in B_L$,
  $i \in \interval{m}$ if $\vec{l} \in \Emoves{b,i}$ then there exists
  $\vec{l'} \in \sigma_i(b)$ such that $\vec{l'} \sqsubseteq \vec{l}$,
  hence $\Amoves{\vec{l}'} \subseteq \Amoves{\vec{l}}$ and thus
  $\sup \{ R(b')(j) + \vec{\delta}^{\eta_i}_i \mid (b',j) \in
  \Amoves{\vec{l}'}\} \preceq_i \sup \{ R(b')(j) +
  \vec{\delta}^{\eta_i}_i \mid (b',j) \in \Amoves{\vec{l}}\}$
\end{proof}

Since the complete progress measure $R_M$ witnesses the existence of a
winning strategy for $\exists$, the above result implies that whenever
$\exists$ has a winning strategy, it has one also in the game where
the moves of $\exists$ are restricted to be in the selection. A
similar property holds for $\forall$ and $\Amovesred{\vec{l}}$.

Clearly, for computational purposes, we are interested in having the
selections as small as possible. 
Given a monotone function
$f\colon L^m \to L$, and two selections $\sigma, \sigma'\colon B_L \to \Pow{L^m}$ for $f$, we
write $\sigma \subseteq \sigma'$ if for all $b \in B_L$ it holds
$\sigma(b) \subseteq \sigma'(b)$. We will use the same notation for the
pointwise order on selections for systems of equations.

\begin{example}[selections for $\mu$-calculus operators]
  \label{ex:selection}
  Given a transition system $(\mathbb{S}, \to)$, consider the
  powerset lattice $\Pow{\mathbb{S}}$ ordered by subset inclusion, with basis
  $B_{\Pow{\mathbb{S}}} = \{ \{s\} \mid s \in \mathbb{S} \}$.
  Then standard $\mu$-calculus operators admit a least selection, as
  detailed below.

  \begin{itemize}
    
  \item Given $f \colon (\Pow{\mathbb{S}})^2 \to \Pow{\mathbb{S}}$ defined by
    $f(X_1, X_2) = X_1 \cup X_2$, then
    $\sigma\colon B_{\Pow{\mathbb{S}}} \to \Pow{(\Pow{\mathbb{S}})^2}$ is
    $\sigma(\{s\}) = \{ (\emptyset, \{s\}), (\{s\}, \emptyset)\}$

  \item Given $f \colon (\Pow{\mathbb{S}})^2 \to \Pow{\mathbb{S}}$ defined by
    $f(X_1, X_2) = X_1 \cap X_2$, then
    $\sigma \colon B_{\Pow{\mathbb{S}}} \to \Pow{(\Pow{\mathbb{S}})^2}$ is
    $\sigma(\{s\}) = \{ (\{s\}, \{s\})\}$
  
  \item Given $f \colon \Pow{\mathbb{S}} \to \Pow{\mathbb{S}}$ defined by $f(X) = \Diamond X$,
    then $\sigma \colon B_{\Pow{\mathbb{S}}} \to \Pow{\Pow{\mathbb{S}}}$ is
    $\sigma(\{s\}) = \{ \{s'\} \mid s \to s' \}$

  \item Given $f \colon \Pow{\mathbb{S}} \to \Pow{\mathbb{S}}$ defined by $f(X) = \Box X$,
    then $\sigma \colon B_{\Pow{\mathbb{S}}} \to \Pow{\Pow{\mathbb{S}}}$ is
    $\sigma(\{s\}) = \{ \{s' \mid s \to s' \} \}$

  \end{itemize}
\end{example}

We next provide sufficient conditions for a function to admit a
least selection.

\begin{lemma}[existence of least selections]
  \label{le:minimising-selections}
  Let $L$ be a lattice with a basis $B_L$ and let $f \colon L^m \to L$ be a
  monotone function. If $f$ preserves the meet of descending
  chains, then it admits a least selection $\sigma_m$ that maps each $b \in B_L$ to the set of minimal elements of $\Emoves{b,f}$.
\end{lemma}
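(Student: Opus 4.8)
The plan is to show that the candidate map $\sigma_m(b) = \{ \vec{l} \in \Emoves{b,f} \mid \vec{l} \text{ is minimal in } \Emoves{b,f} \}$ is a well-defined selection and that it is the least one. The key point is that $\Emoves{b,f}$, being upward-closed (by monotonicity of $f$, as already observed in the text preceding Definition~\ref{de:selection}), is a selection candidate exactly when every element of it lies above some minimal element; so the whole argument reduces to showing that $\Emoves{b,f}$ has \emph{enough} minimal elements, i.e.\ for every $\vec{l} \in \Emoves{b,f}$ there is a minimal $\vec{l}_0 \in \Emoves{b,f}$ with $\vec{l}_0 \sqsubseteq \vec{l}$. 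Once this is established, $\filter{\sigma_m(b)} \subseteq \Emoves{b,f}$ holds because $\Emoves{b,f}$ is upward-closed and $\sigma_m(b) \subseteq \Emoves{b,f}$, while the reverse inclusion is precisely the ``enough minimal elements'' property; hence $\sigma_m$ is a selection. Leastness is then immediate: if $\sigma$ is any selection, then every minimal element of $\Emoves{b,f} = \filter{\sigma(b)}$ must itself belong to $\sigma(b)$ (a minimal element of an up-set generated by $\sigma(b)$ cannot sit strictly above an element of $\sigma(b)$), so $\sigma_m(b) \subseteq \sigma(b)$.

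The main work, then, is the existence of minimal elements below a given $\vec{l}$, and this is exactly where the hypothesis that $f$ preserves infima of descending chains is used. First I would reduce to the one-dimensional situation componentwise, or argue directly on $L^m$: recall from the preliminaries that $(L^m,\sqsubseteq)$ is again a lattice, and a function preserving infima of descending chains on $L$ induces the analogous behaviour on $L^m$, so working in $L^m$ costs nothing. I would then invoke a Zorn-style argument on the poset $P_{\vec{l}} = \{ \vec{l}' \in \Emoves{b,f} \mid \vec{l}' \sqsubseteq \vec{l} \}$ ordered by $\sqsupseteq$ (i.e.\ seeking maximal elements of $P_{\vec{l}}$ in the reversed order, which are minimal in $\Emoves{b,f}$). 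To apply Zorn we need every descending chain $(\vec{l}_\alpha)_\alpha$ inside $P_{\vec{l}}$ to have a lower bound in $P_{\vec{l}}$: take $\vec{l}_\infty = \glb_\alpha \vec{l}_\alpha$, which lies below $\vec{l}$; and since $f$ preserves infima of descending chains, $f(\vec{l}_\infty) = \glb_\alpha f(\vec{l}_\alpha) \sqsupseteq b$ because each $f(\vec{l}_\alpha) \sqsupseteq b$, so $\vec{l}_\infty \in \Emoves{b,f}$ and hence $\vec{l}_\infty \in P_{\vec{l}}$. Thus chains have lower bounds and Zorn's lemma yields a minimal element of $\Emoves{b,f}$ below $\vec{l}$.

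The step I expect to be the main obstacle is making the Zorn argument fully rigorous with respect to the hypothesis as stated: ``preserves infima of descending chains'' must be applied to \emph{transfinite} descending chains (arbitrary chains in a poset need not be reindexable by an ordinal, but every chain does contain a cofinal well-ordered descending subchain with the same infimum, so this is only a minor technical point), and one must be careful that the chain lives in $P_{\vec{l}}$ so that all its infima stay below $\vec{l}$ — which is automatic since $\vec{l}$ is a common upper bound. A cleaner alternative, which I would mention, avoids Zorn entirely: build a (possibly transfinite) strictly decreasing sequence $\vec{l} = \vec{l}_0 \sqsupseteq \vec{l}_1 \sqsupseteq \cdots$ inside $\Emoves{b,f}$, at successor steps replacing $\vec{l}_\alpha$ by a strictly smaller element of $\Emoves{b,f}$ if one exists (otherwise $\vec{l}_\alpha$ is already minimal and we stop), and at limit steps taking the infimum, which remains in $\Emoves{b,f}$ by the hypothesis on $f$; since $L$ has no proper class of strictly descending elements below $\vec{l}$ (the ordinals would embed into the interval $[\vec{l}_{\text{final}}, \vec{l}]$, contradicting that it is a set), the construction terminates at a minimal element. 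Either route closes the lemma.
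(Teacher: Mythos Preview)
Your proposal is correct and essentially matches the paper's proof: the paper uses exactly your ``cleaner alternative'' (the transfinite strictly-descending construction with infima at limits, terminating because strictly descending chains in $L^m$ are bounded in length), and proves leastness by the same minimality argument you sketch. Your Zorn route is a valid equivalent, and your leastness argument is in fact slightly more direct than the paper's (which takes an extra back-and-forth step through $\sigma_m$ before invoking minimality).
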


\begin{proof}
  Assume that $f$ preserves the meet of descending chains.
  First observe that given a descending chain
  $(\vec{l}_{\alpha})_\alpha$ in $\Emoves{b,f}$ we have that
  $\bigsqcap_\alpha \vec{l}_\alpha \in \Emoves{b,f}$.
  In fact, for each $\alpha$ we have
  $b \sqsubseteq f(\vec{l}_\alpha)$ and thus
  $b \sqsubseteq \bigsqcap_\alpha f(\vec{l}_\alpha) =
  f(\bigsqcap_\alpha \vec{l}_\alpha)$.

  The above implies that for each $\vec{l} \in \Emoves{b,f}$ there exists
  $\vec{l}' \in \Emoves{b,f}$, minimal, such that
  $\vec{l}' \sqsubseteq \vec{l}$. In fact, consider the (possibly
  transfinite) chain of tuples $\vec{l}_\alpha$ in $\Emoves{b,f}$ defined
  as follows. Start from $\vec{l}_0 = \vec{l}$. For any ordinal
  $\alpha$, if there is $\vec{l}' \in \Emoves{b,f}$, such that
  $\vec{l}' \neq \vec{l}_\alpha$ and
  $\vec{l}' \sqsubseteq \vec{l}_\alpha$, let
  $\vec{l}_{\alpha+1} = \vec{l}'$. If $\alpha$ is a limit ordinal 
  $\vec{l}_\alpha = \bigsqcap_{\beta<\alpha} \vec{l}_\beta$.
  
  This is a strictly descending chain, that thus necessarily stops at
  some ordinal $\lambda$ bounded by the length of the longest
  descending chain in $L$. By construction
  $\vec{l}_\lambda \sqsubseteq \vec{l}$,
  $\vec{l}_\lambda \in \Emoves{b,f}$ and it is minimal in $\Emoves{b,f}$.

  Define $\sigma_m(b)$ as the set of minimal elements of $\Emoves{b,f}$
  for each $b \in B_L$. It is immediate to see that this is a
  selection. Moreover, it is the least selection. In fact, let
  $\sigma'$ be another selection for $f$. Let
  $\vec{l} \in \sigma_m(b)$. Since $b \sqsubseteq f(\vec{l})$ and
  $\sigma'$ is a selection, there is $\vec{l}' \in \sigma'(b)$ such
  that $\vec{l}' \sqsubseteq \vec{l}$. Now,
  $b \sqsubseteq f(\vec{l}')$ and thus there must be
  $\vec{l}'' \in \sigma_m(b)$ such that
  $\vec{l}'' \sqsubseteq \vec{l}'$. Therefore by transitivity
  $\vec{l}'' \sqsubseteq \vec{l}$, but $\vec{l}$ is minimal and thus
  $\vec{l} = \vec{l}' = \vec{l}'' \in \sigma'(b)$. This shows
  $\sigma_m(b) \subseteq \sigma'(b)$ for all $b \in B_L$. Thus
  $\sigma_m \subseteq \sigma'$, as desired.
\end{proof}

\begin{example}
  \label{ex:running-selection}
  Consider our running example in Example~\ref{ex:running}. Minimal
  selections for the functions $f_1$ and $f_2$ associated with the
  first and second equation are given by
  \begin{itemize}
  \item $\sigma_1(\{a\}) = \{ (\{a\},\emptyset), (\{b\}, \emptyset)\}$
    and $\sigma_1(\{b\}) = \{ (\emptyset, \emptyset)\}$;
  \item $\sigma_2(\{a\}) = \{ (\{a\}, \{a,b\})\}$ and
    $\sigma_2(\{b\}) = \{ (\{b\}, \{b\})\}$.
  \end{itemize}
  Observe that the winning strategy for $\exists$ discussed in
  Example~\ref{ex:running-a} is a subset of the selection. We already
  noticed that this is a general fact: if a winning strategy exists,
  we can find one that is a subset of any given selection.
\end{example}

Selections can be constructed ``compositionally'', i.e., if a function
$f$ arises as the composition of some component functions then we can
derive a selection for $f$ from selections of the components.
The details are presented in Appendix \ref{ssec:compositional-selection}.

\subsubsection{A Logic for Characterising the Moves of the Existential Player}
\label{sec:logic-selection}

The set of possible moves of the existential player is an
upward-closed set in the lattice. Such sets can be conveniently
represented and manipulated in logical form (see,
e.g.,~\cite{DR:SRUCS}). Intuitively, (minimal) selections describe a
disjunctive normal form, but more compact representations can be
obtained using arbitrary nesting of conjunction and disjunction.
For instance, the minimal selection for the monotone function
$f(X_1,\dots,X_{2n}) = (X_1\cup X_2)\cap (X_3\cup X_4)\cap \dots \cap
(X_{2n-1}\cup X_{2n})$ would be of exponential size (think of the
corresponding disjunctive normal form), but we can easily give a
formula of linear size.

This motivates the introduction of a propositional logic for
expressing the set of moves of the existential player along with a
technique for deriving the fixpoint equations for computing the
progress measure, avoiding the potential exponential explosion.

\begin{definition}[logic for upward-closed sets]
  Let $L$ be a continuous lattice and let $B_L$ be a basis for $L$.
  Given $m \in \mathbb{N}$, the logic $\mathcal{L}_m(B_L)$ has
  formulae defined as follows, where $b\in B_L$ and
  $j \in\interval{m}$:
  \[
    \phi ::= \atom{b}{j} \mid \bigvee_{k\in K} \phi_k \mid
    \bigwedge_{k\in K} \phi_k
  \]
  We will write $\mathit{true}$ for the empty conjunction.
  The semantics of a formula $\phi$ is an upward-closed set
  $\semlog{\phi} \subseteq L^m$, defined as follows:
    \begin{eqnarray*}
      \semlog{\atom{b}{j}} & = & \{\vec{l}\in L^m \mid b \sqsubseteq
      l_j\} \\
      \semlog{\bigvee_{k\in K} \phi_k} & = & \bigcup_{k\in
        K} \semlog{\phi_k} \\
      \semlog{\bigwedge_{k\in K} \phi_k} & = & \bigcap_{k\in K}
      \semlog{\phi_k} = \big\{\bigsqcup_{k\in K} m_k \mid
      m_k\in \semlog{\phi_k}, k\in K\big\} \\
      & = & \big\{\bigsqcup_{k\in K} f(k) \mid f \colon K \to L\ \land\
      \forall k \in K.\, f(k) \in  \semlog{\phi_k} \big\}
  \end{eqnarray*}
  The last equality in the definition above holds since every formula
  represents an upward-closed set.  
\end{definition}

It is easy to see that indeed each upward-closed set is denoted by a
formula, showing that the logic is sufficiently expressive.

\begin{lemma}[formulae for upward-closed sets]
  Let $L$ be a continuous lattice with basis $B_L$ and let
  $X \subseteq L^m$ be upward-closed. Then $X = \semlog{\phi}$
  where $\phi$ is the formula in $\mathcal{L}_m(B_L)$ defined as
  follows:
  \[\phi = \bigvee_{\vec{l}\in X} \bigwedge \big\{ \atom{b}{j} \mid
    j\in\interval{m}\ \land\ b \sqsubseteq l_j\big\} . \]
\end{lemma}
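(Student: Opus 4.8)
The plan is to show that each disjunct of $\phi$ denotes exactly the principal up-set of the corresponding tuple, and then to glue these pieces together using the fact that $X$ is upward-closed. Concretely, for $\vec{l}\in X$ write $\phi_{\vec{l}} = \bigwedge\big\{\atom{b}{j} \mid j\in\interval{m}\ \land\ b\sqsubseteq l_j\big\}$, so that $\phi = \bigvee_{\vec{l}\in X}\phi_{\vec{l}}$ and, by the semantics of disjunction, $\semlog{\phi} = \bigcup_{\vec{l}\in X}\semlog{\phi_{\vec{l}}}$. The whole argument then reduces to the claim that $\semlog{\phi_{\vec{l}}} = \filter{\vec{l}} := \{\vec{l}'\in L^m \mid \vec{l}\sqsubseteq\vec{l}'\}$.

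To prove this claim I would unfold the semantics clauses for conjunction and for the atoms:
\[
\semlog{\phi_{\vec{l}}} \;=\; \bigcap\big\{\semlog{\atom{b}{j}} \mid j\in\interval{m}\ \land\ b\sqsubseteq l_j\big\} \;=\; \big\{\vec{l}'\in L^m \mid \forall j\in\interval{m}.\,\forall b\in B_L.\,(b\sqsubseteq l_j \Rightarrow b\sqsubseteq l_j')\big\}.
\]
The key observation is that, for a fixed index $j$, the condition ``$\forall b\in B_L.\,(b\sqsubseteq l_j\Rightarrow b\sqsubseteq l_j')$'' is equivalent to $l_j\sqsubseteq l_j'$. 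The direction from right to left is immediate by transitivity of $\sqsubseteq$. For the converse, since $B_L$ is a basis, $l_j = \lub\{b\in B_L\mid b\sqsubseteq l_j\}$ and $l_j' = \lub\{b\in B_L\mid b\sqsubseteq l_j'\}$; the hypothesis says the first index set is contained in the second, so the first join is below the second, i.e.\ $l_j\sqsubseteq l_j'$. Applying this for all $j\in\interval{m}$ gives $\semlog{\phi_{\vec{l}}} = \{\vec{l}'\mid \vec{l}\sqsubseteq\vec{l}'\} = \filter{\vec{l}}$, as claimed. The degenerate subcases are harmless: if $l_j=\bot$ for every $j$ then, since $\bot\notin B_L$, the conjunction is empty and denotes $L^m$, which is exactly $\filter{\vec{l}}$ for the bottom tuple; and if $X=\emptyset$ then $\phi$ is the empty disjunction, denoting $\emptyset = X$.

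Finally I would combine the pieces: $\semlog{\phi} = \bigcup_{\vec{l}\in X}\filter{\vec{l}}$. The inclusion $X\subseteq\semlog{\phi}$ holds because each $\vec{l}\in X$ belongs to $\filter{\vec{l}}$. The reverse inclusion $\semlog{\phi}\subseteq X$ is precisely where upward-closedness of $X$ is used: any $\vec{l}'\in\filter{\vec{l}}$ with $\vec{l}\in X$ satisfies $\vec{l}\sqsubseteq\vec{l}'$, hence $\vec{l}'\in X$. Thus $X = \semlog{\phi}$. I do not expect any real obstacle here; the only step that uses the hypotheses at all is the equivalence $(\forall b\in B_L.\,b\sqsubseteq l_j\Rightarrow b\sqsubseteq l_j')\iff l_j\sqsubseteq l_j'$, which is exactly the place where the basis property of $B_L$ enters (continuity of $L$ is only needed insofar as it guarantees that a basis exists at all). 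Everything else is a routine unfolding of the definition of $\semlog{\cdot}$.
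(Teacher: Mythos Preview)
Your proof is correct and follows essentially the same approach as the paper: both arguments hinge on the basis property $l_j = \bigsqcup\{b\in B_L\mid b\sqsubseteq l_j\}$ to convert the atomwise condition into $l_j\sqsubseteq l_j'$, and then invoke upward-closedness of $X$. Your presentation is slightly more structured in that you first isolate the auxiliary claim $\semlog{\phi_{\vec{l}}} = \filter{\vec{l}}$ and then derive both inclusions from it, whereas the paper proves the two inclusions $\semlog{\phi}\subseteq X$ and $X\subseteq\semlog{\phi}$ directly; but the substance is the same.
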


\begin{proof}
  We have to show that $\semlog{\phi} = X$:
  \begin{itemize}
  \item ($\subseteq$) Let $\vec{l}'\in \semlog{\phi}$,
    hence
    \[
      \vec{l}' \in \bigcup_{\vec{l}\in X} \bigcap \{\{\vec{l}'' \in
      L^m\mid b\sqsubseteq l''_k\} \mid k \in\interval{m} \land
      b\sqsubseteq l_k\}.
    \]

    Hence there exists $\vec{l} \in X$ such that for all
    $j \in \interval{m}$ and $b \sqsubseteq l_j$ it holds that
    $b \sqsubseteq l'_j$. Then
    \[
      l_j = \bigsqcup \{b \mid b\sqsubseteq l_j\} \sqsubseteq
      \bigsqcup \{b \mid b\sqsubseteq l'_j\} = l'_j.
    \]
    Hence $\vec{l}\sqsubseteq \vec{l}'$ and since $X$ is upward-closed
    $\vec{l}'\in X$.

  \item ($\supseteq$) Let $\vec{l}\in X$. We show that
    $\vec{l}\in \semlog{\psi_{\vec{l}}}$ where
    $\psi_{\vec{l}} = \bigwedge \big\{ \atom{b}{j} \mid j\in\interval{m}\  \land\ b \sqsubseteq l_j\big\}$. In fact
    \[ 
      \semlog{\psi_{\vec{l}} } = \bigcap \{\{\vec{l}'\in
      L^m\mid b \sqsubseteq l'_j\} \mid j \in \interval{m} \land
      b \sqsubseteq l_j\}. 
    \]
    Now, if $j\in\interval{m}$ and $b \sqsubseteq l_j$ then clearly
    $\vec{l} \in \{ \vec{l}' \mid b \sqsubseteq l'_j\}$ and hence
    $\vec{l}$ is contained in the intersection.
  \end{itemize}
\end{proof}

For
practical purposes we should restrict to
finite formulae. This can surely be done in the case of finite lattices, but also for well-quasi orders (see, e.g.,~\cite{DR:SRUCS}).

\begin{definition}[symbolic $\exists$-moves]
  Let $L$ be a continuous lattice and let $f\colon L^m \to L$ be a
  monotone function. A \emph{symbolic $\exists$-move}
  for $f$ is a family $(\phi_b)_{b\in B_L}$ of formulae in
  $\mathcal{L}_m(B_L)$ such that $\semlog{\phi_b} = \Emoves{b, f}$ for
  all $b \in B_L$.

  If $E$ is a system of $m$ equations of the kind
  $\vec{x} =_{\vec{\eta}} \vec{f}(\vec{x})$ over a continuous lattice
  $L$, a \emph{symbolic $\exists$-move} for $E$ is a family of formulae
  $(\phi_b^i)_{b\in B_L, i \in \interval{m}}$ such that for all
  $i \in \interval{m}$, the family $(\phi_b^i)_{b\in B_L}$ is a
  symbolic $\exists$-move for $f_i$.
\end{definition} 

Interestingly, symbolic $\exists$-moves can be obtained compositionally, namely, the formulae corresponding to a functions arising as a composition can be obtained from those of the components.

\begin{lemma}[symbolic $\exists$-moves, compositionally]
  Let $L$ be a continuous lattice with a basis $B_L$, and let
  $f\colon L^n \to L$, $f_j\colon L^m \to L$ for $j \in \interval{n}$
  be monotone functions and let $(\phi_b)_{b \in B_L}$,
  $(\phi^j_b)_{b \in B_L}$, $j \in \interval{n}$ be symbolic
  $\exists$-moves for $f, f_1, \ldots, f_n$.  Consider the function
  $h\colon L^m \to L$ obtained as the composition
  $h(\vec{l}) = f(f_1(\vec{l}), \ldots, f_n(\vec{l}))$. Define
  $(\phi'_b)_{b \in B_L}$ as follows. For all $b \in B_L$, the formula
  $\phi'_b$ is obtained from $\phi_b$ by replacing each occurrence of
  $[b',j]$ by $\phi^j_{b'}$. Then $(\phi'_b)_{b \in B_L}$ is a
  symbolic $\exists$-move for $h$.
\end{lemma}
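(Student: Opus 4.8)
The plan is to reduce the statement to a \emph{substitution lemma} for the semantics of $\mathcal{L}(B_L)$-formulae and then instantiate it. Given the symbolic $\exists$-moves $(\phi^j_b)_{b\in B_L}$ for $f_1,\dots,f_n$, for an arbitrary formula $\psi \in \mathcal{L}_n(B_L)$ write $\psi^\dagger \in \mathcal{L}_m(B_L)$ for the formula obtained from $\psi$ by replacing every atom $\atom{b'}{j}$ by $\phi^j_{b'}$ (this is a legal $\mathcal{L}_m(B_L)$-formula precisely because the logic allows atoms to sit inside arbitrary nestings of $\bigvee$ and $\bigwedge$, so substituting a formula for an atom is harmless). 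I would then prove, for all $\vec{l}\in L^m$,
\[
  \vec{l}\in\semlog{\psi^\dagger} \quad\text{iff}\quad (f_1(\vec{l}),\dots,f_n(\vec{l}))\in\semlog{\psi}.
\]
Since by definition $\phi'_b = (\phi_b)^\dagger$, instantiating this equivalence with $\psi = \phi_b$ yields, for every $b\in B_L$,
\[
  \vec{l}\in\semlog{\phi'_b}\ \text{iff}\ (f_1(\vec{l}),\dots,f_n(\vec{l}))\in\semlog{\phi_b}=\Emoves{b,f}\ \text{iff}\ b\sqsubseteq f(f_1(\vec{l}),\dots,f_n(\vec{l}))=h(\vec{l}),
\]
i.e.\ $\semlog{\phi'_b}=\Emoves{b,h}$. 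As $h$ is monotone (composition of monotone maps), $\Emoves{b,h}$ is the relevant upward-closed set, so this is exactly the assertion that $(\phi'_b)_{b\in B_L}$ is a symbolic $\exists$-move for $h$.

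The substitution lemma is proved by structural induction on $\psi$. In the base case $\psi=\atom{b'}{j}$ we have $\psi^\dagger=\phi^j_{b'}$; by assumption $(\phi^j_b)_{b\in B_L}$ is a symbolic $\exists$-move for $f_j$, so $\vec{l}\in\semlog{\phi^j_{b'}}$ iff $b'\sqsubseteq f_j(\vec{l})$, while by definition of $\semlog{\atom{b'}{j}}$ we have $(f_1(\vec{l}),\dots,f_n(\vec{l}))\in\semlog{\atom{b'}{j}}$ iff $b'\sqsubseteq f_j(\vec{l})$; the two conditions coincide. For $\psi=\bigvee_{k\in K}\psi_k$ we have $\psi^\dagger=\bigvee_{k\in K}\psi_k^\dagger$ and $\semlog{\bigvee_{k\in K}\psi_k}=\bigcup_{k\in K}\semlog{\psi_k}$, so membership on either side amounts to the existence of some $k\in K$ witnessing it, and the equivalence follows from the inductive hypotheses; the case $\psi=\bigwedge_{k\in K}\psi_k$ is identical with $\bigcap$ in place of $\bigcup$.

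I do not expect a genuine obstacle. The only points requiring a little care are (i) noting that the replacement $\psi\mapsto\psi^\dagger$ produces a well-formed formula of $\mathcal{L}_m(B_L)$ — which holds exactly because atoms may be replaced by arbitrary formulae — and (ii) observing that, since every denotation $\semlog{\cdot}$ is upward-closed and $h$ is monotone, equality of the denoted sets is precisely the property demanded of a symbolic $\exists$-move. No continuity assumption on $L$ is used beyond what is needed to state the logic, and monotonicity of $f,f_1,\dots,f_n$ enters only so that $\Emoves{b,h}$ is upward-closed, making the notion of symbolic $\exists$-move for the composite meaningful.
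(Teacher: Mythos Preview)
Your proposal is correct and follows essentially the same approach as the paper: both prove the substitution lemma $\vec{l}\in\semlog{\psi^\dagger}$ iff $(f_1(\vec{l}),\dots,f_n(\vec{l}))\in\semlog{\psi}$ by structural induction on $\psi$ (atom, disjunction, conjunction), and then instantiate it at $\psi=\phi_b$. The paper phrases the lemma as a set equality rather than a biconditional and leaves the conjunction case to ``analogous'', but the argument is the same.
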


\begin{proof}
  We first show that given a formula $\phi \in \mathcal{L}_n(B_L)$,
  if $\phi'$ is the formula in $\mathcal{L}_m(B_L)$ obtained from
  $\phi$ by replacing each occurrence of an atom $\atom{b}{j}$ by
  $\phi^j_b$, then

  \[
    \semlog{\phi'} = \{ \vec{l} \mid \vec{l} \in L^m\ \land\
    (f_1(\vec{l}), \ldots, f_n(\vec{l})) \in \semlog{\phi} \}
  \]

  We proceed by induction on $\phi_b$.

  \begin{itemize}
    
  \item ($\phi = \atom{b}{j}$): In this case $\phi' =
    \phi_b^j$. Therefore we have
    \begin{align*}
      \semlog{\phi'}
      & = 
        \semlog{\phi_b^j}\\
      & =  \{ \vec{l} \mid \vec{l} \in L^m\ \land\ b \sqsubseteq f_j(\vec{l}) \} \\
      & =  \{ \vec{l} \mid \vec{l} \in L^m\ \land\  (f_1(\vec{l}), \ldots, f_n(\vec{l})) \in \semlog{\atom{b}{j}} \} \\
      & =  \{ \vec{l} \mid \vec{l} \in L^m\ \land\  (f_1(\vec{l}), \ldots, f_n(\vec{l})) \in \semlog{\phi} \}      
    \end{align*}
    
  \item ($\phi = \bigvee_{k\in K} \phi_k$): In this case
    $\phi' = \bigvee_{k\in K} \phi_k'$, where each $\phi_k'$ is
    obtained from $\phi_k$ by by replacing each occurrence of an atom
    $\atom{b}{j}$ by $\phi^j_b$. Then
    \begin{align*}
      \semlog{\phi'}
      & = 
        \semlog{\bigvee_{k\in K} \phi_k'}\\
      & = \bigcup_{k\in K} \semlog{\phi_k'}\\
      & = \bigcup_{k\in K} \{ \vec{l} \mid \vec{l} \in L^m\ \land\
        (f_1(\vec{l}), \ldots, f_n(\vec{l})) \in \semlog{\phi_k} \}
      & \mbox{[by inductive hyp.]}\\
      & = \{ \vec{l} \mid \vec{l} \in L^m\ \land\
    (f_1(\vec{l}), \ldots, f_n(\vec{l})) \in  \bigcup_{k\in K} \semlog{\phi_k} \}\\
      & = \{ \vec{l} \mid \vec{l} \in L^m\ \land\
    (f_1(\vec{l}), \ldots, f_n(\vec{l})) \in  \semlog{\bigvee_{k\in K} \phi_k} \}\\
      & =  \{ \vec{l} \mid \vec{l} \in L^m\ \land\  (f_1(\vec{l}), \ldots, f_n(\vec{l})) \in \semlog{\phi} \}      
    \end{align*}
    as desired.

  \item ($\phi = \bigwedge_{k\in K} \phi_k$): Analogous.
  \end{itemize}

  \smallskip

  Now, given $b \in B_L$, we have to show that
  \begin{center}
    $\semlog{\phi'_b } = \Emoves{b, h} = \{ \vec{l} \mid
    \vec{l} \in L^m\ \land\ b \sqsubseteq h(\vec{l})\} = \{ \vec{l}
    \mid \vec{l} \in L^m\ \land\ b \sqsubseteq f(f_1(\vec{l}), \ldots,
    f_n(\vec{l}))\}$.
  \end{center}

  This is almost immediate. In fact
  \begin{align*}
    & \semlog{\phi'_b } =\\
    & \quad = \{ \vec{l} \mid \vec{l} \in L^m\ \land\ (f_1(\vec{l}), \ldots,
    f_n(\vec{l})) \in \semlog{\phi_b} \} & \mbox{[by the property proved above]}\\
    & \quad  = \{ \vec{l}
    \mid \vec{l} \in L^m\ \land\ b \sqsubseteq f(f_1(\vec{l}), \ldots,
    f_n(\vec{l}))\}  & \mbox{[by def. of symbolic $\exists$-move]}
  \end{align*}

\end{proof}

\begin{example}
  \label{ex:selection-formulae}
  Consider again our running example in Example~\ref{ex:running}. The
  selections specified in Example~\ref{ex:running-selection} can be
  expressed in the logic as follows:
  \begin{eqnarray*}
    \phi_{\{a\}}^1 = \atom{\{a\}}{1}\lor \atom{\{b\}}{1} &&
    \phi_{\{b\}}^1 = \mathit{true}\\
    \phi_{\{a\}}^2 = \atom{\{a\}}{1}\land \atom{\{a\}}{2} \land
    \atom{\{b\}}{2}  &&
    \phi_{\{b\}}^2 = \atom{\{b\}}{1}\land \atom{\{b\}}{2}
  \end{eqnarray*}
  These formulae can be obtained compositionally. For instance the
  formula $\phi_{\{a\}}^2$ for the equation
  $x_2 =_\nu x_1\cap\Box x_2$ is obtained by combining a logical
  formula for $x_1$ (namely $\atom{\{a\}}{1}$) via conjunction with a
  logical formula for $\Box x_2$
  (namely $\atom{\{a\}}{2} \land \atom{\{b\}}{2}$).
\end{example}

A symbolic $\exists$-move for a system can be directly converted into
a recipe for evaluating the fixpoint expressions for progress
measures. Essentially, every disjunction simply has to be replaced by
a minimum and every conjunction by a supremum (although the proof,
which relies on complete distributivity of the lattice
  $\lift{\asc{L}}{m}$ is not trivial). Furthermore, in the case of an
algebraic lattice, where we can ensure that the elements of the basis
are compact, an atom translates to a straightforward lookup of the
progress measure without additional computation.

\begin{proposition}[progress measure from symbolic $\exists$-moves]  
  \label{prop:evaluation-logic-selections}
  Let $E$ be a system of $m$ equations over a continuous lattice $L$
  and let $B_L$ be a basis for $L$.  Let 
  $(\phi_b^i)_{b\in B_L,i\in\interval{m}}$ be a symbolic $\exists$-move for $E$.

  Then the system of fixpoint equations for computing the progress measure 
  can be written, for all $b \in B_L$ and $i \in \interval{m}$, as
  $R(b)(i) = R^i_{\phi_b^i}$ where 
  $R^i_\psi$ is defined inductively as follows:
  \begin{align*}
    R^i_{\atom{b}{j}} = \min\nolimits_{\preceq_i} \{ \sup \{ R(b')(j) +
      \vec{\delta}^{\eta_i}_i\mid b'\ll b\}\} & \quad &
    R^i_{\bigvee_{k\in K} \phi_k} = \min_{k\in K} R^i_{\phi_k}
    & \quad &
    R^i_{\bigwedge_{k\in K} \phi_k} = \sup_{k\in K} R^i_{\phi_k} 
  \end{align*}
  Whenever the basis element $b$ is compact it holds that
  $R^i_{\atom{b}{j}} = \min\nolimits_{\preceq_i} \{ R(b)(j) +
  \vec{\delta}^{\eta_i}_i \}$.
\end{proposition}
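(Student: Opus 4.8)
The plan is to regard the recursion for $R^i_\psi$ as defining a monotone endofunction $\Psi_E$ on $B_L \to \interval{m} \to \lift{\asc{L}}{m}$ by $\Psi_E(R)(b)(i) = R^i_{\phi^i_b}$, and to show that $\Psi_E$ has the same least fixpoint as the operator $\Phi_E$ of Definition~\ref{de:progress-fixpoint}; by Lemma~\ref{le:progress-fixpoint} this least fixpoint is then the complete progress measure $R_m$. Recall that by definition of a symbolic $\exists$-move $\Emoves{b,i} = \semlog{\phi^i_b}$, and that $\Amoves{\vec{l}} = \{(b',j)\mid j\in\interval{m},\ b'\in B_L,\ b'\ll l_j\}$. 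Writing $g_i(\vec{l}) = \sup\{R(b')(j) + \vec{\delta}^{\eta_i}_i \mid (b',j)\in\Amoves{\vec{l}}\}$, so that $\Phi_E(R)(b)(i) = \min\nolimits_{\preceq_i}\{g_i(\vec{l}) \mid \vec{l}\in\semlog{\phi^i_b}\}$, the heart of the proof is a structural induction on $\psi \in \mathcal{L}_m(B_L)$ showing that, for every \emph{sup-respecting} $R$ and every $i\in\interval{m}$,
\[
  \min\nolimits_{\preceq_i}\{\, g_i(\vec{l}) \mid \vec{l}\in\semlog{\psi} \,\} \ =_i\ R^i_\psi .
\]
Since both sides vanish on all components below $i$, the congruence $=_i$ upgrades to genuine equality, and instantiating $\psi := \phi^i_b$ gives $\Phi_E(R) = \Psi_E(R)$ on sup-respecting $R$. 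Agreement there is enough: the sup-respecting functions form a sub-lattice closed under arbitrary joins and containing the bottom function, $\Phi_E$ maps it into itself (Lemma~\ref{le:sup-respecting-preserve} in Appendix~\ref{ssec:sup-respecting}), hence so does $\Psi_E$ (coinciding with $\Phi_E$ there); the transfinite iterates $\Psi_E^\alpha(\bot)$ are thus all sup-respecting, so $\lfp(\Psi_E) = \lfp(\Phi_E) = R_m$.

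For the induction on $\psi$, the atom case $\psi = \atom{b}{j}$ uses that the tuple with $j$-th component $b$ and all others $\bot$ lies in $\semlog{\atom{b}{j}} = \{\vec{l}\mid b\sqsubseteq l_j\}$ and has $\Amoves{}$-set precisely $\{(b',j)\mid b'\ll b\}$ (since $b'\ll\bot$ forces $b'=\bot\notin B_L$), which gives $\preceq_i$; and that any $\vec{l}$ with $b\sqsubseteq l_j$ satisfies $\{b'\mid b'\ll b\}\subseteq\{b'\mid b'\ll l_j\}$, because $b'\ll b\sqsubseteq l_j$ implies $b'\ll l_j$, which gives $\succeq_i$. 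The disjunction case uses $\semlog{\bigvee_k\phi_k} = \bigcup_k\semlog{\phi_k}$ and that $\min\nolimits_{\preceq_i}$ distributes over unions (with $\min\nolimits_{\preceq_i}\emptyset=\err$ as top), so the left-hand side equals $\min_k R^i_{\phi_k}$ by the inductive hypothesis.

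The conjunction case $\psi = \bigwedge_k\phi_k$ is the delicate one. Here $\semlog{\bigwedge_k\phi_k} = \{\bigsqcup_k m_k\mid m_k\in\semlog{\phi_k}\}$, and the key sublemma is that $g_i$ sends joins to suprema, $g_i(\bigsqcup_k m_k) = \sup_k g_i(m_k)$, for sup-respecting $R$. The inequality $\succeq$ is just monotonicity of $g_i$; for $\preceq$, given $(b',j)\in\Amoves{\bigsqcup_k m_k}$, i.e.\ $b'\ll\bigsqcup_k(m_k)_j$, one writes each $(m_k)_j = \bigsqcup(\cone{(m_k)_j}\cap B_L)$, so that $\bigsqcup_k(m_k)_j$ is the directed join of the finite joins of such basis elements; then $b'$ lies below a finite join $b''_1\sqcup\cdots\sqcup b''_n$ with each $b''_l\ll(m_{k_l})_j$, and sup-respecting-ness applied to the \emph{finite} set $\{b''_1,\dots,b''_n\}$ (so the supremum is an attained maximum, with which the shift by $\vec{\delta}^{\eta_i}_i$ commutes) yields $R(b')(j)+\vec{\delta}^{\eta_i}_i \preceq \max_l(R(b''_l)(j)+\vec{\delta}^{\eta_i}_i) \preceq \sup_k g_i(m_k)$. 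Granting this, $\{g_i(\bigsqcup_k m_k)\mid m_k\in\semlog{\phi_k}\} = \{\sup_k c_k \mid (c_k)_k\in\prod_k\{g_i(\vec{l})\mid\vec{l}\in\semlog{\phi_k}\}\}$, and applying $\min\nolimits_{\preceq_i}$ together with complete distributivity of $\lift{\asc{L}}{m}$ — inherited from the lexicographically ordered ordinal vectors (Section~\ref{sec:preliminaries}), $\err$ being a top element — to swap the meet past the joins produces $\sup_k\min\nolimits_{\preceq_i}\{g_i(\vec{l})\mid\vec{l}\in\semlog{\phi_k}\} = \sup_k R^i_{\phi_k}$ by induction.

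Finally, the compact case: if $b$ is compact then $b\ll b$, so $b$ is the $\sqsubseteq$-largest basis element way-below $b$, and monotonicity of $R$ gives $\sup\{R(b')(j)+\vec{\delta}^{\eta_i}_i\mid b'\ll b\} = R(b)(j)+\vec{\delta}^{\eta_i}_i$, whence $R^i_{\atom{b}{j}} = \min\nolimits_{\preceq_i}\{R(b)(j)+\vec{\delta}^{\eta_i}_i\}$. I expect the conjunction case to be the main obstacle: making $g_i$ commute with arbitrary joins forces a detour through the basis and the directedness of finite joins (which is also precisely what neutralises the failure of ordinal addition to commute with suprema at limit stages), and only afterwards can complete distributivity of $\lift{\asc{L}}{m}$ be invoked to push the outer $\min\nolimits_{\preceq_i}$ inside the $\sup$ coming from the conjunction.
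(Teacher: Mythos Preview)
Your proof is correct and follows essentially the same approach as the paper: a structural induction on $\psi$ establishing $R^i_\psi = \min\nolimits_{\preceq_i}\{g_i(\vec{l})\mid\vec{l}\in\semlog{\psi}\}$ for sup-respecting $R$, with the conjunction case handled by swapping the outer $\min\nolimits_{\preceq_i}$ past the inner $\sup$ via complete distributivity---what you inline is precisely the content of the paper's Lemma~\ref{lem:exchange-min-sup}. You are in fact more explicit than the paper on two points: you justify why agreement on sup-respecting $R$ suffices for the least fixpoints of $\Phi_E$ and $\Psi_E$ to coincide (the paper just asserts the induction hypothesis ``when $R$ is monotonic'' and moves on), and your detour through \emph{finite} joins of basis elements in the conjunction case cleanly handles the commutation of the shift $\vec{\delta}^{\eta_i}_i$ with suprema, a step the paper's proof of Lemma~\ref{lem:exchange-min-sup} takes a little quickly.
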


\begin{proof}
  First observe that due to Lemma~\ref{le:sup-respecting-preserve}
  $\Phi_E$ preserves sup-respecting progress measures. Furthermore the
  supremum of sup-respecting progress measures is again
  sup-respecting. This means that the fixpoint iteration generates
  only sup-respecting functions and we can in the following assume
  that $R$ is sup-respecting.
  
  Since $(\phi_b^i)_{b\in B_L,i\in\interval{m}}$ is a symbolic
  $\exists$-move for $E$, the equations of
  Definition~\ref{de:progress-fixpoint} can be written as
  \[
    R(b)(i) = \min\nolimits_{\preceq_i} \{ \sup \{ R(b')(j) +
    \vec{\delta}^{\eta_i}_i \mid (b',j)\in \Amoves{\vec{l}} \}
    \mid \vec{l} \in \semlog{\phi_b^i}
    \}.
  \]
  
  We conclude by proving that, when $R$ is monotonic
  \[
    R^i_{\psi} = \min\nolimits_{\preceq_i} \{ \sup \{ R(b')(j) +
    \vec{\delta}^{\eta_i}_i \mid (b',j)\in \Amoves{\vec{l}} \}
    \mid \vec{l} \in \semlog{\psi}
    \}.
  \]
  We proceed by induction on the structure of $\psi$.
  
  \begin{itemize}
  \item ($\psi = \atom{b}{k}$): By definition
    \begin{align*}
      \min\nolimits_{\preceq_i} \{ \sup \{ R(b')(j) +
      \vec{\delta}^{\eta_i}_i \mid (b',j) \in \Amoves{\vec{l}} \}
      \mid \vec{l} \in \semlog{\atom{b}{k}} \} = \\
      \min\nolimits_{\preceq_i} \{ \sup \{ R(b')(j) + \vec{\delta}^{\eta_i}_i
      \mid j \in \interval{m}\ \land\ b' \ll l_j \} \mid \vec{l} \in
      L^m\ \land\ b \sqsubseteq l_k \}
    \end{align*}
    A vector $\vec{l}\in L^m$ satisfying $b\ll l_k$ has the form
    $(l_1,\dots,l_m)$ where $l_j$ is arbitrary 
    if $j\neq k$ and $b \ll l_k$.
    Since we can assume that $R$ is monotonic and hence the inner
    supremum is monotone in $\vec{l}$, we can conclude that the
    minimum is reached for a vector $\vec{\ell}$ where $l_j=\bot$ if
    $j\neq k$ and $b\sqsubseteq l_k$. Hence we obtain
    \[ \min\nolimits_{\preceq_i} \{ \sup \{ R(b')(j) + \vec{\delta}^{\eta_i}_i
      \mid j \in \interval{m}\  \land\ b' \ll l_j \}
      \mid \vec{l}\in L^m, b\sqsubseteq l_k, l_j=\bot \mbox{ if }j\neq k\}. \]
    Since there is no basis element $b'$ with $b'\ll \bot$, it is
    sufficient if one takes the inner suprema only for elements with
    $j=k$ and $b'\sqsubseteq l_k$. And so we obtain
    \[ \min\nolimits_{\preceq_i} \{ \sup \{ R(b')(k) +
      \vec{\delta}^{\eta_i}_i\mid b'\ll l\}\mid l \in L\ \land\
      b\sqsubseteq l \}
    \]
    We can now infer that $b$ is the least value $l\in L$ such that
    $b\sqsubseteq l$ and hence -- again by monotonicity -- the above
    can be rewritten as
    \[ \min\nolimits_{\preceq_i} \{ \sup \{ R(b')(k) +
      \vec{\delta}^{\eta_i}_i\mid b'\ll b\} \}
    \]
    which is exactly
    $R^i_{\atom{b}{k}}$, as desired.
    
    If $b$ is compact, we know that $b$ itself is the least element of
    all $l$ such that $b\ll l$ and we can write the above as
    \[ R^i_{\atom{b}{k}} = \min\nolimits_{\preceq_i} \{ R(b)(k) +
      \vec{\delta}^{\eta_i}_i\}. \]
  \item Disjunction:
    \begin{eqnarray*}
      R^i_{\bigvee_{k\in K} \phi_k} & = & \min\nolimits_{\preceq_i} \{ \sup \{
      R(b')(j) + \vec{\delta}^{\eta_i}_i \mid j \in \interval{m}\
      \land\ b' \ll l_j\} \mid \vec{l} \in \bigcup_{k\in K}
      \semlog{\phi_k} \} \\
      & = & \min\nolimits_{\preceq_i} \{ \sup \{
      R(b')(j) + \vec{\delta}^{\eta_i}_i \mid j \in \interval{m}\
      \land\ b' \ll l_j\} \mid \vec{l} \in 
      \semlog{\phi_k}, k\in K \} \\
      & = & \min_{k\in K} \min\nolimits_{\preceq_i} \{ \sup \{
      R(b')(j) + \vec{\delta}^{\eta_i}_i \mid j \in \interval{m}\
      \land\ b' \ll l_j\} \mid \vec{l} \in 
      \semlog{\phi_k}\} \\
      & = & \min_{k\in K} R^i_{\phi_k}
    \end{eqnarray*}
  \item Conjunction: since every set $\semlog{\phi_k}$ is
    upward-closed we can immediately apply
    Lemma~\ref{lem:exchange-min-sup} and obtain
    \begin{eqnarray*}
      R^i_{\bigwedge_{k\in K} \phi_k} & = & \min\nolimits_{\preceq_i} \{ \sup \{
      R(b')(j) + \vec{\delta}^{\eta_i}_i \mid j \in \interval{m}\
      \land\ b' \ll l_j\} \mid \vec{l} \in \bigcap_{k\in K}
      \semlog{\phi_k} \} \\
      & = & \sup_{k\in K} \min\nolimits_{\preceq_i} \{ \sup \{
      R(b')(j) + \vec{\delta}^{\eta_i}_i \mid j \in \interval{m}\
      \land\ b' \ll l_j\} \mid \vec{l} \in 
      \semlog{\phi_k}\} \\      
      & = & \sup_{k\in K} R^i_{\phi_k}
    \end{eqnarray*}
  \end{itemize}
\end{proof}

Note that the operator $\min\nolimits_{\preceq_i}$ in the definition of
$R^i_{\atom{b}{j}}$ above is just there to ensure that all entries in
positions smaller than $i$ are set to $0$.

\begin{example}
  Using the logical formulae from
  Example~\ref{ex:selection-formulae}, we obtain the following
  equations for the progress measure (where $\max_{\preceq_i}$ works
  analogously to $\min\nolimits_{\preceq_i}$: it sets all vector entries in
  positions smaller than $i$ to $0$):
  \begin{align*}
    R(\{a\})(1) &=
    \min\nolimits_{\preceq_1}\{R(\{a\})(1)+(1,0),R(\{b\})(1)+(1,0)\} &
    R(\{b\})(1) &= (0,0) \\
    R(\{a\})(2) &=
    \max\nolimits_{\preceq_2}\{R(\{a\})(1),R(\{a\})(2),R(\{b\})(2)\} &
    R(\{b\})(2) &= 
    \max\nolimits_{\preceq_2}\{R(\{b\})(1),R(\{b\})(2)\}
  \end{align*}
  The solution for the progress measure equations has already been
  given in Example~\ref{ex:running-measure}.
\end{example}

\subsubsection{Complexity Analysis}
\label{sec:complexity-analysis}

The benefit of the progress measures introduced in~\cite{j:progress-measures-parity} is to ensure that model-checking is
polynomial in the number of states and exponential in (half of) the
alternation depth.
We will now perform a corresponding complexity
analysis for our setting, based on symbolic $\exists$-moves and
by assuming that we are working on a finite lattice.

Let $E$ be a fixed system of $m$ equations over a finite lattice $L$,
let $k$ be the number of $\mu$-equations and let $B_L$ be a basis for
$L$.
Let $(\phi_b^i)_{b\in B_L, i \in \interval{m}}$ be a symbolic
$\exists$-move for $E$ and assume that the size of every such formula
is bounded by $s$.
Note that the formulae are typically of moderate size. For instance,
$\mu$-calculus model-checking, the branching of a transition system
(i.e., the number of successors of a single state) is a determining
factor.  In fact, as it can be grasped from our running example (see
Example~\ref{ex:selection-formulae}), the size of the symbolic
$\exists$-move $\phi_b^i$ will be linear in the number of
propositional operators and, in the presence of modal operators,
linear in the branching degree of the transition system. For arbitrary
monotone functions it is more difficult to give a general rule.

The shape of the formulae in the symbolic $\exists$-move determine how the values of the progress measure at various positions $(b,i)$ of the games are interrelated. These dependencies clearly play a role in the computation and thus are made explicit by following definition.

\begin{definition}[dependency graph]
  Given two game positions $(b,i),(b',j)\in B_L\times\interval{m}$ of
  $\exists$ we say that $(b,i)$ is a \emph{predecessor} of $(b',j)$ if
  $\atom{b'}{j}$ occurs in $\phi_b^i$. We will write
  $\mathit{pred}(b',j)$ for the set of predecessors of $(b',j)$. In
  this situation we will also call the pair $((b,i),(b',j))$ an
  \emph{edge} and refer to corresponding graph as the \emph{dependency
    graph} for $E$.
\end{definition}

As a first step we provide a bound to the number of edges in the dependency graph.

\begin{proposition}[edges in the dependency graph]
  The number $e$ of edges in the dependency graph for system $E$ is
  such that $e\le \min\{|B_L|\cdot m\cdot s,(|B_L|\cdot m)^2\}$, where
  $m$ is the number of equations and $s$ is the bound on the size of
  symbolic $\exists$-moves.
\end{proposition}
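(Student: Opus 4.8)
The plan is to establish the two quantities inside the minimum separately, since $e$ is obviously bounded by whichever is smaller. The dependency graph has vertex set $B_L\times\interval{m}$, so it has exactly $|B_L|\cdot m$ vertices. As it is a simple directed graph (an edge is a \emph{pair} $((b,i),(b',j))$, and whether it is present depends only on whether $\atom{b'}{j}$ occurs in $\phi_b^i$, a yes/no condition), even counting self-loops there can be at most $(|B_L|\cdot m)^2$ edges. This gives the bound $e\le(|B_L|\cdot m)^2$ immediately, with no further work.

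For the other bound I would count edges by their source vertex. Fix a position $(b,i)\in B_L\times\interval{m}$. By Definition of the dependency graph, the edges leaving $(b,i)$ are exactly the pairs $((b,i),(b',j))$ such that the atom $\atom{b'}{j}$ occurs in the formula $\phi_b^i$. Distinct outgoing edges are witnessed by distinct atoms, and choosing one occurrence of each such atom in $\phi_b^i$ yields an injection from the set of outgoing edges of $(b,i)$ into the set of atom-occurrences of $\phi_b^i$. Since the size of $\phi_b^i$ is at least the number of its atom-occurrences and is bounded by $s$ by hypothesis, the out-degree of $(b,i)$ is at most $s$. Summing the out-degrees over all $|B_L|\cdot m$ vertices, we obtain $e\le|B_L|\cdot m\cdot s$.

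Combining the two bounds gives $e\le\min\{|B_L|\cdot m\cdot s,\ (|B_L|\cdot m)^2\}$, which is the claim. There is no real obstacle here: the only point that deserves a line of justification is the (essentially definitional) fact that the size measure on formulae of $\mathcal{L}_m(B_L)$ counts at least the number of atom-occurrences, so that the number of \emph{distinct} atoms in a formula — and hence the number of distinct outgoing edges it induces — is bounded by its size.
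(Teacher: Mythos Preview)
Your proof is correct and follows essentially the same approach as the paper: both bounds are obtained by noting that there are $|B_L|\cdot m$ vertices (giving the $(|B_L|\cdot m)^2$ bound on edges) and that the out-degree of each vertex $(b,i)$ is bounded by the size of $\phi_b^i$, hence by $s$. Your version is slightly more explicit about the distinction between atom-occurrences and distinct atoms, but the argument is the same.
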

}

\begin{proof}
  There are at most $|B_L|\cdot m$ game positions and hence
  the number of edges is obviously bounded by $(|B_L|\cdot
  m)^2$. Moreover, each game position, the number of outgoing edges is
  bounded by the size of the formula (symbolic $\exists$-move)
  associate to the position. Hence the thesis.
\end{proof}

In order to bound the complexity of the overall computation of the
progress measure, first note that the lattice $\lift{\asc{L}}{m}$
contains $(\asc{L}+1)^m + 1$ elements. However only
$h = (\asc{L}+1)^k + 1$ are relevant, since the entries of
$\nu$-indices are always set to $0$. As an example, when
model-checking a $\mu$-calculus formula over a finite state system,
$\asc{L}$ is the size of the state space of the Kripke structure. In
fact, the lattice is $(\Pow{\mathbb{S}}, \subseteq)$ where
$\mathbb{S} = \{ s_0, \ldots, s_n\}$ is the state space, then the
longest ascending chain is
$\emptyset \subseteq \{s_0\} \subseteq \{s_0, s_1\} \subseteq \ldots
\subseteq \mathbb{S}$.

This fact and the observation that we can perform the
fixpoint iteration for the progress measure using a worklist
algorithm on the dependency graph, lead to the following
result.

\begin{theorem}[computing progress measures]
  The time complexity for computing the least fixpoint progress measure
  for system $E$ is $O(s\cdot k\cdot e\cdot h)$, where $s$ is the bound
  on the size of symbolic $\exists$-moves, $k$ is the number of
  $\mu$-equations, $e$ the number of edges in the dependecy graph, and
  $h = (\asc{L}+1)^k + 1$.
\end{theorem}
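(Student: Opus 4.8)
The plan is to analyze the worklist/Kleene-iteration algorithm that computes the least fixpoint of $\Phi_E$ on the lattice of progress measures, using the dependency graph to propagate updates only where needed. The key structural facts to assemble are: (i) the relevant lattice has height bounded by $h = (\asc{L}+1)^k + 1$, since $\nu$-indices are always $0$ (so progress-measure values effectively live in $\lift{\asc{L}}{k}$ ordered lexicographically, which has at most $(\asc{L}+1)^k$ strictly-increasing steps, plus one for the $\err$ top element); (ii) the progress measure is computed as a \emph{least} fixpoint, so the iteration is monotone-increasing and the value at each game position $(b,i)$ can only increase, hence it changes at most $h$ times over the whole run; and (iii) by Proposition~\ref{prop:evaluation-logic-selections}, recomputing $R(b)(i) = R^i_{\phi^i_b}$ from the current values of its successors in the dependency graph costs $O(s \cdot k)$, since the formula has size $\le s$ and each $\min\nolimits_{\preceq_i}$, $\sup$, or ordinal-vector increment operates on vectors of $k$ relevant ordinal components.

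First I would describe the worklist algorithm precisely: maintain a current approximation $R$, initialized to the bottom progress measure (all $(0,\dots,0)$), and a worklist of game positions whose value might be stale; when a position $(b,i)$ is popped, recompute $R^i_{\phi^i_b}$ using Proposition~\ref{prop:evaluation-logic-selections}, and if the value strictly increased, push all predecessors $\mathit{pred}(b,i)$ (the in-neighbours in the dependency graph) back onto the worklist. Standard worklist reasoning gives that, at termination, $R$ satisfies all the equations, hence equals the least fixpoint by Lemma~\ref{le:progress-fixpoint}.

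Next I would do the accounting. Each position can have its value strictly increased at most $h$ times (height bound on the relevant lattice), and each such increase enqueues its predecessors; also each processing of a popped element costs one formula evaluation, $O(s\cdot k)$. The total number of ``processing events'' is bounded by the number of times some position is enqueued, which is bounded by $\sum_{(b,i)} (\text{number of increases at }(b,i)) \cdot (\text{in-degree of }(b,i))$; since total increases is $\le |B_L|\cdot m \cdot h$ but more tightly each \emph{edge} $((b',j),(b,i))$ triggers a re-enqueue of $(b',j)$ at most once per increase of $(b,i)$, i.e.\ at most $h$ times, the number of re-enqueues is $\le e\cdot h$, giving total work $O(s\cdot k\cdot e\cdot h)$. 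I would also note that $k$ rather than $m$ appears because the $\nu$-components are pinned to $0$ and need not be touched — the comparison/arithmetic on a vector is effectively over its $k$ $\mu$-components, and similarly $h$ uses $k$ not $m$.

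The main obstacle I anticipate is making the amortized counting rigorous: one must argue that the worklist is processed in a way where the total number of times an edge is ``used'' to re-enqueue its source is bounded by $h$ (one per value increase of the target), which requires being a little careful about multiple enqueues of the same position before it is processed — e.g.\ using a set-based worklist so a position appears at most once pending, so that between two successive pops of $(b',j)$ at least one successor must have strictly increased. A secondary subtlety is justifying the $O(s\cdot k)$ per-evaluation cost: the recursive evaluation of $R^i_{\phi^i_b}$ via Proposition~\ref{prop:evaluation-logic-selections} touches each subformula once (size $\le s$), and each leaf/node operation is a $\min\nolimits_{\preceq_i}$, $\sup$, or $+\vec{\delta}^{\eta_i}_i$ on ordinal vectors whose only relevant entries are the $k$ $\mu$-positions — here one should note that on a finite lattice comparing and combining ordinals below $\asc{L}$ is unit cost per component (or absorb a $\log$ factor, which the $O$-statement is implicitly coarse about). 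Assembling these two cost bounds with the height bound then yields the claimed $O(s\cdot k\cdot e\cdot h)$.
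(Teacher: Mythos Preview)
Your proposal is correct and follows essentially the same approach as the paper: a worklist-based Kleene iteration where each target's value can increase at most $h$ times, each increase triggers re-enqueuing along at most its incoming edges (total $\le e\cdot h$ events), and each recomputation costs $O(s\cdot k)$ via Proposition~\ref{prop:evaluation-logic-selections}. The only cosmetic difference is that the paper places \emph{edges} in the worklist (and removes all edges sharing a source after processing one), whereas you place \emph{positions}; the resulting accounting is identical.
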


\begin{proof}
  We use a worklist algorithm, and the worklist initially contains all
  edges.

  Processing an edge $((b,i),(b',j))$ means to update the value
  $R(b)(i)$ by evaluating the formula $\phi_b^i$. Afterwards all edges
  originating from $(b,i)$ can be removed from the worklist. Whenever
  a value $R(b')(j)$ increases, all edges $((b,i),(b',j))$ with
  $(b,i)\in\mathit{pred}(b',j)$ will be again inserted into the
  worklist.
  Hence, at most $\sum_{(b',j)\in B_L\times\interval{m}} h\cdot
  \mathit{pred}(b',j) = e\cdot h$ edges will be inserted into the
  worklist and processed later.
  
  In turn, processing an edge has complexity at most $O(s\cdot k)$,
  since we inductively evaluate a formula of size $s$ on ordinal
  vectors of length $k$. (Since the lattice is finite, it is
  automatically algebraic and the simpler case for compact elements of
  Theorem~\ref{prop:evaluation-logic-selections} applies.)  Everything
  combined, we obtain a runtime of $O(s\cdot k\cdot e\cdot h)$.
\end{proof}

We compare the above with the runtime
in~\cite{j:progress-measures-parity}, which is
$O(dg\big( \frac{n}{d}\big)^{\lceil \frac{d}{2}\rceil})$, where $d$ is
the alternation depth of the formula, $g$ the number of edges and $n$
the number of nodes of the parity game.

The correspondence is as follows: $g$ corresponds to our number $e$
and $n$ to $\asc{L}$ (where we cannot exploit the optimisation by
Jurdzi\'nski which uses the fact that every node in the parity game is
associated with a single parity, leading to the division by
$d$). Furthermore $s$ is a new factor, which is due to the fact that
we are working with arbitrary functions. But this is mitigated by the
fact that we often obtain smaller parity games than in the standard
$\mu$-calculus case (see for instance Example~\ref{ex:running-a},
Figure~\ref{fi:game}). The number $\frac{d}{2}$ corresponds to our
$k$.
However $\frac{d}{2}$ could potentially be strictly lower than $k$,
since we did not take into account the fact that some equations might
not be dependent on other equations.

To incorporate this and possibly further optimisations into the
complexity analysis we need a notion of alternation depth for equation
systems. This can be easily obtained by extending the one introduced
in~\cite{cks:faster-modcheck-mu,Sch:VRS}.  A system of equations can
be split into closed subsystems corresponding to the strongly
connected components of the dependency graph for the system. Then the
alternation depth of the system is defined as the length of the
longest chain of mutually dependent $\mu$ and $\nu$-equations within a
closed subsystem. By solving every component separately we obtain a
more efficient algorithm.

In particular, systems of fixpoint equations that consist only of
$\mu$-equations or $\nu$-equations can be solved by a single fixpoint
iteration on $L^m$, where $m$ is the number of
equations~\cite{v:lectures-mu-calculus}. Similarly, equations with
indices $i,i+1$ where $\eta_i = \eta_{i+1}$ can be merged. This
results in an equation system where subsequent equations alternate
between $\mu$ and $\nu$. (Notice that this transformation means that
the equations are over $L^j$ instead of $L$, but this can be easily
adapted in our setting.)

Note also that the runtime might be substantially improved by finding
a good strategy for computing the progress measure, as spelled out
in~\cite{j:progress-measures-parity}, in the same way as efficient
ways can be found for implementing the worklist algorithm in program
analysis.

\section{Model-Checking Latticed $\mu$-Calculi}
\label{sec:applications-latticed}

As explained earlier, model-checking for $\mu$-calculus
formulae can be reduced to solving fixpoint equations over the
powerset lattice $2^{\mathbb{S}}$ where $\mathbb{S}$ is the state space of the system
under consideration.
A state $x\in \mathbb{S}$ can either satisfy or not satisfy a formula,
meaning it either belongs to the solution or not. However, there are
also multi-valued logics for modelling uncertainty, disagreement or
relative importance in which it is natural to have ``non-binary''
truth values (see,
e.g.,~\cite{kl:latticed-simulation,ekn:bisim-heyting,GLLS:DNMC,Fitting91}). Such
a setting, as detailed later, can also be used to model and verify
conditional (or featured) transition systems with upgrades.
Here we discuss latticed $\mu$-calculi, inspired by the work cited
above, and discuss a corresponding model checking procedure.

A lattice of truth values $L$ is
fixed, which is typically finite. and then formulae are evaluated over
the lattice $L^{\mathbb{S}}$, endowed with the pointwise order.
Also transitions are associated with an element in the
lattice of truth values.

\begin{definition}[multi-valued transition system]
  A \emph{multi-valued transition system} over $L$ is a function
  $R \colon \mathbb{S} \times \mathbb{S}\to L$, where $\mathbb{S}$ is
  the set of states.
\end{definition}

Since $L$ can be non-boolean, multi-valued modal logics express forms
of negation or implication by relying on \emph{residuation} or
\emph{relative pseudo-complement} operation which is well defined for
all complete lattices $L$.

\begin{definition}[residuation]
  Let $L$ be a lattice. Given $l, m \in L$, we define
  $(l\Rightarrow m) = \bigsqcup\{ l'\in L \mid l\sqcap l' \sqsubseteq
  m\}$.
\end{definition}

Latticed $\mu$-calculi use atoms, conjunction, disjunction and
residuation. 
The modal operators $\Diamond$ and $\Box$ are interpreted as
follows. Given $u \in L^{\mathbb{S}}$ we define
$\Diamond u, \Box u \in L^{\mathbb{S}}$ as
\[
  (\Diamond u)(x) = \bigsqcup_{y\in \mathbb{S}} (R(x,y) \sqcap u(y))
  \qquad\qquad (\Box u)(x) = \bigsqcap_{y\in \mathbb{S}} (R(x,y) \Rightarrow
  u(y))
\]

The approach discussed in \S~\ref{ssec:mu-calc} for
model-checking the $\mu$-calculus can be easily adapted to this
setting. Instead of the powerset lattice we now have $L^{\mathbb{S}}$
and, as a basis $B_{L^{\mathbb{S}}}$ we can take the functions 
$b_x \in B_{L^{\mathbb{S}}}$, with $x \in \mathbb{S}$, $b\in B_L$,
defined by $b_x(x) = b$ and $b_x(y) = \bot$ for all $y\neq x$.

In order to perform the calculation of the progress measure
efficiently, we use symbolic $\exists$-moves as defined in
\S~\ref{sec:logic-selection}. Here we assume that $L$ is a finite
distributive lattice. In this case $\ll$ and $\sqsubseteq$ coincide.
Moreover, for finite distributive lattice it is is well-known from the
Birkhoff duality (see also~\cite{dp:lattices-order}) that every
element can be uniquely represented as the join of a downward-closed
set of join-irreducibles. Note that if $B_L$ is the set of
join-irreducibles in $L$, then the basis
$B_{L^\mathbb{S}} = \{b_x\mid x\in \mathbb{S},b\in B_L\}$, given above
is the set of join-irreducibles of ${L^\mathbb{S}}$.

\begin{proposition}[symbolic $\exists$-moves in latticed $\mu$-calculi]
  Let $L$ be a finite distributive lattice, let $B_L$ be the set of
  its join-irreducibles. The
  following are symbolic $\exists$-moves for the semantic functions:

  \begin{itemize}
  \item For
    $\sqcup \colon {L^\mathbb{S}}\times {L^\mathbb{S}}\to {L^\mathbb{S}}$, we let
    $\psi_{b_x}^\sqcup = [b_x,1]\lor [b_x,2]$.

  \item For
    $\sqcap \colon {L^\mathbb{S}}\times {L^\mathbb{S}}\to {L^\mathbb{S}}$, we let
    $\psi_{b_x}^\sqcap = [b_x,1]\land [b_x,2]$.
  \item For $l \Rightarrow \_ \colon {L^\mathbb{S}}\to {L^\mathbb{S}}$ (where $l\in L$ is fixed and seen as
    a constant function $\mathbb{S}\to L$), we let
    $\psi_{b_x}^\Rightarrow = \bigwedge \{ [b'_x,1] \mid b'\sqsubseteq l\ \land
    \ b'\sqsubseteq b \}$.

  \item For $\Diamond  \colon {L^\mathbb{S}}\to {L^\mathbb{S}}$ %
    we let $\psi_{b_x}^\Diamond = \bigvee \{ [b_y,1] \mid y\in Y\ \land
        b\sqsubseteq R(x,y)\}$
    
  \item For $\Box \colon {L^\mathbb{S}}\to {L^\mathbb{S}}$ we let
    $\psi_{b_x}^\Box = \bigwedge \{ [b'_y,1] \mid y\in Y\ \land\ b'\sqsubseteq R(x,y)\ \land  b'\sqsubseteq b\}$.
    
  \end{itemize}
\end{proposition}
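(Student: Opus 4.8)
The plan is to verify, for each of the five operators $\mathit{op}$, that the proposed family $(\psi_{b_x}^{\mathit{op}})_{b_x}$ is a symbolic $\exists$-move, i.e.\ that $\semlog{\psi_{b_x}^{\mathit{op}}} = \Emoves{b_x,\mathit{op}}$, which unfolds to: for every tuple $\vec{u}$ of functions in $L^{\mathbb{S}}$, $\vec{u} \in \semlog{\psi_{b_x}^{\mathit{op}}}$ iff $b_x \sqsubseteq \mathit{op}(\vec{u})$. Before touching the cases I would collect a few elementary facts about the finite distributive lattice $L$ and about $L^{\mathbb{S}}$ that render the five computations routine: (i) $\ll$ and $\sqsubseteq$ coincide on $L$, the join-irreducibles $B_L$ form a basis, and hence every $l \in L$ equals $\bigsqcup\{b' \in B_L \mid b' \sqsubseteq l\}$; (ii) by Birkhoff each join-irreducible of a finite distributive lattice is join-\emph{prime}, and since $L$ is finite (so each element is compact) this extends to joins over arbitrary index sets, giving $b \sqsubseteq \bigsqcup X$ iff $b \sqsubseteq x$ for some $x \in X$; (iii) $b_x \sqsubseteq \vec{u}$ iff $b \sqsubseteq u(x)$, because $b_x$ is $\bot$ outside $x$; (iv) $\sqcup$, $\sqcap$ and residuation by a constant function are computed componentwise on $L^{\mathbb{S}}$ (the residuation statement being the routine observation that $l \Rightarrow u = \bigsqcup\{v \mid l \sqcap v \sqsubseteq u\}$ is, componentwise, $x \mapsto l \Rightarrow u(x)$), while $\Diamond$ and $\Box$ are pointwise by definition.

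The one genuinely non-trivial ingredient is the behaviour of residuation on $L$, which I would isolate as a lemma: in a finite distributive lattice the set $\{l' \mid l \sqcap l' \sqsubseteq m\}$ is downward closed and closed under binary joins (here distributivity, $l \sqcap (l_1 \sqcup l_2) = (l \sqcap l_1) \sqcup (l \sqcap l_2)$, is used), hence it is a principal ideal whose top element $l \Rightarrow m$ belongs to it; consequently $b \sqsubseteq (l \Rightarrow m)$ iff $b \sqcap l \sqsubseteq m$, for every $b \in L$. Combining this with (i), $b \sqcap l \sqsubseteq m$ holds iff $b' \sqsubseteq m$ for all $b' \in B_L$ with $b' \sqsubseteq b$ and $b' \sqsubseteq l$ — precisely the shape of the conjunctions appearing in $\psi_{b_x}^\Rightarrow$ and $\psi_{b_x}^\Box$.

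With these facts in hand the five cases are short. For $\sqcup$: $(u_1 \sqcup u_2)(x) = u_1(x) \sqcup u_2(x)$, so by (iii) and join-primeness (ii), $b_x \sqsubseteq \vec{u}$ iff $b \sqsubseteq u_1(x)$ or $b \sqsubseteq u_2(x)$, i.e.\ $\vec{u} \in \semlog{\atom{b_x}{1}} \cup \semlog{\atom{b_x}{2}} = \semlog{\psi_{b_x}^\sqcup}$. For $\sqcap$: immediate from (iii) and componentwise meet, with no primeness needed. For $l \Rightarrow \_$: by (iv) and (iii), $b_x \sqsubseteq (l \Rightarrow \vec{u})$ iff $b \sqsubseteq (l \Rightarrow u(x))$ iff (residuation lemma) $b \sqcap l \sqsubseteq u(x)$ iff $b' \sqsubseteq u(x)$ for every $b' \in B_L$ with $b' \sqsubseteq l$ and $b' \sqsubseteq b$, which by (iii) is exactly $\vec{u} \in \semlog{\bigwedge\{\atom{b'_x}{1} \mid b' \sqsubseteq l,\ b' \sqsubseteq b\}}$. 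For $\Diamond$: $b_x \sqsubseteq \Diamond \vec{u}$ iff $b \sqsubseteq \bigsqcup_{y}(R(x,y) \sqcap u(y))$ iff, by (ii), there is $y$ with $b \sqsubseteq R(x,y)$ and $b \sqsubseteq u(y)$, i.e.\ $\vec{u} \in \semlog{\bigvee\{\atom{b_y}{1} \mid b \sqsubseteq R(x,y)\}}$. For $\Box$: $b_x \sqsubseteq \Box \vec{u}$ iff $b \sqsubseteq \bigsqcap_{y}(R(x,y) \Rightarrow u(y))$ iff for all $y$, $b \sqsubseteq (R(x,y) \Rightarrow u(y))$ iff (residuation lemma and (i)) for all $y$ and all $b' \in B_L$ with $b' \sqsubseteq b$, $b' \sqsubseteq R(x,y)$ one has $b' \sqsubseteq u(y)$, i.e.\ $\vec{u} \in \semlog{\bigwedge\{\atom{b'_y}{1} \mid b' \sqsubseteq R(x,y),\ b' \sqsubseteq b\}}$.

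The part I expect to require the most care is the residuation lemma together with the bookkeeping that turns "$b \sqcap l \sqsubseteq m$" into a conjunction over basis elements, and the point that all joins and meets entering $\Diamond$ and $\Box$ are controlled by the finiteness of $L$, so that ordinary join-primeness suffices and I never need a completely-join-prime property with respect to infinite joins; if $\mathbb{S}$ is allowed to be infinite I would simply remark that $\bigsqcup_{y}(R(x,y) \sqcap u(y))$ takes only finitely many distinct values in $L$, so the argument is unaffected (and the corresponding formula is an infinite disjunction, which is still a legal formula of $\mathcal{L}_m(B_L)$). Everything else is a direct manipulation of the semantics of $\mathcal{L}_m(B_L)$ against the componentwise structure of $L^{\mathbb{S}}$.
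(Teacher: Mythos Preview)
Your proposal is correct and follows essentially the same route as the paper's proof: both reduce each case to a chain of equivalences using join-primeness of irreducibles for $\sqcup$ and $\Diamond$, and the residuation identity $b \sqsubseteq (l \Rightarrow m) \iff b \sqcap l \sqsubseteq m$ (valid in distributive lattices) for $\Rightarrow$ and $\Box$. The paper only spells out the $\sqcup$ and $\Box$ cases and merely asserts the residuation property, whereas you treat all five cases and actually prove that $(l \Rightarrow m)$ lies in $\{l' \mid l \sqcap l' \sqsubseteq m\}$; your added remark about infinite $\mathbb{S}$ being harmless because $L$ is finite is also a point the paper leaves implicit.
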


\begin{proof}
  We will only consider two cases, since the remaining ones
  cases are analogous.

  \begin{itemize}
  \item \emph{$\sqcup$:} Let $u_1,u_2\in {L^\mathbb{S}}$. Since $b$ and
    hence $b_x$ are join-irreducibles, it holds that
    $b_x \sqsubseteq u_1\sqcup u_2$ iff $b_x \sqsubseteq u_1$
    or $b_x\sqsubseteq u_2$. Hence we can define
    \[ \Emoves{b_x,\sqcup} = \{(u_1,u)\mid b_x\sqsubseteq
    u_1,u\in{L^\mathbb{S}}\}\cup \{(u,u_2)\mid b_x\sqsubseteq
    u_2,u\in{L^\mathbb{S}}\} = \semlog{\psi_{b_x}}. \]
  \item \emph{$\Box$-operator:} Let $u\in {L^\mathbb{S}}$. It holds that
    \begin{eqnarray*}
      && b_x \sqsubseteq \Box u \\
      & \iff & b \sqsubseteq (\Box u)(x) = \bigsqcap_{y\in \mathbb{S}} (R(x,y)
      \Rightarrow u(y)) \\
      & \iff & \mbox{for all $y\in \mathbb{S}$: }
      b\sqsubseteq (R(x,y) \Rightarrow u(y)) \\
      & \iff & \mbox{ for all $y\in \mathbb{S}$: }
      b\sqcap R(x,y) \sqsubseteq u(y) \\
      & \iff & \mbox{ for all $y\in \mathbb{S}$, $b'\in B_L$ with
        $b'\sqsubseteq R(x,y)$,
        $b'\sqsubseteq b$: } b'\sqsubseteq u(y) \\
      & \iff & \mbox{ for all $y\in \mathbb{S}$, $b'\in B_L$ with
        $b'\sqsubseteq R(x,y)$, $b'\sqsubseteq b$: } b'_y\sqsubseteq u
    \end{eqnarray*}
    Note that we used that $(l\Rightarrow m)$ is the maximal element
    in the downward-closed set $\{ l'\in L \mid l\sqcap l' \sqsubseteq
    m\}$, which holds for distributive lattices.
    
    Hence can define
    $\Emoves{b_x,\Box} = \{u\in {L^\mathbb{S}}\mid b_x\sqsubseteq \Box u\} =
    \semlog{\psi_{b_x}}$.
  \end{itemize}
\end{proof}

Note that residuation is only monotone in the second argument and that
distributivity is essential for this definition of symbolic
$\exists$-moves. For instance, if $b$ is not a join-irreducible then
$b\sqsubseteq l_1\sqcup l_2$ is not equivalent to
$b\sqsubseteq l_1 \lor b\sqsubseteq l_2$.

\begin{example}[conditional transition systems with upgrades]
  An interesting special case are conditional transition systems with
  upgrades~\cite{bkks:cts-upgrades} for which a logic satisfying the
  Hennessy-Milner property has been studied in~\cite{p:modal-logic-cts}.
  This logic uses the operators given above, enriched with
  constants. This kind of systems extend the well-known featured
  transition systems for modelling software product
  lines~\cite{ccpshl:simulation-product-line-mc} by upgrades.

  Let $(P,\le)$ be a given partial order where $P$ is the set of
  products and $\le$ is the upgrade relation. If $p\le q$, it is
  possible to make an upgrade from $q$ to $p$ during the runtime of
  the system, i.e., $p$ is the more advanced product compared to
  $q$. We consider the lattice $L = (\mathcal{O}(P),\sqsubseteq)$,
  where $\mathcal{O}(P)$ is the set of all downward-closed subsets of
  $P$. (In fact the sets $\downarrow p$, for $p\in P$, where
  $\downarrow$ denotes downward-closure, are the join-irreducibles of
  $L$.) A transition system that compactly specifies the system
  behaviour for all possible products is given by
  $R\colon \mathbb{S}\times \mathbb{S}\to \mathcal{O}(P)$ where
  $p\in R(x,y)$ means that there exists a transition from $x$ to $y$
  if one is in possession of product $p$. More advanced products lead
  to additional transitions, due to the downward-closure. It is
  possible to spontaneously perform upgrades during runtime.

  Now one can study the latticed modal logic or latticed
  $\mu$-calculus arising in such a setting. Evaluating a formula
  $\phi$ yields a function
  $\sem{\phi} \colon \mathbb{S}\to \mathcal{O}(P)$ which intuitively
  gives us for every state those products on which $\phi$ holds
  (taking upgrades into account).

  The approach outlined in the first part of the section can be
  directly used for model checking the Hennessy-Milner logic on
  product lines. Note that, as it happens in this case, the lattice
  $L$ of truth values can have a considerable size and thus the
  availability of general approaches for handling latticed
  $\mu$-calculi can be of great help.
\end{example}

\section{Solving Fixpoint Equations over Infinite Lattices}
\label{sec:solving-fp-equations-smt}

We present some initial but promising results concerning the solution
of fixpoint equations in infinite lattices. We will mainly concentrate
on equations over the real interval $[0,1]$, as considered also
in~\cite{MS:MS} as a precursor to model-checking PCTL or probabilistic
$\mu$-calculi. We adapt our fixpoint game in a way that it can be
encoded into a finite first-order formula. If this formula is in a
decidable fragment -- such as linear arithmetic -- we can use an SMT
solver to determine its satisfiability. In this way one can either
check that a value is smaller or equal than the solution or even let
the SMT solver calculate the solution.

The starting observation is that the existence of a winning strategy
for  player $\exists$ in the game can be expressed as a
first-order formula with nested quantifiers (existential quantifiers
for the $\exists$ player, universal quantifiers for the $\forall$
player). However, the formula is in general of infinite size, since
plays are unbounded or even infinite. Starting from $(b,i)$ the
formula would be of the kind
$\exists \vec{l}_0 \in \Emoves{b,i}.\, \forall (b_0,i_0) \in
\Amoves{\vec{l}_0}.\ \exists \vec{l}_1 \in \Emoves{b_0,i_0}.\, \forall
(b_1,i_1) \in \Amoves{\vec{l}_0}.\, \ldots$ In order to make the
formula finite we need a stopping condition: if an equation index is
visited for the second time (without any higher index in between), we
make sure that the game can be continued if we jump back to the
previous occurrence of the index. For $\nu$-indices this simply
amounts to checking that the tuple's values seen at the two
occurrences are in the $\sqsubseteq$-relation. For $\mu$-indices the
situation is more complicated: ensuring that we can cycle on that
equation is not sufficient (since by continuing forever $\exists$
would lose) and thus we have to provide a proof that the values truly
decrease and that we will eventually reach $\bot$.
We will see that for lattices based on a well-founded order, it is
sufficient to require a strict inequality, for lattices which do not
enjoy this property (such as the real interval $[0,1]$), we have to
find a different condition. In fact, for the reals we will present a
condition which is correct, i.e., if the formula is satisfiable, we
know that the considered value is bounded by the solution. However,
this method is not always complete.
We will discuss the limitations of the approach and provide a
preliminary characterisation of functions for which we obtain
completeness.

We will first adapt our game (Definition~\ref{def:fp-game}) to a
modified version, which incorporates the stopping condition mentioned
above. We define a game parametrised over a predicate $\decr(v,b,l)$
which takes three lattice elements as parameters.

\begin{definition}[modified fixpoint game]
  \label{def:fp-game-modified}
  Let $L$ be a lattice and let
  $\vec{x} =_{\vec{\eta} } \vec{f}(\vec{x})$ be a system of equations
  over $L$ and let $\decr \subseteq L^3$ be a fixed predicate.
  The game has a state consisting of a vector $\vec{v} \in L^m$, whose
  entries are defined while playing the game, and the current index
  $j$. 

  The game starts on some $(v_i,i) \in L \times \interval{m}$, namely,
  initially $v_i$ is the only component of $\vec{v}$ which is set and
  the current index is set $j:=i$.
  Player $\forall$ chooses $b_i\ll v_i$. At a generic step:

  \begin{itemize}
  \item $\exists$ chooses $\vec{l}\in \Emoves{b_j,j}$
  \item $\forall$ chooses an index $k\in\interval{m}$ and set $j:=k$. Then:
    \begin{itemize}
    \item if the current index $j$ was already set to $k$ earlier in
      the play and no higher index has occurred in between, then there
      are two possibilities:
      \begin{itemize}
      \item if $\eta_k = \nu$ check whether $l_k\sqsubseteq v_k$. If
        yes, $\exists$ wins, otherwise $\forall$ wins.
      \item if $\eta_k = \mu$ check whether $l_k\sqsubseteq b_k$ and
        $\decr(v_k,b_k,l_k)$. If yes, $\exists$ wins, otherwise
        $\forall$ wins.
      \end{itemize}
    \item otherwise set $v_k := l_k$ and $\forall$ chooses
      $b_k\in B_L$ with $b_k\ll v_k$ (hence
      $(b_k,k)\in\Amoves{\vec{l}}$), and continue.
    \end{itemize}
  \end{itemize}
\end{definition}

We can imagine the game as being played on game trees as depicted
below for the case $m=2$.  In the first tree we start with index
$i = 1$ and in the second with $i = 2$ and -- depending on the choice
of $\forall$ -- we descend in the tree. Once we reach a leaf (i.e., a
node with a repeated index with no larger index in between) we can
stop and determine the winner of the game. Note that in the left-hand
tree we need one extra level of nodes, since we cannot yet stop at
the $1$-node on level~three, since there is a higher index ($2$) on
the path between this node and the root.

\begin{center}
  \parbox{3cm}{
    \xymatrix@C=8pt@R=0pt{
      & 1 \ar[ld] \ar[rd] & & \\
      1 & & 2 \ar[ld] \ar[rd]  & \\
      & 1 \ar[ld] \ar[rd] & & 2 \\
      1 & & 2 }}
  \qquad
  \parbox{3cm}{
    \xymatrix@C=8pt@R=0pt{
      & & 2 \ar[ld] \ar[rd] & \\
      & 1 \ar[ld] \ar[rd] & & 2 \\
      1 & & 2 } }
\end{center}

It is possible to show that every such tree is finite, which follows
from the fact that there are no infinite paths and that it is finitely
branching.

\begin{lemma}[modified plays are finite]
  The game of Definition~\ref{def:fp-game-modified} does not admit
  infinite plays.
\end{lemma}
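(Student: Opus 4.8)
The plan is to show that any play of the modified game (Definition~\ref{def:fp-game-modified}) corresponds to a finite path in one of the game trees, and that each such tree is finite by König's Lemma: it is finitely branching and has no infinite path. Since the tree is over the index set $\interval{m}$, finite branching is immediate from the fact that at each node player $\forall$ picks an index $k \in \interval{m}$ and there are only $m$ choices. So the whole argument reduces to showing there is no infinite path, i.e., that along any sequence of index choices $j_0, j_1, j_2, \ldots$ the stopping condition is triggered after finitely many steps.

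First I would make precise what a path in the tree records: a finite sequence of indices $j_0 j_1 \cdots j_t$ from $\interval{m}$, where the play stops as soon as some index $k$ reappears (i.e., $j_t = j_{t'}$ for some $t' < t$) with no strictly larger index occurring strictly between positions $t'$ and $t$. I would then argue by induction on $m$ (or, equivalently, on the largest index $h$ that occurs infinitely often along a hypothetical infinite path). Suppose toward a contradiction that an infinite path exists. Let $h$ be the largest index occurring infinitely often; after some finite prefix, every index on the path is $\le h$, and $h$ itself occurs infinitely often. Take the first position $t'$ after this prefix where $j_{t'} = h$ and the next position $t > t'$ where $j_t = h$. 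Between $t'$ and $t$ all indices are $\le h$, hence none is strictly larger than $h$; but then the stopping condition for index $h$ fires at position $t$, contradicting the assumption that the path is infinite. A cleaner packaging of the same idea: define, for a finite prefix ending with current index $j$, the set of indices "active since their last reset" and observe that this set, ordered by the usual well-order on $\nat$ restricted to $\interval{m}$, can only shrink or be reset in a way that forces a repetition; the truncated-lexicographic bookkeeping already used in Lemma~\ref{le:completeness} and Lemma~\ref{le:progress-strategy} gives the well-foundedness for free.

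The key observation making the argument work is exactly the one underlying parity-game winning conditions throughout the paper: whenever an index $k$ is revisited with no higher index in between, the play is forced to stop; and on any infinite sequence over a finite index set $\interval{m}$, such a configuration must eventually occur, because the largest index $h$ that occurs infinitely often will at some point be visited twice with only indices $\le h$ (hence none $> h$) in between. This is a purely combinatorial fact about infinite words over $\interval{m}$ and does not depend on the lattice, on $\decr$, or on the moves of $\exists$ — the lattice-valued data $(\vec{v}, b_j, \vec{l})$ only affects who wins at a leaf, not whether a leaf is reached.

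**Main obstacle.** There is no deep obstacle here; the only thing requiring care is the bookkeeping of "no higher index in between", i.e., correctly formalizing when a revisited index counts as a stopping leaf versus when the play must continue (as illustrated by the left-hand tree in the excerpt, where the level-three $1$-node is not yet a leaf because a $2$ lies above it on the path). I would state this as a small lemma about words over $\interval{m}$: every infinite word over $\interval{m}$ has a prefix $w \cdot k$ such that $k$ already occurs in $w$ and every letter of $w$ after that occurrence is $\le k$. Proving this lemma by induction on $m$ (peel off the maximal letter) is routine, and finiteness of each game tree then follows by König's Lemma, completing the proof.
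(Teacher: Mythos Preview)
Your proposal is correct and follows essentially the same argument as the paper: assume an infinite play, pass to the suffix containing only indices that occur infinitely often, let $h$ be the largest such index, and observe that once $h$ recurs in this suffix all intervening indices are $\le h$, so the stopping condition fires---a contradiction. The extra framing via K\"onig's Lemma and the word-combinatorics lemma is sound but more machinery than the paper uses; the paper dispatches the result in three sentences with the same ``largest infinitely-recurring index'' argument you give.
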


\begin{proof}
  Assume that there exists an infinite play, i.e., whenever an index
  is reached for the second time, there is always a higher index in
  between. Consider the suffix of the play which contains only indices
  that occur infinitely often. Assume that $k$ is the highest among
  those indices. Then the play will stop when $k$ is reached for the
  second time in the suffix, which is a contradiction.
\end{proof}

We will next show that a winning strategy in the modified game implies
a winning strategy in the original game. The basic idea is that we
follow the winning strategy in the modified game and once we have
reached a leaf, we ``jump'' back to the predecessor node with the same
index and continue to follow the strategy. A crucial point is to show
that the $\decr$ predicate ensures that there cannot exist an
infinite path where the highest index occurring infinitely often is a
$\mu$-index.

\begin{definition}[well-foundedness]
  \label{def:pred-decr}
  Let $\decr$ be a ternary predicate on $L$. We say that $\decr$ is
  well-founded if there exist no $v \in L$ and $b^m,l^m\in L$ for $m\in\mathbb{N}$
  such that
  \[ b^{m+1}\ll l^m\sqsubseteq b^m \ll v \] and
  $\decr(v,b^m,l^m)$ for all $m$.
\end{definition}

Intuitively we want to ensure that there is no infinite play
underneath a fixed starting value $v$.  Obviously, if the lattice
order $\sqsubset$ is well-founded one can define
$\decr(v,b,l) = (l\sqsubset b)$. (Or even $\mathit{true}$ if the
way-below relation $\ll$ should be well-founded.) For the real
interval $[0,1]$ we need a more sophisticated predicate,
whose
shape will be explained in more detail in
Lemma~\ref{lem:decrease-above-piecewise-linear}.

\begin{lemma}
  \label{lem:decr-wellfounded-reals}
  Let $a_i\in [0,1]$, $i\in \{0, \ldots, \ell\}$ be a finite set of
  real constants with $a_0=0 < a_1 < \dots < a_\ell$ and let
  $c\in [0,1]$.
  Given $v,b,l\in L$ we define that $\decr(v,b,l)$ holds if
  \begin{itemize}
  \item $l = a_i$ for some $i\in \{0, \ldots, \ell\}$ 
  \item \textit{or} $a_\ell \le b$ and $b-l\ge c\cdot (v-b)$ 
  \item \textit{or} $a_i \le b < a_{i+1}$ and
    $b-l\ge c\cdot (a_{i+1}-b)$ for some $i\in \{0, \ldots, \ell\}$
  \end{itemize}
  
  Then $\decr$ is a well-founded predicate.
\end{lemma}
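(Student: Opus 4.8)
The plan is to argue by contradiction. Suppose $\decr$ is \emph{not} well-founded, so by Definition~\ref{def:pred-decr} there are $v\in[0,1]$ and $b^m,l^m\in[0,1]$ (for $m\in\mathbb{N}$) with $b^{m+1}\ll l^m\sqsubseteq b^m\ll v$ and $\decr(v,b^m,l^m)$ for every $m$. On $[0,1]$ one has $x\ll y$ exactly when $x=0$ or $x<y$, and the $b^m$ arise as moves of $\forall$ in the game of Definition~\ref{def:fp-game-modified}, hence lie in the basis $B_{[0,1]}=\mathbb{Q}\cap(0,1]$, which excludes $\bot=0$; so I would first reduce to the case $b^m\neq 0$ for all $m$, which turns the chain into $b^{m+1}<l^m\le b^m<v$. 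Then $(b^m)_m$ is strictly decreasing, and so is $(l^m)_m$ since $l^{m+1}\le b^{m+1}<l^m$; both are bounded below by $0$, hence converge. Setting $\beta:=\inf_m b^m$, the squeeze $b^{m+1}<l^m\le b^m$ gives $l^m\to\beta$ as well, and $\beta<b^m<v$ yields $v-\beta>0$.

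The next step is to identify which of the three clauses of $\decr$ can hold at $b^m,l^m$ once $m$ is large. The first clause, $l^m\in\{a_0,\dots,a_\ell\}$, holds for only finitely many $m$, because $(l^m)_m$ is injective and the set $\{a_0,\dots,a_\ell\}$ is finite; I discard those indices. For the remaining ones I split on the position of $\beta$ relative to the breakpoints (recall $a_0=0\le\beta$): either $\beta\ge a_\ell$, or $a_p\le\beta<a_{p+1}$ for a unique $p\in\{0,\dots,\ell-1\}$. In the first case $b^m>\beta\ge a_\ell$ for all $m$, so the third clause — which would need $b^m<a_{i+1}\le a_\ell$ — cannot apply, and hence for all large $m$ the second clause must hold: $b^m-l^m\ge c\,(v-b^m)$. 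In the second case, since $b^m\to\beta<a_{p+1}$ while $b^m>\beta\ge a_p$, eventually $a_p\le b^m<a_{p+1}$; then the second clause fails ($b^m<a_{p+1}\le a_\ell$), and a short check rules out every index of the third clause except $i=p$ (for $i<p$ it would need $b^m<a_{i+1}$, impossible since $a_{i+1}\le a_p\le\beta<b^m$; for $i>p$ it would need $a_i\le b^m$, impossible since $a_i\ge a_{p+1}>b^m$), so for all large $m$ we obtain $b^m-l^m\ge c\,(a_{p+1}-b^m)$.

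Finally I would pass to the limit $m\to\infty$ in whichever inequality survives. In both cases the left-hand side $b^m-l^m\to\beta-\beta=0$, whereas the right-hand side tends to $c\,(v-\beta)$ in the first case and to $c\,(a_{p+1}-\beta)$ in the second, both strictly positive — here one uses $c>0$ together with $v-\beta>0$ and $\beta<a_{p+1}$. This contradicts the surviving inequality for large $m$, which completes the proof. I expect the only genuinely delicate point to be the case analysis of the middle paragraph: after the finitely many indices at which the first clause fires have been removed, one must verify that in each regime of $\beta$ exactly one of the second/third clauses can fire and that its right-hand constant stays bounded away from $0$ in the limit. The remaining ingredients — monotone convergence of $(b^m)_m$, the squeeze for $(l^m)_m$, and finiteness of the first-clause index set — are routine. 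It is also worth flagging explicitly that $\decr$-chains with $b^m=\bot$ must be excluded (which is automatic in the applications, since $\forall$ plays in the basis), and that the limiting step genuinely relies on $c>0$.
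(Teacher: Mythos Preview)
Your proof is correct and takes a genuinely different route from the paper's. Both argue by contradiction and both first pass to a suffix where the first clause of $\decr$ no longer fires. From there, however, the paper sets up an explicit inductive bound $a - l^m \ge (a-b^n)\cdot\bigl(1+(m-n+1)\,c\bigr)$ and concludes that $l^m$ eventually becomes negative, whereas you exploit monotone convergence: $(b^m)$ and $(l^m)$ tend to a common limit $\beta$, you locate $\beta$ relative to the breakpoints to pin down which clause survives, and then take the limit in $b^m-l^m \ge c\,(a-b^m)$ to get $0 \ge c\,(a-\beta) > 0$. Your convergence argument also makes explicit something the paper's proof glosses over with ``Hence there exists an index $n$\ldots'': that the relevant $a$ is eventually constant because a decreasing sequence can cross each breakpoint at most once. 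The paper's approach buys a quantitative rate; yours is shorter and more transparent. Your flagging of the two hidden hypotheses is apt as well: the paper's proof silently converts $\ll$ to $<$ in its first line (which needs $b^m\neq 0$) and its inductive bound fails to diverge when $c=0$, so both gaps you identify are present there too.
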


\begin{proof}
  Assume, by contradiction, that there exist $v,b^m,l^m\in L$,
  $m\in\mathbb{N}_0$ such that $b^{m+1} < l^m \le b^m< v$ and
  $\decr(v,b^m,l^m)$ for all $m$.

  Since the sequence is infinite and strictly decreasing, it must have
  a suffix for which $b^m,l^m \neq a_i$ for all $i\in \{0, \ldots, \ell\}$.
  Hence there exists an index $n$ such that for all $m\ge n$ we have
  $b^m-l^m \ge c\cdot (a-b^m)$ where either $a=v$ or $a=a_i$ for
  some $i\in\interval{m}$. Furthermore $b^n < a$.

  We show by induction on $m$ that for these indices
  \[ a - l^m \ge (a-b^n)\cdot (1+(m-n+1)\cdot c) \]
  \begin{itemize}
  \item $m=n$: we know that
    $b^n-l^n \ge c\cdot (a-b^n)$. By
    rearranging we obtain $l^n \le b^n - c\cdot (a-b^n)$. We
    subtract both sides of the inequality from $a$ and get
    $a- l^n \ge a - b^n + c\cdot (a-b^n) = (a-b^n)\cdot
    (1+c)$.
  \item $m-1\to m$: We have
    \begin{eqnarray*}
      a-l^m & = & a - l^{m-1} + l^{m-1} - l^m \\
      & \ge & (a-b^n)\cdot (1+(m-n)\cdot c) + l^{m-1} - l^m \\
      & > & (a-b^n)\cdot (1+(m-n)\cdot c) + b^m - l^m \\
      & \ge & (a-b^n)\cdot (1+(m-n)\cdot c) + c\cdot (a-b^m) \\
      & \ge & (a-b^n)\cdot (1+(m-n)\cdot c) + c\cdot (a-b^n) \\
      & = & (a-b^n)\cdot (1+(m-n+1)\cdot c)
    \end{eqnarray*}
    where the first inequality ($\ge$) is due to the induction
    hypothesis, the second inequality ($>$) holds since
    $l^{m-1} > b^m$, the third inequality ($\ge$) holds because of
    the $\decr$-constraint and the fourth inequality ($\ge$) is
    satisfied since $b^m \le b^n$.
  \end{itemize}
  This implies that
  $l^m \le a - (a-b^n)\cdot (1+(m-n+1)\cdot c)$ for all
  $m$. Since $a-b^n > 0$, this is a contradiction, since the
  right-hand side of the inequality will eventually be negative.
\end{proof}

Naturally, one has to determine suitable constants $a_i,c$. These can
either be derived in some way from the given functions or one can
existentially quantify over the constants.
Note that it is sound to let $\decr$ hold for $l=a_0=0$, since
$\exists$ automatically wins in this case.

We can now show the correctness of the modified game, provided that
$\decr$ is well-founded.

\begin{proposition}[correctness of the modified game]
  \label{prop:modified-game-sound}
  Let $E$ be a system of equations over $L$ of the kind
  $\vec{x} =_{\vec{\eta}} \vec{f}(\vec{x})$ with solution
  $\vec{u} \in L^m$ and let $\decr$ be a well-founded
  predicate.
  Then the modified game is correct: for all
  $(v_i,i) \in L \times \interval{m}$, if $\exists$ has a winning
  strategy in a play starting from $(v_i,i)$ then $v_i \sqsubseteq u_i$.
\end{proposition}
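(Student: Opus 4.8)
The plan is to reduce a winning strategy in the modified game to a winning strategy in the original game of Definition~\ref{def:fp-game}, and then invoke correctness of the original game (Lemma~\ref{le:correctness} via Theorem~\ref{th:game-corr-comp}) to conclude $v_i \sqsubseteq u_i$. So suppose $\exists$ has a winning strategy $\varsigma$ in the modified game starting from $(v_i,i)$. I would construct from $\varsigma$ a strategy $\varsigma'$ for $\exists$ in the original game starting from $(b_i,i)$ for every $b_i \ll v_i$: $\exists$ follows $\varsigma$, and whenever the modified play reaches a leaf (a repeated index $k$ with no higher index in between), instead of stopping, $\exists$ ``jumps back'' in the original game to the position where that occurrence of $k$ was first recorded (i.e.\ where $v_k$ was set) and continues playing $\varsigma$ from there. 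The key bookkeeping point is that at a $\nu$-leaf the modified game checked $l_k \sqsubseteq v_k$, so the original game position $\vec{l}$ at the leaf dominates (in component $k$) the earlier recorded vector, and at a $\mu$-leaf it checked $l_k \sqsubseteq b_k$ together with $\decr(v_k,b_k,l_k)$; in both cases the jump is legal because $\forall$'s move at the jump target, namely choosing $b_k \ll v_k$, is still available (we have $b_k \ll v_k$ recorded, and $l_k \sqsubseteq b_k$ at a $\mu$-leaf, or $l_k \sqsubseteq v_k$ at a $\nu$-leaf, which suffices since $\exists$'s previous move $\vec{l}$ witnessed $b_k \ll v_k$). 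Thus $\varsigma'$ always has a legal move, so every play of the original game under $\varsigma'$ either is finite with $\forall$ stuck (so $\exists$ wins) or is infinite.

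For an infinite play under $\varsigma'$, let $h$ be the highest index occurring infinitely often; I must show $\eta_h = \nu$. The play decomposes into the finite game-tree segments of the modified game stitched together at the jumps. Consider the tail of the play where only indices occurring infinitely often appear. If $h = k$ were a $\mu$-index, then along this tail the index $k$ recurs infinitely often with no higher index in between consecutive occurrences, so infinitely many of these occurrences are modified-game leaves at which $\decr(v,b^{(t)},l^{(t)})$ held, where $v = v_k$ is the value recorded at the (fixed, by the tail choice) jump target for $k$. Moreover between the $t$-th and $(t{+}1)$-th such occurrence $\exists$ plays along the tree from that jump target, so the recorded/chosen elements satisfy $b^{(t+1)} \ll l^{(t)} \sqsubseteq b^{(t)} \ll v$: indeed $l^{(t)} \sqsubseteq b^{(t)}$ is the $\mu$-leaf check, $b^{(t)} \ll v$ holds because $b^{(t)}$ was chosen by $\forall$ with $b^{(t)} \ll v_k = v$, and $b^{(t+1)} \ll l^{(t)}$ because after the jump $\exists$ replays from the target with $v_k$ still $=v$ but the successor moves now go through $l^{(t)}$... here one must be a little careful that the chain is anchored at a single $v$; this is guaranteed precisely because we pass to the tail where the jump target for $h$ is fixed. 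This yields exactly an infinite sequence of the shape forbidden by well-foundedness of $\decr$ (Definition~\ref{def:pred-decr}), a contradiction. Hence $\eta_h = \nu$ and $\exists$ wins the infinite play.

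Therefore $\varsigma'$ is a winning strategy for $\exists$ in the original game from $(b_i,i)$, for every $b_i \ll v_i$. By Theorem~\ref{th:game-corr-comp} (correctness of the original game on the continuous lattice $L$), this gives $b_i \sqsubseteq u_i$ for all $b_i \ll v_i$. Since $L$ is continuous, $v_i = \bigsqcup \cone{v_i}$, and so $v_i = \bigsqcup \{ b_i \mid b_i \ll v_i \} \sqsubseteq u_i$, as required.

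\medskip

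\noindent\emph{Main obstacle.} The delicate part is the infinite-play argument: one has to track, across the jumps, that the elements $v$, $b^{(t)}$, $l^{(t)}$ witnessing recurrence of the $\mu$-index $h$ form exactly a chain of the form $b^{(t+1)} \ll l^{(t)} \sqsubseteq b^{(t)} \ll v$ with a \emph{single} fixed $v$. This requires passing to a suitable tail of the play so that the jump target associated with index $h$ no longer changes, and then carefully reading off from the modified-game rules (the $\mu$-leaf check $l_k \sqsubseteq b_k$, the recorded $b_k \ll v_k$, and the way $\exists$ replays from the target) that the successive witnesses nest correctly. Getting this chain precisely into the form demanded by Definition~\ref{def:pred-decr} — rather than something slightly weaker — is where the proof has to be done with care; everything else is routine strategy-transfer bookkeeping plus one appeal to continuity at the end.
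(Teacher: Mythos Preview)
Your proposal is correct and follows essentially the same approach as the paper's proof: you transfer $\exists$'s winning strategy from the modified game to the original game via the jump-back construction, argue that infinite plays cannot have a $\mu$-index as the highest infinitely-often index by invoking well-foundedness of $\decr$ on the resulting chain $b^{(t+1)}\ll l^{(t)}\sqsubseteq b^{(t)}\ll v$, and conclude via Theorem~\ref{th:game-corr-comp} and continuity of $L$. The paper's proof makes exactly the same moves, including the passage to a tail where no index larger than $h$ occurs (so that $v_h$ stays fixed) and the observation that the new $b_k$ after a restart is $\forall$'s move $b\ll l_k$ in the original game, giving $b\ll l_k\sqsubseteq v_k$.
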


\begin{proof}
  We show that whenever $\exists$ has a winning strategy in the
  modified game for $(v_i,i)$, then $\exists$ has a winning strategy
  in the original game for all $(b_i,i)$ with $b_i\ll v_i$. This
  implies $b_i\sqsubseteq u_i$ and finally
  $v_i = \bigsqcup_{b_i\ll v_i} b_i \sqsubseteq u_i$.

  Let $b_i\ll v_i$. We start the game with $(v_i,i)$ and assume
  that $\forall$ chooses $b_i$ in the first step. Then $\exists$
  follows her winning strategy in the modified game until she reaches
  a leaf, i.e., an index $k$ which already appeared earlier in the
  game and no higher index has occurred in between.  At this point
  $\forall$ will choose $b\ll l_k$ in the original game. Note that
  $l_k\sqsubseteq v_k$ since $\exists$ wins the game: in the case of a
  $\nu$-index this follows directly and in the case of a $\mu$-index
  we have $l_k\sqsubseteq b_k\ll v_k$.

  We will now restart the game after the prefix of the play which ends
  at the first occurrence of the index $k$. We keep the value $v_k$
  but set $b_k := b$. This is a valid choice since $b\ll
  l_k\sqsubseteq v_k$ and hence $b\ll v_k$. 

  Note that $\exists$ always has an available move, since she can move
  in the modified game. It is left to show that she can win infinite
  games: assume that we have an infinite run where the highest index
  that occurs infinitely often is $k$ with $\eta_k = \mu$. Consider
  the run from the point onwards where we do not visit any indices
  $\ell>k$ any more. Then we will eventually find a $k$-index (either
  seeing it for the first time or via a restart). The next occurrence
  of $k$ will be in a restart situation (since the condition that
  there is no higher index in between is automatically satisfied). In
  this case we will verify $\decr(v_k,b_k,l_k)$ for the current values
  $b_k,l_k$, which will be denoted $b^0,l^0$ ($v_k$ is always left
  unchanged). This continues and we obtain lattice elements $b^m,l^m$,
  $m\in\mathbb{N}_0$ where $b^{m+1}\ll l^m$ (since the choice of
  $\forall$ is restricted accordingly) and $l^m\sqsubseteq b^m$ (since
  we check this condition at the restart for each $\mu$-index). All
  are way-below $v_k$. Furthermore $\decr(v_k,b^m,l^m)$ holds for all
  these values. But this is a contradiction to the fact that the
  predicate $\decr$ is well-founded.
\end{proof}

\begin{example}
  \label{ex:smt-solver-discontinuous}
  As an example, before discussing completeness,  consider the
  monotone, but discontinuous function 
  $f\colon [0,1]\to [0,1]$ defined by:
  \[ f(x) = \left\{
      \begin{array}{ll}
        \frac{1}{4} + \frac{1}{2}x & \mbox{if $0\le x<\frac{1}{2}$} \\
        \frac{3}{8} + \frac{1}{2}x & \mbox{if $\frac{1}{2}\le x\le 1$}
      \end{array}
    \right.
  \]
  The graph of the function looks as shown in
  Figure~\ref{fig:function-graph}. The dashed diagonal intersects the
  graph at the position of the only fixpoint.

  We are interested in computing the least fixpoint, i.e., we consider
  the equation $x =_\mu f(x)$. We set $c=1$, $a_1 = 0$,
  $a_2 = \frac{1}{2}$ (the discontinuity point of $f$) and consider
  the corresponding $\decr$-constraint. The basis contains all
  elements of $[0,1]$, apart from $0$. Then we can easily encode the
  modified game in the SMT-LIB format, see
  Figure~\ref{fig:smt-formula}, which shows the relevant part (the rest
  is the definition of the functions \texttt{f}, \texttt{decrease} and
  the declaration of the constant \texttt{v}). Note that SMT-LIB uses
  a prefix notation. We define a predicate \texttt{win-game} which
  encodes the fact that $\exists$ win the modified game for a value \texttt{v}
  by simply spelling out the definition. Then we require that the game
  can be won for \texttt{v} and that \texttt{v} is the largest such
  value.

  \begin{figure}
    \begin{subfigure}[c]{0.28\textwidth}
      \centering
      \scalebox{0.9}{
      \begin{tikzpicture}[scale=3]
        \draw[->, semithick] (0,0) -- (1.1,0) node[right] {$x$};
        
        \foreach \x in {1,...,3} \draw (0.25*\x,0.05) -- +(0,-0.1)
        node[below] {\x/4};

        \draw (0,0.05) -- +(0,-0.1) node[below] {0}; \draw (1,0.05) --
        +(0,-0.1) node[below] {1};

        \draw[->, semithick] (0,0) -- (0,1.1) node[above] {$y$};
        \foreach \x in {1,...,3} \draw (-0.05,0.25*\x) -- +(0.1,0)
        node[left=0.2cm] {\x/4};

        \draw (0.05,0) -- +(-0.1,0) node[left] {0}; \draw (0.05,1) --
        +(-0.1,0) node[left] {1};

        \draw[domain=0:0.5, thick] plot[id=discon0, samples=50]
        function{0.25+0.5*x};

        \draw[domain=0.5:1, thick] plot[id=discon1, samples=50]
        function{0.375+0.5*x};

        \draw [fill=white, semithick] (0.5,0.5) circle (0.03);
        
        \filldraw (0.5,0.625) circle (0.03);

        \draw[domain=0:1, thick, dashed] plot[id=diagonal, samples=50]
        function{x};
      \end{tikzpicture}
    }
      \caption{}
      \label{fig:function-graph}
    \end{subfigure}
    \qquad\quad
    \begin{subfigure}[c]{0.62\textwidth}
      \centering
      {\tt\scriptsize
        \begin{mytab}
          ; Predicate encoding the game \\
          (define-fun win-game ((v Real)) Bool \\
          (forall ((b Real)) \> \> \> \> \> \> \> \> \> \> \> \>
          ; forall chooses b \\
          \> (=> (and (< 0.0 b) (< b v)) \> \> \> \> \> \> \> \> \> \>
          \> ; with 0 < b < v \\
          \> \> (exists ((l Real)) \> \> \> \> \> \> \> \> \> \>
          ; exists chooses l \\
          \> \> \> (and (<= 0.0 l) (<= l 1.0) \> \> \> \> \> \> \> \>
          \> ; with 0 <= l <=
          1 \\
          \> \> \> \> (>= (f l) b) \> \> \> \> \> \> \> \> ; with f(l) >=
          b \\
          \> \> \> \> (<= l b) (decrease v b l)))))) \> \> \> \> \> \>
          \> \> ; and we decrease \\
          \\
          ; Specify that we can win the game for v \\
          (assert (win-game v)) \\
          \\
          ; v is the greatest value for which one can win the game \\
          (assert (forall ((w Real)) \\
          \> (=> (and (<= 0.0 w) (<= w 1.0) (win-game w)) \\
          \> \> (<= w v)))) \\
          (check-sat) \\
          (get-model) 
        \end{mytab}
      }
      \caption{SMT formula encoding the modified game}
      \label{fig:smt-formula}
    \end{subfigure}
    \caption{}
  \end{figure}

  By running the SMT solver \texttt{cvc4}, we obtain
  $\frac{3}{4}$. Since we only showed correctness we can only
  guarantee that the value found is smaller or equal than the true
  solution. However, in this case $\frac{3}{4}$ is the true
  solution,
  and this is not by chance. In fact, we will discuss
    sufficient conditions that ensure completeness that cover also
    this specific example.

  We have also run successful experiments with the SMT solver
  \texttt{z3} involving equation systems and non-linear (quadratic)
  equations, where it is less obvious to compute fixpoints.
\end{example}

It is also possible to encode the solution of a fixpoint equation
systems into SMT solvers in a more direct way (see~\cite{MS:MS} for a
more detailed explanation). In the above example one would simply
search for a fixpoint (which can be determined by solving linear
equations) such that all other fixpoints are larger or equal. While
the direct encoding is reasonably straightforward, it has been shown
in~\cite{MS:MS} that due to the nesting of equations the encoding will
be of a size exponential in the number of equations. This can be also
the case in our setting (due to the growth of the trees depicted
above), however if every function $f_i$ depends only on few
parameters
(preferably the $i$-th parameter and one other),
then the game trees can be of linear size and we obtain also formulae
of linear size. To our knowledge, such an efficiency gain cannot be
achieved in the direct encoding.

\smallskip

We will now discuss the issue of completeness. We will first prove
that it holds when the lattice order is well-ordered,
hence well-founded and total. This is for instance the case for the
lattice of integers (enriched with a top element).

\begin{proposition}[completeness on well-orders]
  \label{prop:modified-game-complete}
  Let $(L,\sqsubseteq)$ be a lattice where $\sqsubset$ is
  a well-order and define the $\decr$-predicate
  $\decr(v,b,l) = (l\sqsubset b)$. Furthermore assume that the
  solution is reached in at most $\omega$ steps, i.e., all entries in
  $\ord{\vec{u}}$ for the solution $\vec{u}$ are at most $\omega$.
  Then the modified game is complete in the following sense: $\exists$
  has a winning strategy for $(u_i,i)$ for every $i\in\interval{m}$.
\end{proposition}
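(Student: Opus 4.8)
The plan is to produce, for each $i\in\interval{m}$, a winning strategy for $\exists$ in the modified game of Definition~\ref{def:fp-game-modified} started at $(u_i,i)$. As a preliminary I would observe that with $\decr(v,b,l)=(l\sqsubset b)$ the predicate $\decr$ is well-founded in the sense of Definition~\ref{def:pred-decr}: from $b^{m+1}\ll l^m\sqsubseteq b^m\ll v$ and $l^m\sqsubset b^m$ one gets $b^{m+1}\sqsubseteq l^m\sqsubset b^m$, hence an infinite $\sqsubset$-descending chain $b^0\sqsupset b^1\sqsupset\cdots$, impossible since $\sqsubset$ is a well-order. So Proposition~\ref{prop:modified-game-sound} already yields correctness, and the strategy we build will be consistent with it ($v_i=u_i\sqsubseteq u_i$). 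The strategy keeps a ``current'' $\mu$-approximant $\vec{l}$, initialised to the solution $\vec{u}$ — a $\mu$-approximant by Lemma~\ref{le:solution-is-approximant}, and a legal first move since $b_i\ll u_i$ gives $b_i\sqsubseteq u_i=f_i(\vec{u})$, i.e.\ $\vec{u}\in\Emoves{b_i,i}$.

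When $\forall$ plays $(b_k,k)\in\Amoves{\vec{l}}$, $\exists$ updates $\vec{l}$ essentially as in the proof of Lemma~\ref{le:descend-mu}, with one crucial refinement when $\eta_k=\mu$ (so $l_k=f_{k,\vec{l}}^\alpha(\bot)$): instead of an arbitrary $\beta<\alpha$, she takes $\beta$ \emph{minimal} with $b_k\sqsubseteq f_{k,\vec{l}}^{\beta+1}(\bot)$ and sets the new $k$-th component to $f_{k,\vec{l}}^\beta(\bot)$, computing the remaining components as in the lemma. Such a $\beta$ exists and is $<\alpha$ (immediately if $\alpha$ is a successor; otherwise from $b_k\ll f_{k,\vec{l}}^\alpha(\bot)=\bigsqcup_{\gamma<\alpha}f_{k,\vec{l}}^\gamma(\bot)$, a directed join); the new tuple lies in $\Emoves{b_k,k}$ because $b_k\sqsubseteq f_{k,\vec{l}}^{\beta+1}(\bot)=f_k(\vec{l}')$; and by Lemma~\ref{le:approx-update} it is a $\mu$-approximant, with the associated ordinal vector non-increasing w.r.t.\ $\preceq_k$ and strictly decreasing when $\eta_k=\mu$. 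The point of choosing $\beta$ minimal is that at a $\mu$-leaf the game checks $l_k\sqsubseteq b_k$ together with $\decr(v_k,b_k,l_k)$, that is $l_k\sqsubset b_k$; minimality forces $b_k\not\sqsubseteq f_{k,\vec{l}}^\beta(\bot)$ whenever $\beta$ is $0$ or a successor, so — since $\sqsubseteq$ is total — $f_{k,\vec{l}}^\beta(\bot)\sqsubset b_k$.

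The endgame analysis is then standard. Player $\exists$ always has a move, and since the modified game has no infinite plays (shown above) every play ends, either with $\forall$ stuck (a win for $\exists$) or at a leaf, i.e.\ at a repeated index $k$ with no larger index in between. The structural key is that between the two occurrences of $k$ player $\forall$ plays only indices $j<k$, and every update step on such a $j$ — as well as the single update step on index $k$ at the first occurrence — leaves all components $>k$ of the approximant unchanged (and, if $\eta_k=\nu$, also component $k$, by the shape of the update in Lemma~\ref{le:descend-mu}). Hence the component $k$ tested at the leaf is exactly the one produced by that one update on $k$: if $\eta_k=\nu$ it equals the recorded value $v_k$, so $l_k\sqsubseteq v_k$ holds; if $\eta_k=\mu$ it equals $f_{k,\vec{l}}^\beta(\bot)\sqsubset b_k$. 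In either case the leaf check succeeds, so $\exists$ wins the play — and therefore has a winning strategy from $(u_i,i)$.

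The step I expect to be the main obstacle is justifying that, whenever $\forall$ picks a $\mu$-index $k$, the ordinal $\beta$ selected above is $0$ or a successor — equivalently, that the iteration $(f_{k,\vec{l}}^\gamma(\bot))_\gamma$ first reaches a value covering $b_k$ at a non-limit stage. This is guaranteed when the current approximant $\vec{l}$ has $k$-th ordinal entry at most $\omega$: that holds for $\vec{u}$ by hypothesis, a $\forall$-move on an index $<k$ leaves it unchanged, and a move on $k$ replaces it by the $\beta$ just chosen (which is finite, hence a successor or $0$). The delicate case is a $\forall$-move on an index $>k$, which recomputes component $k$ as a component of the solution of a subsystem of $E$ whose higher variables have been instantiated with non-solution values; here one has to argue the resulting closure ordinal still does not exceed $\omega$. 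This is automatic for the lattices motivating the result — e.g.\ $\mathbb{N}$ with a top element adjoined, where every monotone self-map converges within $\omega$ — so I would either record this as part of the hypothesis (reading ``reached in at most $\omega$ steps'' as applying to all Kleene iterations that arise) or, as a fallback, take $B_L$ to consist of compact elements, in which case the leaf argument survives even a limit $\beta$: then $b_k\ll b_k$ together with $b_k=f_{k,\vec{l}}^\beta(\bot)=\bigsqcup_{\gamma<\beta}f_{k,\vec{l}}^\gamma(\bot)$ would contradict minimality of $\beta$, so $f_{k,\vec{l}}^\beta(\bot)\sqsubset b_k$ once more.
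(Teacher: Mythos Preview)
Your approach matches the paper's exactly: maintain a $\mu$-approximant initialised to $\vec{u}$, descend as in Lemma~\ref{le:descend-mu} but at $\mu$-indices choose the \emph{minimal} iteration ordinal so that totality of the order forces $l_k\sqsubset b_k$, and then use that between two occurrences of an index $k$ only strictly smaller indices are visited, so the $k$-th component of the approximant is unchanged after its single update and the leaf check succeeds.

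The obstacle you isolate in your final paragraph is genuine, and the paper's own proof does not handle it any more carefully than you do. The paper simply asserts ``Since $v_j = f_{j,\vec{v}}^\alpha(\bot)$ for some $\alpha\le\omega$'' without arguing why $\alpha\le\omega$ should persist for the currently maintained $\mu$-approximant (as opposed to just for $\vec{u}$). Your diagnosis --- that an update on an index above $k$ recomputes component $k$ as the solution of an instantiated subsystem, whose closure ordinal the stated hypothesis does not constrain --- pinpoints exactly where that assertion needs support. So you have not introduced a gap; you have made explicit one the paper glosses over, and both of your proposed repairs (read the hypothesis as bounding all the Kleene closures that arise, or restrict $B_L$ to compact elements) are sensible ways to close it.
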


\begin{proof}
  Since the modified game starts with $v_i = u_i$ we can always assume
  that $v_i$ is the component of a $\mu$-approximant $\vec{v}$. This
  means that we can follow the winning strategy of $\exists$ in the
  original game, described in Lemmas~\ref{le:descend-mu}
  and~\ref{le:completeness} by descending along the
  $\mu$-approximants.

  We only make a slight modification in the choice of the new
  $\mu$-approximant whenever the current game index $j$ satisfies
  $\eta_j = \mu$. Assume that $v_j$ is fixed and that $v_j$ is a
  component of a $\mu$-approximant $\vec{v}$. Furthermore
  $b_j \ll v_j$. Take the least ordinal such that
  $b_j\sqsubseteq f_{j,\vec{v}}^\gamma(\bot)$. Since
  $v_j = f_{j,\vec{v}}^\alpha(\bot)$ for some $\alpha \le \omega$ it
  holds that $\gamma < \omega$. Hence either $\gamma = 0$ and
  $b_j = \bot$ (which cannot occur since $\bot\not\in B_L$) or
  $\gamma = \delta+1$ is a successor ordinal. We define
  $l_j = f_{j,\vec{v}}^\delta(\bot)$ and we can show with the same
  arguments as in Lemma~\ref{le:descend-mu} that one can define a new
  $\mu$-approximant $\vec{l} = (l_1,\dots,l_j,v_{j+1},\dots,v_m)$
  where
  $(l_1, \ldots, l_{j-1}) =
  \sol{\subst{\subst{E}{\subvec{x}{j+1}{m}}{\subvec{v}{j+1}{m}}}{x_j}{l_j}}$
  which satisfies $\vec{l}\in\Emoves{b_j,j}$ and
  $\ord{\vec{v}} \succ_j \ord{\vec{l}}$. Furthermore
  $l_j\sqsubset b_j$, due to the fact that $\gamma$ is minimal and the
  order is total.

  Now assume that we have reached a leaf in the game tree with index
  $k$, i.e., there is an earlier occurrence of $k$ with no larger
  index in between, and let $v_k,b_k,l_k\in L$ the lattice elements
  which are recorded in the game. Furthermore $v_k$ is a component of
  a $\mu$-approximant $\vec{v}$ and $l_k$ is a component of another
  $\mu$-approximant $\vec{l}$. Since all indices $j$ that occur
  between the two occurrences of $k$ satisfy $j < k$ and the
  subsequent $\mu$-approximants are ordered by $\succeq_j$, we have
  $\vec{v} \succeq_k \vec{l}$ (if $\eta_k = \mu$ the equality is
  strict).

  In particular the construction of new $\mu$-approximants is such
  that all components with $i > k$ are unchanged: $v_i = l_i$. If
  $\eta_k=\nu$ also $v_k = l_k$ and hence $l_k\sqsubseteq v_k$.  If
  $\eta_k=\mu$ we only modify the $k$-the component of the
  $\mu$-approximant in the first step and it is left unchanged
  afterwards.  The adapted construction of the $\mu$-approximant
  explained above ensures $l_k\sqsubset b_k$.  Hence $\exists$ can
  make sure that the $\decr$-predicate is satisfied and wins the
  modified game.
\end{proof}

Note that it does not hold that $\exists$ has a winning strategy for
$(v_i,i)$ for every $v_i\sqsubseteq u_i$. The requirement that $v_i$ is a component of a $\mu$-approximant is important. Consider
for instance the three-element lattice $L = \{\bot,a,\top\}$ and a
monotone function $f\colon L\to L$ with $f(\bot)=f(a)=\bot$,
$f(\top)=\top$. This function is monotone and has $\mu f=\bot$,
$\nu f=\top$. We consider the equation $x =_\nu f(x)$ and start
playing with $v_1 = a$, which is below the greatest fixpoint. If
$\forall$ chooses $b_1=a\ll a$, there does not exist a value $l_1$
with $f(l_1) \sqsupseteq b_1$ and $l_1\sqsubseteq b_1$. This is
connected to the fact that every post-fixpoint is below the greatest
fixpoint, but not every lattice element below the greatest fixpoint is
a post-fixpoint.

It would of course be desirable to extend
Proposition~\ref{prop:modified-game-complete} in such a way that it
also covers fixpoint iteration beyond $\omega$ and non-well-founded
lattice orders. Going beyond $\omega$ for well-founded orders requires
the introduction of ``special'' lattice elements $a_i$ as in the
$\decr$-predicate in order to cover the discontinuity
points.
However, this could be more complex for arbitrary lattices than for
the reals, since there we know that $b\sqsubseteq v$ ($b \le v$) and
$b\not\ll v$ ($b\not< v$) imply $b=v$.

On the other hand, handling non-well-founded lattices orders would
require an adaptation where, given $b_j\ll v_j$ and $v_j$ is below the
least fixpoint, we are always able to choose $\vec{l}$ such that
$l_j \sqsubseteq b_j$ and $\decr(b_j,v_j,l_j)$. We will characterise a class of
functions on the real interval $[0,1]$ for which this is the case and
which also explains the shape of the $\decr$-condition of
Lemma~\ref{lem:decr-wellfounded-reals}.

\begin{lemma}[completeness for piecewise linear dominated functions]
  \label{lem:decrease-above-piecewise-linear}
  Fix real numbers $a_0=0<a_1<\dots<a_\ell < a_{\ell+1}\le 1$,
  $0<p_i<1$, $0\le q_i\le 1$, $i\in\interval{\ell+1}$ be given. Let
  $g_i(x)=p_ix+q_i$ and assume that $g_i(a_i)=a_i$ for all
  $i\in\interval{\ell+1}$. Let $g\colon [0,1]\to [0,1]$ be a function
  with
  \[ g(x) = \left\{
      \begin{array}{ll}
        g_i(x) & \mbox{if $a_{i-1}\le x< a_i$} \\
        g_{\ell+1}(x) & \mbox{if $a_\ell\le x$} 
      \end{array}\right. . \]
  Define the $\decr$ predicate as in
  Lemma~\ref{lem:decr-wellfounded-reals} with
  $c = \min_i \frac{1-p_i}{p_i}$ and with the values
  $a_0,\dots,a_\ell$.

  Then the modified game for the  single equation
  $x =_\mu g(x)$ is complete, i.e., $\exists$ has a winning strategy
  for all $v\le \mu f$.
  The same holds for any monotone function $f\colon [0,1]\to[0,1]$
  with $f\ge g$ and $\mu f = a_{\ell+1}$.
\end{lemma}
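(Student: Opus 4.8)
\medskip

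The plan is to exhibit an explicit winning strategy for $\exists$ in the modified game starting from any $v \le \mu g$ (respectively $v \le \mu f = a_{\ell+1}$), and to check that every leaf reached by a play conforms to this strategy satisfies the winning condition for $\exists$, i.e.\ $l_1 \sqsubseteq b_1$ and $\decr(v, b_1, l_1)$. Since the system consists of a single $\mu$-equation, no alternation is involved; by Lemma~\ref{lem:decr-wellfounded-reals} the predicate $\decr$ is well-founded, so all plays are finite and it suffices to control what happens at the (unique kind of) leaf. The crucial quantitative fact to extract first is the following: each affine piece $g_i(x) = p_i x + q_i$ has its fixpoint at $a_i$ and slope $p_i \in (0,1)$, so for $x$ in the relevant region $g_i(x) - a_i = p_i (x - a_i)$, i.e.\ one application of $g_i$ contracts the distance to $a_i$ by the factor $p_i$; equivalently, to move from a target $b$ (lying between two consecutive special values, or above $a_\ell$) down to a pre-image $l$ with $g_i(l) \ge b$, one can take $l = (b - q_i)/p_i = a_i + (b - a_i)/p_i$, which overshoots below $b$ by exactly $(b-a_i)\cdot \frac{1-p_i}{p_i} \ge c \cdot (b - a_i)$. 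This last inequality, with $c = \min_i \frac{1-p_i}{p_i}$, is precisely the numeric content of the $\decr$-predicate of Lemma~\ref{lem:decr-wellfounded-reals}.

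\medskip

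Concretely, I would describe the strategy for $\exists$ as follows. When the current recorded value is $v$ and $\forall$ has chosen $b \ll v$ (i.e.\ $b < v$, or $b = 0$, but $b \ne 0$ since $\bot \notin B_L$), locate the half-open interval $[a_{i-1}, a_i)$ containing $b$ (or the ray $[a_\ell, 1]$ if $b \ge a_\ell$), and let $\exists$ play $l = a_i + (b - a_i)/p_i$ where $g_i$ is the affine piece governing that region — clamping to $l = 0$ if this value would be negative, which can only help $\exists$ since $\decr(v,b,0)$ holds by the first clause of the predicate. One then checks: (i) $g(l) \ge b$, so $l \in \Emoves{b,1}$ (this uses $g \ge g_i$ on the region, or $f \ge g$ in the dominated case, together with the affine computation); (ii) $l \le b$, using $p_i < 1$; and (iii) $\decr(v, b, l)$ holds, by the case analysis in Lemma~\ref{lem:decr-wellfounded-reals}: if $a_\ell \le b$ then $b - l = (b-a_i)\frac{1-p_i}{p_i} \ge c(v-b)$ provided $b \ge a_i$ and $v \le a_{\ell+1}$ — here one must verify $b - a_i \ge v - b$ or absorb the comparison into $c$; if $a_j \le b < a_{j+1}$ then $b - l \ge c(a_{j+1} - b)$ by the same estimate. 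Since the play has a single equation index, the first repetition of that index is immediately a leaf with no higher index in between, so conditions (ii) and (iii) are exactly what is needed for $\exists$ to win.

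\medskip

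The main obstacle, and the part needing the most care, is item (iii) in the boundary cases: the $\decr$-predicate compares $b - l$ against $c\cdot(v - b)$ or $c\cdot(a_{j+1} - b)$, but the natural estimate from contracting $g_i$ gives $b - l$ in terms of $b - a_i$, where $a_i$ is the fixpoint of the \emph{local} piece. One must argue that $v - b \le b - a_i$ (or route the argument through the appropriate $a_{j+1}$), exploiting that $v \le a_{\ell+1}$ and that $b$ lies above the local fixpoint $a_i$ whenever $\exists$ is descending towards $\mu g$; this is where the hypothesis $g_i(a_i) = a_i$ for \emph{every} piece, and the monotone interleaving of the $a_i$'s, is really used, and it is the step I would write out in full detail. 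The dominated case $f \ge g$, $\mu f = a_{\ell+1}$ then follows with essentially no extra work: $\exists$ plays exactly the same $l$, and $f(l) \ge g(l) \ge b$ gives the legality of the move, while the $\decr$-check is unchanged since it does not mention $f$ at all. Finally I would note, as a sanity check matching Example~\ref{ex:smt-solver-discontinuous}, that this strategy, combined with the correctness direction (Proposition~\ref{prop:modified-game-sound}), pins down the solution exactly for such piecewise-linear-dominated equations.
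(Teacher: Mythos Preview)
Your outline is structurally the paper's proof: since there is a single $\mu$-equation the modified game ends after one round, so it suffices to exhibit, for every $b<v\le\mu f=a_{\ell+1}$, some $l$ with $f(l)\ge b$, $l\le b$ and $\decr(v,b,l)$; and your candidate $l=a_i+(b-a_i)/p_i=(b-q_i)/p_i$ is exactly the paper's. There are, however, two concrete problems. First, your clamping is to $0$ only when $l$ is negative, but $l$ can fall into $(0,a_{i-1})$---for instance with $a_0=0$, $a_1=\tfrac12$, $a_2=1$, $p_1=p_2=\tfrac12$ and $b=0.55$ one gets $l=0.1$---and then $g$ is governed by a different piece $g_j$ with $j<i$, giving $g(l)=g_j(l)<a_j\le a_{i-1}\le b$, so the move is illegal. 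Your appeal to ``$g\ge g_i$ on the region'' tacitly assumes $l\ge a_{i-1}$, which you never verify. The correct clamp is to $a_{i-1}$: then the first clause of $\decr$ fires (since $a_{i-1}$ is one of the listed constants), and the clamping condition $(b-q_i)/p_i<a_{i-1}$ itself rewrites to $b<g_i(a_{i-1})$, so $f(a_{i-1})\ge g_i(a_{i-1})>b$ gives legality.

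Second, the signs in your $\decr$ check are reversed. Since $b\in[a_{i-1},a_i)$ we always have $b<a_i$, so your proviso ``$b\ge a_i$'' is vacuous and the quantity $b-a_i$ you write is negative. The correct computation is $b-l=\frac{1-p_i}{p_i}(a_i-b)\ge c\,(a_i-b)$, and this is \emph{already} the third clause of $\decr$ (its ``$a_{j+1}$'' is your $a_i$), so for $i\le\ell$ there is nothing further to argue. For $i=\ell+1$ the second clause asks for $b-l\ge c(v-b)$, which follows at once from $a_{\ell+1}\ge v$ and hence $a_i-b\ge v-b$. The ``main obstacle'' you flag---comparing $v-b$ against $b-a_i$, and needing that $b$ lie above the local fixpoint---is thus an artefact of the sign error and dissolves once the inequality is oriented correctly.
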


\begin{proof}
  In this case the game is over after one iteration and we only have
  to show that $\exists$ can always find an answering move that
  ensures her win. In particular, we have to prove that for all
  $v,b\in[0,1]$ with $b < v \le \mu f$ there exists $l\in [0,1]$ such
  that $0\le l\le b$, $b\le f(l)$ and $\decr(v,b,l)$.
  
  First note that
  $g_{\ell+1}(\mu f) = g_{\ell+1}(a_{\ell+1}) = a_{\ell+1} = \mu f =
  f(\mu f)$ means that $g,f$ agree on the least fixpoint $\mu
  f$.
  
  Now assume that $a_{i-1} \le b < a_i$, where
  $i\in\interval{\ell+1}$. We have $f(b) \ge g_i(b)$. Define
  $l = \frac{b-q_i}{p_i}$ if $\frac{b-q_i}{p_i}\ge a_{i-1}$ otherwise
  $l = a_{i-1}$. We consider both cases:
  \begin{itemize}
  \item $l = \frac{b-q_i}{p_i}$: 
    \begin{itemize}
    \item It obviously holds that
      $f(l) \ge g_i(l) = p_i\cdot \frac{b-q_i}{p_i} + q_i = b$, since
      $a_{i-1}\le l\le a_i$ ($l\le b < a_i$ is shown below).
    \item Now we show that
      $b-l = \frac{1-p_i}{p_i}\cdot (a_i-b) \ge c\cdot (a_i-b)$.

      Since $g(a_i) = a_i$, we have $p_ia_i + q_i = a_i$, which
      implies $a_i = \frac{a_i-q_i}{p_i}$. Hence we get
      \begin{eqnarray*}
        b & = & a_i + (b-a_i) = \frac{a_i-q_i}{p_i} + (b-a_i) = 
        \frac{a_i-q_i}{p_i} + \frac{b-a_i}{p_i} - \frac{b-a_i}{p_i} +
        (b-a_i) \\
        & = & \frac{b-q_i}{p_i} + \frac{-(b-a_i)+p_i(b-a_i)}{p_i}
        = l+\frac{1-p_i}{p_i}\cdot (a_i-b)
      \end{eqnarray*}
      This immediately implies
      $b-l = \frac{1-p_i}{p_i}\cdot (a_i-b) \ge c\cdot (a_i-b)$.

      If $a_\ell < b < a_{\ell+1} = \mu f$ we can infer 
      $b-l\ge c\cdot (a_{\ell+1}-b) \ge c\cdot (v-b)$.
    \item Note that $b-l\ge c\cdot (a_i-b)$ implies $l\le b$ (since
      $b\le a_i$) and hence $0\le l\le b$.
    \end{itemize}
  \item $l = a_{i-1}$:
    \begin{itemize}
    \item It holds that $\frac{b-q_i}{p_i} < 0$ and hence $b <
      q_i$. This means that $f(l) = f(a_{i-1}) \ge g_i(a_{i-1}) =
      p_ia_{i-1} + q_i \ge q_i > b$, hence $f(l) \ge b$.
    \item The $\decr$-predicate is automatically satisfied since $l =
      a_{i-1}$. 
    \end{itemize}
  \end{itemize}
\end{proof}

Lemma~\ref{lem:decrease-above-piecewise-linear} states the following
condition: the function $f$ must be larger or equal than a piecewise
linear function where the pieces always end on the diagonal and the
least fixpoints of $f,g$ are equal. (See Figure~\ref{fig:fct-g-f}
where $g$ is drawn with a solid and $f$ with a dotted line.) The slope
of these piecewise linear functions can be arbitrarily close to $1$,
that is close to the diagonal. Since monotone functions on $[0,1]$ are
always above the diagonal (in a post-fixpoint) before they reach the
diagonal (the fixpoint), this is not a very strong restriction.

\begin{wrapfigure}{r}{0.3\textwidth}
  \centering
  \scalebox{0.8}{
  \begin{tikzpicture}[scale=3]
    \draw[->, semithick] (0,0) -- (1.1,0) node[right] {$x$};
    
    \draw (0,0.05) -- +(0,-0.1) node[below] {$a_0=0$};

    \draw (0.25,0.05) -- +(0,-0.1) node[below] {$a_1$};

    \draw (0.5,0.05) -- +(0,-0.1) node[below] {$a_2$};

    \draw (0.75,0.05) -- +(0,-0.1) node[below] {$a_3$};

    \draw (1,0.05) -- +(0,-0.1) node[below] {$a_4=1$};

    \draw[->, semithick] (0,0) -- (0,1.1) node[above] {$y$};
    
    \draw (0.05,0) -- +(-0.1,0) node[left] {0};

    \draw (0.05,1) -- +(-0.1,0) node[left] {1};

    \draw[domain=0:0.25, thick] plot[id=gfct0, samples=50]
    function{0.1875+0.25*x};

    \draw[domain=0.25:0.5, thick] plot[id=gfct1, samples=50]
    function{0.25+0.5*x};

    \draw[domain=0.5:1, thick] plot[id=gfct2, samples=50]
    function{0.25+0.6666*x};

    \draw [fill=white, semithick] (0.25,0.25) circle (0.03);
    
    \draw [fill=white, semithick] (0.5,0.5) circle (0.03);
    
    \filldraw (0.25,0.375) circle (0.03);

    \filldraw (0.5,0.5833) circle (0.03);

    \draw[domain=0:1, thick, dashed] plot[id=diagonal, samples=50]
    function{x};

    \draw [thick, dotted](0,0.22) .. controls (0.4,0.55) and
    (0.5,0.72) .. (0.75,0.75);

    \draw [thick, dotted](0.75,0.75) -- (1,0.77);
  \end{tikzpicture}
  }
  \caption{}
  \label{fig:fct-g-f}
\end{wrapfigure}
For instance, the function $f$ in
Example~\ref{ex:smt-solver-discontinuous} satisfies the requirements
of Lemma~\ref{lem:decrease-above-piecewise-linear} since it is itself
a piecewise linear function with slopes $p_1 = p_2 =
\frac{1}{2}$. This is why in the example the constant $c =
\frac{1-p_1}{p_1} = 1$ is sufficient.

Considering the case of multiple equations, note that if we could
guarantee that every function $f_{i,\vec{l}}$, which we use to
determine $\mu$-approximants (see Definition~\ref{de:approximants}) is
of this kind, then we could generalise the proof. In~\cite{MS:MS} it
is shown that the functions that arise from evaluating Lukasiewicz
$\mu$-terms are piecewise linear, hence they would in principle fit
our characterisation. On the other hand~\cite{MS:MS} also gives
examples (based on the strong Lukasiewicz operators) that solutions of
fixpoint expressions of continuous function need not be continuous
themselves. Hence, the functions $f_{i,\vec{l}}$ can be discontinuous
even if all $f_i$ are continuous. It is an open question to find the
discontinuity points $a_i$ that are required to define the
$\decr$-predicates and it is unclear whether there are always only
finitely many of them.  We believe that this task might be easier for
non-expansive functions, but we leave this as future work.

\section{Conclusion}
\label{sec:conclusions}

\paragraph{Related work} Our work is based on lattice theory and in
particular on continuous lattices. The use of lattices in program
analysis and verification has been pioneered by the
work~\cite{cc:ai-unified-lattice-model}. Continuous lattices, which
received this name due to their intimate connection with continuous
functions, have originally been studied by Scott as a semantic domain
for the $\lambda$-calculus~\cite{Scott:CL} and have since found many
further applications in the semantics of programming
languages~\cite{ghklms:continuous-lattices-domains,AJ:DT}.

The modal $\mu$-calculus is an expressive temporal logics, which
originated in an unpublished manuscript by Scott and de Bakker and was
further developed by Kozen~\cite{k:prop-mu-calculus}. For a good
overview see~\cite{bw:mu-calculus-modcheck}.

Its introduction posed the problem of efficient model-checking, which
involves the solution of nested fixpoint equations, see,
e.g.,~\cite{bcjlm:evaluation-fixpoint,s:fast-simple-nested-fixpoints,cks:faster-modcheck-mu}. The paper~\cite{cks:faster-modcheck-mu}
introduced the notion of a hierarchical system of fixpoint equations,
on which our paper is based as well. One way to tackle the
model-checking problem is to translate it into the question of finding
winning strategies for parity games. The latter were first described
in~\cite{ej:tree-automata-mu-determinacy}.

A very satisfying technique for solving parity games was proposed
in~\cite{j:progress-measures-parity} resulting in an algorithm which
is exponential only in half of the alternation depth. The approach
crucially relies on the notion of progress measure, that can be seen
as generalising both invariants and ranking functions. The complexity
of computing progress measures has recently been improved to
quasi-polynomial~\cite{CJKLS:DPGQPT}.

An extension to general lattices has been given
in~\cite{hsc:lattice-progress-measures}, which was very inspiring for
our development. Compared to~\cite{hsc:lattice-progress-measures} we
brought games back into the picture by introducing a game that
generalises both parity games and the unfolding games
in~\cite{v:lectures-mu-calculus}. This allowed us to define a notion
of progress measures which is closer to the original definition of
Jurdzi\'nski and, as such, admits a constructive characterisation as a
least fixed point. This works in the general context of continuous
lattices, providing a way of solving systems of fixpoint equations in
settings that are beyond powerset lattices and were not covered by
previous work. A related game for complete-prime algebraic lattices
was introduced in the appendix of
\cite{ktw:higher-order-hfl-modcheck-arxiv}, the full version of
\cite{ktw:higher-order-hfl-modcheck}.

We devised the notion of selection and a logics for specifying the moves
of the existential player, with the aim of making the computation of progress measures more efficient.
We view as a valuable contribution the identification of
continuous lattices as the right setting where these general
results can be stated.

Usually, $\mu$-calculus formulae are evaluated over the state space of
a transition system, i.e., over a powerset lattice. This changes if
the $\mu$-calculus is not a classical logic, but lattice-valued as in~\cite{kl:latticed-simulation} or real-valued as in~\cite{hk:quantitative-analysis-mc}, which presents an algorithm based
on the simplex method for the non-nested case. Solving equation
systems over the reals was considered in~\cite{gs:rational-equations-strategy-iteration} and in~\cite{MS:MS,ms:lukasiewicz-mu-arxiv}. In particular,~\cite{MS:MS} presents an
algorithm for solving nested fixpoint equation systems over the
interval $[0,1]$ by a direct algorithm which represents and
manipulates piecewise linear functions as conditioned linear
expressions. As far as we know this algorithm has not been
implemented. Our results can offer
an alternative way to solve such equation systems.

Games for quantitative or probabilistic $\mu$-calculi have been
studied in~\cite{mm:quantitative-mu,m:game-semantics-prob-mu}. As
opposed to our game, such games closely follow the structure of the
$\mu$-calculus formula on which the game is based (e.g., $\exists$
makes a choice at an $\lor$-node, $\forall$ at an $\land$-node). It is
an interesting question whether the conceptual simplicity of our game
can lead to a new perspective on existing games.

\smallskip

\paragraph{Future work} A parity game over a finite graph can be easily
converted into a system of boolean equations whose solution
characterises the winning positions for the players. Since our game is
a standard parity game, possibly played on an infinite graph, the
standard conversion would lead to infinitely many equations.  Systems
of equations of this kind are considered, e.g.,
in~\cite{m:modal-boolean-equation}.  An interesting question, still to
be investigated, is under which conditions an infinite parity game can
be converted into finitely many equations on an (infinite) powerset
lattice.

\smallskip

The generality of continuous lattices suggests the possibility of instantiating our framework in various other application
scenarios.

The use of our results for solving fixpoint equations over the
  reals via SMT solvers in \S~\ref{sec:solving-fp-equations-smt}
  appears to be promising, but it requires further investigation. In
  particular, we plan to deepen the issue of completeness for which we
  currently only have partial results (see
  \S~\ref{sec:solving-fp-equations-smt}).
 
\smallskip

We also plan to study fixpoint equations on the (non-distributive, but
continuous) lattices of equivalence relations and pseudo-metrics. As
explained in the introduction, the computation of fixpoints for
equivalence relations is essential for behavioural equivalences, and
the same holds for pseudo-metrics and behavioural distances~\cite{bw:behavioural-pseudometric}.

We would also like to determine whether we can handle quantitative
logics whose modalities interact with (lattice) truth values in a
non-trivial way, such as logics with discounted modalities as studied
in~\cite{abk:discouting-ltl}. Expressing such logics as systems of fixed point equations over suitable continuous lattices and thus obtaining a game theoretical characterisation of the model checking problem seems reasonably easy. However, turning such characterisation into an effective technique requires some non-trivial symbolic approach due to the fact that the lattice is
infinite.

\smallskip

Furthermore we would like to study situations in which local (or
on-the-fly) algorithms rather than global fixpoint iteration can be
used to check whether a lattice element is below the
solution. Examples of such local algorithms are backtracking methods
studied in \cite{ss:practical-modcheck-games,h:proving-up-to}. In
particular we are interested in the integration of local methods with
up-to techniques for general lattices, see for instance
\cite{bggp:sound-up-to-complete-abstract,ps:enhancements-coinductive,p:complete-lattices-up-to}.

\begin{acks}
  Research is partially supported by DFG project BEMEGA and University
  of Padova project ASTA.
  
  We want to thank the reviewers for their insightful and valuable
  comments. Furthermore we are grateful to Clemens Kupke for bringing
  fixpoint games to our attention.
\end{acks}

\bibliography{references}


\begin{thebibliography}{53}


\ifx \showCODEN    \undefined \def \showCODEN     #1{\unskip}     \fi
\ifx \showDOI      \undefined \def \showDOI       #1{#1}\fi
\ifx \showISBNx    \undefined \def \showISBNx     #1{\unskip}     \fi
\ifx \showISBNxiii \undefined \def \showISBNxiii  #1{\unskip}     \fi
\ifx \showISSN     \undefined \def \showISSN      #1{\unskip}     \fi
\ifx \showLCCN     \undefined \def \showLCCN      #1{\unskip}     \fi
\ifx \shownote     \undefined \def \shownote      #1{#1}          \fi
\ifx \showarticletitle \undefined \def \showarticletitle #1{#1}   \fi
\ifx \showURL      \undefined \def \showURL       {\relax}        \fi
\providecommand\bibfield[2]{#2}
\providecommand\bibinfo[2]{#2}
\providecommand\natexlab[1]{#1}
\providecommand\showeprint[2][]{arXiv:#2}

\bibitem[\protect\citeauthoryear{Abramsky and Jung}{Abramsky and Jung}{1994}]%
        {AJ:DT}
\bibfield{author}{\bibinfo{person}{Samson Abramsky} {and}
  \bibinfo{person}{Achim Jung}.} \bibinfo{year}{1994}\natexlab{}.
\newblock \showarticletitle{Domain Theory}.
\newblock In \bibinfo{booktitle}{\emph{Handbook of Logic in Computer Science}},
  \bibfield{editor}{\bibinfo{person}{Samson Abramsky}, \bibinfo{person}{Dov
  Gabbay}, {and} \bibinfo{person}{Thomas Stephen~Edward Maibaum}} (Eds.).
  \bibinfo{publisher}{Oxford University Press}, \bibinfo{pages}{1--168}.
\newblock


\bibitem[\protect\citeauthoryear{Almagor, Boker, and Kupferman}{Almagor
  et~al\mbox{.}}{2014}]%
        {abk:discouting-ltl}
\bibfield{author}{\bibinfo{person}{Shaull Almagor}, \bibinfo{person}{Udi
  Boker}, {and} \bibinfo{person}{Orna Kupferman}.}
  \bibinfo{year}{2014}\natexlab{}.
\newblock \showarticletitle{Discounting in {LTL}}. In
  \bibinfo{booktitle}{\emph{Proc. of TACAS '14}}
  \emph{(\bibinfo{series}{Lecture Notes in Computer Science})},
  Vol.~\bibinfo{volume}{8413}. \bibinfo{publisher}{Springer},
  \bibinfo{pages}{424--439}.
\newblock


\bibitem[\protect\citeauthoryear{Beohar, K\"onig, K\"upper, and Silva}{Beohar
  et~al\mbox{.}}{2017}]%
        {bkks:cts-upgrades}
\bibfield{author}{\bibinfo{person}{Harsh Beohar}, \bibinfo{person}{Barbara
  K\"onig}, \bibinfo{person}{Sebastian K\"upper}, {and}
  \bibinfo{person}{Alexandra Silva}.} \bibinfo{year}{2017}\natexlab{}.
\newblock \showarticletitle{Conditional Transition Systems with Upgrades}. In
  \bibinfo{booktitle}{\emph{Proc. of TASE '17}}. \bibinfo{publisher}{IEEE
  Xplore}, \bibinfo{pages}{1--8}.
\newblock


\bibitem[\protect\citeauthoryear{Bonchi, Ganty, Giacobazzi, and
  Pavlovic}{Bonchi et~al\mbox{.}}{2018}]%
        {bggp:sound-up-to-complete-abstract}
\bibfield{author}{\bibinfo{person}{Filippo Bonchi}, \bibinfo{person}{Pierre
  Ganty}, \bibinfo{person}{Roberto Giacobazzi}, {and} \bibinfo{person}{Dusko
  Pavlovic}.} \bibinfo{year}{2018}\natexlab{}.
\newblock \showarticletitle{Sound up-to techniques and Complete abstract
  domains}. In \bibinfo{booktitle}{\emph{Proc. of LICS '18}}.
  \bibinfo{publisher}{ACM}, \bibinfo{pages}{175--184}.
\newblock


\bibitem[\protect\citeauthoryear{Bradfield and Walukiewicz}{Bradfield and
  Walukiewicz}{2018}]%
        {bw:mu-calculus-modcheck}
\bibfield{author}{\bibinfo{person}{Julian Bradfield} {and}
  \bibinfo{person}{Igor Walukiewicz}.} \bibinfo{year}{2018}\natexlab{}.
\newblock \showarticletitle{The mu-Calculus and Model Checking}.
\newblock In \bibinfo{booktitle}{\emph{Handbook of Model Checking}},
  \bibfield{editor}{\bibinfo{person}{Edmund~M. Clarke},
  \bibinfo{person}{Thomas~A. Henzinger}, \bibinfo{person}{Helmut Veith}, {and}
  \bibinfo{person}{Roderick Bloem}} (Eds.). \bibinfo{publisher}{Springer},
  \bibinfo{pages}{871--919}.
\newblock


\bibitem[\protect\citeauthoryear{Browne, Clarke, Jha, Long, and Marrero}{Browne
  et~al\mbox{.}}{1997}]%
        {bcjlm:evaluation-fixpoint}
\bibfield{author}{\bibinfo{person}{Anca Browne}, \bibinfo{person}{Edmund~M.
  Clarke}, \bibinfo{person}{Somesh Jha}, \bibinfo{person}{David~E. Long}, {and}
  \bibinfo{person}{Wilfredo~R. Marrero}.} \bibinfo{year}{1997}\natexlab{}.
\newblock \showarticletitle{An improved algorithm for the evaluation of
  fixpoint expressions}.
\newblock \bibinfo{journal}{\emph{Theoretical Computer Science}}
  \bibinfo{volume}{178}, \bibinfo{number}{1--2} (\bibinfo{year}{1997}),
  \bibinfo{pages}{237--255}.
\newblock


\bibitem[\protect\citeauthoryear{Calude, Jain, Khoussainov, Li, and
  Stephan}{Calude et~al\mbox{.}}{2017}]%
        {CJKLS:DPGQPT}
\bibfield{author}{\bibinfo{person}{Cristian~S. Calude}, \bibinfo{person}{Sanjay
  Jain}, \bibinfo{person}{Bakhadyr Khoussainov}, \bibinfo{person}{Wei Li},
  {and} \bibinfo{person}{Frank Stephan}.} \bibinfo{year}{2017}\natexlab{}.
\newblock \showarticletitle{Deciding parity games in quasipolynomial time}. In
  \bibinfo{booktitle}{\emph{Proc. of STOC '17}}. \bibinfo{publisher}{{ACM}},
  \bibinfo{pages}{252--263}.
\newblock


\bibitem[\protect\citeauthoryear{Cleaveland}{Cleaveland}{1990}]%
        {Cle:TMCPMC}
\bibfield{author}{\bibinfo{person}{Rance Cleaveland}.}
  \bibinfo{year}{1990}\natexlab{}.
\newblock \showarticletitle{Tableau-based model checking in the propositional
  mu-calculus}.
\newblock \bibinfo{journal}{\emph{Acta Informatica}} \bibinfo{volume}{27},
  \bibinfo{number}{8} (\bibinfo{year}{1990}), \bibinfo{pages}{725--747}.
\newblock


\bibitem[\protect\citeauthoryear{Cleaveland, Klein, and Steffen}{Cleaveland
  et~al\mbox{.}}{1992}]%
        {cks:faster-modcheck-mu}
\bibfield{author}{\bibinfo{person}{Rance Cleaveland}, \bibinfo{person}{Marion
  Klein}, {and} \bibinfo{person}{Bernhard Steffen}.}
  \bibinfo{year}{1992}\natexlab{}.
\newblock \showarticletitle{Faster model checking for the modal Mu-Calculus}.
  In \bibinfo{booktitle}{\emph{Proc. of CAV 1992}}
  \emph{(\bibinfo{series}{Lecture Notes in Computer Science})},
  Vol.~\bibinfo{volume}{663}. \bibinfo{publisher}{Springer},
  \bibinfo{pages}{410--422}.
\newblock


\bibitem[\protect\citeauthoryear{Cordy, Classen, Perrouin, Schobbens, Heymans,
  and Legay}{Cordy et~al\mbox{.}}{2012}]%
        {ccpshl:simulation-product-line-mc}
\bibfield{author}{\bibinfo{person}{Maxime Cordy}, \bibinfo{person}{Andreas
  Classen}, \bibinfo{person}{Gilles Perrouin}, \bibinfo{person}{Pierre-Yves
  Schobbens}, \bibinfo{person}{Patrick Heymans}, {and} \bibinfo{person}{Axel
  Legay}.} \bibinfo{year}{2012}\natexlab{}.
\newblock \showarticletitle{Simulation-based abstractions for software
  product-line model checking}. In \bibinfo{booktitle}{\emph{Proc. of ICSE
  '12}}. \bibinfo{publisher}{IEEE}, \bibinfo{pages}{672--682}.
\newblock


\bibitem[\protect\citeauthoryear{Cousot and Cousot}{Cousot and Cousot}{1977}]%
        {cc:ai-unified-lattice-model}
\bibfield{author}{\bibinfo{person}{Patrick Cousot} {and}
  \bibinfo{person}{Radhia Cousot}.} \bibinfo{year}{1977}\natexlab{}.
\newblock \showarticletitle{Abstract Interpretation: A Unified Lattice Model
  for Static Analysis of Programs by Construction or Approximation of
  Fixpoints}. In \bibinfo{booktitle}{\emph{Proc. of POPL '77 (Los Angeles,
  California)}}. \bibinfo{publisher}{ACM}, \bibinfo{pages}{238--252}.
\newblock


\bibitem[\protect\citeauthoryear{Cousot and Cousot}{Cousot and Cousot}{1979}]%
        {CC:CCVTFP}
\bibfield{author}{\bibinfo{person}{Radhia Cousot} {and}
  \bibinfo{person}{Patrick Cousot}.} \bibinfo{year}{1979}\natexlab{}.
\newblock \showarticletitle{Constructive versions of Tarski's fixed point
  theorems.}
\newblock \bibinfo{journal}{\emph{Pacific J. Math.}} \bibinfo{volume}{82},
  \bibinfo{number}{1} (\bibinfo{year}{1979}), \bibinfo{pages}{43--57}.
\newblock


\bibitem[\protect\citeauthoryear{Davey and Priestley}{Davey and
  Priestley}{2002}]%
        {dp:lattices-order}
\bibfield{author}{\bibinfo{person}{Brian~A. Davey} {and}
  \bibinfo{person}{Hilary~A. Priestley}.} \bibinfo{year}{2002}\natexlab{}.
\newblock \bibinfo{booktitle}{\emph{Introduction to lattices and order}}.
\newblock \bibinfo{publisher}{Cambridge University Press}.
\newblock


\bibitem[\protect\citeauthoryear{Delzanno and Raskin}{Delzanno and
  Raskin}{2000}]%
        {DR:SRUCS}
\bibfield{author}{\bibinfo{person}{Giorgio Delzanno} {and}
  \bibinfo{person}{Jean{-}Fran{\c{c}}ois Raskin}.}
  \bibinfo{year}{2000}\natexlab{}.
\newblock \showarticletitle{Symbolic Representation of Upward-Closed Sets}. In
  \bibinfo{booktitle}{\emph{Proc. of TACAS '00}}
  \emph{(\bibinfo{series}{Lecture Notes in Computer Science})},
  \bibfield{editor}{\bibinfo{person}{Susanne Graf} {and}
  \bibinfo{person}{Michael~I. Schwartzbach}} (Eds.),
  Vol.~\bibinfo{volume}{1785}. \bibinfo{publisher}{Springer},
  \bibinfo{pages}{426--440}.
\newblock


\bibitem[\protect\citeauthoryear{Eleftheriou, Koutras, and Nomikos}{Eleftheriou
  et~al\mbox{.}}{2012}]%
        {ekn:bisim-heyting}
\bibfield{author}{\bibinfo{person}{Pantelis~E. Eleftheriou},
  \bibinfo{person}{Costas~D. Koutras}, {and} \bibinfo{person}{Christos
  Nomikos}.} \bibinfo{year}{2012}\natexlab{}.
\newblock \showarticletitle{Notions of bisimulation for {H}eyting-valued modal
  languages}.
\newblock \bibinfo{journal}{\emph{Journal of Logic and Computation}}
  \bibinfo{volume}{22} (\bibinfo{year}{2012}), \bibinfo{pages}{213--235}.
\newblock


\bibitem[\protect\citeauthoryear{Emerson}{Emerson}{1985}]%
        {Eme:ATTL}
\bibfield{author}{\bibinfo{person}{E.~Allen Emerson}.}
  \bibinfo{year}{1985}\natexlab{}.
\newblock \showarticletitle{Automata, tableaux, and temporal logics}. In
  \bibinfo{booktitle}{\emph{Proceedings of Logics of Programs 1985}}
  \emph{(\bibinfo{series}{Lecture Notes in Computer Science})},
  \bibfield{editor}{\bibinfo{person}{R.~Parikh}} (Ed.),
  Vol.~\bibinfo{volume}{193}. \bibinfo{publisher}{Springer},
  \bibinfo{pages}{79--88}.
\newblock


\bibitem[\protect\citeauthoryear{Emerson and Jutla}{Emerson and Jutla}{1991}]%
        {ej:tree-automata-mu-determinacy}
\bibfield{author}{\bibinfo{person}{E.~Allen Emerson} {and}
  \bibinfo{person}{Charanjit~S. Jutla}.} \bibinfo{year}{1991}\natexlab{}.
\newblock \showarticletitle{Tree automata, Mu-Calculus and determinacy}. In
  \bibinfo{booktitle}{\emph{Proc. of SFCS '91}}. \bibinfo{publisher}{IEEE},
  \bibinfo{pages}{368--377}.
\newblock


\bibitem[\protect\citeauthoryear{Fitting}{Fitting}{1991}]%
        {Fitting91}
\bibfield{author}{\bibinfo{person}{Melvin Fitting}.}
  \bibinfo{year}{1991}\natexlab{}.
\newblock \showarticletitle{Many-valued modal logics}.
\newblock \bibinfo{journal}{\emph{Fundamenta Informaticae}}
  \bibinfo{volume}{15} (\bibinfo{year}{1991}), \bibinfo{pages}{235--254}.
\newblock


\bibitem[\protect\citeauthoryear{Fontaine}{Fontaine}{2008}]%
        {f:continuous-fragment-mu}
\bibfield{author}{\bibinfo{person}{Ga{\"e}lle Fontaine}.}
  \bibinfo{year}{2008}\natexlab{}.
\newblock \showarticletitle{Continuous Fragment of the $\mu$-Calculus}. In
  \bibinfo{booktitle}{\emph{Proc. of CSL '08}} \emph{(\bibinfo{series}{Lecture
  Notes in Computer Science})}, Vol.~\bibinfo{volume}{5213}.
  \bibinfo{publisher}{Springer}, \bibinfo{pages}{139--153}.
\newblock


\bibitem[\protect\citeauthoryear{Gawlitza and Seidl}{Gawlitza and
  Seidl}{2011}]%
        {gs:rational-equations-strategy-iteration}
\bibfield{author}{\bibinfo{person}{Thomas~Martin Gawlitza} {and}
  \bibinfo{person}{Helmut Seidl}.} \bibinfo{year}{2011}\natexlab{}.
\newblock \showarticletitle{Solving systems of rational equations through
  strategy iteration}.
\newblock \bibinfo{journal}{\emph{ACM Trans.\ Program.\ Lang.\ Syst.}}
  \bibinfo{volume}{33}, \bibinfo{number}{3} (\bibinfo{year}{2011}),
  \bibinfo{pages}{11:1--11:48}.
\newblock


\bibitem[\protect\citeauthoryear{Gierz, Hofmann, Keimel, Lawson, Mislove, and
  Scott}{Gierz et~al\mbox{.}}{2003}]%
        {ghklms:continuous-lattices-domains}
\bibfield{author}{\bibinfo{person}{Gerhard Gierz}, \bibinfo{person}{Karl~H.
  Hofmann}, \bibinfo{person}{Klaus Keimel}, \bibinfo{person}{Jimmie~D. Lawson},
  \bibinfo{person}{Michael~W. Mislove}, {and} \bibinfo{person}{Dana~S. Scott}.}
  \bibinfo{year}{2003}\natexlab{}.
\newblock \bibinfo{booktitle}{\emph{Continuous Lattices and Domains}}.
\newblock \bibinfo{publisher}{Cambridge University Press}.
\newblock


\bibitem[\protect\citeauthoryear{Grumberg, Lange, Leucker, and Shoham}{Grumberg
  et~al\mbox{.}}{2005}]%
        {GLLS:DNMC}
\bibfield{author}{\bibinfo{person}{Orna Grumberg}, \bibinfo{person}{Martin
  Lange}, \bibinfo{person}{Martin Leucker}, {and} \bibinfo{person}{Sharon
  Shoham}.} \bibinfo{year}{2005}\natexlab{}.
\newblock \showarticletitle{Don't Know in the $\mu$-Calculus}. In
  \bibinfo{booktitle}{\emph{Proc. of VMCAI '05}}
  \emph{(\bibinfo{series}{Lecture Notes in Computer Science})},
  \bibfield{editor}{\bibinfo{person}{Radhia Cousot}} (Ed.),
  Vol.~\bibinfo{volume}{3385}. \bibinfo{publisher}{Springer},
  \bibinfo{pages}{233--249}.
\newblock


\bibitem[\protect\citeauthoryear{Hansen, Kupke, Marti, and Venema}{Hansen
  et~al\mbox{.}}{2017}]%
        {hkmv:parity-games-automata-game-logic}
\bibfield{author}{\bibinfo{person}{Helle~Hvid Hansen}, \bibinfo{person}{Clemens
  Kupke}, \bibinfo{person}{Johannes Marti}, {and} \bibinfo{person}{Yde
  Venema}.} \bibinfo{year}{2017}\natexlab{}.
\newblock \showarticletitle{Parity Games and Automata for Game Logic}. In
  \bibinfo{booktitle}{\emph{Proc. of DALI '17}} \emph{(\bibinfo{series}{Lecture
  Notes in Computer Science})}, Vol.~\bibinfo{volume}{10669}.
  \bibinfo{publisher}{Springer}, \bibinfo{pages}{115--132}.
\newblock


\bibitem[\protect\citeauthoryear{Hasuo, Shimizu, and C\^{\i}rstea}{Hasuo
  et~al\mbox{.}}{2016}]%
        {hsc:lattice-progress-measures}
\bibfield{author}{\bibinfo{person}{Ichiro Hasuo}, \bibinfo{person}{Shunsuke
  Shimizu}, {and} \bibinfo{person}{Corina C\^{\i}rstea}.}
  \bibinfo{year}{2016}\natexlab{}.
\newblock \showarticletitle{Lattice-theoretic progress measures and coalgebraic
  model checking}. In \bibinfo{booktitle}{\emph{Proc. of POPL '16}}.
  \bibinfo{publisher}{ACM}, \bibinfo{pages}{718--732}.
\newblock


\bibitem[\protect\citeauthoryear{Hirschkoff}{Hirschkoff}{1998}]%
        {h:proving-up-to}
\bibfield{author}{\bibinfo{person}{Daniel Hirschkoff}.}
  \bibinfo{year}{1998}\natexlab{}.
\newblock \showarticletitle{Automatically Proving Up To Bisimulation}. In
  \bibinfo{booktitle}{\emph{Proc. of MFCS '98 Workshop on Concurrency}}
  \emph{(\bibinfo{series}{Electronic Notes in Theoretical Computer Science})}.
  \bibinfo{publisher}{Elsevier}, \bibinfo{pages}{75--89}.
\newblock


\bibitem[\protect\citeauthoryear{Huth and Kwiatkowska}{Huth and
  Kwiatkowska}{1997}]%
        {hk:quantitative-analysis-mc}
\bibfield{author}{\bibinfo{person}{Michael Huth} {and} \bibinfo{person}{Marta
  Kwiatkowska}.} \bibinfo{year}{1997}\natexlab{}.
\newblock \showarticletitle{Quantitative analysis and model checking}. In
  \bibinfo{booktitle}{\emph{Proc. of LICS '97}}. \bibinfo{publisher}{IEEE},
  \bibinfo{pages}{111--122}.
\newblock


\bibitem[\protect\citeauthoryear{Jurdzi\'{n}ski}{Jurdzi\'{n}ski}{2000}]%
        {j:progress-measures-parity}
\bibfield{author}{\bibinfo{person}{Marcin Jurdzi\'{n}ski}.}
  \bibinfo{year}{2000}\natexlab{}.
\newblock \showarticletitle{Small Progress Measures for Solving Parity Games}.
  In \bibinfo{booktitle}{\emph{Proc. of STACS '00}}
  \emph{(\bibinfo{series}{Lecture Notes in Computer Science})},
  Vol.~\bibinfo{volume}{1770}. \bibinfo{publisher}{Springer},
  \bibinfo{pages}{290--301}.
\newblock


\bibitem[\protect\citeauthoryear{Kobayashi, Tsukada, and Watanabe}{Kobayashi
  et~al\mbox{.}}{2018a}]%
        {ktw:higher-order-hfl-modcheck-arxiv}
\bibfield{author}{\bibinfo{person}{Naoki Kobayashi}, \bibinfo{person}{Takeshi
  Tsukada}, {and} \bibinfo{person}{Keiichi Watanabe}.}
  \bibinfo{year}{2018}\natexlab{a}.
\newblock \bibinfo{title}{Higher-Order Program Verification via HFL Model
  Checking}.
\newblock   (\bibinfo{year}{2018}).
\newblock
\urldef\tempurl%
\url{https://arxiv.org/abs/1710.08614}
\showURL{%
\tempurl}
\newblock
\shownote{arXiv:1710.08614.}


\bibitem[\protect\citeauthoryear{Kobayashi, Tsukada, and Watanabe}{Kobayashi
  et~al\mbox{.}}{2018b}]%
        {ktw:higher-order-hfl-modcheck}
\bibfield{author}{\bibinfo{person}{Naoki Kobayashi}, \bibinfo{person}{Takeshi
  Tsukada}, {and} \bibinfo{person}{Keiichi Watanabe}.}
  \bibinfo{year}{2018}\natexlab{b}.
\newblock \showarticletitle{Higher-Order Program Verification via HFL Model
  Checking}. In \bibinfo{booktitle}{\emph{Proc. of ESOP '18}}.
  \bibinfo{publisher}{Springer}, \bibinfo{pages}{711--738}.
\newblock
\newblock
\shownote{{LNCS/ARCoSS} 10801.}


\bibitem[\protect\citeauthoryear{Kozen}{Kozen}{1983}]%
        {k:prop-mu-calculus}
\bibfield{author}{\bibinfo{person}{Dexter Kozen}.}
  \bibinfo{year}{1983}\natexlab{}.
\newblock \showarticletitle{Results on the Propositional $\mu$-Calculus}.
\newblock \bibinfo{journal}{\emph{Theoretical Computer Science}}
  \bibinfo{volume}{27}, \bibinfo{number}{3} (\bibinfo{year}{1983}),
  \bibinfo{pages}{333--354}.
\newblock


\bibitem[\protect\citeauthoryear{Kupfermann and Lustig}{Kupfermann and
  Lustig}{2007}]%
        {kl:latticed-simulation}
\bibfield{author}{\bibinfo{person}{Orna Kupfermann} {and} \bibinfo{person}{Yoad
  Lustig}.} \bibinfo{year}{2007}\natexlab{}.
\newblock \showarticletitle{Latticed Simulation Relations and Games}. In
  \bibinfo{booktitle}{\emph{Proc. of ATVA '07}} \emph{(\bibinfo{series}{Lecture
  Notes in Computer Science})}, Vol.~\bibinfo{volume}{4672}.
  \bibinfo{publisher}{Springer}, \bibinfo{pages}{316--330}.
\newblock


\bibitem[\protect\citeauthoryear{Mader}{Mader}{1997}]%
        {m:modal-boolean-equation}
\bibfield{author}{\bibinfo{person}{Angelika Mader}.}
  \bibinfo{year}{1997}\natexlab{}.
\newblock \emph{\bibinfo{title}{Verification of Modal Properties Using Boolean
  Equation Systems}}.
\newblock \bibinfo{thesistype}{Ph.D. Dissertation}. \bibinfo{school}{TU
  M{\"u}nchen}.
\newblock


\bibitem[\protect\citeauthoryear{Markowsky}{Markowsky}{1976}]%
        {Mar:CCP}
\bibfield{author}{\bibinfo{person}{George Markowsky}.}
  \bibinfo{year}{1976}\natexlab{}.
\newblock \showarticletitle{Chain-complete poscts and directed sets with
  spplications}.
\newblock \bibinfo{journal}{\emph{Algebra Universalis}}  \bibinfo{volume}{6}
  (\bibinfo{year}{1976}), \bibinfo{pages}{53--68}.
\newblock


\bibitem[\protect\citeauthoryear{McIver and Morgan}{McIver and Morgan}{2007}]%
        {mm:quantitative-mu}
\bibfield{author}{\bibinfo{person}{Annabelle McIver} {and}
  \bibinfo{person}{Carroll Morgan}.} \bibinfo{year}{2007}\natexlab{}.
\newblock \showarticletitle{Results on the quantitative $\mu$-calculus
  qM$\mu$}.
\newblock \bibinfo{journal}{\emph{ACM Trans.\ Comp.\ Log.}}
  \bibinfo{volume}{8}, \bibinfo{number}{1:3} (\bibinfo{year}{2007}), 43.
\newblock


\bibitem[\protect\citeauthoryear{Mio}{Mio}{2012}]%
        {m:game-semantics-prob-mu}
\bibfield{author}{\bibinfo{person}{Matteo Mio}.}
  \bibinfo{year}{2012}\natexlab{}.
\newblock \showarticletitle{On the Equivalence of Game and Denotational
  Semantics for the Probabilistic $\mu$-Calculus}.
\newblock \bibinfo{journal}{\emph{Logical Methods in Computer Science}}
  \bibinfo{volume}{8}, \bibinfo{number}{2:07} (\bibinfo{year}{2012}),
  \bibinfo{pages}{1--21}.
\newblock


\bibitem[\protect\citeauthoryear{Mio and Simpson}{Mio and Simpson}{2015}]%
        {ms:lukasiewicz-mu-arxiv}
\bibfield{author}{\bibinfo{person}{Matteo Mio} {and} \bibinfo{person}{Alex
  Simpson}.} \bibinfo{year}{2015}\natexlab{}.
\newblock \bibinfo{title}{{\L}ukasiewicz {\(\mu\)}-calculus}.
\newblock   (\bibinfo{year}{2015}).
\newblock
\urldef\tempurl%
\url{https://arxiv.org/abs/1510.00797}
\showURL{%
\tempurl}
\newblock
\shownote{arXiv:1510.00797.}


\bibitem[\protect\citeauthoryear{Mio and Simpson}{Mio and Simpson}{2017}]%
        {MS:MS}
\bibfield{author}{\bibinfo{person}{Matteo Mio} {and} \bibinfo{person}{Alex
  Simpson}.} \bibinfo{year}{2017}\natexlab{}.
\newblock \showarticletitle{{\L}ukasiewicz {\(\mu\)}-calculus}.
\newblock \bibinfo{journal}{\emph{Fundamenta Informaticae}}
  \bibinfo{volume}{150}, \bibinfo{number}{3-4} (\bibinfo{year}{2017}),
  \bibinfo{pages}{317--346}.
\newblock


\bibitem[\protect\citeauthoryear{Nielson, Nielson, and Hankin}{Nielson
  et~al\mbox{.}}{1999}]%
        {nnh:program-analysis}
\bibfield{author}{\bibinfo{person}{Flemming Nielson},
  \bibinfo{person}{Hanne~Riis Nielson}, {and} \bibinfo{person}{Chris Hankin}.}
  \bibinfo{year}{1999}\natexlab{}.
\newblock \bibinfo{booktitle}{\emph{Principles of Program Analysis}}.
\newblock \bibinfo{publisher}{Springer}.
\newblock


\bibitem[\protect\citeauthoryear{Poltermann}{Poltermann}{2017}]%
        {p:modal-logic-cts}
\bibfield{author}{\bibinfo{person}{Katja Poltermann}.}
  \bibinfo{year}{2017}\natexlab{}.
\newblock \emph{\bibinfo{title}{A Modal Logic for Conditional Transition
  Systems}}.
\newblock Bachelor's thesis. \bibinfo{school}{Universit{\"a}t Duisburg-Essen}.
\newblock


\bibitem[\protect\citeauthoryear{Pous}{Pous}{2007}]%
        {p:complete-lattices-up-to}
\bibfield{author}{\bibinfo{person}{Damien Pous}.}
  \bibinfo{year}{2007}\natexlab{}.
\newblock \showarticletitle{Complete Lattices and Up-To Techniques}. In
  \bibinfo{booktitle}{\emph{Proc. of APLAS '07}}.
  \bibinfo{publisher}{Springer}, \bibinfo{pages}{351--366}.
\newblock
\newblock
\shownote{{LNCS} 4807.}


\bibitem[\protect\citeauthoryear{Pous and Sangiorgi}{Pous and
  Sangiorgi}{2011}]%
        {ps:enhancements-coinductive}
\bibfield{author}{\bibinfo{person}{Damien Pous} {and} \bibinfo{person}{Davide
  Sangiorgi}.} \bibinfo{year}{2011}\natexlab{}.
\newblock \showarticletitle{Enhancements of the coinductive proof method}.
\newblock In \bibinfo{booktitle}{\emph{Advanced Topics in Bisimulation and
  Coinduction}}, \bibfield{editor}{\bibinfo{person}{Davide Sangiorgi} {and}
  \bibinfo{person}{Jan Rutten}} (Eds.). \bibinfo{publisher}{Cambridge
  University Press}.
\newblock


\bibitem[\protect\citeauthoryear{Raney}{Raney}{1952}]%
        {Ran:CDCL}
\bibfield{author}{\bibinfo{person}{George.~N. Raney}.}
  \bibinfo{year}{1952}\natexlab{}.
\newblock \showarticletitle{Completely Distributive Complete Lattices}.
\newblock \bibinfo{journal}{\emph{Proceedings of the AMS}} \bibinfo{volume}{3},
  \bibinfo{number}{5} (\bibinfo{year}{1952}), \bibinfo{pages}{677--680}.
\newblock


\bibitem[\protect\citeauthoryear{Sangiorgi}{Sangiorgi}{2011}]%
        {s:bisimulation-coinduction}
\bibfield{author}{\bibinfo{person}{Davide Sangiorgi}.}
  \bibinfo{year}{2011}\natexlab{}.
\newblock \bibinfo{booktitle}{\emph{Introduction to Bisimulation and
  Coinduction}}.
\newblock \bibinfo{publisher}{Cambridge University Press}.
\newblock


\bibitem[\protect\citeauthoryear{Schneider}{Schneider}{2004}]%
        {Sch:VRS}
\bibfield{author}{\bibinfo{person}{Klaus Schneider}.}
  \bibinfo{year}{2004}\natexlab{}.
\newblock \bibinfo{booktitle}{\emph{Verification of Reactive Systems: Formal
  Methods and Algorithms}}.
\newblock \bibinfo{publisher}{Springer}.
\newblock


\bibitem[\protect\citeauthoryear{Scott}{Scott}{1972}]%
        {Scott:CL}
\bibfield{author}{\bibinfo{person}{Dana Scott}.}
  \bibinfo{year}{1972}\natexlab{}.
\newblock \showarticletitle{Continuous lattices}. In
  \bibinfo{booktitle}{\emph{Toposes, Algebraic Geometry and Logic}}
  \emph{(\bibinfo{series}{Lecture Notes in Mathematics})},
  \bibfield{editor}{\bibinfo{person}{F.~W. Lawvere}} (Ed.).
  \bibinfo{publisher}{Springer}, \bibinfo{pages}{97--136}.
\newblock


\bibitem[\protect\citeauthoryear{Seidl}{Seidl}{1996}]%
        {s:fast-simple-nested-fixpoints}
\bibfield{author}{\bibinfo{person}{Helmut Seidl}.}
  \bibinfo{year}{1996}\natexlab{}.
\newblock \showarticletitle{Fast and simple nested fixpoints}.
\newblock \bibinfo{journal}{\emph{Inform. Process. Lett.}}
  \bibinfo{volume}{59}, \bibinfo{number}{6} (\bibinfo{year}{1996}),
  \bibinfo{pages}{303--308}.
\newblock


\bibitem[\protect\citeauthoryear{Stevens and Stirling}{Stevens and
  Stirling}{1998}]%
        {ss:practical-modcheck-games}
\bibfield{author}{\bibinfo{person}{Perdita Stevens} {and}
  \bibinfo{person}{Colin Stirling}.} \bibinfo{year}{1998}\natexlab{}.
\newblock \showarticletitle{Practical Model-Checking Using Games}. In
  \bibinfo{booktitle}{\emph{Proc. of TACAS '98}}
  \emph{(\bibinfo{series}{Lecture Notes in Computer Science})},
  Vol.~\bibinfo{volume}{1384}. \bibinfo{publisher}{Springer},
  \bibinfo{pages}{85--101}.
\newblock


\bibitem[\protect\citeauthoryear{Stirling}{Stirling}{1995}]%
        {s:local-modcheck-games}
\bibfield{author}{\bibinfo{person}{Colin Stirling}.}
  \bibinfo{year}{1995}\natexlab{}.
\newblock \showarticletitle{Local Model Checking Games}. In
  \bibinfo{booktitle}{\emph{Proc. of CONCUR '95}}
  \emph{(\bibinfo{series}{Lecture Notes in Computer Science})},
  Vol.~\bibinfo{volume}{962}. \bibinfo{publisher}{Springer},
  \bibinfo{pages}{1--11}.
\newblock


\bibitem[\protect\citeauthoryear{Stirling and Walker}{Stirling and
  Walker}{1991}]%
        {SW:LMCMMC}
\bibfield{author}{\bibinfo{person}{Colin Stirling} {and} \bibinfo{person}{David
  Walker}.} \bibinfo{year}{1991}\natexlab{}.
\newblock \showarticletitle{Local Model Checking in the Modal mu-Calculus}.
\newblock \bibinfo{journal}{\emph{Theoretical Computer Science}}
  \bibinfo{volume}{89}, \bibinfo{number}{1} (\bibinfo{year}{1991}),
  \bibinfo{pages}{161--177}.
\newblock


\bibitem[\protect\citeauthoryear{Tarski}{Tarski}{1955}]%
        {t:lattice-fixed-point}
\bibfield{author}{\bibinfo{person}{Alfred Tarski}.}
  \bibinfo{year}{1955}\natexlab{}.
\newblock \showarticletitle{A lattice-theoretical theorem and its
  applications}.
\newblock \bibinfo{journal}{\emph{Pacific J. Math.}}  \bibinfo{volume}{5}
  (\bibinfo{year}{1955}), \bibinfo{pages}{285--309}.
\newblock


\bibitem[\protect\citeauthoryear{van Breugel and Worrell}{van Breugel and
  Worrell}{2005}]%
        {bw:behavioural-pseudometric}
\bibfield{author}{\bibinfo{person}{Franck van Breugel} {and}
  \bibinfo{person}{James Worrell}.} \bibinfo{year}{2005}\natexlab{}.
\newblock \showarticletitle{A behavioural pseudometric for probabilistic
  transition systems}.
\newblock \bibinfo{journal}{\emph{Theoretical Computer Science}}
  \bibinfo{volume}{331} (\bibinfo{year}{2005}), \bibinfo{pages}{115--142}.
\newblock


\bibitem[\protect\citeauthoryear{Venema}{Venema}{2008}]%
        {v:lectures-mu-calculus}
\bibfield{author}{\bibinfo{person}{Yde Venema}.}
  \bibinfo{year}{2008}\natexlab{}.
\newblock \bibinfo{title}{Lectures on the modal $\mu$-calculus}.
\newblock \bibinfo{howpublished}{{Lecture notes, Institute for Logic, Language
  and Computation, University of Amsterdam}}.   (\bibinfo{year}{2008}).
\newblock


\bibitem[\protect\citeauthoryear{Zielonka}{Zielonka}{1998}]%
        {Zie:IGFCG}
\bibfield{author}{\bibinfo{person}{Wieslaw Zielonka}.}
  \bibinfo{year}{1998}\natexlab{}.
\newblock \showarticletitle{Infinite Games on Finitely Coloured Graphs with
  Applications to Automata on Infinite Trees}.
\newblock \bibinfo{journal}{\emph{Theoretical Computer Science}}
  \bibinfo{volume}{200}, \bibinfo{number}{1-2} (\bibinfo{year}{1998}),
  \bibinfo{pages}{135--183}.
\newblock


\end{thebibliography}

\appendix

\section{Comparing Fixpoint Equation Systems with $\mu$-Calculus
  Formulae}
\label{sec:fp-systems-mu}

We will show how $\mu$-calculus formulae can be translated into
equation systems and vice versa.

Hereafter we will assume that in every formula different bound
variables have different names, a requirement that can always be
fulfilled by alpha-renaming. In this way for every variable $x$
appearing in a closed formula $\phi$, we can refer to ``the'' fixpoint
subformula quantifying $x$, that will be denoted $\phi_x$ (hence
$\phi_x$ is of the kind $\eta x. \psi$).

\begin{definition}[equation system for a formula]
  \label{de:eqnsys-for-muformula}
  Given a closed fixpoint formula $\phi$ of the $\mu$-calculus, let
  $(x_1,\ldots,x_m)$ be the tuple of variables in
  $\phi$, in the order in which their quantification appears from right to left.
  The \emph{equational form} of $\phi$ is
  $\vec{x} =_{\vec{\eta}} \vec{\theta}$, where, for all
  $i \in \interval{m}$, if $\varphi_{x_i} = \eta_i x_i. \psi_i$ then
  $\theta_i$ is the (open) formula obtained from $\psi_i$ by replacing
  every fixpoint subformula with the corresponding propositional
  variable.
\end{definition}

Observe that the restriction to fixpoint formulae is not limiting since any formula $\varphi$ is equivalent to a fixpoint formula $\mu x.\varphi$, where $x$ is a variable not occurring in $\varphi$.

Once a transition system $(\mathbb{S},\to)$ is fixed, the formula in equational form can be interpreted as a system of equations over the powerset lattice $(\Pow{\mathbb{S}}, \subseteq)$, by replacing formulae with their semantics, i.e., an equation $x_i =_{\eta_i} \theta_i$ becomes
\begin{eqnarray*}
  x_i & =_{\eta_i} & \sem{\theta_i}_\rho
\end{eqnarray*}
where in the right-hand side $\sem{x_j}_\rho$ is replaced by $x_j$ and
$\rho$ is some fixed environments providing a meaning only for
propositions.

It is not difficult to see that also a converse transformation is
possible, i.e., a system of fixpoint equations of the kind
$\vec{x} =_{\vec{\eta}} \vec{\psi}$ where each $\psi_i$ is an open
formula with propositional variables in $\vec{x}$ and without
fixpoints, can be translated into a tuple of $\mu$-calculus formulae,
equivalent to the system in a sense formalised later.

\begin{definition}[formulae for an equation system]
  \label{de:muformulae-for-eqnsys}
  Let $E$ be a system of $m$ equations
  $\vec{x} =_{\vec{\eta}} \vec{\psi}$ where each $\psi_i$ is an open
  formula with variables in $\vec{x}$ and without fixpoints.  The
  corresponding $m$-tuple of $\mu$-calculus formulae, denoted by
  $\vec{\phi}^E$, is defined inductively as follows, where $E_i$
  denotes the system consisting of the first $i$ equations of $E$, for
  all $i \in \interval{m}$.
  \begin{center}
    $
    \begin{array}{lll}
      \vec{\phi}^{\emptyset}
      & =
      & () \\
      \vec{\phi}^{E_i} & = &
      (\propsubst{\vec{\phi}^{E_{i-1}}}{x_i}{\phi^{E_i}_{i}},\phi^{E_i}_{i}) 
      \qquad
      \mbox{where
        $\phi^{E_i}_{i} = \eta_i
        x_i. \propsubst{\psi_i}{x_j}{\phi^{E_{i-1}}_{j}}_{\forall
          j\in\interval{i-1}}$}
    \end{array}
    $
  \end{center}
  Then
  $\vec{\phi}^E = \vec{\phi}^{E_m}$.
\end{definition}

Note that $\vec{\phi}^E$ is a tuple of closed formulae.

A similar procedure is given in~\cite{cks:faster-modcheck-mu} for the
characterisation of $\mu$-calculus formulae in terms of equation
systems, to allow an efficient model-checking algorithm. However such
equation systems differ from ours. In particular, the solution of a
system is defined just by means of the semantics of the formulae into
which the system can be translated.

We finally prove that the proposed translations preserve the
semantics. %
We will need the substitution lemma for the $\mu$-calculus as stated below.

\begin{lemma}[substitution in $\mu$-calculus]
  \label{le:subst-mu}
  For all $\mu$-calculus formulae $\phi$ and $\psi$, variable $x$, and
  environment $\rho$, it holds
  $\sem{\propsubst{\phi}{x}{\psi}}_\rho =
  \sem{\phi}_{\rho[x\mapsto\sem{\psi}_\rho]}$.
\end{lemma}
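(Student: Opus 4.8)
The plan is to prove the substitution lemma $\sem{\propsubst{\phi}{x}{\psi}}_\rho = \sem{\phi}_{\rho[x\mapsto\sem{\psi}_\rho]}$ by structural induction on the formula $\phi$. This is a completely standard argument in the theory of the $\mu$-calculus, and the only care required is in the fixpoint cases, where one must keep track of the capture-avoiding nature of substitution and exploit the convention (already stated in this appendix) that bound variables are renamed to be distinct from all other variables in play.

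First I would set up the induction. For the atomic cases $\phi \in \{\true, \false, p\}$ with $p \in \Prop$, the substitution $\propsubst{\phi}{x}{\psi}$ leaves $\phi$ unchanged and $\sem{\phi}_\rho$ does not depend on the value of $\rho$ at $x$, so both sides coincide immediately. For $\phi = x$, the left-hand side is $\sem{\psi}_\rho$ and the right-hand side is $\sem{x}_{\rho[x\mapsto\sem{\psi}_\rho]} = \rho[x\mapsto\sem{\psi}_\rho](x) = \sem{\psi}_\rho$, so they agree; for $\phi = y$ with $y \neq x$, both sides equal $\rho(y)$. The boolean and modal connectives $\land, \lor, \Box, \Diamond$ are handled by pushing the substitution inside (substitution is defined homomorphically on these) and applying the induction hypothesis to the immediate subformulae, using that $\sem{\cdot}_\rho$ is also defined homomorphically on them.

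The interesting case is $\phi = \eta y.\, \phi'$ with $\eta \in \{\mu,\nu\}$. By the standing alpha-renaming convention, $y \neq x$ and $y$ does not occur free in $\psi$, so $\propsubst{\phi}{x}{\psi} = \eta y.\, \propsubst{\phi'}{x}{\psi}$ with no capture. The semantics of a fixpoint formula is $\sem{\eta y.\, \chi}_\rho = \eta (\lambda S.\, \sem{\chi}_{\rho[y\mapsto S]})$, the least (resp.\ greatest) fixpoint of that monotone operator on $\Pow{\mathbb{S}}$. So I must show that the two monotone operators
\[
  S \mapsto \sem{\propsubst{\phi'}{x}{\psi}}_{\rho[y\mapsto S]}
  \qquad\text{and}\qquad
  S \mapsto \sem{\phi'}_{\rho[x\mapsto\sem{\psi}_\rho][y\mapsto S]}
\]
are equal, whence their least (resp.\ greatest) fixpoints coincide. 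Fix $S \subseteq \mathbb{S}$. Applying the induction hypothesis to $\phi'$ with the environment $\rho[y\mapsto S]$ gives $\sem{\propsubst{\phi'}{x}{\psi}}_{\rho[y\mapsto S]} = \sem{\phi'}_{\rho[y\mapsto S][x\mapsto\sem{\psi}_{\rho[y\mapsto S]}]}$. Since $y$ is not free in $\psi$, we have $\sem{\psi}_{\rho[y\mapsto S]} = \sem{\psi}_\rho$, and since $x \neq y$ the two updates commute: $\rho[y\mapsto S][x\mapsto\sem{\psi}_\rho] = \rho[x\mapsto\sem{\psi}_\rho][y\mapsto S]$. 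This yields exactly the right-hand operator, completing the fixpoint case.

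The main obstacle — though it is more of a bookkeeping subtlety than a genuine difficulty — is justifying the two side conditions in the fixpoint case: that $y$ is not free in $\psi$ (so that $\sem{\psi}$ is insensitive to the $y$-binding and substitution does not capture) and that $x \neq y$ (so the environment updates commute). Both follow from the global alpha-renaming assumption recorded at the start of the appendix, so I would simply invoke it explicitly. One could alternatively avoid relying on the convention by phrasing the lemma with an explicit freshness hypothesis on $\psi$, but given the paper's setup the convention suffices and keeps the statement clean.
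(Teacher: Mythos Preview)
Your proposal is correct and follows exactly the approach the paper indicates: the paper's proof is a single line stating that the result ``can be easily proved by routine induction on $\phi$,'' and your structural induction with careful handling of the fixpoint case via the alpha-renaming convention is precisely that routine argument spelled out in full.
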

\begin{proof}
It can be easily proved by routine induction on $\phi$.
\end{proof}

\begin{proposition}[correspondence between formulae and equation systems]
  Let $\phi$ be a closed fixpoint formula of the $\mu$-calculus and
  let $E$ be the system arising as its equational form. For any 
  environment $\rho$, it holds
  $\sem{\phi}_\rho = \sol[m]{E}$, where $m$ is number of equations in $E$.
  Conversely, given a system $E$ of $m$ equations
  $\vec{x} =_{\vec{\eta}} \vec{\psi}$, for all $i \in \interval{m}$,
  it holds that $\sol[i]{E} = \sem{\phi^{E}_i}_\rho$, where $\rho$ is any
  environment.
\end{proposition}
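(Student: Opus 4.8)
The plan is to prove the two directions by structural/transfinite induction, reducing each to the recursive definition of $\sol{E}$ (Definition~\ref{de:solution}) and leaning on the substitution lemma for the $\mu$-calculus (Lemma~\ref{le:subst-mu}).

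For the first statement, let $\phi$ be a closed fixpoint formula with variables $(x_1,\ldots,x_m)$ ordered as in Definition~\ref{de:eqnsys-for-muformula}, and let $E$ be its equational form $\vec{x} =_{\vec{\eta}} \vec{\theta}$. First I would prove a slightly more general claim than $\sem{\phi}_\rho = \sol[m]{E}$: namely that for every $i$, writing $E_i$ for the first $i$ equations and $\vec{\phi}^{E_i}$ for the associated closed formulae of Definition~\ref{de:muformulae-for-eqnsys}, one has $\sem{\phi^{E_i}_j}_\rho = \sol[j]{E_i}$ for all $j \in \interval{i}$ — because $\sol{}$ is defined by peeling off the \emph{last} equation while the $\mu$-calculus formula $\phi^{E_i}_i$ is built by wrapping $\eta_i x_i.(\,\cdot\,)$ around the substitution of the already-constructed formulae for $x_1,\ldots,x_{i-1}$, and these two recursions need to be matched at every level. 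The induction is on $i$. In the step, I would use Lemma~\ref{le:subst-mu} to rewrite $\sem{\propsubst{\psi_i}{x_j}{\phi^{E_{i-1}}_j}_{j<i}}_\rho$ as $\sem{\psi_i}$ evaluated in the environment that sends each $x_j$ to $\sem{\phi^{E_{i-1}}_j}_\rho = \sol[j]{E_{i-1}}$ (inductive hypothesis), and observe that $\sol{E_{i-1}}$ parametric in a value $l$ for $x_i$ is precisely $\sol{\subst{E_i}{x_i}{l}}$, so that the function whose $\eta_i$-fixpoint gives $\sol[i]{E_i}$ coincides with $\lambda l.\,\sem{\theta_i}_{\rho[x_i\mapsto l, x_j \mapsto \sol[j]{\subst{E_i}{x_i}{l}}]}$, which is exactly $\sem{\phi^{E_i}_i}_\rho$. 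Then the remaining components $\vec{\phi}^{E_i} = \propsubst{\vec{\phi}^{E_{i-1}}}{x_i}{\phi^{E_i}_i}$ are handled again by Lemma~\ref{le:subst-mu} together with the last clause of Definition~\ref{de:solution} (plugging $u_i$ back into the parametric $(i-1)$-tuple). Taking $i = m$ yields $\sem{\phi}_\rho = \sem{\phi^{E_m}_m}_\rho = \sol[m]{E}$; here one also notes $\phi = \phi^{E_m}_m$ by unwinding Definition~\ref{de:eqnsys-for-muformula} versus Definition~\ref{de:muformulae-for-eqnsys}, since $\theta_i$ is obtained from $\psi_i$ exactly by the same replacement of fixpoint subformulae with variables.

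The converse statement, $\sol[i]{E} = \sem{\phi^E_i}_\rho$ for a system $E$ of $m$ equations $\vec{x} =_{\vec{\eta}} \vec{\psi}$ with fixpoint-free open $\psi_i$, is in fact subsumed by the generalized claim above applied with $i = m$: the induction already establishes $\sol[j]{E_m} = \sem{\phi^{E_m}_j}_\rho$ for every $j \in \interval{m}$, and $\phi^E_j = \phi^{E_m}_j$ by definition. So no separate argument is needed — I would just remark this and cite the already-proved induction.

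The main obstacle I anticipate is purely bookkeeping: correctly synchronising the ``peel the last equation'' recursion of $\sol{}$ with the ``peel the outermost/last-introduced fixpoint'' recursion of Definition~\ref{de:muformulae-for-eqnsys}, and in particular verifying the identity $\sol{\subst{E_i}{x_i}{l}} = \sol{E_{i-1}}$-with-$x_i$-set-to-$l$ as functions of $l$ (so that the parametric solutions agree), which requires care with the ordering convention that the last component is the ``most relevant''. The one genuine mathematical ingredient beyond this is the repeated use of Lemma~\ref{le:subst-mu} to commute semantic evaluation with syntactic substitution; the alternation structure $\vec{\eta}$ matches automatically since Definition~\ref{de:eqnsys-for-muformula} and Definition~\ref{de:muformulae-for-eqnsys} both read off $\eta_i$ from $\varphi_{x_i}$ resp.\ assign it to equation $i$. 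Monotonicity of all the functions involved — needed for the fixpoints $\eta_i(\cdot)$ to be well-defined — is guaranteed by Lemma~\ref{le:solution-monotone} and monotonicity of the $\mu$-calculus semantics, so well-definedness is not an issue.
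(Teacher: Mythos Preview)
Your treatment of the second direction (system $\to$ formulae) is essentially the paper's: both argue by induction on the number of equations, matching the recursion of Definition~\ref{de:solution} against that of Definition~\ref{de:muformulae-for-eqnsys} via Lemma~\ref{le:subst-mu}. The first direction is where you diverge. The paper proves it by a \emph{separate} direct induction on $i$, showing $\sem{\phi_{x_i}}_{\rho'} = \sol[i]{E'}$ where $E'$ is $E$ with arbitrary values substituted for $x_{i+1},\ldots,x_m$, and relying only on the local identity that $\psi_i$ is recovered from $\theta_i$ by substituting $\phi_{x_j}$ back for each $x_j$. You instead reduce the first direction to the already-proved second one via the global round-trip $\phi = \phi^{E_m}_m$. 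This is more economical, but the round-trip is not quite ``by unwinding'': the precise claim you need is $\phi^{E_i}_i = \phi_{x_i}$ for every $i$, and the inductive step hinges on the observation that whenever $x_j$ actually occurs in $\theta_i$ (so $\phi_{x_j}$ is a \emph{maximal} fixpoint subformula of $\psi_i$), the free variables of $\phi_{x_j}$ lie among $x_i,\ldots,x_m$ --- hence the substitutions for $x_{j+1},\ldots,x_{i-1}$ performed by Definition~\ref{de:muformulae-for-eqnsys} when passing from $\phi^{E_j}_j$ to $\phi^{E_{i-1}}_j$ are vacuous and $\phi^{E_{i-1}}_j = \phi_{x_j}$. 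Once that is spelled out, your argument is sound. Two small slips: the formulae $\phi^{E_i}_j$ are not in general closed, and $\sol{E_i}$ is well-defined only after the dangling variables $x_{i+1},\ldots,x_m$ are instantiated; you clearly intend this parametrisation, but the statement should say so.
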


\begin{proof}
  We prove the two statements separately.

  For the first part, let $\phi$ be a closed fixpoint formula of the
  $\mu$-calculus and let $E$ the system arising as its equational
  form. Recall that for every $i \in \interval{m}$, the $i$-th equation
  of the system is $x_i =_{\eta_i} \theta_i$, where $\theta_i$ is
  obtained from the subformula $\psi_i$ of the fixpoint formula
  $\phi_{x_i} = \eta_i x_i.\psi_i$, as described in
  Definition~\ref{de:eqnsys-for-muformula}. In particular, 
  $\varphi = \phi_{x_m} = \eta_m x_m.\psi_m$, corresponds
  to the last equation
  of the system $x_m =_{\eta_m} \theta_m$. Then, we prove that for all
  $i \in \interval{m}$, $\sem{\phi_{x_i}}_{\rho'} = \sol[i]{E'}$ where
  $\rho' = \rho[x_{i+1}\mapsto S_{i+1},\ldots,x_{m}\mapsto S_{m}]$ and
  $E' = \subst{\subst{E}{x_m}{S_m}\ldots}{x_{i+1}}{S_{i+1}}$, and
  every $S_j \subseteq \mathbb{S}$. Clearly this implies the desired
  result. The proof proceeds by induction on the index $i$.

  ($i = 1$) By definition of substitution we know that the system $E'$
  consists of a single equation, i.e., $x_1 =_{\eta_1} \theta_1'$
  where
  $\theta_1' =
  \propsubst{\propsubst{\theta_1}{x_m}{S_m}\ldots}{x_{2}}{S_{2}}$. By
  definition of solution of a system, we have that
  $\sol[1]{E'} = \sol{x_1 =_{\eta_1} \theta_1'} = \eta_1 (\lambda
  S.\sem{\theta_1}_{\rho'[x_1\mapsto S]})$. Similarly, by definition
  of the semantics of $\mu$-calculus we know that
  $\sem{\phi_{x_1}}_{\rho'} = \sem{\eta_1 x_1.\psi_1}_{\rho'} = \eta_1
  (\lambda S.\sem{\psi_1}_{\rho'[x_1\mapsto S]})$. By the definition
  of the ordering of the variables given in
  Definition~\ref{de:eqnsys-for-muformula}, we must have that $\psi_1$
  does not contain any fixpoint subformula, otherwise its index could
  not be $1$. Hence $\theta_1 = \psi_1$, and so
  $\sem{\phi_{x_1}}_{\rho'} = \sol[1]{E'}$.

  ($i > 1$) In this case, by definition of solution, we have
  $\sol[i]{E'} = \eta_i (\lambda S.\sem{\theta_i}_{\rho''})$ where
  $\rho'' = \rho'[x_i\mapsto
  S][\subvec{x}{1}{i-1}\mapsto\sol{\subst{E'}{x_i}{S}}]$. While, by
  definition of the semantics we have
  $\sem{\phi_{x_i}}_{\rho'} = \sem{\eta_i x_i.\psi_i}_{\rho'} = \eta_i
  (\lambda S.\sem{\psi_i}_{\rho'[x_i\mapsto S]})$. By an inspection of
  the definition of $\vec{\theta}$ and the ordering of the variables,
  one can notice that for all $j \in \interval{m}$,
  $\psi_j = \propsubst
  {\propsubst{\theta_j}{x_{j-1}}{\phi_{x_{j-1}}}\ldots}{x_1}
  {\phi_{x_1}}$. Then, we have that
  $\sem{\psi_i}_{\rho'[x_i\mapsto S]} =
  \sem{\propsubst{\propsubst{\theta_i}{x_{i-1}}{\phi_{x_{i-1}}}\ldots}{x_1}{\phi_{x_1}}}_{\rho'[x_i\mapsto
    S]}$. Furthermore, by repeatedly applying Lemma~\ref{le:subst-mu}
  we obtain that
  $\sem{\propsubst{\propsubst{\theta_i}{x_{i-1}}{\phi_{x_{i-1}}}\ldots}{x_1}{\phi_{x_1}}}_{\rho'[x_i\mapsto
    S]} = \sem{\theta_i}_{\rho_i}$ where
  $\rho_1 = \rho'[x_i\mapsto S]$ and for all $j \in \interval{i-1}$,
  $\rho_{j+1} =
  \rho_{j}[x_{i-j}\mapsto\sem{\phi_{x_{i-j}}}_{\rho_{j}}]$. Note that
  actually
  $\rho_i = \rho'[x_i\mapsto
  S][x_{i-1}\mapsto\sem{\phi_{x_{i-1}}}_{\rho_{1}}]\ldots[x_{1}\mapsto\sem{\phi_{x_{1}}}_{\rho_{i-1}}]$. Now
  we just need to prove that $\rho_i = \rho''$. To show this we can
  use the inductive hypothesis $i-1$ times, recalling the recursive
  structure of the solutions of systems. Therefore, we can conclude
  that
  $\sem{\psi_i}_{\rho'[x_i\mapsto S]} = \sem{\theta_i}_{\rho_i} =
  \sem{\theta_i}_{\rho''}$, and so
  $\sem{\phi_{x_i}}_{\rho'} = \sol[i]{E'}$.

  \medskip

  Let us now focus on the second part. Let $E$ be a system of $m$
  equations of the kind $\vec{x} =_{\vec{\eta}} \vec{\psi}$. We have
  to prove that, $\sol[i]{E} = \sem{\phi^E_i}_\rho$ for all
  $i \in \interval{m}$. The proof proceeds by induction on the number
  of equations $m$.

  ($m = 1$) Clearly there is only one possible index
  $i \in \interval{1}$, that is $i = 1$. Then, by definition of
  solution we know that
  $\sol[1]{E} = \sol{x_1 =_{\eta_1} \psi_1} = \eta_1 (\lambda
  S.\sem{\psi_1}_{\rho[x_1\mapsto S]})$. Moreover, by definition of
  the semantics of $\mu$-calculus we have that also
  $\sem{\phi^E_1}_\rho = \sem{\eta_1 x_1.\psi_1}_\rho = \eta_1
  (\lambda S.\sem{\psi_1}_{\rho[x_1\mapsto S]})$. Thus we can
  immediately conclude.

  ($m > 1$) First we show the property for $i = m$. By definition of
  solution we have
  $\sol[m]{E} = \eta_m (\lambda S.\sem{\psi_m}_{\rho'})$ where
  $\rho' = \rho[x_m\mapsto S][\subvec{x}{1}{m-1}\mapsto\sol{E'}]$ and
  $E' = \subst{E}{x_m}{S}$. By
  Definition~\ref{de:muformulae-for-eqnsys} we know that
  $\phi^E_m = \eta_m
  x_m.\propsubst{\psi_m}{x_j}{\phi^{E_{m-1}}_j}_{\forall j \in
    \interval{m-1}}$. Then, the semantics is \linebreak
  $\sem{\phi^E_m}_{\rho} = \eta_m (\lambda
  S.\sem{\propsubst{\psi_m}{x_j}{\phi^{E_{m-1}}_j}_{\forall j \in
      \interval{m-1}}}_{\rho[x_m\mapsto S]})$. By repeatedly applying
  Lemma~\ref{le:subst-mu} we obtain that 
  $\sem{\propsubst{\psi_m}{x_j}{\phi^{E_{m-1}}_j}_{\forall j \in
      \interval{m-1}}}_{\rho[x_m\mapsto S]} = \sem{\psi_m}_{\rho''}$
  where
  $\rho'' = \rho[x_m\mapsto
  S][x_j\mapsto\sem{\phi^{E_{m-1}}_j}_{\rho[x_m\mapsto S]}]_{\forall j
    \in \interval{m-1}}$. Now we just need to show that
  $\rho'' = \rho'$, that is, for all $j \in \interval{m-1}$,
  $\sem{\phi^{E_{m-1}}_j}_{\rho[x_m\mapsto S]} = \sol[j]{E'}$. Since
  $E'$ has $m-1$ equations, by inductive hypothesis we know that
  $\sol[j]{E'} = \sem{\phi^{E'}_j}_{\rho}$. Recalling that
  $E' = \subst{E}{x_m}{S}$, we can immediately conclude that
  $\sem{\phi^{E'}_j}_{\rho} = \sem{\phi^{E_{m-1}}_j}_{\rho[x_m\mapsto
    S]}$.

  Instead, for all $i \in \interval{m-1}$, by definition of solution
  and what we just proved above we have
  $\sol[i]{E} = \sol[i]{\subst{E}{x_m}{\sol[m]{E}}} =
  \sol[i]{\subst{E}{x_m}{\sem{\phi^E_m}_\rho}}$. Moreover, let
  $E' = \subst{E}{x_m}{\sem{\phi^E_m}_\rho}$, since $E'$ has $m-1$
  equations, by inductive hypothesis we know that
  $\sol[i]{E'} = \sem{\phi^{E'}_i}_\rho$. Observe that
  $\sem{\phi^{E'}_i}_\rho = \sem{\phi^{E_{m-1}}_i}_{\rho[x_m\mapsto
    \sem{\phi^E_m}_\rho]} =
  \sem{\propsubst{\phi^{E_{m-1}}_i}{x_m}{\phi^E_m}}_\rho$ by
  Lemma~\ref{le:subst-mu} and since
  $E' = \subst{E}{x_m}{\sem{\phi^E_m}_\rho}$. Then, since by
  Definition~\ref{de:muformulae-for-eqnsys} we know that
  $\phi^E_i = \propsubst{\phi^{E_{m-1}}_i}{x_m}{\phi^E_m}$, we can
  conclude that
  $\sem{\phi^E_i}_\rho = \sem{\phi^{E'}_i}_\rho = \sol[i]{E'} =
  \sol[i]{E}$.
\end{proof}

\section{Results Concerning the Comparison with~\cite{hsc:lattice-progress-measures}}
\label{sec:hasuo}

\subsection{Comparing the Definitions of Solutions}
\label{ssec:comparison-solution-hsc}

Here we show that the definition of the solution of an equational
system in~\cite{hsc:lattice-progress-measures} is equivalent to our
Definition~\ref{de:solution}. In both definitions the solution
$\vec{u}=(u_1,\ldots,u_m)$ is solved recursively based on interim
solutions by calculating fixpoints.

\begin{definition}[Solution of an equational system~\cite{hsc:lattice-progress-measures}]
  \label{solution-Ichiro}
  Let $L$ be a lattice and let $E$ be a system of $m \ge 1$ equations
  on $L$ of the kind $\vec{x} =_{\vec{\eta}} \vec{f}(\vec{x})$. For
  each $i \in \interval{m}$ and $j \in \interval{i}$ we define monotone functions
  $f^{\ddag} \colon L^{m-i+1} \to L$ and $l_j^{(i)}\colon L^{m-i}\to L$ as follows,
  inductively on $i$:
  \begin{enumerate}
  \item  $i=1:$ 
    \begin{align*}
      &f^{\ddag}_1(l_1,\ldots,l_m):= f_1(l_1,\ldots,l_m)&\\
      &l_1^{(1)}(l_2,\ldots,l_m):= \eta_1[f^{\ddag}_1(\_,l_2,\ldots,l_m)\colon L \to L]&
    \end{align*}
    with $\eta_1 \in \{\mu,\nu \}$.
  \item  $i=i+1:$
    \begin{align*}
      &f^{\ddag}_{i+1}(l_{i+1},\ldots,l_m):= f_{i+1}(l_{1}^{(i)}(l_{i+1},\ldots,l_m),\ldots,l_{i}^{(i)}(l_{i+1},\ldots,l_m),l_{i+1},l_m)&\\
      &l_{i+1}^{(i+1)}(l_{i+2},\ldots,l_m):= \eta_{i+1}[f^{\ddag}_{i+1}(\_,l_{i+2},\ldots,l_m\colon L \to L]&
    \end{align*}
    with $\eta_{i+1} \in \{\mu,\nu \}$. The $l_{i+1}^{(i+1)}$ solution
    is then used to obtain the $(i+1)$-th interim solutions for each
    $j \in \interval{i}$:
    \begin{displaymath}
      l_j^{(i+1)}(l_{i+2},\ldots,l_m) := l_j^{(i)}(l_{i+1}^{(i+1)}(l_{i+2},\ldots,l_m),l_{i+2},\ldots,l_m)
    \end{displaymath}
  \end{enumerate}
\end{definition}

\begin{proposition}
  Let $L$ be a lattice and let $E$ be a system of $m \ge 1$ equations
  on $L$ of the kind $\vec{x} =_{\vec{\eta}} \vec{f}(\vec{x})$. Then
  the solution from Definition~\ref{solution-Ichiro} coincides with
  the solution from Definition~\ref{de:solution}.
\end{proposition}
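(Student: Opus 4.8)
The plan is to prove the two definitions agree by induction on the number of equations, using the observation that they process one equation at a time but in opposite directions: Definition~\ref{de:solution} peels off the \emph{last} equation and recurses, whereas Definition~\ref{solution-Ichiro} builds up the interim solutions $l^{(i)}_j$ starting from the \emph{first}. The bridge between the two will be a parametrized identity relating the $i$-th family of interim solutions of~\cite{hsc:lattice-progress-measures} to our $\sol{\cdot}$ applied to the subsystem consisting of the first $i$ equations.

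First I would fix notation: for $i \in \interval{m}$ let $E_i$ denote the system made of the first $i$ equations of $E$, where the right-hand sides $f_1,\dots,f_i$ still depend on all $m$ variables, so that $x_{i+1},\dots,x_m$ occur as free parameters; and for lattice elements $l_{i+1},\dots,l_m$ write $E_i\langle l_{i+1},\dots,l_m\rangle$ for the system of $i$ equations over $L$ obtained by instantiating those parameters. The core claim, to be proved by induction on $i$, is that for all $l_{i+1},\dots,l_m \in L$
\[
  \big(l^{(i)}_1(l_{i+1},\dots,l_m),\,\dots,\,l^{(i)}_i(l_{i+1},\dots,l_m)\big)
  \;=\; \sol{E_i\langle l_{i+1},\dots,l_m\rangle}.
\]
The base case $i=1$ is immediate, since both sides equal $\eta_1(\lambda x.\, f_1(x,l_2,\dots,l_m))$ by Definition~\ref{solution-Ichiro}(1) and Definition~\ref{de:solution} respectively. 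For the inductive step the key bookkeeping fact is that $\subst{E_{i+1}}{x_{i+1}}{l}$ --- removing the last equation of $E_{i+1}$ and substituting --- is exactly $E_i$ with $x_{i+1}$ replaced by $l$; hence the induction hypothesis gives $f^{\ddag}_{i+1}(x,l_{i+2},\dots,l_m) = f_{i+1}\big(\sol{\subst{(E_{i+1}\langle l_{i+2},\dots,l_m\rangle)}{x_{i+1}}{x}},\,x,\,l_{i+2},\dots,l_m\big)$. Applying $\eta_{i+1}$ to $\lambda x.\, f^{\ddag}_{i+1}(x,l_{i+2},\dots,l_m)$ then matches on the nose the last component of $\sol{E_{i+1}\langle l_{i+2},\dots,l_m\rangle}$ computed by Definition~\ref{de:solution}; and feeding this value into the $l^{(i)}_j$ (the final clause of Definition~\ref{solution-Ichiro}) matches the recursive call $\sol{\subst{\cdot}{x_{i+1}}{u_{i+1}}}$ in Definition~\ref{de:solution}, again by the induction hypothesis. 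Monotonicity of all the functions involved --- needed for the fixpoint operators to be well defined --- follows from Lemma~\ref{le:solution-monotone} together with monotonicity of the $f_i$. Taking $i=m$, where $E_m = E$ has no remaining parameters, yields the proposition.

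The part requiring care is not any particular calculation but the bookkeeping: keeping straight which variables carry equations and which are parameters, and handling the fact that the two recursions run in opposite directions over the list of equations. Once the identity $\subst{E_{i+1}}{x_{i+1}}{l} = (E_i \text{ with } x_{i+1}:=l)$ is pinned down, the inductive step unwinds simply by matching, clause by clause, the definition of $l^{(i+1)}_{i+1}$ and of $l^{(i+1)}_j$ for $j\le i$ against the two pieces of $\sol{\cdot}$ in Definition~\ref{de:solution}.
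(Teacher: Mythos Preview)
Your proposal is correct and follows essentially the same approach as the paper: the paper proves, by induction on $i$, that $l_j^{(i)}(l_{i+1},\ldots,l_m) = \sol[j]{\subst{E}{\subvec{x}{i+1}{m}}{\subvec{l}{i+1}{m}}}$ for all $j\in\interval{i}$, which is exactly your parametrized identity under the identification $E_i\langle l_{i+1},\dots,l_m\rangle = \subst{E}{\subvec{x}{i+1}{m}}{\subvec{l}{i+1}{m}}$. The inductive step is likewise split into the case $j=i+1$ (matching $l^{(i+1)}_{i+1}$ against the last component of $\sol{\cdot}$ via the induction hypothesis applied to $f^{\ddag}_{i+1}$) and the case $j\le i$ (feeding the computed $l_{i+1}^{(i+1)}$ back into the $l^{(i)}_j$), just as you outline.
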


\begin{proof}
  Let $l_{i+1},\ldots,l_m\in L$ be given. We show
  $l_j^{(i)}(l_{i+1},\ldots,l_m) =
  \sol[j]{\subst{E}{\subvec{x}{i+1}{m}}{\subvec{l}{i+1}{m}}}$ for
  $j \in \interval{i}$ by induction on $i$:
  \begin{enumerate}
  \item $i=1$: We define
    $E'=\subst{E}{\subvec{x}{2}{m}}{\subvec{l}{2}{m}}$ and according
    to Definition~\ref{de:solution} we have
    \[ \sol{\subst{E}{\subvec{x}{2}{m}}{\subvec{l}{2}{m}}}=\sol{E'}=
      \sol{\subst{E'}{x_1}{u_1}},u_1) = (u_1) \] where
    $u_1=\eta_1 (\lambda x.\, f_1(x))$.  In
    Definition~\ref{solution-Ichiro} for $i=1$ we only have to
    consider
    $l_1^{(1)}(l_2,\ldots,l_m) =
    \eta_1[f^{\ddag}_1(\_,l_2,\cdots,l_m)]=
    \eta_1[f_1(\_,l_2,\ldots,l_m)]$ which corresponds to
    $\sol[1]{\subst{E}{\subvec{x}{2}{m}}{\subvec{l}{2}{m}}} = u_1 =
    \eta_1 (\lambda x.\, f_1(x)) $.
  \item $i \to i+1:$ We define
    $E' = \subst{E}{\subvec{x}{i+2}{m}}{\subvec{l}{i+2}{m}}$. Here we
    need to distinguish two cases to prove that
    $l_j^{(i+1)}(l_{i+2},\ldots,l_m) = \sol[j]{E'}$ for all
    $j\in\interval{i}$.
    \begin{enumerate}
      \label{hsc2a}
    \item $j=i+1$: From Definition~\ref{de:solution} we have
      \[ \sol{\subst{E}{\subvec{x}{i+2}{m}}{\subvec{l}{i+2}{m}}} =
        \sol{E'}= \sol{\subst{E'}{x_{i+1}}{u_{i+1}}},u_{i+1}) \] where
      $u_{i+1}=\eta_{i+1} (\lambda x.\,
      f_{i+1}(\sol{\subst{E'}{x_{i+1}}{x}},x,l_{i+1},\dots,l_m))$. Hence
      \[ \sol[i+1]{\subst{E}{\subvec{x}{i+2}{m}}{\subvec{l}{i+2}{m}}} =
      u_{i+1}. \]

      From Definition~\ref{solution-Ichiro} we obtain
      $l_{i+1}^{(i+1)}(l_{i+2},\ldots,l_m) = \eta_{i+1}[
      f^{\ddag}_{i+1}(x,l_{i+2},\ldots,l_m)]$ where
      \[ f^{\ddag}_{i+1}(x,l_{i+2},\ldots,l_m) =
        f_{i+1}(l_1^{(i)}(x,l_{i+2},\ldots,l_m), \ldots,
        l_i^{(i)}(x,l_{i+2},\cdots,l_m),x,l_{i+2}\ldots,l_m).
      \] From the induction hypothesis it follows that
      \[ l_j^{(i)}(x,l_{i+2},\ldots,l_m) =
        \sol[j]{\subst{E}{\subvec{x}{i+1}{m}}{x,\subvec{l}{i+2}{m}}} =
        \sol[j]{\subst{E'}{x_{i+1}}{x}} =l_j \] for
      $j \in \interval{i}$.  We define
      $(l_1,\dots, l_i) = \sol{\subst{E'}{x_{i+1}}{x}}$ and observe
      that $l_{i+1}^{(i+1)}(l_{i+2},\ldots,l_m)$ is the
      $\eta_i$-fixpoint of
      $\lambda x.f_{i+1}(l_1,\ldots l_i,x,l_{i+2},\ldots,l_m)$. The
      same is true for $u_{i+1}$ and hence we conclude
      $l_{i+1}^{(i+1)}(l_{i+2},\ldots,l_m) = u_{i+1} = \sol[i+1]{E'}$.
      \label{hsc2a}
    \item $j \leq i:$ First, from Definition~\ref{de:solution} we
      obtain
      \[ \sol[j]{E'} =
        \sol[j]{\subst{E}{\subvec{x}{i+2}{m}}{\subvec{l}{i+2}{m}}} =
        \sol[j]{\subst{E}{\subvec{x}{i+1}{m}}{\subvec{l}{i+1}{m}}} \]
      where $l_{i+1} = \sol[i+1]{E'}$. From the induction hypothesis
      we know that
      $\sol[j]{\subst{E}{\subvec{x}{i+1}{m}}{\subvec{l}{i+1}{m}}} =
      l_{j}^{(i)}(l_{i+1},l_{i+2}, \ldots ,l_m) $.

      On the other hand we have from Definition~\ref{solution-Ichiro}
      that
      \[ l_{j}^{(i+1)}(l_{i+2} \ldots
        l_m)=l_{j}^{(i)}(l_{i+1}^{(i+1)}(l_{i+2}, \ldots l_m),l_{i+2},
        \ldots ,l_m) \] and from (\ref{hsc2a}) we finally get
      $l_{i+1}=l_{i+1}^{(i+1)}(l_{i+2}, \ldots l_m)$, which concludes
      the proof.
    \end{enumerate}
  \end{enumerate} 
\end{proof}

\subsection{Comparing $\mu$-Approximants and Lattice Progress
  Measures~\cite{hsc:lattice-progress-measures}}
\label{ssec:hsc-measures}
 
As hinted in the main body of the paper, $\mu$-approximants can be
seen as special lattice progress measures in the sense
of~\cite{hsc:lattice-progress-measures}, that we will refer here as
hsc-measures. More precisely, as discussed below, the function that,
for any $\mu$-approximant $\vec{l}$, maps the subvector of
$\ord{\vec{l}}$ obtained by keeping only the components corresponding
to $\mu$-indices to $\vec{l}$ is a hsc-measure. This is indeed the
hsc-measure used in~\cite[Theorem 2.13]{hsc:lattice-progress-measures}
(completeness part).

\begin{definition}[hsc-measure~\cite{hsc:lattice-progress-measures}]
  \label{de:hsc-measure}
  Let $L$ be a lattice and let $E$ be a system of equations over $L$
  of the kind $\vec{x} =_{\vec{\eta}} \vec{f}(\vec{x})$.  We assume
  that $i_1, \ldots, i_k$ are the indexes of $\mu$-equations and
  $j_1, \ldots, j_{m-k}$ are the indexes of $\nu$-equations, i.e.,
  $\eta_{i_h} = \mu$ for $h \in \{ 1, \ldots, k\}$ and
  $\eta_{j_h} = \nu$ for $h \in \{1, \ldots, m-k\}$.
  Given an $k$-tuple of ordinals $\vec{\gamma}$, a
  $\vec{\gamma}$-bounded \emph{hsc-measure} is a tuple of
  functions $\vec{p}\colon \down{\vec{\gamma}} \to L^m$ satisfying
  \begin{enumerate}
  \item (\emph{Monotonicity})
    For $\vec{\alpha}, \vec{\alpha}' \in \down{\vec{\gamma}}$ and
    $a \in \interval{k}$, if $\vec{\alpha} \preceq_a \vec{\alpha}'$ then
    for all $i \geq i_a$ it holds
    $p_i(\vec{\alpha}) \sqsubseteq p_i(\vec{\alpha}')$.
  
  \item (\emph{$\mu$-case})
    For $i \in \interval{m}$, $\eta_i = \mu$ and
    $i = i_a$ for some $a \in \interval{k}$ and
    $\vec{\alpha} = \vec{\alpha'} \alpha_{a} \vec{\alpha''} \in
    \down{\vec{\gamma}}$, we have
    (i)~$p_{i}(\vec{\alpha}',0,\vec{\alpha}'')=\bot$;
    (ii)~$p_{i}(\vec{\alpha}',\alpha+1,\vec{\alpha}'') \sqsubseteq
    f_{i}(\vec{p}(\vec{\beta}',\alpha,\vec{\alpha}''))$ for some
    $\vec{\beta}'$ and
    (iii)~$p_{i}(\vec{\alpha}',\alpha,\vec{\alpha}'') \sqsubseteq
    \bigsqcup_{\beta<\alpha}
    f_{i}(\vec{p}^E(\vec{\alpha}',\beta,\vec{\alpha}''))$ for $\alpha$ a
    limit ordinal.
    
  \item (\emph{$\nu$-case}) For $i \in \interval{m}$, $\eta_i = \nu$,
    $i_{a-1} < i < i_a$ for some $a \in \interval{k}$ and
    $\vec{\alpha} = \vec{\alpha}' \alpha_{a} \vec{\alpha''} \in
    \down{\vec{\gamma}}$, we have
    $p_{i}(\vec{\beta}',\alpha_{a},\vec{\alpha}'') \sqsubseteq
    f_{i}(\vec{p}(\vec{\beta}',\alpha_{a},\vec{\alpha}''))$ for some
    $\vec{\beta}'$.
  \end{enumerate}
\end{definition}

Note that by point (1), for $a \in \interval{k}$ and $i \geq i_a$, the
value of $p_i(\vec{\alpha})$ depends only the components of $\alpha$
of index greater or equal $a$. In fact for all $(m-a)$-tuples of
ordinals $\vec{\alpha}', \vec{\beta}'$ and $a$-tuples of ordinals
$\vec{\alpha}''$ we have 
$\vec{\alpha}' \vec{\alpha}'' \preceq_a \vec{\beta}' \vec{\alpha}''
\preceq_a \vec{\alpha}' \vec{\alpha}''$, hence
$p_i(\vec{\alpha}',\vec{\alpha}'') \sqsubseteq
p_i(\vec{\beta}',\vec{\alpha}'') \sqsubseteq
p_i(\vec{\alpha}',\vec{\alpha}'')$. Thus
$p_i(\vec{\alpha}',\vec{\alpha}'') =
p_i(\vec{\beta}',\vec{\alpha}'')$.

As mentioned above, we can easily adapt the definition of
$\mu$-approximant (Definition~\ref{de:approximants}) to get a
hsc-measure which can be shown to be the greatest one. Intuitively, the fact that $\mu$-approximants are closely related to the greatest hsc-meaure explains why our interest is mainly concentrated on $\mu$-approximants and their dual ($\nu$-approximants): the greatest hsc-measure surely provides a sound and complete approximation of the solution.

\begin{theorem}[$\mu$-approximants as hsc-measures]
  \label{thm:approximants}
  Let $E$ be a system of $m$ equations over the lattice $L$, of the
  kind $\vec{x} =_{\vec{\eta}} \vec{f}(\vec{x})$. Given
  $\vec{\alpha} \in \down{\asc{L}}^m$, define
  $\vec{p}^E(\alpha) = \vec{l}$ where for all $i \in \interval{m}$
  \begin{itemize}
  \item if $\eta_i = \nu$, then $l_i =
    \nu (f_{i,\vec{l}})$
  \item if $\eta_i = \mu$ with $i = i_a$ then
    $l_i = f_{i,{\vec{l}}}^{\alpha_a}(\bot)$
  \end{itemize}
  Then $\vec{p}$ is a $\down{\asc{L}}^m$-bounded hsc-measure and it is
  the greatest one.
\end{theorem}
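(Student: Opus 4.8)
The plan is to recognise first that $\vec{p}^E$ is precisely the map sending a tuple of ordinals $\vec{\alpha}$ (read on the $\mu$-indices) to the unique $\mu$-approximant $\vec{l}$ whose $\mu$-components of $\ord{\vec{l}}$ are dominated by the corresponding entries of $\vec{\alpha}$: the defining clauses for $l_i$ are exactly the conditions of Definition~\ref{de:approximants}, and the recursion is well-founded because, as already observed after that definition, $f_{i,\vec{l}}$ depends only on $\subvec{l}{i+1}{m}$, so $\vec{l}$ is built from the last component downwards as in Lemma~\ref{le:inductive-approximant}. (The cosmetic mismatch between the $m$-tuples of the statement and the $k$-tuples $\down{\vec{\gamma}}$ of Definition~\ref{de:hsc-measure} disappears once one discards the $\nu$-components, on which $\vec{p}^E$ does not depend.) It then remains to (a)~verify the three clauses in the definition of hsc-measure for $\vec{p}^E$, and (b)~show that $\vec{p}^E$ is $\sqsubseteq$-above every other $\vec{\gamma}$-bounded hsc-measure.

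For part~(a), monotonicity (clause~1) follows by induction on $m$, combining Lemma~\ref{le:solution-monotone} (monotonicity of the interim solutions $\sol{\subst{\subst{E}{\subvec{x}{i+1}{m}}{\subvec{l}{i+1}{m}}}{x_i}{x}}$ in $x$ and in the substituted values) with monotonicity of the Kleene iterates in their starting data: if $\vec{\alpha}\preceq_a\vec{\alpha}'$ the components of $\vec{p}^E(\vec{\alpha})$ and $\vec{p}^E(\vec{\alpha}')$ of index $\ge i_a$ are already ordered, and this order propagates downwards through each $f_i$. The $\mu$-case (clause~2) just transcribes the Kleene chain: $f_{i,\vec{l}}^0(\bot)=\bot$ gives~(i); $f_{i,\vec{l}}^{\alpha+1}(\bot)=f_{i,\vec{l}}(f_{i,\vec{l}}^{\alpha}(\bot))=f_i(\vec{p}^E(\ldots,\alpha,\ldots))$ gives~(ii), where the witness $\vec{\beta}'$ is the one making the lower components equal the interim solution at $f_{i,\vec{l}}^{\alpha}(\bot)$, a legitimate choice by clause~1; and the limit case~(iii) is just the definition $f_{i,\vec{l}}^{\alpha}(\bot)=\bigsqcup_{\beta<\alpha}f_{i,\vec{l}}^{\beta}(\bot)$. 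The $\nu$-case (clause~3) is immediate, since $l_i=\nu(f_{i,\vec{l}})$ is a fixpoint of $f_{i,\vec{l}}$, so $l_i=f_{i,\vec{l}}(l_i)=f_i(\vec{p}^E(\vec{\alpha}))$.

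For part~(b), given any $\vec{\gamma}$-bounded hsc-measure $\vec{q}$, I would show $q_i(\vec{\alpha})\sqsubseteq p_i^E(\vec{\alpha})$ for all $i\in\interval{m}$ and all $\vec{\alpha}$ by a nested induction: an outer induction following the recursive structure of the system (equivalently, on the index $i$ from $m$ down to $1$) and, at each $\mu$-index $i=i_a$, an inner transfinite induction on $\alpha_a$. At a $\nu$-index, clause~3 gives $q_i(\vec{\alpha})\sqsubseteq f_i(\vec{q}(\ldots))$; replacing the remaining arguments by the corresponding $p^E$-values via the outer hypothesis and using monotonicity of $f_i$ turns $q_i(\vec{\alpha})$ into a post-fixpoint of $f_{i,\vec{p}^E(\vec{\alpha})}$, whence $q_i(\vec{\alpha})\sqsubseteq\nu(f_{i,\vec{p}^E(\vec{\alpha})})=p_i^E(\vec{\alpha})$. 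At a $\mu$-index, clauses~2(i)--(iii), fed with the inductive hypotheses, bound $q_i(\vec{\alpha})$ above by $f_{i,\vec{p}^E(\vec{\alpha})}^{\alpha_a}(\bot)=p_i^E(\vec{\alpha})$ through the inner transfinite induction. I expect the main obstacle to be precisely this bookkeeping in~(b): interleaving the recursion on equations with the transfinite induction on the $\mu$-ordinal, and converting the purely existential quantifications over $\vec{\beta}'$ in clauses~2(ii)--(iii) and~3 into concrete inequalities about the interim solutions $\sol{\subst{\subst{E}{\subvec{x}{i+1}{m}}{\subvec{l}{i+1}{m}}}{x_i}{x}}$ hidden inside $f_{i,\vec{l}}$. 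Once this translation between the hsc-measure interim notation and our $\sol{\subst{\cdot}{\cdot}{\cdot}}$ notation is made explicit --- using clause~1 to normalise the choice of $\vec{\beta}'$ --- both the verification in~(a) and the domination argument in~(b) become routine.
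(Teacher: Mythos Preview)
Your proposal is essentially correct and follows the same overall strategy as the paper: first identify $\vec{p}^E$ with the $\mu$-approximant map and check the three hsc-measure clauses by unfolding the Kleene chain (the paper even proves the $\mu$- and $\nu$-clauses with \emph{equality}, taking the witness $\vec{\beta}'=(\asc{L},\ldots,\asc{L})$); then prove domination by a nested induction combining the equation index with the ordinals.

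One organisational difference is worth noting. You set up part~(b) with an \emph{outer} induction on $i$ (from $m$ down) and an \emph{inner} transfinite induction on $\alpha_a$ at $\mu$-indices; the paper instead does an outer well-founded induction on the full vector $\vec{\alpha}$ with respect to $\preceq$ and an inner induction on $m-i$. Both work, but there is a small slip in your $\nu$-case description: when you write ``replacing the remaining arguments by the corresponding $p^E$-values via the outer hypothesis'', your outer hypothesis only covers indices $j>i$, not $j<i$. The argument for $j<i$ --- bounding $q_j(\vec{\beta})$ by the interim solution $\sol[j]{\subst{\subst{E}{\subvec{x}{i+1}{m}}{\subvec{q}{i+1}{m}(\vec{\beta})}}{x_i}{q_i(\vec{\beta})}}$ --- is precisely the property the paper imports as $(*)$ from~\cite[Theorem~2.13]{hsc:lattice-progress-measures}, and it is exactly the ``conversion of existential quantifications into inequalities about interim solutions'' that you correctly flag as the main obstacle. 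So you have identified the gap; just be aware that it is not a consequence of either of your inductive hypotheses but a separate lemma about arbitrary hsc-measures that must be proved (or cited) first.
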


\begin{proof}
  Let us start by proving that $p^E$ is a hsc-measure. Concerning
  monotonicity, let $a \in \interval{k}$, let
  $\vec{\alpha} \preceq_a \vec{\alpha}'$ and let
  $\vec{p}^E(\vec{\alpha}) = \vec{l}$ and
  $\vec{p}^E(\vec{\alpha}) = \vec{l}'$.
  We can show that for any $i \geq i_a$ it holds
  $l_i \sqsubseteq l_i'$, by means of an inductive argument (on
  $m-i$). 
  The base case is $i = m$. Observe that
  $f_{m,\vec{l}} = f_{m,\vec{l}'}$ (as $f_{m,\vec{l}}$ is independent
  of $\vec{l}$). There are two possibilities.
  \begin{itemize}
  \item If $\eta_m = \mu$ then $i_k = m$. Since
    $\alpha_k \leq \alpha_k'$ we have
    $l_m = f_{m,\vec{l}}^{\alpha_k}(\bot) \sqsubseteq
    f_{m,\vec{l}}^{\alpha_k'}(\bot) = f_{m,\vec{l}'}^{\alpha_k'}(\bot)
    = l_m'$.
  \item If $\eta_m = \nu$ then we have
    $l_m = \nu(f_{m,\vec{l}}) = \nu(f_{m,\vec{l}'}) = l_m'$.
  \end{itemize}

  For $i_a \leq i < m$, observe that by inductive hypothesis
  $\subvec{l}{i+1}{m} \sqsubseteq \subvec{l'}{i+1}{m}$. Hence by monotonicity
  of the solution (Lemma~\ref{le:solution-monotone})
  $\sol{\subst{\subst{E}{\subvec{x}{i+1}{m}}{\subvec{l}{i+1}{m}}}{x_i}{x}}
  \sqsubseteq
  \sol{\subst{\subst{E}{\subvec{x}{i+1}{m}}{\subvec{l'}{i+1}{m}}}{x_i}{x}}$.
  Thus, using the fact that $f_i$ is monotonic, we have that
  \begin{align*}
    f_{i,{\vec{l}}}(x)   & =
    f_i(\sol{\subst{\subst{E}{\subvec{x}{i+1}{m}}{\subvec{l}{i+1}{m}}}{x_i}{x}},
    x, \subvec{l}{i+1}{m})\\
                          &  \sqsubseteq f_i(\sol{\subst{\subst{E}{\subvec{x}{i+1}{m}}{\subvec{l'}{i+1}{m}}}{x_i}{x}},
    x, \subvec{l'}{i+1}{m}) = f_{i,{\vec{l'}}}(x).
  \end{align*}
  Given the above, reasoning as in
  the base case, we conclude $l_i \sqsubseteq l_i'$.

  \smallskip

  As a direct consequence of the definition of $\vec{p}^E$ we can show
  that the properties of the $\mu$-case and $\nu$-case in
  Definition~\ref{de:hsc-measure} hold with equality replacing
  $\sqsubseteq$ and the tuple of ordinals
  $\vec{\beta}' = (\asc{L}, \ldots, \asc{L}) = \vec{\asc{L}}$. More
  precisely
  \begin{itemize}
  \item (\emph{$\mu$-case})
    For $i \in \interval{m}$, $\eta_i = \mu$ and
    $i = i_a$ for some $a \in \interval{k}$ and
    $\vec{\alpha} = \vec{\alpha} \alpha_{a} \vec{\alpha''} \in
    \down{\vec{\gamma}}$, we have
    (i)~$p_{i}(\vec{\alpha}',0,\vec{\alpha}'')=\bot$;
    (ii)~$p_{i}(\vec{\alpha}',\alpha+1,\vec{\alpha}'') =
    f_{i}(\vec{p}^E(\vec{\asc{L}},\alpha_a,\vec{\alpha}''))$, and
    (iii)~$p_{i}(\vec{\alpha}',\alpha_a,\vec{\alpha}'') =
    \bigsqcup_{\beta<\alpha_a}
    f_{i}(\vec{p}^E(\vec{\alpha}',\beta,\vec{\alpha}''))$ for $\alpha$
    a limit ordinal.
    
  \item (\emph{$\nu$-case}) For $i \in \interval{m}$, $\eta_i = \nu$,
    $i_{a-1} < i < i_a$ for some $a \in \interval{k}$ and
    $\vec{\alpha} = \vec{\alpha}' \alpha_{a} \vec{\alpha''} \in
    \down{\vec{\gamma}}$, we have
    $p_{i}(\vec{\alpha}',\alpha_{a},\vec{\alpha}'') =
    f_{i}(\vec{p}^E(\vec{\asc{L}},\alpha_{a},\vec{\alpha}''))$.
  \end{itemize}

  In fact

  \begin{itemize}
  \item (\emph{$\mu$-case})
    For $i \in \interval{m}$, $\eta_i = \mu$ and
    $i = i_a$ for some $a \in \interval{k}$ and
    $\vec{\alpha} = \vec{\alpha} \alpha_{a} \vec{\alpha''} \in
    \down{\vec{\gamma}}$, we have
    \begin{enumerate}[(i)]
      
    \item (\emph{Base}) If $\vec{p}^E(\vec{\alpha}',0,\vec{\alpha}'') =\vec{l}$
      then, by definition, we have
      \begin{center}
        $p^E_{i}(\vec{\alpha}',0,\vec{\alpha}'') = l_i =
        f_{i,\vec{l}}^0(\bot) = \bot$
      \end{center}

    \item (\emph{Successor}) For the case of a successor ordinal, if we let
      $\vec{p}^E(\vec{\alpha}',\alpha_a+1,\vec{\alpha}'') = \vec{l}$, we
      have:
      \begin{align*}
        p^E_{i}(\vec{\alpha}',\alpha_a+1,\vec{\alpha}'')
        & =  f_{i,{\vec{l}}}^{\alpha_a+1}(\bot)\\
        & = f_{i,{\vec{l}}}(f_{i,{\vec{l}}}^{\alpha_a}(\bot))\\
        & = f_i(\sol{\subst{\subst{E}{\subvec{x}{i+1}{m}}{\subvec{l}{i+1}{m}}}{x_i}{f_{i,{\vec{l}}}^{\alpha_a}(\bot)}},
    f_{i,{\vec{l}}}^{\alpha_a}(\bot), \subvec{l}{i+1}{m})
      \end{align*}

      Let
      $\vec{l}' = (\sol{\subst{\subst{E}{\subvec{x}{i+1}{m}}
          {\subvec{l}{i+1}{m}}}{x_i}
        {f_{i,{\vec{l}}}^{\alpha_a}(\bot)}},
      f_{i,{\vec{l}}}^{\alpha_a}(\bot), \subvec{l}{i+1}{m})$.

      Observe that for $j>i$ we have
      $l_j' = l_j = p_{j}(\vec{\asc{L}},\alpha_a,\vec{\alpha}'')$
      since $p_j$ only depends on $\subvec{\alpha}{j}{m}$. Using
      this fact, we also get
      $l_i' = p_{i}(\vec{\asc{L}},\alpha_a,\vec{\alpha}'')$.
      Finally, for $j<i$, it holds
      $l_j' = \sol[j]{\subst{E}{\subvec{x}{i}{m}}
        {\subvec{l'}{i}{m}}}$.
      Hence, if $\eta_j = \nu$, we have
      $l_j' = \nu(f_{i,\vec{l}'}) =
      p^E_j(\vec{\asc{L}},\alpha_a,\vec{\alpha}'')$ and, if
      $\eta_j = \mu$ we have
      $l_j' = \mu(f_{i,\vec{l}'}) = f_{i,\vec{l}'}^{\asc{l}}(\bot) =
      p^E_j(\vec{\asc{L}},\alpha_a,\vec{\alpha}'')$.

      The above shows that
      $\vec{l}' = \vec{p}^E(\vec{\asc{L}},\alpha_a, \vec{\alpha}'')$
      and therefore
      $p^E_{i}(\vec{\alpha}',\alpha_a+1,\vec{\alpha}'') =
      f_i(\vec{p}^E(\vec{\asc{L}},\alpha_a,\vec{\alpha}''))$, as desired.

    \item (\emph{Limit}) If $\alpha_a$ is a limit ordinal, let
      $\vec{p}^E(\vec{\alpha}',\alpha_a,\vec{\alpha}'') =
      \vec{l}$. Then we have:
      \begin{align}
        \label{eq:hsc-limit}
        p^E_{i}(\vec{\alpha}',\alpha_a,\vec{\alpha}'')
        = f_{i,{\vec{l}}}^{\alpha_a}(\bot)
        = \bigsqcup_{\beta < \alpha_a} f_{i,{\vec{l}}}^{\beta}(\bot)
      \end{align}
      Now, observe that when taking the join above we can restrict to
      successor ordinals $\beta = \beta' + 1$, and thus, reasoning as
      in the previous case, we get that
      $f_{i,{\vec{l}}}^{\beta}(\bot) = f_i(\vec{p}^E(\vec{\asc{L}},
      \beta', \vec{\alpha}'')) = p^E_i(\vec{\asc{L}}, \beta'+1,
      \vec{\alpha}'') = p^E_i(\vec{\asc{L}}, \beta, \vec{\alpha}'') =
      p^E_i(\vec{\alpha}', \beta, \vec{\alpha}'')$, since $p^E_i$ only
      depends on the components $i, \ldots, m$ of its argument.
      Therefore, replacing in (\ref{eq:hsc-limit}) we obtain
      \begin{align}
        p^E_{i}(\vec{\alpha}',\alpha_a,\vec{\alpha}'')
        = \bigsqcup_{\beta < \alpha_a} p^E_i(\vec{\alpha}', \beta, \vec{\alpha}'')
      \end{align}
      as desired.
    \end{enumerate}
    
  \item (\emph{$\nu$-case}) For $i \in \interval{m}$, $\eta_i = \nu$,
    $i_{a-1} < i < i_a$ for some $a \in \interval{k}$ and
    $\vec{\alpha} = \vec{\alpha}' \alpha_{a} \vec{\alpha''} \in
    \down{\vec{\gamma}}$, if we let $\vec{p}^E(\vec{\alpha}',
    \beta', \vec{\alpha}'') = \vec{l}$ we have
    $p_{i}(\vec{\alpha}',\alpha_{a},\vec{\alpha}'') = \nu (f_{i,\vec{l}})$.
    Therefore
    \begin{align*}
      p^E_{i}(\vec{\alpha}',\alpha_a,\vec{\alpha}'')
      & =  l_i \\
      & =  f_{i,\vec{l}}(l_i) =\\
      & = f_i(\sol{\subst{E}{\subvec{x}{i+1}{m}}{\subvec{l}{i}{m}}},
        \subvec{l}{i}{m})
    \end{align*}

    Let
    $\vec{l}' = (\sol{\subst{E}{\subvec{x}{i}{m}}
        {\subvec{l}{i}{m}}},\subvec{l}{i}{m})$.
    Observe that for $j \geq i$ we have
    $l_j' = l_j = p_{j}(\vec{\alpha}',\alpha_a,\vec{\alpha}'') =
    p_{j}(\vec{\asc{L}},\alpha_a,\vec{\alpha}'')$ since $p_j$ only
    depends on $\subvec{\alpha}{j}{m}$. 
    Instead, for $j<i$, it holds
    $l_j' = \sol[j]{\subst{E}{\subvec{x}{i}{m}}
      {\subvec{l'}{i}{m}}}$.
    Hence, if $\eta_j = \nu$, we have
    $l_j' = \nu(f_{i,\vec{l}'}) =
    p^E_j(\vec{\asc{L}},\alpha_a,\vec{\alpha}'')$ and, if
    $\eta_j = \mu$ we have
    $l_j' = \mu(f_{i,\vec{l}'}) = f_{i,\vec{l}'}^{\asc{l}}(\bot) =
    p^E_j(\vec{\asc{L}},\alpha_a,\vec{\alpha}'')$.
    
    The above shows that
    $\vec{l}' = \vec{p}^E(\vec{\asc{L}},\alpha_a, \vec{\alpha}'')$
    and therefore
    $p^E_{i}(\vec{\alpha}',\alpha_a,\vec{\alpha}'') =
    f_i(\vec{p}^E(\vec{\asc{L}},\alpha_a,\vec{\alpha}''))$, as desired.

  \end{itemize}

  This proves that $\vec{p}^E$ is a hsc-measure.

  \smallskip

  In addition, for any other progress measure $\vec{p}$ it holds that
  for any $\vec{\alpha} \in \down{\asc{L}}^m$ and
  $i \in \interval{m}$, we have
  $p_i(\vec{\alpha}) \sqsubseteq p^E_i(\vec{\alpha})$. The proof
  proceeds by induction on the ordinal vector $\vec{\alpha}$ with
  respect to the well-founded order $\preceq$. In order to show that
  for all $i \in \interval{m}$,
  $p_i(\vec{\alpha}) \sqsubseteq p^E_i(\vec{\alpha})$ we proceed by
  induction on $m-i$. If $\eta_i = \mu$, consider the index $a$ such
  that $i_a = i$.
  If $\alpha_a =0$ then
  $p_i(\vec{\alpha}) = \bot = p^E_i(\vec{\alpha})$. For a successor
  ordinal $\alpha_a = \alpha + 1$,
  \begin{align*}
    p_{i}(\vec{\alpha}',\alpha + 1,\vec{\alpha}'')
    & \sqsubseteq  f_i(\vec{p}(\vec{\beta}',\alpha,\vec{\alpha}''))
    & \mbox{[for some $\vec{\beta}'$, by Def.~\ref{de:hsc-measure}(2)]}\\
    & \sqsubseteq  f_i(\vec{p}(\vec{\asc{L}},\alpha,\vec{\alpha}''))
    & \mbox{[by Def.~\ref{de:hsc-measure}(1) and monotonicity of $f_i$]}\\
    & \sqsubseteq  f_i(\vec{p}^E(\vec{\asc{L}},\alpha_a,\vec{\alpha}''))
    & \mbox{[by ind. hyp. and monotonicity of $f_i$]}\\
    & = p^E_{i}(\vec{\asc{L}},\alpha+1,\vec{\alpha}'')
    & \mbox{[by $\mu$-case, property (ii) of $\vec{p}^E$]}\\
    & = p^E_{i}(\vec{\alpha}',\alpha+1,\vec{\alpha}'')
    & \mbox{[since $p^E_i$ only depends on components  $i, \ldots, m$]}
  \end{align*}
  When $\alpha_a$ is a limit ordinal, by inductive hypothesis we know that for all $\beta < \alpha_a$ we have
  $p_i(\vec{\alpha}', \beta, \vec{\alpha}'') \sqsubseteq
  p^E_i(\vec{\alpha}', \beta, \vec{\alpha}'')$ and thus
  \begin{align*}
    p_{i}(\vec{\alpha}',\alpha_a,\vec{\alpha}'')
    & \sqsubseteq  \bigsqcup_{\beta<\alpha_a}
      f_i(\vec{p}(\vec{\alpha}',\beta,\vec{\alpha}''))
    & \mbox{[since $\vec{p}$ is a progress measure]}\\
    & \sqsubseteq  \bigsqcup_{\beta<\alpha_a}
      f_i(\vec{p}^E(\vec{\alpha}',\beta,\vec{\alpha}''))
    & \mbox{[by ind. hyp. and monotonicity of $f_i$]}\\
    & = \vec{p}^E(\vec{\alpha}',\alpha_a,\vec{\alpha}'')  
    & \mbox{[by $\mu$-case, property (iii) of $\vec{p}^E$]}
  \end{align*}

  For the case $\eta_i = \nu$, let $a \in \interval{k}$ be the index
  such that $i_{a-1} < i < i_a$ and let
  $\vec{\alpha} = \vec{\alpha}' \alpha_{a} \vec{\alpha''} \in
  \down{\vec{\gamma}}$. By Definition~\ref{de:hsc-measure}(3), there
  must exist $\vec{\beta}'$ such that
  \begin{align}
    \label{eq:nu-hsc0}
    p_{i}(\vec{\alpha}',\alpha_{a},\vec{\alpha}'') \sqsubseteq
    f_{i}(\vec{p}(\vec{\beta}',\alpha_{a},\vec{\alpha}'')).
  \end{align}
  
  Let $\vec{\beta} = (\vec{\beta}',\alpha_a,\vec{\alpha}'')$. 
  By inner inductive hypothesis, for $j>i$
  \begin{align}
    \label{eq:nu-hsc1}
    p_j(\vec{\beta}) \sqsubseteq 
    p^E_j(\vec{\beta})
  \end{align}

  Moreover, as shown
  in~\cite[Thm. 2.13]{hsc:lattice-progress-measures} (property denoted
  by $(*)$), we have that for $j<i$ it holds that
  \begin{center}
    $p_j(\vec{\beta}) \sqsubseteq
    \sol[j]{
      \subst{
        \subst{E}
        {\subvec{x}{i+1}{m}}
        {\subvec{p}{i+1}{m}(\vec{\beta})}
        }
        {x_i}
        {p_i(\vec{\beta})}
    }$
  \end{center}
  In turn, by monotonicity of the solution
  (Lemma~\ref{le:solution-monotone}) and (\ref{eq:nu-hsc1}), we get
  
  \begin{align}
    \label{eq:nu-hsc2}
    p_j(\vec{\beta}) \sqsubseteq
    \sol[j]{
      \subst{
        \subst{E}
        {\subvec{x}{i+1}{m}}
        {\subvec{p^E}{i+1}{m}(\vec{\beta})}
        }
        {x_i}
        {p_i(\vec{\beta})}
    }
  \end{align}

  Therefore, putting things together, from (\ref{eq:nu-hsc0}), we get
  \begin{align*}    
    & p_{i}(\vec{\alpha}',\alpha_a,\vec{\alpha}'')\\
    & \sqsubseteq  f_i(\vec{p}(\vec{\beta}))\\
    & \sqsubseteq  f_i(
      \sol{
      \subst{
        \subst{E}
        {\subvec{x}{i+1}{m}}
        {\subvec{p^E}{i+1}{m}(\vec{\beta})}
        }
        {x_i}
        {p_i(\vec{\beta})}
      },
      p_i(\vec{\beta}),
      \subvec{p^E}{i+1}{m}(\vec{\beta}
      ))\\
    & \quad \mbox{[by (\ref{eq:nu-hsc1}) and (\ref{eq:nu-hsc2}) and monotonicity of $f_i$]}\\
    & \sqsubseteq  f_{i,\vec{p}^E(\vec{\beta})}(p_i(\vec{\beta}))
    & 
  \end{align*}  
    
  Recalling that $p_i$ only depends on components $i+1, \ldots, m$, we
  have that
  $p_{i}(\vec{\alpha}',\alpha_a,\vec{\alpha}'') =
  p_{i}(\vec{\beta}',\alpha_a,\vec{\alpha}'') = p_{i}(\vec{\beta})$,
  i.e., the inequality above can be rewritten as
  \begin{center}
    $p_{i}(\vec{\beta}) \sqsubseteq
    f_{i,\vec{p}^E(\vec{\beta})}(p_i(\vec{\beta}))$.
  \end{center}

  This means that $p_i(\vec{\beta})$ is a post-fixpoint of
  $f_{i,\vec{p}^E(\vec{\beta})}$, and by definition of $\vec{p}^E$, we
  have that $p_i^E(\vec{\beta})$ is the greatest fixpoint of
  $f_{i,\vec{p^E}(\beta)}$. Therefore
  $p_i(\vec{\beta}) \sqsubseteq p^E_i(\vec{\beta})$. Recalling that
  $\vec{\beta} = (\vec{\beta}',\alpha_a,\vec{\alpha}'')$ and, again,
  that $p_i$ only depends on components $i+1, \ldots, m$, we conclude
  the desired inequality
  \begin{center}
    $p_i(\vec{\alpha}',\alpha_a,\vec{\alpha}'') =
    p_i(\vec{\beta})  \sqsubseteq p^E_i(\vec{\beta})
    = p^E_i(\vec{\alpha}',\alpha_a,\vec{\alpha}'')$.
  \end{center}
\end{proof}

Note that the characterisation of $\vec{p}^E$ used in the proof offers
an alternative method for computing approximations of the fixpoint and
could be interesting in its own right, for instance for cases where
the basis is too large -- for instance for the reals -- and it is
infeasible to determine the progress measure for every element of the
basis.

The notion of \emph{matrix progress measure} (MPM)
in~\cite{hsc:lattice-progress-measures}, which is introduced for
powerset lattices, is closely related to the game-theoretical progress
measure that we proposed for equations over continuous lattices: it
can be seen as an instance of our notion for systems of equations
arising from formulae in the coalgebraic $\mu$-calculus.

\section{Technical results}
\label{sec:technical}

\subsection{Sup-Respecting Progress Measures
  (\S~\ref{ssec:progress-fix})}
\label{ssec:sup-respecting}

  In order to show that $\Phi_E$ preserves sup-respecting functions $R$
we first need a technical lemma that will also prove useful for the
logic characterising symbolic $\exists$-moves.

\begin{lemma}
  \label{lem:exchange-min-sup}
  Let $L$ be a continuous lattice and let $(U_k)_{k\in K}$ with
  $U_k\subseteq L^m$ be a collection of upward-closed sets. Assume
  that $R\colon L \to \interval{m} \to \lift{\asc{L}}{m}$ is
  sup-respecting. Then it holds that:
  \begin{eqnarray*}
    && \min\nolimits_{\preceq_i} \{ \sup \{ R(b')(j) + \vec{\delta}^{\eta_i}_i
    \mid j \in \interval{m}\ \land\ b' \ll l_j\} \mid \vec{l} \in
    \bigcap_{k\in K} U_k \} \\
    & = & \sup_{k\in K} \min\nolimits_{\preceq_i} \{ \sup
    \{ R(b')(j) + \vec{\delta}^{\eta_i}_i \mid j \in \interval{m}\
    \land\ b' \ll l_j\} \mid \vec{l} \in U_k \}
  \end{eqnarray*}
\end{lemma}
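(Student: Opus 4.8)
The plan is to prove the two $\preceq_i$-inequalities between the two sides separately; since both sides are built from $\min\nolimits_{\preceq_i}$-operations (and joins of such), every entry in a position below $i$ is $0$, so $\preceq_i$-equality immediately upgrades to honest equality. For brevity write $S(\vec l) = \sup\{\, R(b')(j) + \vec{\delta}^{\eta_i}_i \mid j \in \interval m,\ b'\in B_L,\ b' \ll l_j \,\}$, so that the left-hand side is $L_* := \min\nolimits_{\preceq_i}\{ S(\vec l)\mid \vec l \in \bigcap_{k} U_k\}$ and the right-hand side is $R_* := \sup_{k\in K}\min\nolimits_{\preceq_i}\{ S(\vec l)\mid \vec l \in U_k\}$. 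First I would dispose of the trivial cases: if some $U_k=\emptyset$ then $\bigcap_k U_k=\emptyset$ and both sides equal $\err$; otherwise, since each $U_k$ is up-closed and non-empty, $(\top,\dots,\top)\in\bigcap_k U_k$, so the intersection is non-empty too. Using that $\lift{\asc L}{m}$ is well-ordered by $\preceq$, for each $k$ the minimum $\vec\beta_k := \min\nolimits_{\preceq_i}\{ S(\vec l)\mid \vec l\in U_k\}$ is attained by some $\vec l^{(k)}\in U_k$, i.e.\ $\vec\beta_k =_i S(\vec l^{(k)})$ and $\vec\beta_k\preceq_i S(\vec l')$ for all $\vec l'\in U_k$.

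The inequality $R_*\preceq_i L_*$ is the easy one: from $\bigcap_k U_k\subseteq U_k$ we get that $\vec\beta_k$ is a $\preceq_i$-minimum over a larger family than $L_*$, hence $\vec\beta_k\preceq_i L_*$ for every $k$; and since the suffix $\subvec{v}{i}{m}$ of a $\preceq$-join $\bigsqcup X$ is just the $\preceq$-join of the suffixes $\subvec{l'}{i}{m}$ of the joinands, $\sup$ is monotone for $\preceq_i$, so $R_*=\sup_k\vec\beta_k\preceq_i L_*$.

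For $L_*\preceq_i R_*$ the idea is to exhibit one witness in $\bigcap_k U_k$ whose $S$-value is $\preceq_i\sup_k\vec\beta_k$. Put $\vec l^* := \bigsqcup_{k} \vec l^{(k)}$ (pointwise join in $L^m$). Each $U_k$ is up-closed and contains $\vec l^{(k)}\sqsubseteq \vec l^*$, hence $\vec l^*\in\bigcap_k U_k$ and $L_*\preceq_i S(\vec l^*)$; it remains to bound $S(\vec l^*)$. Fix $j\in\interval m$ and $b'\in B_L$ with $b'\ll l^*_j = \bigsqcup_{k} l^{(k)}_j$. The finite partial joins of the $l^{(k)}_j$ form a directed set with join $l^*_j$, so $b'\sqsubseteq \bigsqcup_{s\in F} l^{(k_s)}_j$ for some finite $F\subseteq K$. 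Since $L$ is continuous and $B_L$ a basis, $\bigsqcup_{s\in F} l^{(k_s)}_j = \bigsqcup Y$ with $Y := \bigcup_{s\in F}(\cone{l^{(k_s)}_j}\cap B_L)\subseteq B_L$, so $R$ being sup-respecting gives $R(b')(j)\preceq \sup\{ R(b'')(j)\mid b''\in Y\}$. For each $b''\in Y$ we have $b''\ll l^{(k_s)}_j$ for some $s\in F$, so $R(b'')(j)+\vec{\delta}^{\eta_i}_i$ is one of the elements whose supremum is $S(\vec l^{(k_s)})$, whence $R(b'')(j)\preceq R(b'')(j)+\vec{\delta}^{\eta_i}_i\preceq S(\vec l^{(k_s)}) =_i \vec\beta_{k_s}\preceq_i\sup_k\vec\beta_k$, and likewise $R(b')(j)+\vec{\delta}^{\eta_i}_i\preceq_i\sup_k\vec\beta_k$. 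As this holds for all admissible $j$ and $b'$, and $\sup$ is monotone for $\preceq_i$ (same suffix remark as above), $S(\vec l^*)\preceq_i\sup_k\vec\beta_k = R_*$, completing this direction and hence the lemma.

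The main obstacle is precisely this last direction, and in it the coordination of three facts: up-closedness of the $U_k$ (so that $\vec l^*$ actually lies in the intersection), the way-below relation collapsing the join over the arbitrary index set $K$ to a finite one (so that sup-respecting can be applied), and continuity of $L$ together with sup-respecting of $R$ (to transport the value $R(b')(j)$ back down onto the chosen witnesses). A secondary but pervasive nuisance is keeping the bookkeeping straight between $\preceq$, $\preceq_i$ and the $0$-padding produced by $\min\nolimits_{\preceq_i}$; note that this is exactly why I keep the summand $\vec{\delta}^{\eta_i}_i$ \emph{inside} the suprema throughout rather than factoring it out, since pulling a constant successor out of a lexicographic join is not valid at limit suprema.
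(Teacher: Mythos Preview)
Your approach is sound and takes a genuinely different route from the paper. You argue the two $\preceq_i$-inequalities via an explicit witness: pick minimisers $\vec l^{(k)}$ of $S$ over each $U_k$ (which exist since $\preceq$ is a well-order on $\lift{\asc L}{m}$), form $\vec l^{*}=\bigsqcup_k\vec l^{(k)}\in\bigcap_kU_k$, and bound $S(\vec l^{*})$. The paper instead represents $\bigcap_kU_k$ as $\{\bigsqcup_k\vec l^k\mid \vec l^k\in U_k\}$ ranging over \emph{all} choice functions, proves the pointwise identity $\sup\{R(b')(j)+\vec\delta^{\eta_i}_i\mid b'\ll\bigsqcup_kl^k_j\}=\sup_k\sup\{R(b')(j)+\vec\delta^{\eta_i}_i\mid b'\ll l^k_j\}$, and then swaps the outer $\min_{\preceq_i}$ with $\sup_k$ by invoking complete distributivity of $\lift{\asc L}{m}$. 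Your argument is more elementary (no black-box appeal to complete distributivity) at the cost of needing the minima to be attained; the paper's is slicker but leans on a structural property of the ordinal-vector lattice.

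One step needs tightening. From sup-respecting you get $R(b')(j)\preceq\sup_{b''\in Y}R(b'')(j)$, and for each $b''\in Y$ you have $R(b'')(j)+\vec\delta^{\eta_i}_i\preceq_i R_*$; your ``likewise'' then asserts $R(b')(j)+\vec\delta^{\eta_i}_i\preceq_i R_*$. This does not follow as written: if $\sup_{b''\in Y}R(b'')(j)$ is not attained, adding $\vec\delta^{\eta_i}_i$ on the left can overshoot $\sup_{b''}(R(b'')(j)+\vec\delta^{\eta_i}_i)$ --- precisely the limit-successor phenomenon you flag in your closing remark. The fix is to exploit $b'\ll l^*_j$ more sharply: with $Y'=\bigcup_k(\cone{l^{(k)}_j}\cap B_L)$ we have $l^*_j=\bigsqcup Y'$, and way-below against the directed set of \emph{finite} joins of $Y'$ yields $b'\sqsubseteq\bigsqcup G$ for some finite $G\subseteq Y'$. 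Sup-respecting then gives $R(b')(j)\preceq\max_{g\in G}R(g)(j)=R(g_0)(j)$ for a concrete $g_0\in Y'$, whence $R(b')(j)+\vec\delta^{\eta_i}_i\preceq R(g_0)(j)+\vec\delta^{\eta_i}_i\preceq_i R_*$. Your detour through a finite $F\subseteq K$ has the right instinct but then re-introduces an infinite $Y$, so it does not close the gap.
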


\begin{proof}
  Since all $U_k$ are upward-closed, their intersection can be written
  as
  $\bigcap_{k\in K} U_k = \{\bigsqcup_{k\in K} \vec{l}^k\mid \vec{l}^k\in
  U_k, k\in K\}$ (where suprema of $L^m$ are taken pointwise).  Hence
  the left-hand side of the equation can be rewritten to
  \[
    \min\nolimits_{\preceq_i} \{\sup \{ R(b')(j) +
    \vec{\delta}^{\eta_i}_i \mid j \in \interval{m}\ \land\ b' \ll
    \bigsqcup_{k\in K} l_j^k \} \mid \vec{l}^k\in U_k, k\in K \}
  \]
  We first show that for $j\in\interval{m}$
  \[ \sup\{ R(b')(j) + \vec{\delta}_i^{\eta_i} \mid b'\ll \bigsqcup_{k\in K}
    l_j^k \} = \sup_{k\in K} \sup \{ R(b')(j) + \vec{\delta}_i^{\eta_i} \mid
    b'\ll l_j^k \}\]
  \begin{itemize}
  \item ($\sqsupseteq$) This direction is obvious since $b'\ll l_j^k$
    implies $b'\ll \bigsqcup_{k\in K}l_j^k$. Hence every ordinal
    vector of the form $R(b')(j) + \vec{\delta}_i^{\eta_i}$ which is
    contained in the right-hand side set is automatically a member of
    the left-hand side set.
  \item ($\sqsubseteq$) Let $b'\ll \bigsqcup_{k\in K} l_j^k$.  This
    implies that $b'\ll \bigsqcup_{k\in K} l_j^k = \bigsqcup Y$ where
    $Y = \bigcup_{k\in K} (\cone{l_j^k}\cap B_L)$, since we are in a
    continuous lattice. Then there exists a finite subset
    $Y'\subseteq Y$ such that $b'\sqsubseteq \bigsqcup Y'$
    (see~\cite[Remark on p.~50]{ghklms:continuous-lattices-domains}).

    Since $R$ is sup-respecting we have
    \begin{eqnarray*}
      R(b')(j) + \vec{\delta}_i^{\eta_i} & \preceq &
      \big(\sup_{y\in Y'} R(y)(j)\big) + \vec{\delta}_i^{\eta_i} \\
      & = & \sup_{y\in Y'} \big(R(y)(j) + \vec{\delta}_i^{\eta_i}\big)
      \\
      & \preceq & \sup_{y\in Y} \big(R(y)(j) + \vec{\delta}_i^{\eta_i}\big) \\
      & = & \sup_{k\in K} \sup \{ R(y)(j) +
      \vec{\delta}_i^{\eta_i} \mid y\in Y\cap \cone{l_j^k} \} \\
      & = & \sup_{k\in K} \sup \{ R(b')(j) + \vec{\delta}_i^{\eta_i}
      \mid b'\ll l_j^k \} 
    \end{eqnarray*}
    Note that the first equality is due to the fact that~$Y'$ is
    finite and non-empty.
    
    Since the left-hand side of the equation is the supremum of all
    such $R(b')(j)$ and we have shown that the right-hand side is an
    upper bound, the result follows.
  \end{itemize}
  Now we can conclude by showing that
  \begin{eqnarray*}
    && \min\nolimits_{\preceq_i} \{
    \sup_{j\in\interval{m}} \sup \{ R(b')(j) + \vec{\delta}^{\eta_i}_i
    \mid b' \ll \bigsqcup_{k\in K} l_j^k \} \mid
    \vec{l}^k\in U_k, k\in K \} \\
    & = & \min\nolimits_{\preceq_i} \{ \sup_{j\in\interval{m}} \sup_{k\in K}
    \sup \{ R(b')(j) + \vec{\delta}_i^{\eta_i} \mid b'\ll l_j^k \} \mid
    \vec{l}^k\in U_k, k\in K \} \\
    & = & \min\nolimits_{\preceq_i} \{ \sup_{k\in K} \sup_{j\in\interval{m}}
    \sup \{ R(b')(j) + \vec{\delta}_i^{\eta_i} \mid b'\ll l_j^k \} \mid
    \vec{l}^k\in U_k, k\in K \} \\
    & = & \sup_{k\in K} \min\nolimits_{\preceq_i} \{ \sup_{j\in\interval{m}}
    \sup \{ R(b')(j) + \vec{\delta}_i^{\eta_i} \mid b'\ll l_j \} \mid
    \vec{l}\in U_k \} \\
    & = & \sup_{k\in K} \min\nolimits_{\preceq_i} \{ \sup \{ R(b')(j) +
    \vec{\delta}_i^{\eta_i} \mid j\in\interval{m} \land b'\ll l_j \} \mid
    \vec{l}\in U_k \}
  \end{eqnarray*}
  where the second-last equality is due to complete distributivity. 
\end{proof}

\begin{lemma}[$\Phi_E$ preserves sup-respecting functions]
  \label{le:sup-respecting-preserve}
  Let $L$ be a continuous lattice and let $E$ be a system of equations over $L$
  of the kind $\vec{x} =_{\vec{\eta}} \vec{f}(\vec{x})$. If
  $R\colon B_L \to \interval{m} \to \lift{\asc{L}}{m}$ is sup-respecting,
  then $\Phi_E(R)$ is sup-respecting as well.
\end{lemma}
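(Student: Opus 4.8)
The plan is to verify the defining inequality of a sup-respecting function directly, using as the witness move for $\exists$ the pointwise join of the witnesses attached to the elements of $X$. Fix $b\in B_L$, an index $i\in\interval{m}$ and a set $X\subseteq B_L$ with $b\sqsubseteq\lub X$; we must show $\Phi_E(R)(b)(i)\preceq\sup\{\Phi_E(R)(b')(i)\mid b'\in X\}$. If $\Phi_E(R)(b')(i)=\err$ for some $b'\in X$ the inequality is trivial, so assume all these values are distinct from $\err$; then each $\Emoves{b',i}$ is non-empty and, since $\preceq_i$ restricted to ordinal vectors is a well-order, the $\min\nolimits_{\preceq_i}$ in the definition of $\Phi_E(R)(b')(i)$ is attained by some $\vec{l}^{b'}\in\Emoves{b',i}$, so that $\Phi_E(R)(b')(i)=_i\sup\{R(c)(j)+\vec{\delta}^{\eta_i}_i\mid(c,j)\in\Amoves{\vec{l}^{b'}}\}$. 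Let $\vec{l}=\lub\{\vec{l}^{b'}\mid b'\in X\}$, the pointwise join in $L^m$. By monotonicity of $f_i$ we get $f_i(\vec{l})\sqsupseteq\lub\{f_i(\vec{l}^{b'})\mid b'\in X\}\sqsupseteq\lub X\sqsupseteq b$, hence $\vec{l}\in\Emoves{b,i}$, and therefore $\Phi_E(R)(b)(i)\preceq_i\sup\{R(c)(j)+\vec{\delta}^{\eta_i}_i\mid(c,j)\in\Amoves{\vec{l}}\}$, the left-hand side being a $\min\nolimits_{\preceq_i}$ over a set that contains the summand indexed by $\vec{l}$.

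The core of the proof is to bound this last supremum by $\sup\{\Phi_E(R)(b')(i)\mid b'\in X\}$. Let $(c,j)\in\Amoves{\vec{l}}$, i.e.\ $c\in B_L$ with $c\ll l_j=\lub\{l_j^{b'}\mid b'\in X\}$. Since $L$ is continuous, $l_j^{b'}=\lub(\cone{l_j^{b'}}\cap B_L)$ for each $b'$, so $l_j=\lub Z$ with $Z=\bigcup_{b'\in X}(\cone{l_j^{b'}}\cap B_L)\subseteq B_L$; as the set of finite joins of elements of $Z$ is directed with join $l_j$, the relation $c\ll l_j$ yields $z_1,\dots,z_p\in Z$ with $c\sqsubseteq z_1\sqcup\cdots\sqcup z_p$, and each $z_q$ satisfies $z_q\ll l_j^{b'_q}$ for some $b'_q\in X$, whence $(z_q,j)\in\Amoves{\vec{l}^{b'_q}}$. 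Now $R$ is sup-respecting, so $R(c)(j)\preceq\sup\{R(z_q)(j)\mid 1\le q\le p\}$; this being a \emph{finite} supremum it is attained, and as $\vec{\alpha}\mapsto\vec{\alpha}+\vec{\delta}^{\eta_i}_i$ is monotone for $\preceq$ it commutes with it, giving $R(c)(j)+\vec{\delta}^{\eta_i}_i\preceq\sup\{R(z_q)(j)+\vec{\delta}^{\eta_i}_i\mid q\}$. Since each $R(z_q)(j)+\vec{\delta}^{\eta_i}_i$ is one of the summands of $\sup\{R(c')(j')+\vec{\delta}^{\eta_i}_i\mid(c',j')\in\Amoves{\vec{l}^{b'_q}}\}=_i\Phi_E(R)(b'_q)(i)$, we obtain $R(c)(j)+\vec{\delta}^{\eta_i}_i\preceq_i\sup\{\Phi_E(R)(b')(i)\mid b'\in X\}$; taking the supremum over all $(c,j)\in\Amoves{\vec{l}}$ and combining with the previous paragraph yields $\Phi_E(R)(b)(i)\preceq_i\sup\{\Phi_E(R)(b')(i)\mid b'\in X\}$. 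Finally, both sides have $\bot$ in every component below $i$ (the left by the shape of $\min\nolimits_{\preceq_i}$, the right as a supremum of such vectors), so $\preceq_i$ upgrades to the full order, $\Phi_E(R)(b)(i)\preceq\sup\{\Phi_E(R)(b')(i)\mid b'\in X\}$, which is what we wanted.

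I expect the main obstacle to be exactly the finite-cover step: it is essential to exploit the way-below relation — and hence continuity of $L$ — to replace the possibly infinite join $\lub\{l_j^{b'}\mid b'\in X\}$ by a \emph{finite} join of basis elements, each lying way-below some $l_j^{b'}$. Finiteness is what lets us push the sup-respecting inequality of $R$ through the shift $+\vec{\delta}^{\eta_i}_i$: for infinite suprema of ordinal vectors this shift need not commute with the supremum (e.g.\ $\sup\{n+1\mid n\in\omega\}=\omega\neq\omega+1$), so a direct argument ranging over all of $X$ at once would break. The same subtlety is already handled, in a slightly different guise, inside the proof of Lemma~\ref{lem:exchange-min-sup}, which could alternatively be invoked here to perform the min/sup exchange once the witness $\vec{l}$ has been produced.
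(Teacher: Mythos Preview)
Your proof is correct and takes a somewhat different route from the paper. The paper does not pick specific witnesses $\vec{l}^{b'}$; instead it observes $\bigcap_{b'\in X}\Emoves{b',i}\subseteq\Emoves{b,i}$ and then invokes Lemma~\ref{lem:exchange-min-sup} to swap the outer $\min_{\preceq_i}$ with $\sup_{b'\in X}$, a step that relies on complete distributivity of the lattice $(\down{\asc{L}}^m,\preceq)$. By fixing the minimising $\vec{l}^{b'}$ in advance and taking their pointwise join, you need only a one-sided estimate and avoid the min/sup exchange entirely. The finite-cover argument you isolate (way-below a join implies below a finite join of basis elements, each way-below some $l_j^{b'}$) is exactly the ingredient that drives the $(\sqsubseteq)$ direction inside the proof of Lemma~\ref{lem:exchange-min-sup}; your explicit insistence on finiteness, so that the shift by $\vec{\delta}^{\eta_i}_i$ can be pushed through the supremum, is a point the paper's writeup handles more tersely. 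In summary: the paper packages the core estimate into a reusable exchange lemma and appeals to complete distributivity; your argument is self-contained and slightly more elementary.
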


\begin{proof}
  We assume that $R$ is sup-respecting and $R' = \Phi_E(R)$
  is as follows:
  \[
    R'(b)(i) = \min\nolimits_{\preceq_i}
    \{
    \sup \{ R(b')(j) + \vec{\delta}^{\eta_i}_i 
            \mid (b',j) \in \Amoves{\vec{l}} \}
    \mid \vec{l} \in \Emoves{b,i}
    \}
  \]
  The aim is to show that $R'$ is sup-respecting as well. Let
  $X\subseteq B_L$ be a set of basis elements such that
  $b\sqsubseteq \bigsqcup X$. Note furthermore that $\Emoves{b,i}$ is
  upwards-closed. We first show that
  \[ \bigcap_{b'\in X} \Emoves{b',i} \subseteq \Emoves{b,i} \]
  Let $\vec{l}\in \Emoves{b',i}$ for all $b'\in X$, which means that
  $b'\sqsubseteq f_i(\vec{l})$. So if we take the supremum over all
  $b'\in X$ we obtain
  $b\sqsubseteq \bigsqcup X \sqsubseteq \bigsqcup \{f_i(\vec{l})\} =
  f_i\big(\vec{l}\big)$. Hence $\vec{l}\in \Emoves{b,i}$, as required.

  Now we can apply Lemma~\ref{lem:exchange-min-sup} where $K = X$,
  $U_{b'} = \Emoves{b',i}$ and we obtain:
  \begin{eqnarray*}
    R'(b)(i) & = & \min\nolimits_{\preceq_i} \{ \sup \{
    R(b'')(j) + \vec{\delta}^{\eta_i}_i \mid
    (b'',j)\in\Amoves{\vec{l}}\} \mid \vec{l}\in \Emoves{b,i} \} \\
    & \preceq & \min\nolimits_{\preceq_i} \{ \sup \{
    R(b'')(j) + \vec{\delta}^{\eta_i}_i \mid (b'',j)\in
    \Amoves{\vec{l}}\}
    \mid \vec{l}\in \bigcap_{b'\in X} \Emoves{b',i} \} \\
    & = & \sup_{b'\in X} \min\nolimits_{\preceq_i} \{ \sup \{ R(b'')(j) +
    \vec{\delta}_i^{\eta_i} \mid (b'',j)\in
    \Amoves{\vec{l}} \} \mid
    \vec{l}\in \Emoves{b',i} \} \\
    & = & \sup_{b'\in X} R'(b')(i)
  \end{eqnarray*}
\end{proof}

\subsection{Compositionality for Selections
  (\S~\ref{ssec:selections})}
\label{ssec:compositional-selection}

In order to define selections compositionally we first need a technical lemma
that extends selections to generic elements of the lattice, possibly
not part of the basis.

\begin{lemma}[extending the selection]
  \label{le:selection-extended}
  Let $L$ be a continuous lattice with a basis $B_L$, let
  $f\colon L^m \to L$ be a monotone functions and let
  $\sigma\colon B_L \to \Pow{L^m}$ be a selection for $f$. Define
  $\bar{\sigma}\colon L \to \Pow{L^m}$ by
  $\bar{\sigma}(b) = \sigma(b)$ for $b \in B_L$ and
  $\bar{\sigma}(l) = \{ \bigsqcup_{b \ll l} \vec{l}^b \mid \vec{l}^b
  \in \sigma(b) \}$ for $l \in L \setminus B_L$. Then
  
  \begin{enumerate}
   
  \item for all $\vec{l} \in \bar{\sigma}(l)$ it holds $l \sqsubseteq  f(\vec{l})$;
  \item for all $\vec{l}' \in L^m$, if $l \sqsubseteq f(\vec{l}')$ then there
    exists $\vec{l} \in \bar{\sigma}(l)$ such that
    $\vec{l} \sqsubseteq \vec{l'}$.
  \end{enumerate}
\end{lemma}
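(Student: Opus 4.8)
The plan is to prove the two items by distinguishing whether $l$ belongs to the basis or not. If $l \in B_L$, then $\bar{\sigma}(l) = \sigma(l)$ by definition, and since $\sigma$ is a selection for $f$ we have $\Emoves{l,f} = \filter{\sigma(l)}$. Item~(1) is then immediate: any $\vec{l} \in \sigma(l)$ lies in $\filter{\sigma(l)} = \Emoves{l,f}$, so $l \sqsubseteq f(\vec{l})$. For item~(2), if $l \sqsubseteq f(\vec{l}')$ then $\vec{l}' \in \Emoves{l,f} = \filter{\sigma(l)}$, hence there is $\vec{l} \in \sigma(l) = \bar{\sigma}(l)$ with $\vec{l} \sqsubseteq \vec{l}'$.

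The interesting case is $l \in L \setminus B_L$, where the elements of $\bar{\sigma}(l)$ are the joins $\vec{l} = \lub_{b} \vec{l}^{b}$ over all $b \in B_L$ with $b \ll l$, for arbitrary choices $\vec{l}^{b} \in \sigma(b)$; such joins exist since $L$ (and hence $L^m$ with the pointwise order) is complete. The key fact is that in a continuous lattice with basis $B_L$ one has $l = \lub(\cone{l} \cap B_L)$, recalled in Section~\ref{sec:preliminaries}. For item~(1), fix such a $\vec{l} \in \bar{\sigma}(l)$. For each $b \in \cone{l} \cap B_L$ the tuple $\vec{l}^{b}$ belongs to $\sigma(b) \subseteq \filter{\sigma(b)} = \Emoves{b,f}$, so $b \sqsubseteq f(\vec{l}^{b})$; moreover $\vec{l}^{b} \sqsubseteq \vec{l}$ componentwise, being one of the joinands, so monotonicity of $f$ gives $b \sqsubseteq f(\vec{l}^{b}) \sqsubseteq f(\vec{l})$. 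Thus $f(\vec{l})$ is an upper bound of $\cone{l} \cap B_L$, and therefore $l = \lub(\cone{l} \cap B_L) \sqsubseteq f(\vec{l})$, as required.

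For item~(2) in the case $l \notin B_L$, assume $l \sqsubseteq f(\vec{l}')$. For each $b \in \cone{l} \cap B_L$ we have $b \sqsubseteq l \sqsubseteq f(\vec{l}')$, hence $\vec{l}' \in \Emoves{b,f} = \filter{\sigma(b)}$, so one may pick $\vec{l}^{b} \in \sigma(b)$ with $\vec{l}^{b} \sqsubseteq \vec{l}'$. Setting $\vec{l} = \lub_{b} \vec{l}^{b} \in \bar{\sigma}(l)$ and using that $\vec{l}'$ is an upper bound of the family $(\vec{l}^{b})_{b}$, we obtain $\vec{l} \sqsubseteq \vec{l}'$. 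The degenerate subcase $l = \bot$ (which, under the standing assumption $\bot \notin B_L$, falls under this case) has $\cone{l} \cap B_L = \emptyset$, so $\bar{\sigma}(\bot) = \{(\bot, \ldots, \bot)\}$ and both items hold trivially. There is no serious obstacle here: the only place continuity is genuinely used is item~(1), where the point is to recover $l$ as the join of the basis elements way-below it and to commute this join past $f$ using monotonicity; the rest is bookkeeping with the definition of selection.
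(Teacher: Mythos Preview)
Your proof is correct and follows essentially the same route as the paper's: split on whether $l \in B_L$, and in the non-basis case use that each $\vec{l}^b \in \sigma(b)$ satisfies $b \sqsubseteq f(\vec{l}^b) \sqsubseteq f(\vec{l})$ by monotonicity, then recover $l$ as the join of the way-below basis elements. You are in fact slightly more careful than the paper in two places---restricting the index set to $\cone{l}\cap B_L$ (necessary, since $\sigma$ is only defined on $B_L$) and noting the degenerate case $l=\bot$---but these are presentational refinements of the same argument, not a different approach.
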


\begin{proof}
  For $l \in B_L$, there is nothing to prove since the properties hold
  by definition of selection.

  Let $l \in L \setminus B_L$.  We start with point (1).  Let
  $\vec{l} \in \sigma(l)$, hence
  $\vec{l} = \bigsqcup_{b \ll l} \vec{l}^b$ with
  $\vec{l}^b \in \sigma(b)$ for each $b \ll l$. By the properties of
  selections, for all $b \ll l$, since $\vec{l}^b \in \sigma(l)$ it
  holds $b \sqsubseteq f(\vec{l}^b)$, hence
  \[
    b \sqsubseteq \bigsqcup_{b' \ll l} f(\vec{l}^b) \sqsubseteq
    f(\bigsqcup_{b' \ll l} \vec{l}^b) = f(\vec{l})
  \]
  the last inequality following by monotonicity of $f$. Therefore
  $l = \bigsqcup_{b \ll l} b \sqsubseteq f(\vec{l})$, as desired.

  Concerning point (2), let $\vec{l}' \in L^m$ be such that
  $l \sqsubseteq f(\vec{l}')$. For all $b \ll l$, since
  $b \sqsubseteq f(\vec{l}')$ there is $\vec{l}^b \in \sigma(b)$ such
  that $\vec{l}^b \sqsubseteq \vec{l}'$. Then we can consider
  $\vec{l} = \bigsqcup_{b \ll l} \vec{l}^b \sqsubseteq \vec{l}'$ which
  is in $\bar{\sigma}(l)$ by definition.
\end{proof}

We can now define the selection for a composition of functions.

\begin{lemma}[selection for composition]
  \label{le:selection-composition}
  Let $L$ be a continuous lattice with a basis $B_L$, and let $f\colon L^n \to L$ and
  $f_j\colon L^m \to L$, $j \in \interval{n}$ be monotone functions and
  let $\sigma\colon B_L \to \Pow{L^n}$ and $\sigma_j\colon B_L \to \Pow{L^m}$,
  $j \in \interval{n}$ be the corresponding selections. Consider
  the function $h\colon L^m \to L$ obtained as the composition
  $h(\vec{l}) = f(f_1(\vec{l}), \ldots, f_n(\vec{l}))$. Then
  $\sigma'\colon B_L \to \Pow{L^m}$ defined by
  \begin{center}
    $\sigma'(b) = \{ \bigsqcup_{i=1}^n \vec{l}^i \mid \exists \vec{l} \in
    \sigma(b). \forall i \in \interval{n}.\, \vec{l}^i \in \bar{\sigma}_i(l_i) \}$
  \end{center}
  is a selection for $h$.
\end{lemma}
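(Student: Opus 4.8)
The plan is to unfold the definition of selection (Definition~\ref{de:selection}): we must verify that for every $b \in B_L$ we have $\Emoves{b,h} = \filter{\sigma'(b)}$, which we split into the two inclusions $\filter{\sigma'(b)} \subseteq \Emoves{b,h}$ and $\Emoves{b,h} \subseteq \filter{\sigma'(b)}$. First I would note that $\sigma'$ is well-typed, i.e.\ maps $B_L$ into $\Pow{L^m}$, which is immediate since each $\vec{l}^i \in L^m$ and $L^m$ is closed under joins. The whole argument then reduces to bookkeeping on top of Lemma~\ref{le:selection-extended}, which supplies the extended selections $\bar{\sigma}_i$ together with their soundness (item~1: every $\vec{l} \in \bar{\sigma}_i(l)$ satisfies $l \sqsubseteq f_i(\vec{l})$) and completeness (item~2: every $\vec{l}'$ with $l \sqsubseteq f_i(\vec{l}')$ dominates some $\vec{l} \in \bar{\sigma}_i(l)$), plus the monotonicity of $f, f_1, \dots, f_n$ and hence of $h$.

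For $\filter{\sigma'(b)} \subseteq \Emoves{b,h}$ it suffices to prove $\sigma'(b) \subseteq \Emoves{b,h}$ and then close upward using monotonicity of $h$. So fix $\vec{k} = \bigsqcup_{i=1}^{n} \vec{l}^i \in \sigma'(b)$, witnessed by some $\vec{a} \in \sigma(b)$ with $\vec{l}^i \in \bar{\sigma}_i(a_i)$ for all $i \in \interval{n}$. By Lemma~\ref{le:selection-extended}(1), $a_i \sqsubseteq f_i(\vec{l}^i)$; since $\vec{l}^i \sqsubseteq \vec{k}$ and $f_i$ is monotone, $a_i \sqsubseteq f_i(\vec{k})$, so $\vec{a} \sqsubseteq (f_1(\vec{k}), \dots, f_n(\vec{k}))$. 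As $\vec{a} \in \sigma(b)$ and $\sigma$ is a selection for $f$, we get $b \sqsubseteq f(\vec{a})$, and monotonicity of $f$ then yields $b \sqsubseteq f(f_1(\vec{k}), \dots, f_n(\vec{k})) = h(\vec{k})$, i.e.\ $\vec{k} \in \Emoves{b,h}$.

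For the reverse inclusion $\Emoves{b,h} \subseteq \filter{\sigma'(b)}$, fix $\vec{l}'$ with $b \sqsubseteq h(\vec{l}') = f(f_1(\vec{l}'), \dots, f_n(\vec{l}'))$. Since $\sigma$ is a selection for $f$, there is $\vec{a} \in \sigma(b)$ with $\vec{a} \sqsubseteq (f_1(\vec{l}'), \dots, f_n(\vec{l}'))$, i.e.\ $a_i \sqsubseteq f_i(\vec{l}')$ for each $i$. Applying Lemma~\ref{le:selection-extended}(2) to $f_i$, $\sigma_i$, the element $a_i$, and the tuple $\vec{l}'$, we obtain $\vec{l}^i \in \bar{\sigma}_i(a_i)$ with $\vec{l}^i \sqsubseteq \vec{l}'$. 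Then $\vec{k} := \bigsqcup_{i=1}^{n} \vec{l}^i$ lies in $\sigma'(b)$ (as witnessed by $\vec{a}$) and satisfies $\vec{k} \sqsubseteq \vec{l}'$, whence $\vec{l}' \in \filter{\sigma'(b)}$. I do not expect a genuine obstacle here; the only point requiring care is to invoke Lemma~\ref{le:selection-extended} for each $\bar{\sigma}_i$ at the element $a_i$ (which need not belong to $B_L$ — this is precisely why the extended selection is needed), and to keep the bound variable in the definition of $\sigma'(b)$ notationally distinct from the component tuples $\vec{l}^1, \dots, \vec{l}^n$.
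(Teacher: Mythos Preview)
Your proposal is correct and follows essentially the same route as the paper's proof: both directions are established via Lemma~\ref{le:selection-extended} and monotonicity of $f, f_1, \dots, f_n$, with identical logical steps. Your notation is actually cleaner---using $\vec{a}$ for the witness in $\sigma(b)$ avoids the overloading of $\vec{l}$ that occurs in the paper.
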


\begin{proof}
  We show properties (1) and (2) of
  Definition~\ref{de:selection}. Let $b \in B_L$. Concerning (1), let
  $\vec{l}' \in \sigma'(b)$. Hence
  $\vec{l}' = \bigsqcup_{i=1}^n \vec{l}^i$ such that, for some
  $\vec{l} \in \sigma(b)$, for all $i \in \interval{n}$ we have
  $\vec{l}^i \in \bar{\sigma}_i(l_i)$.
  Since $\vec{l}^i \in \bar{\sigma}(l_i)$, by
  Lemma~\ref{le:selection-extended} and monotonicity of $f_i$ we have
  $l_i \sqsubseteq f_i(\vec{l}^i) \sqsubseteq f_i(\bigsqcup _{i=1}^n
  \vec{l}^i) = f_i(\vec{l}')$. Hence
  $\vec{l} \sqsubseteq (f_1(\vec{l}'), \ldots, f_n(\vec{l}'))$ and
  thus, by monotonicity of $f$,
  \[
    f(\vec{l}) \sqsubseteq f(f_1(\vec{l}'), \ldots, f_n(\vec{l}')) = h(\vec{l}')
  \]
  Recalling that $\vec{l} \in \sigma(b)$ and thus
  $b \sqsubseteq f(\vec{l})$ we conclude, by transitivity,
  $b \sqsubseteq h(\vec{l}')$, as desired.

  \smallskip

  Let us focus on property (2). Let $\vec{l} \in L^m$ be such that
  $b \sqsubseteq h(\vec{l}) = h(f_1(\vec{l}), \ldots,
  f_n(\vec{l}))$. Since $\sigma$ is a selection for $f$, there exists
  $\vec{l}' \in \sigma(b)$ such that
  $\vec{l}' \sqsubseteq (f_1(\vec{l}), \ldots, f_n(\vec{l}))$. Now,
  for all $i \in \interval{n}$, since $l_i' \sqsubseteq f_i(\vec{l})$,
  by Lemma~\ref{le:selection-extended}, there is
  $\vec{l}^i \in \bar{\sigma}_i(l_i')$ such that
  $\vec{l}^i \sqsubseteq \vec{l}$. If we let
  $\vec{l}'' = \bigsqcup_{i=1}^n \vec{l}^i$, by definition
  $\vec{l}'' \in \sigma'(b)$ and clearly
  $\vec{l}'' \sqsubseteq \vec{l}$, as desired.  
\end{proof}

\begin{example}
  \label{ex:running-selection-comp}
  Consider again our running example in Example~\ref{ex:running}. The
  selection discussed in Example~\ref{ex:running-selection} is
  computed using the observation in Example~\ref{ex:selection} and
  Lemma~\ref{le:selection-composition}.

  The fact that in this case the selections $\sigma_1$ and $\sigma_2$
  arising from the construction in
  Lemma~\ref{le:selection-composition} are the least ones is not a
  general fact.
  In order to ensure that starting from the least selections of the
  components we get the least selection we need to consider a more
  complex definition of extension (Lemma~\ref{le:selection-extended})
  that is omitted since we favour the use of the logic for symbolic
  $\exists$-moves (\S~\ref{sec:logic-selection}).
\end{example}

\end{document}